\begin{document}
\title{%
  Maximum channel entropy principle %
  and microcanonical channels}

\author{Philippe Faist}
\affiliation{Dahlem Center for Complex Quantum Systems, Freie Universität Berlin, 14195 Berlin, Germany}
\author{Sumeet Khatri}
\affiliation{Department of Computer Science and Center for Quantum Information Science and Engineering, Virginia Tech, Blacksburg, VA 24061, USA}
\date{August 5, 2025}
\begin{abstract}
  The thermal state plays a number of significant roles throughout physics,
  information theory, quantum computing, and machine learning. It arises from
  Jaynes' maximum-entropy principle as the maximally entropic state subject to
  linear constraints, and is also the reduced state of the microcanonical state
  on the system and a large environment.
  We formulate a maximum-channel-entropy principle, defining a thermal channel
  as one that maximizes a channel entropy measure subject to linear constraints
  on the channel.  We prove that thermal channels exhibit an exponential form
  reminiscent of thermal states.  We study examples including thermalizing
  channels that conserve a state's average energy, as well as Pauli-covariant
  and classical channels.
  We propose a quantum channel learning algorithm based on maximum channel
  entropy methods that mirrors a similar learning algorithm for quantum states.
  We then demonstrate the thermodynamic relevance of the maximum-channel-entropy
  channel by proving that it resembles the action on a single system of a
  microcanonical channel acting on many copies of the system.  Here, the
  microcanonical channel is defined by requiring that the linear constraints
  obey sharp statistics for any i.i.d.\@ input state, including for noncommuting
  constraint operators.
  Our techniques involve convex optimization methods to optimize recently
  introduced channel entropy measures, typicality techniques involving
  noncommuting operators, a custom channel postselection technique, as well as
  Schur-Weyl duality.
  As a result of potential independent interest, we prove a constrained
  postselection theorem for quantum channels.
  The widespread relevance of the thermal state throughout
  physics, information theory, machine learning, and quantum computing, inspires
  promising applications for the analogous concept for quantum channels.
\end{abstract}
\maketitle
\onecolumngrid

\tableofcontents

\section{Introduction}

Consider a quantum system $S$ and let $H_S$ be any Hermitian operator.  The
thermal state $\gamma_S$ is the state with the following Gibbs distribution of
energies:
\begin{align}
  \gamma_S(\beta)
  &= \frac1{Z_\beta}\,{e}^{-\beta H_S}\ ;
  &
    Z_\beta &= \operatorname{tr}\bigl({{e}^{-\beta H_S}}\bigr)\ .
              \label{z:9VJbRBw.}
\end{align}
The state $\gamma_S$ plays a number of significant roles throughout physics,
information theory, quantum computing, and machine learning.  In
thermodynamics, it is the state one typically attributes to a system with
Hamiltonian $H_S$ that is in equilibrium with large a heat bath at temperature
$1/\beta$~\cite{R0,R1,R2}.  In
statistical inference and information theory, this state can represent an
unknown state or probability distribution with limited prior knowledge. There,
the thermal state emerges from the maximum entropy principle, which mandates
that the inferred state should maximize the information entropy over all states
compatible with the prior
information~\cite{R3,R4,R5,R6,R7}.  Finally, the state $\gamma_S(\beta)$ has found
several uses in classical and quantum algorithms, whether in the context of the
mirror descent algorithm~\cite{R8}, the matrix
multiplicative weights algorithm~\cite{R9,R10}
or for quantum shadow tomography and quantum
learning~\cite{R11,R12,R13,R14}, quantum algorithms for semidefinite
programming~\cite{R15}, and online learning of quantum
states and processes~\cite{R16,R17}.

In this work, we extend the concept of the thermal quantum state to quantum
channels.  The present technical paper focuses on the details of our methods,
constructions, and proofs.  For a high-level overview of our work and its
significance, see our short companion paper~\cite{R18}.

The thermal state has a number of remarkable properties that lead to its broad
applicability.  Here is selection of defining properties:

\begin{enumerate}[label=(\roman*)]
\item \textbf{Thermal state from dynamical equilibration arguments~\cite{R0,R1,R2,R19,R20,R21,R22,R23,R24}.} A system evolving under open
  system dynamics in weak contact with a large bath typically relaxes to
  equilibrium by converging towards the thermal state $\gamma_S$.  In a closed
  many-body system after a long time unitary evolution, we typically expect
  local observables to reproduce the same statistics as if the entire state were
  in the thermal state $\gamma_S$.  Overall, many-body quantum systems typically
  relax towards a state that is well modeled by the thermal state, whether the
  relaxation is information-theoretically genuine (open system dynamics) or
  apparent (for a restricted set of observables).

\item \textbf{Thermal state from the maximum-entropy
    principle~\cite{R3,R4,R5,R6,R7}.} The state $\rho_S = \gamma_S$ achieves the
  maximum information-theoretic entropy ${S}_{}^{}({\rho_S})$ subject to the constraint
  $\operatorname{tr}({\rho_S H_S}) = E$, where $\beta$ is determined implicitly from $E$.

\item \textbf{Thermal state from the microcanonical
    ensemble~\cite{R25,R26}.} The \emph{microcanonical subspace
    at energy $[E,E+\Delta E]$} of a system $S'$ is defined as the subspace
  spanned by all energy eigenstates of $S'$ with energies in the interval
  $[E, E+\Delta E]$.  The \emph{microcanonical state $\pi_{S'}^{(E)}$ at energy
    $[E,E+\Delta E]$} is the maximally mixed state supported on the
  microcanonical subspace at energy $[E, E+\Delta E]$.  The microcanonical state
  models a closed, ergodic system whose energy statistics are confined in a
  small interval.
  Consider a system $S$ that is weakly interacting with a large heat bath $R$;
  here, $R$ is a system much larger than $S$ and with some suitable spectral
  properties.
  A central result in statistical mechanics states that if the joint system $SR$
  is modeled as a closed, ergodic system described by a microcanonical state,
  then the state of $S$ is the thermal state~\eqref{z:9VJbRBw.}.
  
\item \textbf{Thermal state by canonical
    typicality~\cite{R27}.} %
  The thermal state also has a much stronger property in the microcanonical
  picture: Not only does the maximally mixed state in the microcanonical
  subspace have a local reduced state that is close to the thermal state, but
  almost all individual states %
  in the subspace do, as well~\cite{R27}.

\item \textbf{Thermal state from complete passivity~\cite{R28,R29,R26}.}  Given a system $S$ with a
  Hamiltonian $H_S$, a state $\rho_S$ is \emph{energetically passive} if it is
  impossible to find a unitary operation $U_S$ that decreases the average energy
  of $\rho$, i.e., such that $\operatorname{tr}({H_S U\rho U^\dagger}) < \operatorname{tr}({H_S \rho})$.
  state $\rho_S$ is \emph{energetically completely passive} if
  $\rho^{\otimes n}$ is passive on $n$ copies of $S$, for all $n>0$.
  It turns out that the set of completely passive states of a system $S$
  coincides exactly with the set of thermal states $\gamma_S(\beta)$ for
  $\beta\geq 0$. 

\item \textbf{Thermal state from the resource theory of thermodynamics~\cite{R30,R31,R32,R33,R34,R35}.} %
  In a resource theory, we imagine an observer, or an agent, who manipulates
  quantum systems by applying operations from a set of \emph{free operations}.
  We then study what state transformations an agent is capable of carrying out;
  any state that cannot be reached using free operations can be thought of as
  being \emph{resourceful}.
  In the resource theory of thermodynamics, a common choice for the free
  operations is the set of \emph{thermal operations}: one may apply any
  energy-conserving unitary, one may include any ancillary system in its thermal
  state, and one may discard ancillary systems~\cite{R30,R31,R33}.
  We could ask, is there any other state that we could allow ancillary systems
  to be initialized in when defining the free operations?
  It turns out that allowing any other state for free renders the resource
  theory trivial---the agent can go from any state to any other state using only
  free operations.
  That is, the thermal state is singled out as the unique state (up to
  temperature) that we can allow for free in the resource theory of
  thermodynamics without trivializing the resource theory.

\end{enumerate}

The thermal state generalizes to the case where observables beyond the energy
$H$ are present.  If we maximize the entropy ${S}_{}^{}({\rho})$ over all states that
obey multiple constraints of the form $\operatorname{tr}({\rho Q_j}) = q_j$, for
$j=1, \ldots, J$, we find the generalized thermal state
\begin{align}
  \rho &= \gamma_S(\mu_1, \ldots, \mu_J)
  = \frac1{Z(\mu_1, \ldots, \mu_J)}\,{e}^{-\sum_{j=1}^J \mu_j Q_j}
         \ ;
  &
    Z(\mu_1, \ldots, \mu_J) &= \operatorname{tr}{e}^{-\sum_{j=1}^J \mu_j Q_j}
\ .
  \label{z:chtPfev.}
\end{align}
We can also write
$\gamma_S(\mu_1, \ldots, \mu_J) = {e}^{F -\sum_{j=1}^J \mu_j Q_j}$ with
$F = -\log[{Z(\mu_1, \ldots, \mu_j)}]$.  
For example, in the presence of two charges consisting of the energy $Q_1 = H$
and number of particles $Q_2 = N$ of a system, \cref{z:chtPfev.} is
simply the grand canonical ensemble of statistical mechanics.
We refer to the $\mu_j$'s in~\eqref{z:chtPfev.} as \emph{generalized
  chemical potentials} or simply \emph{chemical potentials} by extension of the
grand canonical ensemble.
In the presence of multiple charges, the thermal state is also called
\emph{generalized Gibbs ensemble (GGE)}~\cite{R36,R37,R20}.
The microcanonical, canonical typicality, and complete passivity properties also
extend to the situation where multiple charges are present, even if these
charges do not commute~\cite{R26,R35,R38}.  If the charges fail
to commute, the thermal state~\eqref{z:chtPfev.} is sometimes called the
\emph{non-Abelian thermal state}~\cite{R26,R39,R40,R41}.

Many concepts and ideas developed for quantum states have been extended to
quantum %
channels.  For instance, \emph{entropy measures} have been extended to quantum
channels~\cite{R42,R43,R44,R45,R46,R47},
and a \emph{resource theory of quantum channels} can describe the resources
required to convert one channel into another given a set of free
operations~\cite{R48,R49,R50,R47,R51,R52,R53}. 
(Recent advances in optimizing
the relative entropy for states and channels include
Refs.~\cite{R54,R55,R56,R57}.)
At
the same time, characterizing/learning noise processes in quantum systems (which
are described by quantum channels) is a critical component of developing
scalable quantum technologies. The question of inferring a quantum channel from
partial information therefore arises naturally. In particular, given partial
information about the evolution/noise of a system, specified by expectation
values resulting from sets of input states to the process and measurements at
the output, how do we determine a channel that is consistent with these known
expectation values? This question has been previously addressed using the
techniques of compressed sensing and least-squares
regression~\cite{R58,R59,R60}.
In this work, inspired by Jaynes' maximum entropy principle for quantum states, we
propose to take the channel achieving the \textit{maximum channel entropy}
subject to these constraints.
Maximum-entropy methods have furthermore been extensively studied in the
classical information theory literature in the context of maximally-entropic
stochastic processes and Markov chains~\cite{R61,R62}; see also \cite[Chapters~4, 12]{R63}.
Our work can be viewed as establishing a fully quantum counterpart of these
results.

Here, we extend the concept of a thermal state to thermal channels.  First, we
formulate and solve a \emph{maximum-channel-entropy principle for quantum
  channels}.  Its optimal solution, which we call a \emph{thermal quantum
  channel}, has an exponential form reminiscent of the thermal state.  A major
novelty in going from states to channels is that the thermal quantum channel
involves an optimization over input states, which can be understood as finding
the input state for which the channel produces the least entropic output
conditioned on the reference system.  We study multiple examples, including one
describing thermalizing dynamics while requiring a physical quantity (e.g.\@
energy) to be conserved on average on a system.  We also consider thermal
channels defined by constraints that satisfy certain symmetries, such as
covariance with respect to the Pauli group and covariance with respect to
Pauli-$Z$.

Just as the maximum-entropy principle for quantum states is used as the basis for quantum-state learning and inference~\cite{R64,R65,R66,R67,R68,R69,R70,R13,R14}, we make use of our maximum-channel-entropy principle to develop a learning algorithm for quantum channels. Our algorithm iteratively updates a guess for the unknown channel based on receiving new observable data. As a proof of concept, we apply our algorithm to single-qubit channels, and 
our numerics appear to show 
that our algorithm converges to the true,
unknown channel with an increasing number of iterations.

We then ask whether the quantum thermal channel can be derived from a
microcanonical picture, in an analogous fashion to the thermal state.  Recall
the following derivation of the generalized thermal state from a microcanonical
approach~\cite{R26}.
A microcanonical subspace associated with physical charges $\{{Q_1, \ldots Q_J }\}$
(such as energy, number of particles, etc.)  is a subspace containing all states
that are eigenstates of each $Q_j$ with an eigenvalue within a window
$[q_j, q_j + \Delta q_j]$.  If the $\{{Q_j}\}$ fail to commute, there may be no
such common eigenstates; instead, we may define an \emph{approximate
  microcanonical subspace}.  For a system $S$, and fixing real values
$\{{ q_j }\}$, we informally define an \emph{approximate microcanonical subspace}
$\mathcal{C}$ as a subspace of $S^{\otimes n}$ such that:
\begin{enumerate}[label=(\roman*)]
\item any $\rho$ with high weight in $\mathcal{C}$ has, for each $Q_j$, sharp
  statistics around $q_j$;
\item any $\rho$ with sharp statistics around $q_j$ for each $Q_j$ has high
  weight in $\mathcal{C}$.
\end{enumerate}
Here, ``sharp statistics around $q_j$'' refers to the outcome distribution of
$Q_j$ on $\rho$ having high weight within a small interval of values around
$q_j$.
It turns out that the reduced state of a maximally mixed state within an
approximate microcanonical subspace on a single copy of $S$ approaches the
generalized thermal state $\rho = \exp\{{ F- \sum \mu_j Q_j }\}$ as
$n\to\infty$~\cite{R26}.  The generalized
thermal state $\rho$, obtained initially by maximizing the entropy subject to
constraints on the charges, can therefore alternatively be derived from a
microcanonical picture.

To extend the concept of approximate microcanonical subspace to channels, we
need to make sense of `sharp statistics' in the context of a channel observable
$C^j_{BR}$ with expectation value $\operatorname{tr}[{C^j_{BR} \mathcal{N}(\Phi_{A:R})}]$.
This expectation value can be estimated over $n$ copies by preparing some input
state $\sigma_{AR}^{\otimes n}$ with
$\lvert {\sigma}\rangle _{AR} = \sigma_R^{1/2}\lvert {\Phi_{A:R}}\rangle $, applying
$\mathcal{N}^{\otimes n}$, and averaging measurements of
$\sigma_R^{-1/2} C^j_{BR} \sigma_R^{-1/2}$ on each copy.  Choosing $\sigma$
appropriately, rather than picking $\sigma_A = \mathds{1}_A/d_A$, might be important
to reliably detect how the channel $\mathcal{N}$ acts on states other than the
maximally mixed state.
The protocol can be described as measuring the
observable $\overline{H^{j;\sigma}}_{B^nR^n}$ on the state
$\mathcal{N}^{\otimes n}(\sigma_{AR}^{\otimes n})$, where
\begin{align}
  \overline{H^{j;\sigma}}_{B^nR^n}
  = \frac1n \sum_{i=1}^n \mathds{1}_{BR}^{\otimes(i-1)}
  \otimes \bigl({\sigma_{R_i}^{-1/2} C^j_{B_iR_i} \sigma_{R_i}^{-1/2}}\bigr) \otimes
  \mathds{1}_{BR}^{\otimes(n-i)}\ .
\end{align}
We then define a \emph{microcanonical channel operator} over many copies of a
system, extending the concept of an approximate microcanonical subspace.  The
microcanonical channel operator intuitively captures all quantum channels on $n$
copies of a system that produce outputs with sharp statistics of
$\overline{H^{j;\sigma}}_{B^nR^n}$, \emph{for all inputs $\sigma^{\otimes n}$}.

We then show that an associated \emph{microcanonical channel} leads to a thermal
channel on a single copy when ignoring the other copies.  This result gives an
independent characterization of the thermal quantum channel we obtained with the
maximum channel entropy principle.

Our technical proofs involve convex optimization techniques, typicality
techniques involving noncommuting
operators~\cite{R26,R71}, a postselection techniques for permutation-invariant
operators~\cite{R72,R73,R74,R75,R76}, as well as Schur-Weyl
duality~\cite{R77,R78}.  

We also prove a constrained postselection theorem for channels which might
be of potential independent interest.  We combine the features of
\R\cite{R72,R73,R74,R75,R76} to obtain an operator
upper bound on any permutation-invariant completely positive, trace-preserving
map $\mathcal{E}$ as a convex combination of i.i.d.\@ operators, with an
additional fidelity term that suppresses i.i.d.\@ operators that are far from
$\mathcal{E}$.

We discuss several aspects and consequences of our
results in \cref{z:aLWklD3j}.

\subsection{Overview of the main results}

We now provide an overview of our main technical contributions.  At this point,
the essential technical concepts required to state our main results are only
briefly introduced at a high level; we define all necessary concepts in greater
detail in \cref{z:-p8SEObu} below.

Consider systems $A$, $B$ along with a reference system $R\simeq A$.  Let
$\lvert {\Phi_{A:R}}\rangle  = \sum \lvert {j}\rangle _A\lvert {j}\rangle _R$.  Let $\{{ C^j_{BR} }\}_{j=1}^J$ be
Hermitian operators and $\{{ q_j }\}_{j=1}^J$ be real numbers.  The entropy of a
channel $\mathcal{N}_{A\to B}$ is defined
as~\cite{R45,R46}
${S}_{}^{}({\mathcal{N}}) = -{D}_{}^{}({\mathcal{N}}\mathclose{}\,\Vert\,\mathopen{}{\widetilde{\mathcal{D}}})$ with
${D}_{}^{}({\mathcal{N}}\mathclose{}\,\Vert\,\mathopen{}{\mathcal{M}}) = \max_{\lvert {\phi}\rangle _{AR}}
{D}_{}^{}({\mathcal{N}(\phi_{AR})}\mathclose{}\,\Vert\,\mathopen{}{\mathcal{M}(\phi_{AR})})$ and
$\widetilde{\mathcal{D}}(\cdot) = \operatorname{tr}(\cdot)\,\mathds{1}$, where
${D}_{}^{}({\rho}\mathclose{}\,\Vert\,\mathopen{}{\sigma}) = \operatorname{tr}\bigl({\rho[{\log(\rho) - \log(\sigma)}]}\bigr)$ is the Umegaki
quantum relative entropy.
We denote
by $\Pi^{\rho}$ the projector onto the support of $\rho$.

\paragraph{Maximum-channel-entropy principle:}
A channel $\mathcal{N}_{A\to B}$ is a \emph{thermal channel with respect to
  $\lvert {\phi}\rangle _{AR}$} if it maximizes ${S}_{\phi}^{}({\mathcal{N}})$ subject to the
constraints $\operatorname{tr}[{ C^j_{BR} \mathcal{N}(\Phi_{A:R}) }] = q_j$ for
$j=1, \ldots, J$.  It is a \emph{thermal channel} if it maximizes
${S}_{}^{}({\mathcal{N}})$ with the same constraints.
The order of the optimizations over $\mathcal{N}$ and $\phi$ is
irrelevant~\cite{R57,R79}; i.e., a thermal channel is also a thermal channel
with respect to an optimal $\phi$ in the definition of ${S}_{}^{}({\mathcal{N}})$.  We
also make a technical assumption to rule out some edge cases (cf.\@ details in
\cref{z:u4ZpmFvZ}).

\begin{overviewtheorem}[simplified]
  \label{z:dXvB1bw0}
  A channel $\mathcal{T}$ is a thermal channel if and only if its Choi matrix is
  of the form
  \begin{align}
    \mathcal{T}_{A\to B}(\Phi_{A:R})
    = \phi_R^{-1/2}\, \exp\Bigl\{{
    -\phi_R^{-1/2}\Bigl[{
    \mathds{1}_B\otimes F_R - \sum \mu_j C^j_{BR}
    - (\ldots)
    }\Bigr]\phi_R^{-1/2}
    }\Bigr\}\, \phi_R^{-1/2}
    + (\ldots)\ ,
  \end{align}
  where $F_R$ is Hermitian, where $\mu_j\in\mathbb{R}$, 
  where $(\ldots)$ represent terms that vanish unless $\phi_R$ is
  rank-deficient, and where $\lvert {\phi}\rangle _{AR} = \phi_R^{1/2}\lvert {\Phi_{A:R}}\rangle $ is
  optimal in ${S}_{}^{}({\mathcal{T}})$.  Moreover, if $\widehat{\mathcal{T}}$ is a
  complementary channel to $\mathcal{T}$, any optimal $\phi_R$ above satisfies
  \begin{align}
    \log(\phi_R) - \widehat{\mathcal{T}}^\dagger\bigl[{ \log(\widehat{\mathcal{T}}(\phi_R)) }\bigr]
    \propto \Pi_R^{\phi_R}\ .
  \end{align}
\end{overviewtheorem}
A full version of this theorem, including details of the terms $(\ldots)$, is
presented in \cref{z:u4ZpmFvZ}; see specifically
\cref{z:TLiwdJGR}.
Recall that $\rho = \exp\{{ F- \sum \mu_j Q_j }\}$, with $F,\mu_j\in\mathbb{R}$, is
the quantum state that maximizes ${S}_{}^{}({\rho})$ subject to $\operatorname{tr}({Q_j\rho}) = q_j$
for given Hermitian $Q_j$ and $q_j\in\mathbb{R}$ and for $j=1, \ldots, J$ (the
constraints fix $\mu_j, F$)~\cite{R4,R37,R26}.  In
\cref{z:dXvB1bw0}, the ``chemical potentials'' $\mu_j$ appear
in a similar fashion; the operator $F_R$ generalizes the ``free energy'' $F$.
We recover the standard thermal state if by choosing a trivial input system,
$\dim(R) = 1$, $\phi_R=1$.

We then consider the more general problem of minimizing the channel relative
entropy ${D}_{}^{}({\mathcal{N}}\mathclose{}\,\Vert\,\mathopen{}{\mathcal{M}})$ with respect to some arbitrary channel
$\mathcal{M}$.  We extend \cref{z:dXvB1bw0} to this case,
further including generalizations such as inequality constraints and a term in
the objective that is quadratic function of channel expectation values (see
\cref{z:gV43EWT.}).

\paragraph{A learning algorithm for quantum channels}

We apply the minimum channel relative entropy optimization problem to the learning of quantum channels. Specifically, we define a quantum channel generalization of the online quantum-state learning algorithms in \R\cite{R80,R69}. Our algorithm proceeds as follows. Suppose that at time step $t\in\{1,2,\dotsc\}$ in the learning procedure, our guess/estimate of the unknown channel is $\mathcal{M}^{(t)}$. We then measure an observable $E^{(t)}$ and let $s^{(t)}$ be our estimate of the expectation value of $E^{(t)}$ with respect to the unknown channel. Then, we update our guess to a new channel, $\mathcal{M}^{(t+1)}$, defined as the solution to the following optimization problem:
\begin{align}
    \begin{aligned}
        \text{minimize:}\quad & {D}_{}^{}({\mathcal{N}}\mathclose{}\,\Vert\,\mathopen{}{\mathcal{M}^{(t)}})
        + \eta \bigl({s^{(t)}-\operatorname{tr}[E^{(t)}\mathcal{N}(\Phi_{A:R})]}\bigr)^2 \\
        \text{subject to:}\quad & \mathcal{N}\text{ cp. tp.}.
    \end{aligned}
\end{align}
The quantity $\eta>0$ is a learning rate, quantifying the extent to which the error of the estimate $s^{(t)}$, namely $\bigl({s^{(t)}-\operatorname{tr}[E^{(t)}\mathcal{N}(\Phi_{A:R})]}\bigr)^2$, factors into the updated channel. We numerically solve this optimization problem for several example qubit channels, and 
our numerics appear to show that
that our algorithm converges to the true, unknown channel
as the number of iterations increases (cf.\@ \cref{z:wXQ-uunR} for details).

\paragraph{Microcanonical derivation of the thermal channel:}
Our microcanonical approach features in \cref{z:VbRRCMzN}.
Given real values $\{{ q_j }\}$, we define a \emph{approximate microcanonical
  channel operator} as an operator $P_{B^nR^n}$ with
$0\leq P_{B^nR^n} \leq \mathds{1}$ such that (informally):
\begin{enumerate}[label=(\roman*)]
\item Let $\mathcal{E}_{A^n\to B^n}$ be any channel such that
  $\operatorname{tr}[{P_{B^nR^n} \mathcal{E}(\sigma_{AR}^{\otimes n})}] \approx 1$
  for all $\sigma$ with
  eigenvalues above a small threshold.  Then the outcome probabilities of
  $\overline{H^{j;\sigma}}_{B^nR^n}$ on $\mathcal{E}(\sigma_{AR}^{\otimes n})$
  concentrates around $q_j$ for all $j$ and for all $\sigma$ with eigenvalues
  above a threshold.

\item Let $\mathcal{E}_{A^n\to B^n}$ be any channel such that the
  outcome probabilities of $\overline{H^{j;\sigma}}_{B^nR^n}$ on
  $\mathcal{E}(\sigma_{AR}^{\otimes n})$ concentrates around $q_j$ for all $j$ and
  for all $\sigma$ with eigenvalues above a threshold.  Then
  $\operatorname{tr}[{P_{B^nR^n} \mathcal{E}(\sigma_{AR}^{\otimes n})}] \approx 1$
  for all $\sigma$ with
  eigenvalues above a small threshold.
\end{enumerate}

By analogy with the microcanonical state, we define the microcanonical channel
$\Omega_n$ as the maximally entropic channel with high weight in $P_{B^nR^n}$.
Microcanonical channels lead to thermal channels
(\cref{z:1tRx46YH}):

\begin{overviewtheorem}[informal]
  Let $P_{B^nR^n}$ be an approximate microcanonical channel operator, let
  $\phi_R$ be any full-rank quantum state and let
  $\lvert {\phi}\rangle _{AR} = \phi_R^{1/2}\lvert {\Phi_{A:R}}\rangle $.  Then the quantum state
  $\operatorname{tr}_{n-1}[{ \Omega_n(\phi_{AR}^{\otimes n}) }]$ approaches
  $\mathcal{T}(\phi_{AR})$, where $\mathcal{T}$ is a thermal channel with
  respect to
  $\phi$.
\end{overviewtheorem}

One of the main technical contributions of this work is to construct an
approximate microcanonical operator.
This construction is inspired by the techniques of \R\cite{R71}.
The construction of $P_{B^nR^n}$ is designed such that the measurement of
$P_{B^nR^n}$ on any state of the form
$\mathcal{E}_{A^n\to B^n}(\sigma_{AR}^{\otimes n})$ (with
$\lvert {\sigma}\rangle _{AR} = \sigma_R^{1/2}\lvert {\Phi_{A:R}}\rangle $) equal to the probability of
the following protocol outputting ``ACCEPT'':
\begin{enumerate}[label=\arabic*.,start=0]
\item We begin with the state
  $\mathcal{E}_{A^n\to B^n}(\sigma_{AR}^{\otimes n})$ on $B^nR^n$;
\item Let $0< m< n$.  We measure the first $m$ copies of $R$ using a suitable
  POVM to obtain an estimate $\tilde\sigma$ for the input state
  $\sigma$.  (The first $m$ copies of $B$ are thrown away.)
\item On each of the remaining $\bar{n}\equiv n-m$ copies of $(BR)$, we pick
  $j\in\{{1, \ldots J}\}$ uniformly at random, measure the observable
  $\tilde\sigma_R^{-1/2} C^j_{BR} \tilde\sigma_R^{-1/2}$, and record its
  outcome.
\item We sort the outcomes by choice of $j$, and compute a quantity $\nu_j$ that
  is roughly equal to their sample average individually for each $j$.
\item We output ACCEPT if $\nu_j$ is close to $q_j$ for each $j$, and REJECT
  otherwise.
\end{enumerate}
The intention of this construction is to assert that the statistics of
measurement of each $C^j_{BR}$ (with the input state canceled out) is sharply
peaked around the prescribed values $q_j$.  We prove
(\cref{z:JjqyeW8p}):

\begin{overviewtheorem}[informal]
  \label{z:6L4E5tzm}
  The operator $P_{B^nR^n}$ constructed according to the above protocol
  satisfies the conditions of approximate microcanonical channel operator.
\end{overviewtheorem}

\paragraph{A constrained channel postselection theorem:}
As a key step in proving \cref{z:6L4E5tzm}, we derive an additional
postselection technique for quantum channels.  Postselection
techniques~\cite{R72,R73,R74,R75,R76} have found various uses
throughout quantum communication~\cite{R81},
cryptography~\cite{R72}, and
thermodynamics~\cite{R82}.  Specifically, we show that any
permutation-invariant channel $\mathcal{E}_n$ is operator-upper-bounded by an
integral over i.i.d.\@ channels $\mathcal{M}^{\otimes n}$ with an integrand that
includes a fidelity term between $\mathcal{E}_n$ and $\mathcal{M}^{\otimes n}$:
\begin{overviewtheorem}[Constrained channel postselection theorem; informal]
  There exists a measure $d\mathcal{M}$ on quantum channels such that
  for any permutation-invariant channel $\mathcal{E}_n$ and for any
  permutation-invariant operators $X,Y$,
  \begin{align}
    \mathcal{E}({Y X^\dagger \, ({\cdot}) \, X Y^\dagger})
    \leq \operatorname{poly}({n}) \int d\mathcal{M}\;
    \mathcal{M}^{\otimes n}(\cdot)\;
    F^2\Bigl({\mathcal{M}^{\otimes n}(X\lvert {\zeta }\rangle \mkern -1.8mu\relax \langle{\zeta }\rvert X^\dagger), \mathcal{E}_n(Y\lvert {\zeta }\rangle \mkern -1.8mu\relax \langle{\zeta }\rvert Y^\dagger)}\Bigr)\ ,
  \end{align}
  where $\lvert {\zeta}\rangle $ is a purification of the de Finetti state $\int d\sigma\,\sigma^{\otimes n}$
  and `$\leq$' refers to the complete positivity ordering.
\end{overviewtheorem}
See \cref{z:UEUfoDLC} for a full version. The proof
exploits Schur-Weyl duality~\cite{R77,R78}, and involves computing the de~Finetti state's
Schur-Weyl structure along with Haar-twirl integration
formulas~\cite{R83}.

As a corollary, we also derive an operator upper bound for a
permutation-invariant channel applied on any i.i.d.\@ input state:
\begin{overviewcorollary}[Constrained channel postselection theorem for i.i.d.\@
  input states; informal]
  There exists a measure $d\mathcal{M}$ on quantum channels such that for any
  permutation-invariant channel $\mathcal{E}_n$ and for any quantum state
  $\lvert {\sigma}\rangle _{AR} = \sigma_R^{1/2}\lvert {\Phi_{A:R}}\rangle $,
  \begin{align}
    \mathcal{E}({\sigma_{AR}^{\otimes n}})
    \lesssim \operatorname{poly}({n}) \int \! d\mathcal{M}\;
    \mathcal{M}^{\otimes n}(\sigma_{AR}^{\otimes n})\;
    \max_{\substack{ \tau_R:\\ \tau_R \approx \sigma_R }}
    F^2\bigl({\mathcal{M}^{\otimes n}({\tau_{AR}^{\otimes n}}),
    \mathcal{E}_n({\tau_{AR}^{\otimes n}})}\bigr)\ ,
  \end{align}
  where we write $\lvert {\tau}\rangle _{AR} \equiv \tau_R^{1/2} \lvert {\Phi_{A:R}}\rangle $, where
  `$\approx$' denotes proximity in fidelity, and where `$\lesssim$' conceals
  error terms that vanish exponentially in $n$.
\end{overviewcorollary}

\paragraph{Passivity and resource-theoretic aspects of the quantum thermal
  channel:}
We show that the thermal quantum channel is \emph{passive}, in the sense that
unitary operations on the input and the output cannot further improve the value
of any single constraint while preserving the others.  This statement extends
the corresponding passivity statement for quantum states, which states that 
no unitary can reduce the energy expectation value of the thermal state.

We furthermore discuss some challenges to understanding the role of the
quantum thermal channel in a thermodynamic resource theory of channels.

\section{Preliminaries}
\label{z:-p8SEObu}

\subsection{Quantum states and channels}
\label{z:6goiAHYb}

\paragraph{Notation for generic quantum information concepts.}
We consider quantum states on systems described by a finite-dimensional Hilbert
space. The Hilbert space associated with a system $A$ is denoted by $\mathscr{H}_A$, and
has dimension $d_A \equiv \dim(\mathscr{H}_A)$.
For any Hermitian operator $X_A$, we denote by $\Pi^{X_A}$ the projector onto
the support of $X_A$ and by $\Pi^{X_A\perp} = \mathds{1}- \Pi^{X_A}$ the projector
onto $X_A$'s kernel.
We write $X\geq 0$ for an operator $X$ if $X$ is positive semidefinite, and
$X>0$ if $X$ is positive definite.  Given two operators $X, Y$, we write
$X \geq Y$ if $(X - Y) \geq 0$ and $X > Y$ if $(X-Y) > 0$.  The \emph{operator
  norm} $\lVert {X}\rVert $ of an operator $X$ is its largest singular value; the
\emph{Schatten 1-norm} $\lVert {X}\rVert _1 = \operatorname{tr}\sqrt{X^\dagger X}$ is the sum of its
singular values.
We use the notation $\{{ A \in [a,b] }\}$ (respectively,
$\{{ A \not\in [a,b] }\}$) to denote the projector onto the
eigenspaces of a Hermitian operator $A$ with eigenvalues in  $[a,b]$
(respectively, not in $[a,b]$).  (More
generally, we can define $\{{ \chi(X) }\} \equiv \chi(X)$ for some boolean condition
function $\chi : \mathbb{R} \to \{{ 0, 1 }\}$ and Hermitian $X$, using the rule of
applying a scalar function on the eigenvalues of a Hermitian operator.)

A \emph{quantum state} (respectively, \emph{subnormalized quantum state}) on a
system $A$ is a positive semidefinite operator $\rho_A$ on $\mathscr{H}_A$ satisfying
$\operatorname{tr}(\rho_A)=1$ (respectively, $\operatorname{tr}(\rho_A)\leq1$).
A quantum measurement is specified by a \emph{positive operator-valued measure
  (POVM)}; if the measurement has a finite number of outcomes, the POVM is fully
specified by a collection of positive semidefinite operators $\{{ M_\ell }\}$ with
$\sum_\ell M_\ell = \mathds{1}$, and where the probability of obtaining $\ell$ after
measurement of $\rho$ is $\Pr[\ell] = \operatorname{tr}({M_\ell\rho})$.

Associated with each quantum system $S$ is a standard, or canonical, basis,
denoted by $\{{ \lvert {k}\rangle _S }\}$.  Given two systems $A, A'$, we write $A \simeq A'$
if their Hilbert spaces are isometric; we write $\mathds{1}_{A\to A'}$ the isometry
that maps the canonical basis of $A$ to the canonical basis of $A'$.  The
\emph{partial transpose} from $A$ to $A'$ is defined as
$t_{A\to A'}(\cdot) = \sum_{i,j} \langle {i}\mkern 1.5mu\relax \vert \mkern 1.5mu\relax {(\cdot)}\mkern 1.5mu\relax \vert \mkern 1.5mu\relax {j}\rangle _A \lvert {j}\rangle \mkern -1.8mu\relax \langle{i}\rvert _{A'}$.
For readability and/or when the systems are clear from context, we also write
$t_{A\to A'}(X_A) \equiv X_A^{t_{A\to A'}} \equiv X_A^t$.  We have the
elementary properties $t_{A'\to A}[{t_{A\to A'}(\cdot)}] = (\cdot)$ and
$t_{A\to A'}({X_A}) \, t_{A\to A'}({Y_A}) = t_{A\to A'}(Y_A X_A)$.
For $A\simeq R$, we define the \emph{nonnormalized reference maximally entangled
  ket}:
\begin{align}
  \lvert {\Phi_{A:R}}\rangle  = \sum_{k=1}^{d_A} \;  \lvert {k}\rangle _A\otimes\lvert {k}\rangle _{R}\ .
\end{align}
The latter has the following useful properties.
\begin{enumerate}[label=(\arabic*)]
\item We have
  $t_{A\to R}(\cdot) = \operatorname{tr}_{A}[{\Phi_{A:R}\,(\cdot)_A}]$ and
  $\operatorname{tr}_R[{\Phi_{A:R}\,(\cdot)^{t_{A\to R}}}] = (\cdot)_A$.
\item Any normalized or
  nonnormalized pure quantum state $\lvert {\psi_{AR}}\rangle $ can be written as
  $\lvert {\psi}\rangle _{AR} = ({\mathds{1}_A\otimes L_{R}})\lvert {\Phi_{A:R}}\rangle  = %
  ({L_R^{t_{R\to A}}\otimes\mathds{1}_R})\lvert {\Phi_{A:R}}\rangle $ where $L_{R}$ is a complex
  matrix with components
  $\langle {j}\mkern 1.5mu\relax \vert \mkern 1.5mu\relax {L_R}\mkern 1.5mu\relax \vert \mkern 1.5mu\relax {i}\rangle _R = (\langle {i}\rvert _A\otimes\langle {j}\rvert _R)\lvert {\psi}\rangle _{AR}$, where
  $L_{R}L_{R}^\dagger = \operatorname{tr}_A({\psi_{AR}}) \equiv \psi_R$,
  $(L_R^\dagger L_R)^{t_{R\to A}} = \psi_{A}$, and
  $\lVert {L}\rVert _2 = \operatorname{tr}({L^\dagger L}) = \operatorname{tr}({L L^\dagger}) = \operatorname{tr}({\psi})$.
  Furthermore, $L$ can always be made positive semidefinite by rotating
  $\lvert {\psi}\rangle _{AR}$ with a some suitable local unitary on $R$.
\end{enumerate}

For two quantum systems $A, B$, a \emph{superoperator} $\mathcal{E}_{A\to B}$ is
a linear map of operators on $\mathscr{H}_A$ to operators on $\mathscr{H}_B$.  It is
\emph{completely positive} if $\mathcal{E}_{A\to B}(\Phi_{A:R}) \geq 0$, where
$R\simeq A$.  The \emph{adjoint map} $\mathcal{E}_{A\leftarrow B}^\dagger$ of a
completely positive map $\mathcal{E}_{A\to B}$ is the unique completely positive
map satisfying
$\operatorname{tr}[{\mathcal{E}^{\dagger}_{A\leftarrow B}(X)\,Y}] =
\operatorname{tr}[{X\,\mathcal{E}_{A\to B}(Y)}]$ for all operators $X,Y$.  The map
$\mathcal{E}_{A\to B}$ is \emph{trace-preserving} if
$\mathcal{E}^\dagger(\mathds{1}_B) = \mathds{1}_A$ and \emph{trace-nonincreasing} if
$\mathcal{E}^\dagger(\mathds{1}_B) \leq \mathds{1}_A$.  A superoperator
$\mathcal{E}_{A\to B}$ that is completely positive and trace-preserving is also
called a \emph{quantum channel}. 
A \emph{Stinespring dilation} of a completely positive map
$\mathcal{E}_{A\to B}$ into an environment system $E$ is an operator
$K_{A\to BE}$ satisfying
$\mathcal{E}_{A\to B}({\cdot}) = \operatorname{tr}_E[{ K\,(\cdot)\,K^\dagger }]$.  If
$\mathcal{E}_{A\to B}$ is trace-nonincreasing, then $K_{A\to BE}$ satisfies
$K_{A\to BE}^\dagger K \leq \mathds{1}_A$; if $\mathcal{E}_{A\to B}$ is
trace-preserving, then $K_{A\to BE}$ is an isometry, meaning
$K^\dagger_{A\leftarrow BE} K_{A\to BE} = \mathds{1}_A$.
For any quantum channel $\mathcal{E}_{A\to B}$, a \emph{complementary channel}
$\widehat{\mathcal{E}}_{A\to E}$ is a quantum channel that can be written as
$\widehat{\mathcal{E}}_{A\to E}(\cdot) = \operatorname{tr}_B[{ V_{A\to BE}\,(\cdot)\,
V^\dagger }]$ where $V_{A\to BE}$ is a Stinespring dilation isometry of
$\mathcal{E}_{A\to B}$.
The \emph{Choi matrix} representation $N_{BR}$ of a channel
$\mathcal{N}_{A\to B}$ with $R\simeq A$ is defined as
$N_{BR} \equiv \mathcal{N}_{A\to B}(\Phi_{A:R})$.

We'll occasionally make use of the vectorized representation of operators and
channels.  For our purposes, the Hilbert-Schmidt space $\HS(\mathscr{H}_A)$ associated
with $\mathscr{H}_A$ is the complex linear vector space of all linear operators acting
on $\mathscr{H}_A$ with image in $\mathscr{H}_A$, and is equipped with the inner product
$(X_A, Y_A) \mapsto \operatorname{tr}({X_A^\dagger Y_A})$.  An operator $X_A$ on $\mathscr{H}_A$, viewed as
a vector in $\HS(\mathscr{H}_A)$, can be represented as
$\lvert {X_A}\rrangle  = ({X_A\otimes\mathds{1}})\lvert {\mathds{1}_A}\rrangle  = %
({\mathds{1}\otimes (X_A)^t})\lvert {\mathds{1}_A}\rrangle $ on two copies of $\mathscr{H}_A$, where
$\lvert {\mathds{1}_A}\rrangle  = \sum_{k=1}^{d_A} \lvert {k}\rangle \otimes\lvert {k}\rangle $.
The Hilbert-Schmidt inner product is then
$\operatorname{tr}({X_A^\dagger Y_A}) = \llangle {X_A}\mkern 1.5mu\relax \vert \mkern 1.5mu\relax {Y_A}\rrangle $ for
$\lvert {X_A}\rrangle ,\lvert {Y_A}\rrangle \in\HS(\mathscr{H}_A)$, where
$\llangle {X_A}\rvert  \equiv \llangle {\mathds{1}_A}\rvert  (X_A^\dagger\otimes\mathds{1})$ with
$\llangle {\mathds{1}_A}\rvert  \equiv \sum_{k=1}^{d_A} \langle {k}\rvert \otimes\langle {k}\rvert $.
Superoperators $\mathcal{E}_{A\to B}$ also act naturally in this representation,
i.e.,
$\mathcal{E}_{A\to B} \lvert {\rho_A}\rrangle  = \lvert { \mathcal{E}_{A\to B}[\rho_A] }\rrangle $.
The space of Hermitian operators, $\Herm(\mathscr{H}_A)$, is the real linear space
consisting of all Hermitian operators in $\HS(\mathscr{H}_A)$.

The \emph{trace distance} between two states $\rho,\sigma$ is defined as
$D(\rho,\sigma) = (1/2)\lVert {\rho-\sigma}\rVert _1$, and the \emph{fidelity} of
$\rho,\sigma$ is
$F(\rho,\sigma) = \lVert {\sqrt\rho \sqrt\sigma}\rVert _1 =
\operatorname{tr}\bigl[{\bigl({\rho^{1/2}\sigma \rho^{1/2}}\bigr)^{1/2}}\bigr]$.  We extend these
definitions formally for any positive semidefinite operators
$\rho,\sigma\geq 0$.  If at least one of two subnormalized states $\rho,\sigma$
is normalized, then we define the \emph{purified distance}
$P(\rho,\sigma) = \sqrt{1 - F^2(\rho,\sigma)}$ and we have
$D(\rho,\sigma) \leq P(\rho,\sigma)$~\cite{R84,R85,R86,R87}.
The proximity of two quantum channels
$\mathcal{N}_{A\to B}, \mathcal{M}_{A\to B}$ is quantified with the
\emph{diamond norm}
$(1/2)\lVert {\mathcal{N}_{A\to B} - \mathcal{M}_{A\to B}}\rVert _\diamond =
(1/2)\max_{\rho_{AR}} \lVert {\mathcal{N}_{A\to B}(\rho_{AR}) -
  \mathcal{M}_{A\to B}(\rho_{AR})}\rVert _1 = (1/2)\max_{\lvert {\phi}\rangle _{AR}}
\lVert {\mathcal{N}_{A\to B}(\phi_{AR}) - \mathcal{M}_{A\to B}(\phi_{AR})}\rVert _1$,
where the optimization ranges over states on $A$ and a reference system
$R\simeq A$ and where the maximum is always attained by a pure state.

\paragraph{Channel observables.}
We now review the notion of a channel observable~\cite{R88,R89,R90}.  Such
operators generalize the idea of quantum measurement operators for states to
operators that describe what information can be extracted from an unknown
quantum channel.  Channel observables are a key conceptual ingredient in our
construction of the thermal channel: They serve to specify partial prior
information about a channel, generalizing the constraint on the expectation
value of an observable in the maximum entropy principle for states.

Given single-copy black-box access to an unknown quantum channel
$\mathcal{E}_{A\to B}$, the most general quantum operation we may perform to
learn properties of $\mathcal{E}_{A\to B}$ is to prepare an initial state
$\psi_{AR}=\lvert {\psi}\rangle \mkern -1.8mu\relax \langle{\psi}\rvert _{AR}$ on $A$ and some additional reference system $R$, apply the
unknown channel onto $A\to B$, and perform a joint measurement on $BR$.  If the
measurement is described by a POVM $\{{ M^\ell_{BR} }\}$, the probability of
obtaining outcome $\ell$ is expressed as
\begin{align}
  \Pr\bigl[{ \ell \mathrel{\big|}  \psi_{AR}, \mathcal{E}_{A\to B}, \{{ M^j_{BR} }\} }\bigr]
  = \operatorname{tr}[{ M^\ell_{BR} \, \mathcal{E}_{A\to B}({\psi_{AR}}) }]\ .
  \label{z:yP868WWH}
\end{align}
A mixed input state $\rho_{AR}$ can be purified into an additional system that
can be included in $R$; it thus suffices to consider pure state inputs.
Furthermore, we can assume without loss of generality that $R\simeq A$.  Indeed,
all purifications of the state $\operatorname{tr}_R(\psi_{AR})$ on $A$ are equivalent via a
local partial isometry on the purifying system to one in which $R\simeq A$; the
latter can be absorbed into the POVM.  Finally, we can write
$\lvert {\psi}\rangle _{AR} = (\mathds{1}_A\otimes L_R)\lvert {\Phi_{A:R}}\rangle $ for some complex matrix $L_R$
and write
$\operatorname{tr}[{ M^\ell_{BR} \mathcal{E}_{A\to B}(\psi_{AR}) }] = \operatorname{tr}[{ L_R^\dagger
M^\ell_{BR} L_R \, \mathcal{E}_{A\to B}(\Phi_{A:R})}]$.  These outcome
probabilities therefore can be written as
\begin{align}
  \Pr[\ell] = \operatorname{tr}[{ \widetilde{M}^{\ell}_{BR} \, \mathcal{E}_{A\to B}(\Phi_{A:R}) }]\ ,
\end{align}
where $\{{ \widetilde{M}^\ell_{BR} }\}$ are now a collection of positive
semidefinite operators that satisfy
$\sum_\ell \widetilde{M}^\ell_{BR} = \mathds{1}_B\otimes({L_R L_R^\dagger}) =
\mathds{1}_B\otimes\psi_R$.  Such a collection of operators is called a
\emph{channel measurement}, \emph{channel POVM}, or \emph{process
  POVM}~\cite{R89}.

Therefore, any real-valued outcome statistics that we can obtain using quantum
operations from a single black-box access to unknown channels, being linear
combination of such outcome probabilities, can be written in the form
\begin{align}
  \operatorname{tr}[{ C_{BR} \, \mathcal{E}_{A\to B}(\Phi_{A:R}) }]\ ,
\end{align}
where $C_{BR}$ is some Hermitian operator.  We call $C_{BR}$ a \emph{channel observable}.

\subsection{Entropy measures for states and channels}

The \emph{von Neumann entropy} of a quantum state $\rho$ is
\begin{align}
  {S}_{}^{}({\rho}) = -\operatorname{tr}({\rho\log\rho})\ .
  \label{z:xkQPXbUo}
\end{align}
In this paper, $\log$ denotes the natural logarithm and entropy is quantified in
number of ``nats,'' where one bit is $\log(2)$ nats.
For any quantum state $\rho$, and for any $\Gamma\geq 0$, we define the
(Umegaki) \emph{quantum relative entropy} as~\cite{R91,R92}
\begin{align}
  {D}_{}^{}({\rho}\mathclose{}\,\Vert\,\mathopen{}{\Gamma}) = \operatorname{tr}\bigl({\rho\bigl[{\log(\rho) - \log(\Gamma)}\bigr]}\bigr)\ .
  \label{z:P7JsTs-9}
\end{align}
We conventionally set ${D}_{}^{}({\rho}\mathclose{}\,\Vert\,\mathopen{}{\Gamma}) = \infty$ if $\rho$'s support is not
contained in $\Gamma$'s.  Observe that
${S}_{}^{}({\rho}) = -{D}_{}^{}({\rho}\mathclose{}\,\Vert\,\mathopen{}{\mathds{1}})$.
For any normalized states $\rho,\sigma$, we have
${D}_{}^{}({\rho}\mathclose{}\,\Vert\,\mathopen{}{\sigma}) \geq 0$. %
Extending the definition~\eqref{z:P7JsTs-9} formally to arbitrary
positive semidefinite operators $\rho,\Gamma\geq 0$, we have the following
scaling property for any $a,b>0$:
\begin{align}
  {D}_{}^{}({a\rho}\mathclose{}\,\Vert\,\mathopen{}{b\Gamma})
  = a\Bigl[{ \operatorname{tr}(\rho) \log\Bigl({\frac{a}{b}}\Bigr) + {D}_{}^{}({\rho}\mathclose{}\,\Vert\,\mathopen{}{\Gamma}) }\Bigr]\ .
  \label{z:4CgKuT8M}
\end{align}

For a state $\rho_A$ on a system $A$, we also introduce the alternative notation
${S}_{}^{}({A})_{\rho} = {S}_{}^{}({\rho_A})$.  We also define for a bipartite state $\rho_{AB}$
the \emph{conditional von Neumann entropy}
${S}_{}^{}({A}\mathclose{}\,|\,\mathopen{}{B})_{\rho} = {S}_{}^{}({AB})_{\rho} - {S}_{}^{}({B})_{\rho}=-D(\rho_{AB}\Vert\mathds{1}_A\otimes\rho_B)$.

Let $\mathcal{N}_{A\to B}$ be a quantum channel and let $\mathcal{M}_{A\to B}$
be a completely positive map. Let $R$ be any reference system and $\rho_{AR}$
be any fixed state. The \emph{channel relative entropy with respect to
  $\rho_{AR}$} is defined as
\begin{align}
  {D}_{\rho}^{}({\mathcal{N}_{A\to B}}\mathclose{}\,\Vert\,\mathopen{}{\mathcal{M}_{A\to B}})
  &= {D}_{}^{}({\mathcal{N}_{A\to B}(\rho_{AR})}\mathclose{}\,\Vert\,\mathopen{}{\mathcal{M}_{A\to B}(\rho_{AR})})\, .
  \label{z:BKNZ0Ibj}
\end{align}
By optimizing~\eqref{z:BKNZ0Ibj} with respect to every state $\rho_{AR}$, we define the \emph{channel relative
  entropy}~\cite{R93,R50,R79} as
\begin{align}
    {D}_{}^{}({\mathcal{N}_{A\to B}}\mathclose{}\,\Vert\,\mathopen{}{\mathcal{M}_{A\to B}})
    &= \max_{\rho_{AR}} {D}_{\rho}^{}({\mathcal{N}_{A\to B}}\mathclose{}\,\Vert\,\mathopen{}{\mathcal{M}_{A\to B}})
    = \max_{\lvert {\phi}\rangle _{AR}} {D}_{\phi}^{}({\mathcal{N}_{A\to B}}\mathclose{}\,\Vert\,\mathopen{}{\mathcal{M}_{A\to B}})\ ,
    \label{z:Y.8CkTW4}
\end{align}
where the optimal value in the first optimization
in~\eqref{z:Y.8CkTW4} is always attained by some pure state
$\phi_{AR}=\lvert {\phi}\rangle \mkern -1.8mu\relax \langle{\phi}\rvert _{AR}$ with $R\simeq A$. %

We define the \emph{channel entropy with respect to $\rho_{AR}$} of the quantum channel $\mathcal{N}_{A\to B}$ as the entropy of the output system $B$ conditioned on the reference system $R$:
\begin{align}
  {S}_{\rho}^{}({\mathcal{N}}) &= S(B|R)_{\mathcal{N}(\rho)}
    \nonumber\\
  &= -{D}_{}^{}({\mathcal{N}_{A\to B}(\rho_{AR})}\mathclose{}\,\Vert\,\mathopen{}{\widetilde{\mathcal{D}}_{A\to B}(\rho_{AR})})
    \nonumber\\
    & =  \log(d_B) - {D}_{\rho}^{}({\mathcal{N}_{A\to B}}\mathclose{}\,\Vert\,\mathopen{}{\mathcal{D}_{A\to B}})
    \nonumber\\
  &= -{D}_{}^{}({\mathcal{N}_{A\to B}(\rho_{AR})}\mathclose{}\,\Vert\,\mathopen{}{\mathds{1}_B\otimes\rho_R})\ ,
  \label{z:nVlXo461}
\end{align}
where $\mathcal{D}_{A\to B}(\cdot) = \operatorname{tr}_A({\cdot})\,\mathds{1}_B/d_B$ is the
completely depolarizing channel and
$\widetilde{\mathcal{D}}_{A\to B}(\cdot) = \operatorname{tr}({\cdot})\,\mathds{1}_B$ its
nonnormalized version. The \emph{channel entropy} of a quantum channel is then the minimum conditional entropy of the output $B$, conditioned on $R$~\cite{R45,R94,R46}:
\begin{align}
  {S}_{}^{}({\mathcal{N}_{A\to B}})
  &= \min_{\lvert {\phi}\rangle _{AR}} {S}_{\phi}^{}({\mathcal{N}_{A\to B}})
    \nonumber\\
  &=\min_{\lvert {\phi}\rangle _{AR}}S(B|R)_{\mathcal{N}(\phi)}
    \nonumber\\
  &= - \max_{\lvert {\phi}\rangle _{AR}} {D}_{}^{}({\mathcal{N}_{A\to B}(\phi_{AR})}\mathclose{}\,\Vert\,\mathopen{}{\mathds{1}_B\otimes\phi_{R}})
    \nonumber\\
  &= - {D}_{}^{}({ \mathcal{N}_{A\to B} }\mathclose{}\,\Vert\,\mathopen{}{ \widetilde{\mathcal{D}}_{A\to B} })\ .
    \label{z:4jPiiO8e}
\end{align}

The entropy of a channel quantifies the minimum output entropy of a channel when
measured conditioned on a reference system $R$.  In other words, a channel with
high entropy is one that is guaranteed to output a highly entropic state
(relative to $R$) for any input state.  This interpretation makes the channel
entropy an appealing quantity to maximize in our maximum channel entropy
principle (see also discussion in our companion
paper~\cite{R18}).

A closely related entropy measure is the thermodynamic capacity of a channel.
The \emph{thermodynamic capacity} of a quantum channel $\mathcal{N}'_{A\to B}$
with respect to positive semidefinite operators $\Gamma_A, \Gamma_B'$ is
defined~\cite{R47,R82} as
\begin{align}
  T({{\mathcal{N}'_{A\to B}} \,{\Vert}\, {\Gamma_A, \Gamma'_B} })
  = \max_{\sigma_A} \bigl[{
  {D}_{}^{}({\mathcal{N}'_{A\to B}(\sigma_A)}\mathclose{}\,\Vert\,\mathopen{}{\Gamma'_B})
  - {D}_{}^{}({\sigma_A}\mathclose{}\,\Vert\,\mathopen{}{\Gamma_A})
  }\bigr]\ .
\end{align}
In the special case $\Gamma_A = \mathds{1}_A$ and $\Gamma'_B = \mathds{1}_B$, we find
\begin{align}
  T({\mathcal{N}'_{A\to B}})
  = \max_{\sigma_A} \bigl[{ {S}_{}^{}({\sigma}) - {S}_{}^{}({\mathcal{N}'(\sigma)}) }\bigr]\ .
  \label{z:NdbPdJS1}
\end{align}
In this special case, and if we further assume $d_A = d_B$,
we have that $T({\mathcal{N}'_{A\to B}})$ is always positive (via
the choice $\sigma_A=\mathds{1}_A/d_A$ in the $\max$), and it is
equal to zero for any unital channel (since
${S}_{}^{}({\mathcal{N}'(\sigma)})\geq{S}_{}^{}({\sigma})$ for any unital
channel $\mathcal{N}'$).

The channel entropy is closely related to the thermodynamic
capacity.  Let $\mathcal{N}_{A\to B}$ be a
quantum channel, let $V_{A\to BE}$ be a Stinespring dilation of $\mathcal{N}$,
and let $\widehat{\mathcal{N}}(\cdot) = \operatorname{tr}_E\bigl[{ V\,(\cdot)\,V^\dagger }\bigr]$.
Then, for any $\lvert {\phi}\rangle _{AR}$, we have
$ {D}_{}^{}({\mathcal{N}(\phi_{AR})}\mathclose{}\,\Vert\,\mathopen{}{\mathds{1}_B\otimes\phi_R}) =
\operatorname{tr}[{\mathcal{N}(\phi)\,\log(\mathcal{N}(\phi))}] - \operatorname{tr}[{\phi_R\log(\phi_R)}] $,
leading to the following alternative expressions of the channel entropy with
respect to $\lvert {\phi}\rangle _{AR}$:
\begin{align}
  {S}_{\phi}^{}({\mathcal{N}})
  &= {S}_{}^{}({\mathcal{N}(\phi_{AR})}) - {S}_{}^{}({\phi_R})
    = {S}_{}^{}({B}\mathclose{}\,|\,\mathopen{}{R})_{\mathcal{N}(\phi)}
    = -{S}_{}^{}({B}\mathclose{}\,|\,\mathopen{}{E})_{V\phi V^\dagger}
    = -{S}_{}^{}({BE})_{V\phi V^\dagger} + {S}_{}^{}({E})_{V\phi V^\dagger}
    \nonumber\\
  &= -{S}_{}^{}({R})_{\phi} + {S}_{}^{}({E})_{\widehat{\mathcal{N}}(\phi)}
    = {S}_{}^{}({ \widehat{\mathcal{N}}(\phi_{A}) }) - {S}_{}^{}({ \phi_R })\ .
    \label{z:hkVuHdrl}
\end{align}
Therefore, the channel entropy is directly related to the complementary
channel's thermodynamic capacity:
\begin{align}
  {S}_{}^{}({\mathcal{N}})
  = \min_{\lvert {\phi}\rangle _{AR}} {S}_{\phi}^{}({\mathcal{N}})
  = - T(\widehat{\mathcal{N}})\ .
  \label{z:vTzieZ6M}
\end{align}

\section{Maximum-entropy derivation of the thermal channel}
\label{z:u4ZpmFvZ}

One way to define the thermal quantum state is through Jaynes' maximum entropy
principle~\cite{R3,R4}.  Given
a collection of Hermitian observables $\{{ Q_j }\}_{j=1}^J$, along with real values
$\{{ q_j }\}_{j=1}^J$, we ask which quantum state $\rho$ maximizes the entropy
${S}_{}^{}({\rho})$ subject to the constraints $\operatorname{tr}({Q_j \rho}) = q_j$ for
$j=1, \ldots, J$.  The observables $\{{ Q_j }\}$ need not commute.  Jaynes'
calculation, presented in standard textbooks, proceeds as follows.  One
introduces Lagrange multipliers $\mu_j\in\mathbb{R}$ (for $j=1, \ldots J$) to
account for the expectation value constraints and $\lambda\in\mathbb{R}$ to
account for the constraint $\operatorname{tr}({\rho})=1$.  One then looks for the stationary
points of
$L(\rho) = {S}_{}^{}({\rho}) - \sum \mu_j [{q_j - \operatorname{tr}({Q_j\rho})}] -
\lambda[{1-\operatorname{tr}(\rho)}]$.  If we perform the variation $\rho\to\rho+\delta\rho$,
we find to first order in $\delta\rho$ that
$\delta L = -\operatorname{tr}\bigl[{(\log\rho + \mathds{1})\delta \rho}\bigr] + \sum \mu_j
\operatorname{tr}({Q_j\delta\rho}) + \lambda\delta\rho$.  For $\rho$ to be a stationary point
of $L(\rho)$, this expression must vanish for all $\delta\rho$; this happens
exactly when $\log({\rho}) + \mathds{1}+ \sum \mu_j Q_j + \lambda\mathds{1}= 0$.
Solving for $\rho$ while introducing the quantity $Z = \exp(1 + \lambda)$ yields
the familiar form for the thermal state $\rho$:
\begin{align}
  \rho = \frac{{e}^{-\sum \mu_j Q_j}}{Z}\ .
  \label{z:Jdr9-JJH}
\end{align}

Here, we formulate and solve the analogous problem for quantum channels.  Given
an input system $A$, an output system $B$, and $R\simeq A$, and given a set of
channel observables (Hermitian operators) $\{{ C^j_{BR} }\}_{j=1}^J$ along with
real values $\{{ q_j }\}_{j=1}^J$, we ask: \emph{What quantum channel
  $\mathcal{N}_{A\to B}$ maximizes the channel entropy ${S}_{}^{}({\mathcal{N}})$,
  subject to the constraints
  $\operatorname{tr}[{C^j_{BR} \mathcal{N}_{A\to B}(\Phi_{A:R})}] = q_j$?}  We call such an
optimal channel a \emph{thermal channel}.
In the following sections, we leverage a formulation of this problem as a convex
optimization problem in order to derive a general structure of thermal channels.

Furthermore, rather than maximizing the channel entropy, we can also consider
more generally minimizing the channel relative entropy with respect to any fixed
completely positive map $\mathcal{M}_{A\to B}$ subject to linear constraints.
We analyze this generalization in \cref{z:aXvqINGx} below.

\subsection{Definition of the thermal channel}
\label{z:7suOBm0G}

Let $A,B$ be quantum systems and let $R\simeq A$.  Let $\{{ C^j_{BR} }\}_{j=1}^J$
be a collection of Hermitian operators and let $\{{ q_j }\}_{j=1}^J$ with
$q_j\in\mathbb{R}$.  Consider the following optimization problem:
\begin{align}
  \label{z:sb5FfEOw}
  \begin{aligned}[t]
    \textup{maximize:} \quad
    & {S}_{}^{}({\mathcal{N}_{A\to B}})
    \\
    \textup{over:}\quad
    & \mathcal{N}_{A\to B}\ \textup{c.p., t.p.}
    \\
    \textup{such that:}\quad
    & \operatorname{tr}\bigl[{C^j_{BR}\,\mathcal{N}_{A\to B}(\Phi_{A:R})}\bigr] = q_j\quad\text{for \(j=1, \ldots, J\)}\ .
  \end{aligned}
\end{align}
The maximization is taken over all completely positive (c.p.), trace-preserving
(t.p.) superoperators $\mathcal{N}_{A\to B}$ that satisfy the linear
channel-observable constraints specified by $C^j_{BR}, q_j$.

We assume that the problem is feasible, namely that there exists a channel
$\mathcal{N}_{A\to B}$ satisfying the given constraints.  This assumption rules
out the trivial situation where the constraints are incompatible.

In fact, we henceforth make a stricter assumption which is important for our
analysis.  We assume that the problem is \emph{strictly feasbile}, namely that
there is at least one quantum channel $\mathcal{N}_{A\to B}$ that satisfies the
given constraints and whose Choi matrix $\mathcal{N}_{A\to B}(\Phi_{A:R})$ is
positive definite.  In other words, the constraints do not force
$\mathcal{N}_{A\to B}$ to lie on the boundary of the set of all completely
positive superoperators.  This assumption rules out some edge cases where the
constraints are just so finely tuned that the hyperplane of
constraint-satisfying superoperators is tangent to (and ``barely touches'') the
set of completely positive maps.

\begin{definition}[Thermal channel]
We define a \textbf{thermal quantum channel} with respect to the constraints
$(C^j_{BR}, q_j)$ as the quantum channel $\mathcal{T}_{A\to B}^{}$ that achieves the
optimal value in~\eqref{z:sb5FfEOw}.
\end{definition}

Our first main result is a general structure of the thermal channel.  Given the
optimization problem~\eqref{z:sb5FfEOw}, and with our
additional strict feasibility assumption, we have the following theorem:
\begin{theorem}[Structure of the thermal channel]
  \noproofref
  \label{z:TLiwdJGR}
  A quantum channel $\mathcal{T}_{A\to B}^{}$ is a thermal channel if and only if it satisfies all the constraints in~\eqref{z:sb5FfEOw} and it
has a
  Choi matrix of the form
  \begin{align}
    \mathcal{T}_{A\to B}^{}(\Phi_{A:R})
    = \phi_R^{-1/2}
    \exp\Bigl\{{
    - \phi_R^{-1/2}\Bigl[{
    \sum \mu_j C^j_{BR}
    -\mathds{1}_B\otimes ({F_R + \phi_R\log\phi_R}) 
    - S_{BR} }\Bigr] \phi_R^{-1/2}
    }\Bigr\}
    \phi_R^{-1/2}
    + Y_{BR}\ ,
    \label{z:s1gf97hu}
  \end{align}
  where:
  \begin{itemize}
    \item $\mu_j\in\mathbb{R}$, $j=1,\ldots, J$;
    \item $Y_{BR}$ is a
      Hermitian operator satisfying $\Pi^{\phi_R}_R Y_{BR} \Pi^{\phi_R}_R = 0$;
    \item $S_{BR}$ is a positive semidefinite operator satisfying
      $S_{BR}\,\mathcal{T}_{A\to B}^{}(\Phi_{A:R}) = 0$;
    \item it holds that
      $\Pi_R^{\phi_R\perp}\bigl({\sum \mu_j C^j_{BR}
      -\mathds{1}_B\otimes F_B  - S_{BR} }\bigr) = 0$;
    \item $F_R$ is a Hermitian operator; and
    \item $\phi_R$ is the local
  reduced state on $R$ of an optimal state
  $\lvert {\phi}\rangle _{AR} = \phi_R^{1/2}\lvert {\Phi_{A:R}}\rangle $ in the definition of the channel
  entropy
  ${S}_{}^{}({\mathcal{N}_{A\to B}}) = \min_{\lvert {\phi}\rangle _{AR}}
  {S}_{\phi}^{}({\mathcal{N}_{A\to B}})$.
  \end{itemize}
  Any optimal state $\phi_A$ (with
  $\phi_A = \operatorname{tr}_R({\phi_{AR}}) = \phi_R^{t_{R\to A}}$) must satisfy
  \begin{align}
    \log(\phi_A) - \widehat{\mathcal{T}}^\dagger\Bigl({
    \log\bigl[{ \widehat{\mathcal{T}}_{A\to E}({\phi_A}) }\bigr] }\Bigr) \ \propto\  \Pi_A\ ,
    \label{z:lWRPNfFI}
  \end{align}
  where $\widehat{\mathcal{T}}_{A\to E}$ is a complementary channel to
  $\mathcal{T}_{A\to B}^{}$.  If $\phi_A$ has full rank, then $S_{BR} = 0 = Y_{BR}$,
  and~\eqref{z:lWRPNfFI} is sufficient for
  optimality of $\phi_A$.  The channel entropy attained by $\mathcal{T}_{A\to B}^{}$ is
  \begin{align}
    {S}_{}^{}({\mathcal{T}_{A\to B}^{}}) = -\operatorname{tr}({F_R}) + \sum_{j=1}^J \mu_j q_j\ .
    \label{z:WlEdtU0A}
  \end{align}
\end{theorem}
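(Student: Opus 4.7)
The plan is to treat~\eqref{z:sb5FfEOw} as a convex optimization in the Choi matrix $N_{BR}=\mathcal{N}_{A\to B}(\Phi_{A:R})$ and extract~\eqref{z:s1gf97hu} from the KKT conditions. The feasible set $\{N_{BR}\geq 0,\ \operatorname{tr}_B N_{BR}=\mathds{1}_R,\ \operatorname{tr}[C^j_{BR}N_{BR}]=q_j\}$ is convex, and ${S}_{}^{}({\mathcal{N}})=\min_{\lvert \phi\rangle_{AR}}{S}_{\phi}^{}({\mathcal{N}})$ is concave in $\mathcal{N}$ as a pointwise minimum of negated Umegaki relative entropies of linear images of $N_{BR}$ against fixed operators. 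Invoking the order-of-optimization swap cited from~\cite{R57,R79} together with the strict-feasibility hypothesis (providing a Slater-type condition), I would pick an optimal inner state $\phi^\ast$ and reduce to the subproblem of maximizing ${S}_{\phi^\ast}^{}({\mathcal{N}})=-{D}_{}^{}({\mathcal{N}(\phi^\ast_{AR})}\mathclose{}\,\Vert\,\mathopen{}{\mathds{1}_B\otimes\phi^\ast_R})$ over the same constraints; any thermal $\mathcal{T}$ must also maximize this reduced problem and simultaneously be a minimizer of the outer $\phi$-optimization.

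For the reduced subproblem I would set up the Lagrangian with real multipliers $\mu_j$ for the expectation constraints, a Hermitian multiplier $F_R$ for trace preservation $\operatorname{tr}_B N_{BR}=\mathds{1}_R$, and a positive semidefinite multiplier $S_{BR}$ for the positivity constraint $N_{BR}\geq 0$. Differentiating in $N_{BR}$ by the chain rule through the conjugation $\mathcal{N}(\phi^\ast_{AR})=(\mathds{1}_B\otimes(\phi^\ast_R)^{1/2})N_{BR}(\mathds{1}_B\otimes(\phi^\ast_R)^{1/2})$, and using the standard matrix gradient $\partial {D}_{}^{}({\sigma}\mathclose{}\,\Vert\,\mathopen{}{\tau})/\partial\sigma=\log\sigma-\log\tau+\mathds{1}$, the KKT stationarity equation restricted to the support of $\phi^\ast_R$ becomes, after rearrangement, $\log[(\phi^\ast_R)^{1/2}N_{BR}(\phi^\ast_R)^{1/2}]=-(\phi^\ast_R)^{-1/2}[\sum_j\mu_j C^j_{BR}-S_{BR}](\phi^\ast_R)^{-1/2}+\mathds{1}_B\otimes(\phi^\ast_R)^{-1/2}F_R(\phi^\ast_R)^{-1/2}+\mathds{1}_B\otimes\log\phi^\ast_R$, where the additive identity shift from the gradient has been absorbed into the definition of $F_R$. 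Exponentiating both sides and using $(\phi^\ast_R)^{-1/2}(\phi^\ast_R\log\phi^\ast_R)(\phi^\ast_R)^{-1/2}=\log\phi^\ast_R$ to rewrite the $\log\phi^\ast_R$ term as $(\phi^\ast_R)^{-1/2}[\mathds{1}_B\otimes\phi^\ast_R\log\phi^\ast_R](\phi^\ast_R)^{-1/2}$ reproduces~\eqref{z:s1gf97hu} on the support $\Pi^{\phi^\ast_R}$. Complementary slackness gives $S_{BR}\mathcal{T}(\Phi_{A:R})=0$. The KKT equation restricted to $\ker(\phi^\ast_R)$, where the $\phi^{\ast 1/2}$-sandwiched gradient vanishes, reduces to the stated projector identity $\Pi_R^{\phi_R\perp}(\sum_j\mu_j C^j_{BR}-\mathds{1}_B\otimes F_R-S_{BR})=0$. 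On $\ker(\phi^\ast_R)$ the objective is insensitive to $N_{BR}$, so the additive term $Y_{BR}$ with $\Pi^{\phi_R}_R Y_{BR}\Pi^{\phi_R}_R=0$ parametrizes the resulting freedom, subject only to positivity and trace preservation.

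The input optimality~\eqref{z:lWRPNfFI} follows from the outer minimization over $\phi$. Using the complementary-channel identity~\eqref{z:hkVuHdrl} to rewrite ${S}_{\phi}^{}({\mathcal{T}})={S}_{}^{}({\widehat{\mathcal{T}}(\phi_A)})-{S}_{}^{}({\phi_A})$ and varying $\phi_A\geq 0$ subject to $\operatorname{tr}\phi_A=1$, with $\partial{S}_{}^{}({X})/\partial X=-\log X-\mathds{1}$ and trace preservation $\widehat{\mathcal{T}}^\dagger(\mathds{1}_E)=\mathds{1}_A$, yields $\log\phi_A-\widehat{\mathcal{T}}^\dagger[\log\widehat{\mathcal{T}}(\phi_A)]\propto\Pi_A$; concavity of the capacity-like functional $\sigma\mapsto{S}_{}^{}({\sigma})-{S}_{}^{}({\widehat{\mathcal{T}}(\sigma)})$ upgrades this to sufficiency when $\phi_A$ has full rank. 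The entropy value~\eqref{z:WlEdtU0A} is obtained by evaluating $-{D}_{}^{}({\mathcal{T}(\phi^\ast_{AR})}\mathclose{}\,\Vert\,\mathopen{}{\mathds{1}_B\otimes\phi^\ast_R})$ with the exponential form: cyclic traces give $\operatorname{tr}[\sigma(\phi^\ast_R)^{-1/2}C^j_{BR}(\phi^\ast_R)^{-1/2}]=\operatorname{tr}[N_{BR}C^j_{BR}]=q_j$, trace preservation $\operatorname{tr}_B\sigma=\phi^\ast_R$ contracts the $F_R$-term to $\operatorname{tr}(F_R)$, and complementary slackness kills the $S_{BR}$-term, leaving $-\operatorname{tr}(F_R)+\sum_j\mu_j q_j$. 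The main obstacle I anticipate is the clean handling of the rank-deficient case: the symbols $(\phi^\ast_R)^{-1/2}$ must be interpreted as Moore--Penrose pseudoinverses on the support, the three pieces (exponential, $Y_{BR}$, kernel projector identity) must be cleanly extracted from the support- and kernel-restricted KKT equations, and in rank-deficient cases the input optimality is only necessary rather than sufficient; strict feasibility rules out the most pathological degeneracies but does not obviate this bookkeeping.
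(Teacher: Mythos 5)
Your proposal takes essentially the same route as the paper: reduce the problem to a fixed-input subproblem via the minimax swap (citing~\cite{R57,R79}), run a KKT analysis of the resulting convex program with dual variables for trace preservation, the expectation constraints, and positivity, exponentiate the stationarity condition to extract the exponential form, and use the complementary-channel identity~\eqref{z:hkVuHdrl} to derive the input-optimality condition~\eqref{z:lWRPNfFI}. The Lagrangian bookkeeping, the origin of the $\phi_R\log\phi_R$ term, the projector identity from the kernel component of stationarity, the $Y_{BR}$ freedom, and the evaluation of the attained entropy all match the paper's development across \cref{z:hd7.cLe.,z:33B55hFY,z:gV43EWT.,z:qi4DZIqg}.

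There is one genuine gap. You invoke strict feasibility as a Slater-type condition for strong duality and then immediately differentiate the Lagrangian and set it to zero. But the channel entropy with respect to $\phi^\ast$ is not differentiable on the boundary of the positivity cone: the gradient of $X\mapsto -\operatorname{tr}(X\log X)$ diverges as $X = \phi_R^{1/2}N_{BR}\phi_R^{1/2}$ becomes rank-deficient. Slater's condition ensures zero duality gap and existence of a dual optimum, but it does not by itself rule out the primal optimum lying at such a boundary point, where the stationarity equation would have no finite solution for $S_{BR}$ and your exponential formula would not be justified. The paper closes this hole with a dedicated argument (\cref{z:dTDQ-ZX2}): starting from any feasible $N_{BR}$ rank-deficient within the support of $\mathds{1}_B\otimes\Pi^{\phi_R}_R$, the objective strictly improves along the line toward the strictly feasible interior point, precisely because of the $-\log\theta$ divergence, so no such point can be optimal. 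Only then does the interior KKT stationarity analysis (and hence your displayed equation for $\log[(\phi^\ast_R)^{1/2}N_{BR}(\phi^\ast_R)^{1/2}]$) become rigorous. Your closing remark about Moore--Penrose pseudoinverses and rank-deficient bookkeeping is in the neighborhood of this issue but does not identify it: the obstruction is not the pseudoinverse convention on $\ker\phi_R$, but the interior-versus-boundary question for $N_{BR}$ on the \emph{support} of $\phi_R$. A related, smaller omission is that you assert $S_{BR}=Y_{BR}=0$ for full-rank $\phi_A$ without deriving it; this again follows only once one knows $\mathcal{T}(\phi_{AR})$, and hence $N_{BR}$, is full rank, which is precisely what the missing interior lemma provides.
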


The remainder of this section we construct a proof of the above theorem, by
analyzing the optimization~\eqref{z:sb5FfEOw} using convex
optimization techniques~\cite{R95}.

\subsection{Reduction to a fixed-input maximum channel entropy}
\label{z:0QjUzwvU}

The convex structure of the problem is not immediately obvious
from~\eqref{z:sb5FfEOw}, given that the channel entropy
${S}_{}^{}({\mathcal{N}_{AR}})$ involves a minimization over pure states $\psi_{AR}$.
Writing out the problem explicitly, we have
\begin{align}
  \text{\eqref{z:sb5FfEOw}}\quad
  &= \quad 
    \begin{aligned}[t]
      -\min_{\substack{
      \mathcal{N}:\,\mathrm{cp., t.p.}\\
      \operatorname{tr}[{ C^j_{BR}\, N_{BR} }] = q_j
      }} \max_{\lvert {\phi}\rangle _{AR}}
      \ 
      {D}_{}^{}({ \mathcal{N}_{A\to B}(\phi_{AR}) }\mathclose{}\,\Vert\,\mathopen{}{ \mathds{1}_B \otimes \phi_R })\ .
    \end{aligned}
    \label{z:ypjc2pEs}
\end{align}
The maximum-channel-entropy thermal
channel optimization is then equivalently written as
\begin{align}
  \text{\eqref{z:sb5FfEOw}}
  &= \text{\eqref{z:ypjc2pEs}}
  = \ -\min_{\substack{
      \mathcal{N}:\,\mathrm{cp., t.p.}\\
      \operatorname{tr}[{ C^j_{BR}\, N_{BR} }] = q_j
      }} \max_{\lvert {\phi}\rangle _{AR}}
      \ g_{\widetilde{\mathcal{D}}}\bigl({  \mathcal{N}_{A\to B}, \phi_{A} }\bigr)\ ,
  \label{z:jZzkURqZ}
\end{align}
where
\begin{align}
  g_{\mathcal{M}}(\mathcal{N}, \phi_{R}) \coloneqq
  {D}_{}^{}\bigl ({ \phi_R^{1/2}N_{BR}\phi_R^{1/2}}\mathclose{}\,\big \Vert\,\mathopen{}{\phi_R^{1/2}M_{BR}\phi_R^{1/2} }\bigr )\ ,
  \label{z:ZdgA6uUx}
\end{align}
where we have used the shorthand notation
$N_{BR} \equiv \mathcal{N}_{A\to B}(\Phi_{A:R})$ and $M_{BR}\equiv\mathcal{M}_{A\to B}(\Phi_{A:R})$, and we recall that $\widetilde{\mathcal{D}}_{A\to B}(\cdot) = \operatorname{tr}(\cdot)\,\mathds{1}_B$, with the
property that
$g_{\widetilde{\mathcal{D}}}(\mathcal{N}_{A\to B}, \phi_A) =
-{S}_{\phi}^{}({\mathcal{N}_{A\to B}})$. The relative entropy term only depends on the reduced state $\phi_R$, rather
than $\phi_{AR}$, since $g_{\mathcal{M}}$ remains invariant if we rotate
$\phi_{AR}$ by a local unitary on $R$. 

The function $g_{\mathcal{M}}$ is studied
in~\cite[Prop.~7.83]{R79}.  This function displays
the following useful convexity properties:
\begin{itemize}
\item $g_{\mathcal{M}}$ is jointly convex in $\mathcal{N}, \mathcal{M}$;
\item $g_{\mathcal{M}}$ is concave in $\phi_A$.
\end{itemize}
Standard minimax theorems therefore guarantee that the min and the max can be
interchanged in~\eqref{z:jZzkURqZ} (cf.\@ e.g.\@
\cite[Ex.~5.25]{R95}).
Following~\cite{R79}, we find:
\begin{align}
  \eqref{z:sb5FfEOw}
  &= - \max_{\phi_{A}} \min_{\substack{\mathcal{N}:\,\mathrm{cp., t.p.}\\
      \operatorname{tr}[{ C^j_{BR}\, N_{BR} }] = q_j}}
  g_{\widetilde{D}}\bigl({\mathcal{N}_{A\to B}, \phi_A}\bigr)
  \nonumber\\[1ex]
  &= \min_{\phi_A} \max_{\substack{\mathcal{N}:\,\mathrm{cp., t.p.}\\
      \operatorname{tr}[{ C^j_{BR}\, N_{BR} }] = q_j}} {S}_{\phi}^{}({\mathcal{N}_{A\to B}})
  \ .
  \label{z:02ppihHe}
\end{align}
We may therefore focus on the maximum-channel-entropy problem at fixed input
pure state $\phi_{AR}$:
\begin{align}
  \label{z:.wBvf2qe}
  \begin{aligned}[t]
    \textup{maximize:} \quad
    & {S}_{\phi}^{}({\mathcal{N}_{A\to B}})
    \\
    \textup{over:}\quad
    & \mathcal{N}_{A\to B}\ \textup{c.p., t.p.}
    \\
    \textup{such that:}\quad
    & \operatorname{tr}\bigl[{C^j_{BR}\,\mathcal{N}_{A\to B}(\Phi_{A:R})}\bigr] = q_j\quad\text{for \(j=1, \ldots, J\)}\ .
  \end{aligned}
\end{align}

\begin{definition}[Thermal channel with fixed input state]
The optimal quantum channel in~\eqref{z:.wBvf2qe} is
called the \textbf{thermal channel with respect to $\lvert {\phi}\rangle _{AR}$} and is
denoted by $\mathcal{T}_{A\to B}^{(\phi)}$. 
\end{definition}

The thermal channel $\mathcal{T}_{A\to B}^{}$ is then
the thermal channel with respect to the state $\phi_A$ for which
${S}_{\phi}^{}({ \mathcal{T}_{A\to B}^{(\phi)} })$ is maximal.

One of the main contributions of this paper is to give a general form of
thermal channels with respect to any fixed state $\phi_R$ (see in particular
\cref{z:33B55hFY} below).  By finally optimizing
over the input state $\phi_R$, we will obtain a characterization of a thermal
channel, proving \cref{z:TLiwdJGR}.

\subsection{Maximum channel entropy with fixed, full-rank input}

We now focus on solving the optimization
problem~\eqref{z:.wBvf2qe}.  As it turns out, the
problem becomes significantly simpler if the input state
$\lvert {\phi}\rangle _{AR} = \phi_A^{1/2}\lvert {\Phi_{A:R}}\rangle $ has a reduced state $\phi_A$ that
has full rank.  We solve this case first.

\begin{proposition}[Structure of the thermal channel with respect to full-rank $\phi_{A}$]
  \label{z:hd7.cLe.}
  Let $\phi_A$ be any full-rank quantum state and let
  $\lvert {\phi}\rangle _{AR} = \phi_A^{1/2}\lvert {\Phi_{A:R}}\rangle  = \phi_R^{1/2}\lvert {\Phi_{A:R}}\rangle $.
  There exists a Hermitian operator $F_R$ and real values $\mu_j$ such that the
  quantum superoperator $\mathcal{T}_{A\to B}^{(\phi)}$, defined through its Choi matrix
  as
  \begin{align}
    \mathcal{T}_{A\to B}^{(\phi)}(\Phi_{A:R})
    = \phi_R^{-1/2}\,\exp\Bigl\{{ - \phi_R^{-1/2}\Bigl[{
    \sum \mu_j C^j_{BR} - \mathds{1}_B\otimes \bigl({F_R + \phi_R\log\phi_R}\bigr) 
    }\Bigr] \phi_R^{-1/2} }\Bigr\} \, \phi_R^{-1/2}\ ,
    \label{z:8KUY6eJu}
  \end{align}
  is a quantum channel and is the unique optimal solution
  in~\eqref{z:.wBvf2qe}.  Furthermore, the operator
  $ \sum_j \mu_j C^j_{BR} -\mathds{1}_B\otimes \bigl({F_R + \phi_R\log\phi_R}\bigr)$ is
  positive definite, and the channel entropy with respect to $\lvert {\phi}\rangle _{AR}$
  attained by $\mathcal{T}_{A\to B}^{(\phi)}$ is
  \begin{align}
    {S}_{\phi}^{}\Bigl ({\mathcal{T}_{A\to B}^{(\phi)}}\Bigr )
    = \sum_{j=1}^J \mu_j q_j - \operatorname{tr}({F_R}) \ .
  \end{align}
\end{proposition}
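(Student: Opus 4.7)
My plan is to view the optimization problem \eqref{z:.wBvf2qe} as a finite-dimensional convex program in the Choi matrix $N_{BR} = \mathcal{N}_{A\to B}(\Phi_{A:R})$ and extract the exponential form \eqref{z:8KUY6eJu} from the KKT stationarity condition. With $X \coloneqq \mathds{1}_B \otimes \phi_R^{1/2}$ (invertible since $\phi_R$ has full rank) and $\tilde{N} \coloneqq X N_{BR} X = \mathcal{N}(\phi_{AR})$, the objective reads
\begin{align}
    {S}_{\phi}^{}({\mathcal{N}_{A\to B}})
    = -{D}_{}^{}({XN_{BR}X}\mathclose{}\,\Vert\,\mathopen{}{\mathds{1}_B \otimes \phi_R})\ ,
\end{align}
which is strictly concave in $N_{BR}$: the Umegaki relative entropy is strictly convex in its first argument when the second is positive definite, and $N \mapsto XNX$ is a bijection on the positive-definite cone. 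The remaining constraints $N_{BR} \geq 0$, $\operatorname{tr}_B N_{BR} = \mathds{1}_R$, and $\operatorname{tr}[C^j_{BR} N_{BR}] = q_j$ are semidefinite and affine, so the strict-feasibility hypothesis supplies Slater's condition, ensuring strong duality and making the KKT conditions both necessary and sufficient for the (unique) optimum.

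\textbf{KKT computation.} I would introduce Lagrange multipliers $\mu_j \in \mathbb{R}$ for the channel-observable constraints, a Hermitian $\tilde{F}_R$ for trace preservation, and a positive semidefinite $S_{BR}$ for the semidefinite constraint subject to complementary slackness $S_{BR} N_{BR} = 0$. The chain rule together with the standard identity $\delta \operatorname{tr}[\tilde N \log \tilde N] = \operatorname{tr}[(\log \tilde N + \mathds{1})\,\delta \tilde N]$ yields the stationarity equation
\begin{align}
    X\bigl(\log \tilde N + \mathds{1} - \mathds{1}_B \otimes \log \phi_R\bigr) X
    + \sum_{j=1}^J \mu_j C^j_{BR}
    + \mathds{1}_B \otimes \tilde{F}_R
    - S_{BR}
    = 0\ .
\end{align}
Solving for $\log \tilde N$ with $S_{BR} = 0$ and absorbing the leftover $\mathds{1}$ and $\phi_R$ pieces into the redefined dual variable $F_R \coloneqq -\tilde{F}_R - \phi_R$ produces exactly \eqref{z:8KUY6eJu}. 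The resulting exponential ansatz is automatically positive definite, so the corresponding $N_{BR}$ is full rank and thus interior to the PSD cone, which justifies $S_{BR} = 0$ by complementary slackness. Existence of $\mu_j$ and $F_R$ rendering the ansatz primal-feasible (trace-preserving and satisfying the $C^j$-constraints) follows from existence of a dual optimum, guaranteed by strong duality.

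\textbf{Entropy value and positive definiteness of $K$.} To obtain the value of the channel entropy, I would substitute $\log \tilde N = -X^{-1}[\sum_j \mu_j C^j_{BR} - \mathds{1}_B \otimes (F_R + \phi_R \log \phi_R)] X^{-1}$ into ${S}_{\phi}^{}({\mathcal{T}_{A\to B}^{(\phi)}}) = -\operatorname{tr}[\tilde N \log \tilde N] + \operatorname{tr}[\tilde N (\mathds{1}_B \otimes \log \phi_R)]$, convert traces against $\tilde N$ into traces against $N_{BR}$ by cyclicity through the $X^{\pm 1}$ factors, and apply the primal constraints $\operatorname{tr}[C^j_{BR} N_{BR}] = q_j$ and $\operatorname{tr}_B N_{BR} = \mathds{1}_R$. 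The two $\operatorname{tr}(\phi_R \log \phi_R)$ contributions cancel, leaving ${S}_{\phi}^{}({\mathcal{T}_{A\to B}^{(\phi)}}) = \sum_j \mu_j q_j - \operatorname{tr}(F_R)$. For strict positivity of $K \coloneqq \sum_j \mu_j C^j_{BR} - \mathds{1}_B \otimes (F_R + \phi_R \log \phi_R)$, the key observation is that $\tilde N = \mathcal{T}_{A\to B}^{(\phi)}(\phi_{AR})$ is both \emph{normalized} (trace preservation of $\mathcal{T}_{A\to B}^{(\phi)}$ together with $\operatorname{tr}(\phi_{AR}) = 1$) and \emph{full rank} (from the exponential form), so $0 < \tilde N < \mathds{1}$ strictly, hence $-\log \tilde N > 0$, and sandwiching by the positive-definite $X$ preserves strict positivity of $K = -X \log \tilde N \, X$.

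\textbf{Main obstacle.} The core technical care lies in the variational step: the nontrivial weight $X$ sandwiching $N_{BR}$ inside the relative entropy mixes the $\mathds{1}$ and $\phi_R$ terms with the trace-preservation multiplier, so matching the KKT equation cleanly to the claimed exponential form requires a careful regrouping and the redefinition $F_R \coloneqq -\tilde{F}_R - \phi_R$ of the dual variable. A secondary subtlety is justifying that the primal optimum lies in the interior of the PSD cone (so $S_{BR}$ vanishes); I plan to handle this by exhibiting the exponential ansatz itself as a full-rank feasible point whose existence is guaranteed by strong duality, and then invoking uniqueness of the KKT optimum under strict concavity.
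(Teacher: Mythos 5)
Your overall architecture matches the paper's: recast the problem in the Choi variable $N_{BR}$, build a Lagrangian with multipliers for the channel observables, trace preservation, and the PSD cone, differentiate, and rearrange the stationarity equation into exponential form. Your algebra is correct: I checked that your redefinition $F_R \coloneqq -\tilde F_R - \phi_R$ matches the paper's $F_R = Z_R - \phi_R - \phi_R\log\phi_R$ once you account for the $\mathds{1}_B\otimes\log\phi_R$ piece that you keep inside the objective (the paper instead drops the constant $-S(\phi_R)$ term, which shifts the $\phi_R\log\phi_R$ into the definition of $F_R$). Your computation of the attained channel entropy and your argument for $K > 0$ (from $0 < \tilde N < \mathds{1}$) are both correct. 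Your strict-concavity argument for uniqueness of the optimal channel is also fine and arguably cleaner than the paper's degree-of-freedom count.

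However, there is a genuine gap in your handling of the PSD multiplier $S_{BR}$, and it is the central technical point the paper has to prove separately. You write that you would set $S_{BR}=0$, observe that the resulting exponential ansatz is positive definite, and then ``justify $S_{BR}=0$ by complementary slackness,'' invoking strong duality to guarantee existence of suitable $\mu_j, F_R$. This is circular: you only derive the exponential form \emph{after} assuming $S_{BR}=0$, so its positive-definiteness cannot be used to establish $S_{BR}=0$ in the first place. Strong duality (via Slater) does give attainment of a dual optimum $(\mu^*, F_R^*, S^*_{BR})$, but it does not by itself tell you that $S^*_{BR}=0$; if the unique primal optimum $N^*$ sat on the boundary of the PSD cone, complementary slackness would permit $S^*_{BR}\neq 0$ and the KKT stationarity equation would then carry an extra $S^*_{BR}$ term, altering the form of $\log\tilde N$. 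Worse, on the boundary the $\log$ is not well defined, so the stationarity computation as written would not apply at all. What is actually needed is a separate argument that the optimal Choi matrix has full rank, which then forces $S^*_{BR}=0$ via $S^*_{BR} N^* = 0$. The paper proves this as \cref{z:dTDQ-ZX2}: it considers the interpolation $N^\theta = (1-\theta)N^0 + \theta N^{\mathrm{int}}$ with $N^{\mathrm{int}}>0$ a strictly feasible point, and shows the objective's directional derivative diverges like $-\log\theta$ as $\theta\to 0$ whenever $N^0$ is rank-deficient on the relevant subspace, so no boundary point can be optimal. You would need to supply this (or an equivalent ``entropy gradient blows up at the boundary'' argument applied to the inner Lagrangian minimization) to close the proof.
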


The structure~\eqref{z:8KUY6eJu} can be
viewed as the generalization to thermal channels of the generic structure
$\gamma_S = {e}^{F - \sum \mu_j Q_j}$ of the thermal
state~\eqref{z:chtPfev.}.  The real values $\mu_j$ mirror the inverse
temperature and chemical potentials, while the operator $F_R$ can be viewed as a
channel equivalent of the free energy.  The term $\log\phi_R$ in the exponent
reflects the channel nature of the optimization problem.

The parameters $\{{ \mu_j }\}$ and $F_R$ must be jointly chosen such that all
original constraints are simultaneously satisfied.  The constraints include
the expectation value constraints for each $C^j_{BR}$ with $j=1, \ldots, J$,
as well as the trace-preserving constraint.
Each $\mu_j$ appears as the Lagrange dual
variable (or Lagrange multiplier) for each expectation value constraint
with $j=1, \ldots, J$.
The parameter $F_R$ appears as the Lagrange dual variable of the trace-preserving
constraint and can be interpreted as ensuring
that $\mathcal{T}_{}^{(\phi)}$ is trace preserving.
While in the case of quantum states, the partition function (or the free energy)
can be computed after fixing any temperature and/or generalized chemical potentials
simply by normalizing the state to unit trace,
it does not appear that a similarly simple
method of determining $F_R$ can be employed here.

The term $\phi_R\log(\phi_R)$ in the exponential can be loosely understood as
compensating for the $\phi_R^{-1/2}$ terms that sandwich the exponential.
Suppose indeed that $\phi_R$ commutes with
$\sum \mu_j C^j_{BR} - \mathds{1}_B\otimes F_R$.  The $\phi_R\log \phi_R$ term,
which then commutes with the remaining term in the exponential, can be taken out
of the exponential to cancel out the $\phi_R^{-1/2}$ sandwiching factors,
leaving simply
$\mathcal{T}_{A\to B}^{(\phi)}(\Phi_{A:R}) = \exp\bigl\{{ -\phi_R^{-1/2}\bigl[{
  \sum \mu_j C^j_{BR} - \mathds{1}\otimes F_R }\bigr] \phi_R^{-1/2} }\bigr\}$.  It is unclear to
us how often this situation can be expected to occur.

While the thermal channel $\mathcal{T}_{A\to B}^{(\phi)}$
in~\eqref{z:8KUY6eJu} is the unique optimal
solution to~\eqref{z:.wBvf2qe} for a fixed, full-rank
$\phi_A$, it might still happen that $F_R$ and $\mu_j$ are not uniquely
specified; this situation might arise if the real-valued constraints imposed by
the conditions $\mathcal{N}^\dagger(\mathds{1})=\mathds{1}$ and
$\operatorname{tr}[{C^j_{BR}\mathcal{N}_{A\to B}(\Phi_{A:R})}]=q_j$ are not independent.

\paragraph*{Maximally mixed input state.}
We now briefly consider the case where $\phi_R = \mathds{1}_R/d_R$ is the maximally
mixed state, meaning that $\lvert {\phi}\rangle _{AR} = \lvert {\Phi_{A:R}}\rangle /\sqrt{d_R}$ is the
maximally entangled state in the canonical basis.  In this case, the operator
$\nu_{BR} \equiv \mathcal{N}(\phi_{AR}) = N_{BR}/d_R$ is the normalized Choi
state of $\mathcal{N}$.  The objective in
problem~\eqref{z:.wBvf2qe} is equivalently written as
${S}_{}^{}({\mathcal{N}(\phi_{AR})}) - {S}_{}^{}({\phi_R})$.  Since ${S}_{}^{}({\phi_R})$ is fixed,
the problem~\eqref{z:.wBvf2qe} is equivalent to
maximizing the entropy of $\rho_{BR}$ over all quantum states $\rho_{BR}$
subject to the linear constraints $\operatorname{tr}[{({d_R C^j_{BR}}) \rho_{BR}}] = q_j$ and
$d_R \operatorname{tr}_B({\rho_{BR}}) = \mathds{1}_R$.  The latter constraint can be projected
along an orthonormal basis $\{{ P_R^k }\}$ of traceless Hermitian operators on $R$
and thereby rewritten into a finite set of scalar constraints
$\operatorname{tr}[{\rho_{BR} \, d_R ({\mathds{1}_B\otimes P_R^k})}] = 1$.  This is a standard
quantum state maximum entropy problem, for which the solution is
$\rho_{BR} = {e}^{-\sum d_R\mu_j C^j_{BR} - \sum a_k d_R ({\mathds{1}_B\otimes
  P_R^k})}/Z$, where the $a_k$ are the ``generalized chemical potentials''
associated with the $P_R^k$ constraints.  This is indeed the optimal form
provided in~\cref{z:hd7.cLe.}, with
$F=\log({Z})\,\mathds{1}_R - \sum a_k P_R^k$.

Therefore, if $\phi_R=\mathds{1}_R/d_R$, the thermal quantum channel
$\mathcal{T}_{}^{(\mathds{1}_R/d_R)}$ has a Choi state that coincides exactly with the quantum
Choi state $\rho_{BR}$ that has maximal entropy subject to the constraints
$C^j_{BR}$.

\begin{proof}[*z:hd7.cLe.]
  That the thermal channel with respect to $\lvert {\phi}\rangle _{AR}$ exists follows from
  the fact that we assumed the problem~\eqref{z:sb5FfEOw} [and
  hence~\eqref{z:.wBvf2qe}] to be strictly feasible.
  Next, we claim that any optimal $\mathcal{N}_{A\to B}$ is such that
  $\mathcal{N}_{A\to B}(\phi_{AR})$ has full rank.  Intuitively, this follows
  because the derivative of the objective function
  ${S}_{\phi}^{}({\mathcal{N}_{A\to B}})$ diverges as
  $\mathcal{N}_{A\to B}(\phi_{AR})$ approaches a non-full-rank state, and
  therefore the maximum cannot lie on that boundary.  A proof is presented as
  \cref{z:dTDQ-ZX2} in
  \cref{z:N3vMol3P}.
  In turn, this implies that any optimal $\mathcal{N}_{A\to B}$ must have a Choi
  matrix
  $N_{BR} = \phi_R^{-1/2} \, \mathcal{N}_{A\to B}({\phi_{AR}}) \, \phi_R^{-1/2}$
  that has full rank.

  Knowing that the optimum cannot lie on the boundary of the domain of the
  objective function (namely, $N_{BR}$ must be a positive semidefinite matrix),
  we may now use standard Lagrangian/convex optimization techniques to find the
  maximum-entropy channel with respect to
  $\phi_A$~\cite{R95}.  In the following, we consider
  $N_{BR}$ to be the optimization variable (which must be positive
  semidefinite), and use $\mathcal{N}_{A\to B}$ as a shorthand notation for
  $\operatorname{tr}_R[{ N_{BR} (\cdot)^{t_{A\to R}}}]$.  We minimize the objective function
  $-{S}_{\phi}^{}({\mathcal{N}})$ over the set of all positive definite matrices
  $N_{BR} > 0$, subject to the constraints
  $\mathds{1}- \mathcal{N}^\dagger(\mathds{1}) = 0$ and
  $q_j - \operatorname{tr}\bigl({C^j_{BR} N_{BR}}\bigr) = 0$.  Since the complete positivity and trace
  preserving properties are imposed by constraints, the objective function's domain
  formally extends to maps that do not have these properties.  Concretely, we use
  the expression ${S}_{}^{}({\mathcal{N}(\phi_{AR})}) - {S}_{}^{}({\phi_R})$ for the objective
  function, noting that the different expressions for ${S}_{\phi}^{}({\mathcal{N}})$ in
  \cref{z:4jPiiO8e,z:hkVuHdrl}
  are all equivalent only as long as the map $\mathcal{N}$ is completely positive
  and trace-preserving, and formally extending the function
  ${S}_{}^{}({X}) = -\operatorname{tr}(X\log(X))$ to any
  positive semidefinite operator $X$.
  We construct the following Lagrangian,
  introducing dual variables $\mu_j\in\mathbb{R}$ (for $j=1, \ldots, J$) and
  $Z_R = Z_R^\dagger$:
  \begin{align}
    \mathcal{L}[N_{BR}, Z_R, \mu_j]
    = %
    -{S}_{}^{}({\mathcal{N}(\phi_{AR})}) + {S}_{}^{}({\phi_R})
    - \sum_{j=1}^J \mu_j \bigl[{ q_j - \operatorname{tr}\bigl({ C^j_{BR} N_{BR} }\bigr) }\bigr]
    + \operatorname{tr}\bigl({ Z_R \bigl[{\mathds{1}_R - \operatorname{tr}_B({N_{BR}})}\bigr]}\bigr)\ .
  \end{align}
  We now consider a variation $N_{BR} \to N_{BR} + \delta N_{BR}$.  That is, $\delta N_{BR}$
  is any infinitesimally small perturbation of $N_{BR}$ within the space of Hermitian
  operators.
  The calculus of
  variations used here can be thought of as a way of computing the derivative of
  $\mathcal{L}$ with respect to the primal %
  variables $N_{BR}$.
  Using $\delta \operatorname{tr}[{ f(X) }] = \operatorname{tr}[{ f'(X)\,\delta X }]$ for any scalar function
  $f$, we can first compute the variation of the objective function value:
  \begin{align}
    \delta\bigl[{- {S}_{}^{}\bigl ({\mathcal{N}_{A\to B}(\phi_{AR})}\bigr ) }\bigr]
    &=
    \delta\bigl[{- {S}_{}^{}\bigl ({\phi_R^{1/2} N_{BR} \phi_R^{1/2}}\bigr ) }\bigr]
    =
    \delta \operatorname{tr}\Bigl\{{
        \phi_R^{1/2} N_{BR} \phi_R^{1/2}\,\log\bigl[{ \phi_R^{1/2} N_{BR} \phi_R^{1/2} }\bigr]
    }\Bigr\}
      \nonumber\\
    &= \operatorname{tr}\Bigl\{{
      \Bigl[{
      \log\bigl({\phi_R^{1/2} N_{BR} \phi_R^{1/2}}\bigr)
      + \mathds{1}_{BR}
      }\Bigr]\,\delta\Bigl({ \phi_R^{1/2} N_{BR} \phi_R^{1/2} }\Bigr)
      }\Bigr\}
      \nonumber\\
    &= \operatorname{tr}\Bigl\{{
      \phi_R^{1/2}\Bigl[{
      \log\bigl({\phi_R^{1/2} N_{BR} \phi_R^{1/2}}\bigr)
      + \mathds{1}_{BR}
      }\Bigr]\, \phi_R^{1/2} \, \delta N_{BR}
      }\Bigr\}\ .
  \end{align}
  Therefore,
  \begin{align}
    \delta \mathcal{L}
    &=
      \operatorname{tr}\Bigl\{{
        \Bigl[{
        \phi_R^{1/2}\,\log\bigl({\phi_R^{1/2} N_{BR} \phi_R^{1/2}}\bigr)\,\phi_R^{1/2}
        + \mathds{1}_B\otimes\phi_R
        + \sum \mu_j C^j_{BR}
        - \mathds{1}_B\otimes Z_R
        }\Bigr]\,
        \delta N_{BR}
        }\Bigr\}
      \ .
  \end{align}
  Requiring the variation $\delta \mathcal{L}$ of the Lagrangian to vanish for
  all $\delta N$, %
  we find the condition that any optimal primal and dual variables must satisfy
  (in addition to the original problem constraints):
  \begin{gather}
    \phi_R^{1/2}\,\log\bigl({\phi_R^{1/2} N_{BR} \phi_R^{1/2}}\bigr)\,\phi_R^{1/2}
    + \mathds{1}_B\otimes\phi_R
    + \sum_{j=1}^J \mu_j C^j_{BR}
    - \mathds{1}_B\otimes Z_R
    = 0\ ;
    \label{z:RMJR7P5-}
  \end{gather}
  The condition~\eqref{z:RMJR7P5-}, on the other hand,
  enables us to derive the general form of the thermal channel with respect to
  $\phi_R$.  Given that $\phi_R$ is invertible, and defining
  $F_R = Z_R - \phi_R - \phi_R\log(\phi_R)$,
  \cref{z:RMJR7P5-} can be rearranged to
  \begin{align}
    \mathcal{N}(\phi_{AR})
    &= \phi_R^{1/2} N_{BR} \phi_R^{1/2}
      = \exp\Bigl\{{ - \phi_R^{-1/2} G_{BR} \phi_R^{-1/2} }\Bigr\}\ ;
    \nonumber\\
      G_{BR}
    &= - \mathds{1}_B\otimes (F_{R} + \phi_R\log\phi_R)
      + \sum_{j=1}^J \mu_j C^j_{BR}\ ,
  \end{align}
  Applying $\phi_R^{-1/2}\,({\cdot})\,\phi_R^{-1/2}$ yields the claimed
  form~\eqref{z:8KUY6eJu}.
  Since $\mathcal{N}(\phi_{AR})$ is a full-rank quantum state (full-rank since
  $N_{BR} > 0$), it obeys $0 < \mathcal{N}(\phi_{AR}) < \mathds{1}$.  Consequently, $\phi_R^{-1/2}\,G_{BR}\,\phi_R^{-1/2} > 0$.  Since $\phi_R$ has
  full rank, this in turn implies $G_{BR} > 0$, as claimed in the
  \lcnamecref{z:hd7.cLe.} statement.
  The channel entropy with respect to $\lvert {\phi}\rangle _{AR}$ attained by
  $\mathcal{T}_{A\to B}^{(\phi)}$ is
  \begin{align}
    {S}_{\phi}^{}\bigl ({ \mathcal{T}_{A\to B}^{(\phi)} }\bigr )
    &= {S}_{}^{}\bigl ({ \mathcal{T}_{A\to B}^{(\phi)}(\phi_{AR}) }\bigr ) - {S}_{}^{}({\phi_R})
      = \operatorname{tr}\Bigl\{{
      \mathcal{T}_{A\to B}^{(\phi)}(\phi_{AR})
      \Bigl[{
      \phi_R^{-1/2}\,G_{BR}\,\phi_R^{-1/2}
      }\Bigr]
      }\Bigr\} - {S}_{}^{}({\phi_R})
      \nonumber\\
    &= \operatorname{tr}\Bigl\{{
      \mathcal{T}_{A\to B}^{(\phi)}(\Phi_{A:R})
      \Bigl[{
      -\mathds{1}_B\otimes ({F_R + \phi_R\log\phi_R})
      + \sum \mu_j C^j_{BR}
      }\Bigr]
      }\Bigr\} - {S}_{}^{}({\phi_R})
      \nonumber\\
    &= \sum_{j=1}^J \mu_j q_j  -  \operatorname{tr}({F_R})\ ,
  \end{align}
  recalling that $\mathcal{T}_{A\to B}^{(\phi)}(\Phi_{A:R})$ is trace-preserving and that
  $q_j = \operatorname{tr}\!\big[ C^j_{BR} \mathcal{T}_{A\to B}^{(\phi)}(\Phi_{A:R})\big]$.

  The constraint $\mathcal{N}^\dagger(\mathds{1}) = \mathds{1}$ imposes $d_R^2$
  independent real constraints on the variables in $\mathcal{N}$ (this value is
  the real dimension of a complex Hermitian $d_R \times d_R$ matrix).  Each
  further constraint $\operatorname{tr}[{ C^j_{BR} N_{BR} }] = q_j$ imposes one further real
  constraint, as long as each additional constraint is linearly independent
  from the previous ones.  Hence, as long as all these constraints are linearly
  independent, we have $d_R^2 + J$ real constraints on $\mathcal{N}$.  On the
  other hand, there are exactly $d_R^2 + J$ real degrees of freedom in the
  general form of $\mathcal{T}_{A\to B}^{(\phi)}$, namely $d_R^2$ for $F_B$ (through
  $Z_B$) and $J$ through $\mu_1, \ldots, \mu_J$.  In this case, the constraints
  determine these variables uniquely, meaning the solution $\mathcal{T}_{A\to B}^{(\phi)}$
  is unique.  If the constraints are not linearly independent, we may simplify
  the additional $J$ constraints into fewer constraints to arrange that they are
  all linearly independent, without changing neither the feasible set nor the
  objective function of the optimization problem.  In this simplified form it is
  clear that the solution $\mathcal{T}_{A\to B}^{(\phi)}$ is unique, even if in its
  original form it is possible that several choices of $F_R$, $\mu_j$ lead to
  the same channel $\mathcal{T}_{A\to B}^{(\phi)}$.
\end{proof}

\subsection{Maximum channel entropy with arbitrary fixed input}

We now lift our assumption that the reduced input state $\phi_A$ has full rank
and find the general structure of thermal channels for such general states.

\begin{theorem}[Structure of a thermal channel with respect to general input $\phi_A$]
  \noproofref
  \label{z:33B55hFY}
  Let $\lvert {\phi}\rangle _{AR} = \phi_A^{1/2}\lvert {\Phi_{A:R}}\rangle $, where $\phi_A$ is an
  arbitrary quantum state.  Any quantum channel $\mathcal{T}_{A\to B}^{(\phi)}$ is an
  optimal solution to~\eqref{z:.wBvf2qe} if and only if
  it satisfies all the problem constraints and it is of the form
  \begin{subequations}
    \label{z:TUY44lvP}
    \begin{gather}
      \mathcal{T}_{A\to B}^{(\phi)}(\Phi_{A:R})
      = 
      \phi_R^{-1/2}\,\exp\Bigl\{{ -\phi_R^{-1/2} G_{BR} \phi_R^{-1/2} }\Bigr\} \, \phi_R^{-1/2} + Y_{BR}\ ;
      \\
      G_{BR} = \sum \mu_j C^j_{BR} - \mathds{1}_B\otimes \bigl[{F_R + \phi_R\log(\phi_R)}\bigr] - S_{BR}\ ,
    \end{gather}
  \end{subequations}
  where $F_R$ is a Hermitian matrix, $\mu_j\in\mathbb{R}$ (for
  $j=1, \ldots, J$), $S_{BR}$ is a positive semidefinite operator satisfying
  $S_{BR}\, \mathcal{T}_{A\to B}^{(\phi)}(\Phi_{A:R}) = 0$, $G_{BR}$ satisfies
  $\Pi^{\phi_R \perp}_R G_{BR} = 0$, and $Y_{BR}$ is a Hermitian operator such
  that $\Pi^{\phi_R}_R Y_{BR} \Pi^{\phi_R}_R = 0$.
  Furthermore, for any such $\mathcal{T}_{A\to B}^{(\phi)}$, we have that $G_{BR}$ is
  positive semidefinite and $\operatorname{tr}_B({ Y_{BR} }) = \Pi_R^{\phi_R \perp}$.  The
  attained value for the channel entropy with respect to $\lvert {\phi}\rangle _{AR}$ is
  \begin{align}
    {S}_{\phi}^{}\bigl ({\mathcal{T}_{A\to B}^{(\phi)}}\bigr )
    = - \operatorname{tr}({F_R}) + \sum \mu_j q_j\ .
  \end{align}
\end{theorem}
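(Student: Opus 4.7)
The plan is to extend the Lagrangian analysis of \cref{z:hd7.cLe.} in two ways: decompose $N_{BR}$ according to $\Pi_R^{\phi_R}$ and $\Pi_R^{\phi_R\perp}$, and introduce a KKT dual multiplier for the positivity constraint $N_{BR}\geq 0$. The objective
\begin{align*}
    {S}_{\phi}^{}({\mathcal{N}}) = {S}_{}^{}\bigl({\phi_R^{1/2} N_{BR} \phi_R^{1/2}}\bigr) - {S}_{}^{}({\phi_R})
\end{align*}
depends on $N_{BR}$ only through $N^{++}_{BR} \equiv \Pi_R^{\phi_R} N_{BR} \Pi_R^{\phi_R}$, since the $\phi_R^{1/2}$ factors annihilate components outside that block. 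The residual piece $Y_{BR} \equiv N_{BR} - N^{++}_{BR}$ automatically satisfies $\Pi_R^{\phi_R} Y_{BR} \Pi_R^{\phi_R} = 0$ and enters only through the problem constraints.

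I would then set up the Lagrangian
\begin{align*}
    \mathcal{L} = -{S}_{}^{}\bigl({\phi_R^{1/2} N_{BR} \phi_R^{1/2}}\bigr) + {S}_{}^{}({\phi_R}) - \sum_j \mu_j \bigl[{q_j - \operatorname{tr}(C^j_{BR} N_{BR})}\bigr] + \operatorname{tr}\bigl({Z_R [\mathds{1}_R - \operatorname{tr}_B(N_{BR})]}\bigr) - \operatorname{tr}(S_{BR} N_{BR})\ ,
\end{align*}
with $Z_R = Z_R^\dagger$, $\mu_j\in\mathbb{R}$, and $S_{BR}\geq 0$ satisfying the complementary slackness $S_{BR} N_{BR} = 0$ (which translates to $S_{BR}\,\mathcal{T}_{A\to B}^{(\phi)}(\Phi_{A:R}) = 0$). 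The strict feasibility assumption on \cref{z:sb5FfEOw} supplies a Slater condition, providing strong duality and existence of these multipliers. The stationarity $\delta\mathcal{L} = 0$, computed via $\delta\operatorname{tr}[f(X)] = \operatorname{tr}[f'(X)\delta X]$ exactly as in \cref{z:hd7.cLe.}, yields
\begin{align*}
    \phi_R^{1/2}\log\bigl(\phi_R^{1/2} N_{BR} \phi_R^{1/2}\bigr)\phi_R^{1/2} + \mathds{1}_B\otimes\phi_R + \sum_j \mu_j C^j_{BR} - \mathds{1}_B\otimes Z_R - S_{BR} = 0\ ,
\end{align*}
where $\log$ is interpreted as a matrix function on the support of its argument. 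Setting $F_R = Z_R - \phi_R - \phi_R\log\phi_R$ and projecting this equation by $\Pi_R^{\phi_R}$ from both sides recovers the exponential form $\phi_R^{1/2} N^{++}_{BR} \phi_R^{1/2} = \exp\{-\phi_R^{-1/2} G_{BR} \phi_R^{-1/2}\}$ on the support, exactly as in the full-rank proof. Projecting by $\Pi_R^{\phi_R\perp}$ instead, the $\log$ and $\phi_R$ terms vanish and one reads off the claimed $\Pi_R^{\phi_R\perp} G_{BR} = 0$.

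The remaining claims then follow: decomposing the trace-preserving condition $\operatorname{tr}_B(N_{BR}) = \mathds{1}_R$ along $\Pi_R^{\phi_R}$ and $\Pi_R^{\phi_R\perp}$ forces $\operatorname{tr}_B(N^{++}_{BR}) = \Pi_R^{\phi_R}$ on the support and $\operatorname{tr}_B(Y_{BR}) = \Pi_R^{\phi_R\perp}$ on the kernel; positive semidefiniteness of $G_{BR}$ on its support mirrors the full-rank argument (the subnormalized state $\mathcal{T}_{A\to B}^{(\phi)}(\phi_{AR})$ is upper bounded by $\mathds{1}$ on its support, so its log is nonpositive, giving $\phi_R^{-1/2} G_{BR} \phi_R^{-1/2} \geq 0$); and the attained value $-\operatorname{tr}(F_R) + \sum_j \mu_j q_j$ reproduces the end-of-proof computation of \cref{z:hd7.cLe.}, with $S_{BR}$ dropping out by complementary slackness and the $F_R$ contribution handled by trace preservation.

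The main obstacle will be carefully justifying the stationarity calculation near the boundary of the positive cone. I expect to need an analogue of \cref{z:dTDQ-ZX2} restricted to the support of $\phi_R$, ensuring that at the optimum $\phi_R^{1/2} N^{++}_{BR} \phi_R^{1/2}$ has full rank as an operator on $\mathscr{H}_B \otimes \Pi_R^{\phi_R}\mathscr{H}_R$ so that the $\log$ in the stationarity equation is finite on the support. Once this is in place, all rank-deficient behavior is confined to the $\phi_R$-kernel components and is handled cleanly by $Y_{BR}$ and $S_{BR}$.
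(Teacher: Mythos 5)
Your proof takes essentially the same route as the paper: the paper establishes the more general \cref{z:gV43EWT.} via precisely this KKT/Lagrangian analysis (with $S_{BR}$ as the dual multiplier for positivity, a block decomposition along $\Pi_R^{\phi_R}$ versus $\Pi_R^{\phi_R\perp}$, and \cref{z:dTDQ-ZX2} to rule out boundary optima within the $\phi_R$-support block) and then obtains \cref{z:33B55hFY} by specialization; the only cosmetic difference is that the paper adds the term $1-\operatorname{tr}[\mathcal{N}(\phi_{AR})]$ to the objective so that $F_R$ itself is the dual variable, whereas you keep $Z_R$ and substitute afterwards. One small remark: you will not need to adapt \cref{z:dTDQ-ZX2} — its conclusion is already stated as full rank within the support of $\mathds{1}_B\otimes\Pi^{\phi_R}_R$, so it covers rank-deficient $\phi_R$ as written.
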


This theorem is a special case of a more general theorem that we prove below
(\cref{z:gV43EWT.} in
\cref{z:aXvqINGx}).
The Lagrange dual version of this problem is derived as part of the more
general optimization problem studied in \cref{z:aXvqINGx};
see specifically \cref{z:t5t-uuUF}.

We now state some stability results for the thermal quantum channel and the
achieved channel entropy with respect to $\phi_R$.  These claims can be viewed
as a consequence of friendly continuity properties
of ${S}_{\phi}^{}({\mathcal{N}})$ as a function both of $\phi$ and of $\mathcal{N}$.
We define for convenience
the maximal channel entropy compatible with the given constraints, viewed as a
function of $\phi_R$:
\begin{align}
  \tilde{s}(\phi_R)
  &\equiv \max_{\substack{
    \mathcal{N}\ \textup{cp. tp.}\\
    \operatorname{tr}[{C^j_{BR} \mathcal{N}(\phi_{AR})}] = q_j
  }} {S}_{\phi}^{}({\mathcal{N}})
\end{align}

\begin{proposition}[Stability of the thermal quantum channel in $\phi_R$]
  \label{z:WYT9Hica}
  The function $\tilde{s}(\phi_R)$ is continuous in $\phi_R$ over all $\phi_R$.
  The thermal quantum channel $\mathcal{T}_{}^{(\phi)}$ is unique and a continuous
  function of $\phi_R$ for all full-rank $\phi_R>0$.
\end{proposition}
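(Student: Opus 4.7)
The plan is to interpret $\tilde{s}$ as the value function and $\phi_R \mapsto \mathcal{T}_{}^{(\phi)}$ as the argmax map of a parametric convex optimization problem, and apply Berge's maximum theorem. The essential ingredients are joint continuity of $S_\phi^{}(\mathcal{N}) = S(\mathcal{N}(\phi_{AR})) - S(\phi_R)$ in $(\mathcal{N}, \phi_R)$, compactness of the set of cp tp maps on $A\to B$, and both upper and lower hemicontinuity of the feasible correspondence $\phi_R \mapsto \{\mathcal{N}:\mathcal{N}\text{ cp, tp, satisfying the }q_j\text{ constraints at }\phi_R\}$.

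For upper semicontinuity of $\tilde{s}$, I would take a sequence $\phi_R^{(n)} \to \phi_R$ together with optimal channels $\mathcal{N}^{(n)}$ attaining $\tilde{s}(\phi_R^{(n)})$. By compactness, a subsequence converges $\mathcal{N}^{(n)} \to \mathcal{N}^*$; the constraints $\operatorname{tr}[C^j_{BR}\mathcal{N}^{(n)}(\phi_{AR}^{(n)})] = q_j$ pass to the limit to give feasibility of $\mathcal{N}^*$ at $\phi_R$, and joint continuity of the objective yields $\limsup \tilde{s}(\phi_R^{(n)}) \leq S_\phi^{}(\mathcal{N}^*) \leq \tilde{s}(\phi_R)$. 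For lower semicontinuity I would exploit the strict feasibility assumption. Let $\mathcal{N}_0$ be a strictly feasible channel for $\phi_R$ (with positive-definite Choi matrix) and let $\mathcal{T}^*$ be optimal. Form $\mathcal{N}^{(n)} = (1 - \epsilon_n)\,\mathcal{T}^* + \epsilon_n\,\tilde{\mathcal{N}}_0^{(n)}$, where $\tilde{\mathcal{N}}_0^{(n)}$ is chosen close to $\mathcal{N}_0$ so that $\mathcal{N}^{(n)}$ exactly satisfies the constraints at $\phi_R^{(n)}$; such a correction exists for small constraint residuals because strict feasibility provides directions in channel space spanning the residual directions. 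Taking $\epsilon_n \to 0$ and using joint continuity of the objective gives $\tilde{s}(\phi_R) \leq \liminf \tilde{s}(\phi_R^{(n)})$.

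For the uniqueness claim, I would simply invoke \cref{z:hd7.cLe.}, which already establishes uniqueness of $\mathcal{T}_{}^{(\phi)}$ for full-rank $\phi_R$. With uniqueness in hand, continuity of the map $\phi_R \mapsto \mathcal{T}_{}^{(\phi)}$ on the open set of full-rank states follows from the argmax statement of Berge's theorem: the argmax correspondence is upper hemicontinuous and, being single-valued here, reduces to an ordinary continuous function. Alternatively, one could argue directly from the explicit form \eqref{z:8KUY6eJu}: the Lagrangian is strictly concave in the Choi matrix $N_{BR}$ when $\phi_R$ has full rank (since the map $N_{BR}\mapsto\phi_R^{1/2}N_{BR}\phi_R^{1/2}$ is a bijection and $S(\cdot)$ is strictly concave on positive-definite matrices), so the Jacobian of the constraint system with respect to the parameters $(F_R,\mu_j)$ is nonsingular, and the implicit function theorem yields continuous dependence of $(F_R,\mu_j)$, and hence of $\mathcal{T}_{}^{(\phi)}$, on $\phi_R$.

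The hard part is the lower semicontinuity argument, specifically constructing the corrector $\tilde{\mathcal{N}}_0^{(n)}$ uniformly for $\phi_R$ that may be rank-deficient. The residuals $q_j - \operatorname{tr}[C^j_{BR}\mathcal{T}^*(\phi_{AR}^{(n)})]$ vanish as $n\to\infty$ by continuity, but making them vanish exactly requires that the linear map from channel perturbations to constraint residuals has full row rank uniformly near the limiting constraint data. This is precisely what strict feasibility provides via an open mapping argument, although at rank-deficient $\phi_R$ one must track carefully how the effective constraint hyperplane (as a subset of the affine subspace of trace-preserving superoperators) depends on $\phi_R$; the joint continuity of the constraint functionals $\mathcal{N}\mapsto \operatorname{tr}[C^j_{BR}\mathcal{N}(\phi_{AR})]$ in $(\mathcal{N},\phi_R)$ makes this robustness arguments go through.
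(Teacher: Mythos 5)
You take the same route as the paper — Berge's maximum theorem — so the core idea matches. The paper's entire proof is a one-line invocation of Berge; you supply a (mostly reasonable) verification of its hypotheses.

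However, you complicate the argument in a way that creates an unnecessary gap. You treat the feasible correspondence
\[
  \phi_R \mapsto \bigl\{\mathcal{N}\ \text{cp, tp}: \operatorname{tr}[C^j_{BR}\mathcal{N}(\phi_{AR})]=q_j\bigr\}
\]
as $\phi_R$-dependent, and then spend most of your effort on its lower hemicontinuity, which you yourself flag as delicate at rank-deficient $\phi_R$. But the constraint set in the optimization problem \eqref{z:.wBvf2qe} that defines $\mathcal{T}^{(\phi)}$ — and whose optimal value $\tilde s(\phi_R)$ is meant to track — is $\operatorname{tr}[C^j_{BR}\mathcal{N}(\Phi_{A:R})] = q_j$, i.e.\ a constraint on the \emph{Choi matrix}, independent of $\phi_R$. (Compare also the proof of \cref{z:4e07B.62}, which uses $S_{\phi^z}(\mathcal{T}^{(\phi)})\le S_{\phi^z}(\mathcal{T}^{(\phi^z)})$ — this step is only valid because $\mathcal{T}^{(\phi)}$ is feasible for the $\phi^z$-problem, which in turn requires the feasible set to be the same for all $\phi$.) The displayed definition of $\tilde s(\phi_R)$ writing $\mathcal{N}(\phi_{AR})$ rather than $\mathcal{N}(\Phi_{A:R})$ appears to be a typo, and taking it literally has led you to chase a harder problem than intended.

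Once you recognize that the feasible set is a fixed compact convex set (channels with fixed Choi-matrix expectation values), the constraint correspondence is constant, hence trivially both upper and lower hemicontinuous, and your entire corrector/open-mapping construction is superfluous. You then only need: (i) compactness and nonemptiness of the feasible set (nonemptiness from feasibility), (ii) joint continuity of $S_\phi(\mathcal{N}) = S(\mathcal{N}(\phi_{AR})) - S(\phi_R)$ in $(\phi_R,\mathcal{N})$, and (iii) uniqueness at full-rank $\phi_R$ from \cref{z:hd7.cLe.}, which you correctly invoke. Your implicit-function-theorem alternative for the last point is a reasonable second route, though it would require verifying nonsingularity of the Jacobian of the dual-variable system, which is more than you actually need once uniqueness plus upper hemicontinuity of the argmax are in hand.

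In short: same theorem, but your lower-hemicontinuity argument is both not needed and, at rank-deficient $\phi_R$, not obviously salvageable (the effective constraint operators $\phi_R^{1/2}C^j_{BR}\phi_R^{1/2}$ can degenerate in rank, which would break your open-mapping step — a concern you correctly raised but did not resolve). The clean version is simply: constant feasible correspondence, jointly continuous objective, Berge.
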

\begin{proof}[**z:WYT9Hica]
  The claim follows as a direct consequence of Berge's maximum
  theorem~\cite{R96,R97}.
\end{proof}

The following statement is equally intuitive and also follows from Berge's maximum
theorem; we provide a self-contained proof for completeness.
\begin{proposition}[Stability of the thermal quantum channel for general $\phi_R$]
  \label{z:4e07B.62}
  Let $\{{ \phi_R^z }\}_{z>0}$ be any family of states converging to some
  $\phi_R \equiv \lim_{z\to 0} \phi_R^z$.
  Let $\mathcal{T}_{}^{(\phi^z)}$ be optimizers in $\tilde{s}(\phi_R^z)$,
  and suppose that they converge
  towards some channel $\mathcal{T}_{}^{} := \lim_{z\to 0} \mathcal{T}_{}^{(\phi^z)}$.
  Then $\mathcal{T}_{}^{}$ is optimal in $\tilde{s}(\phi_R)$.
\end{proposition}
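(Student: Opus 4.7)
The plan is to combine the continuity of $\tilde{s}$ from \Cref{z:WYT9Hica} with lower semi-continuity of the Umegaki relative entropy to squeeze the value ${S}_{\phi}^{}({\mathcal{T}})$ between two copies of $\tilde{s}(\phi_R)$. First, I would verify that $\mathcal{T}$ is a feasible candidate in the optimization defining $\tilde{s}(\phi_R)$. Each $\mathcal{T}^{(\phi^z)}$ is CPTP, and the set of CPTP maps on finite-dimensional spaces is closed in the standard topology on Choi matrices, so the limit $\mathcal{T}$ is CPTP. Similarly, each $\mathcal{T}^{(\phi^z)}$ satisfies $\operatorname{tr}\bigl[{C^j_{BR}\,\mathcal{T}^{(\phi^z)}(\Phi_{A:R})}\bigr] = q_j$, and continuity of the trace on the Choi matrix yields the same identity for $\mathcal{T}$. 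Feasibility of $\mathcal{T}$ in $\tilde{s}(\phi_R)$ immediately gives the upper bound ${S}_{\phi}^{}({\mathcal{T}}) \leq \tilde{s}(\phi_R)$.

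Next, I would establish the reverse inequality. Writing ${S}_{\phi^z}^{}({\mathcal{T}^{(\phi^z)}}) = -{D}_{}^{}({\mathcal{T}^{(\phi^z)}(\phi^z_{AR})}\,\Vert\,{\mathds{1}_B\otimes\phi^z_R})$, both arguments of the relative entropy converge in norm as $z\to 0$: the second argument since $\phi^z_R \to \phi_R$ by assumption, and the first since $\phi^z_{AR} \to \phi_{AR}$ together with $\mathcal{T}^{(\phi^z)} \to \mathcal{T}$. Umegaki's relative entropy is lower semi-continuous on pairs of positive semidefinite operators in finite dimensions, so
\begin{align}
{D}_{}^{}({\mathcal{T}(\phi_{AR})}\,\Vert\,{\mathds{1}_B\otimes\phi_R})
\leq \liminf_{z\to 0} {D}_{}^{}({\mathcal{T}^{(\phi^z)}(\phi^z_{AR})}\,\Vert\,{\mathds{1}_B\otimes\phi^z_R})\ ,
\end{align}
which translates to ${S}_{\phi}^{}({\mathcal{T}}) \geq \limsup_{z\to 0} {S}_{\phi^z}^{}({\mathcal{T}^{(\phi^z)}}) = \lim_{z\to 0} \tilde{s}(\phi^z_R) = \tilde{s}(\phi_R)$, where the last equality uses the continuity of $\tilde{s}$ established in \Cref{z:WYT9Hica}. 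Combining the two inequalities yields ${S}_{\phi}^{}({\mathcal{T}}) = \tilde{s}(\phi_R)$, so $\mathcal{T}$ is indeed optimal.

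The main subtlety is that the map $(\phi_R, \mathcal{N}) \mapsto {S}_{\phi}^{}({\mathcal{N}})$ is not jointly continuous when $\phi_R$ is allowed to be rank-deficient: the von Neumann entropy of $\phi_R$ is continuous on the state space, but the presence of $\log\phi_R$ in the relative-entropy expression forces us to rely on semi-continuity rather than continuity. The one direction we need, however, is precisely the one provided by lower semi-continuity of the relative entropy; note that the relative entropies appearing here are all automatically finite because $\operatorname{tr}_B[\mathcal{T}^{(\phi^z)}(\phi^z_{AR})] = \phi^z_R$ guarantees the support condition, with the analogous statement in the limit. No other delicate estimate is needed, and the use of continuity of $\tilde{s}$ from \Cref{z:WYT9Hica} is the only nontrivial external input.
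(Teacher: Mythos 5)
Your proof is correct, but it takes a genuinely different route from the paper's. You split the argument into an upper bound (from feasibility of the limit channel $\mathcal{T}$, using closedness of the CPTP set and the linear constraints) and a lower bound (from lower semi-continuity of the Umegaki relative entropy plus the continuity of $\tilde{s}$ imported from \Cref{z:WYT9Hica}). The paper instead works entirely with the entropy functional: it picks a maximizer $\mathcal{T}^{(\phi)}$ for $S_\phi(\cdot)$, uses continuity of $\phi\mapsto S_\phi(\mathcal{T}^{(\phi)})$ at the fixed channel to get an error term $\xi(z)\to 0$, chains the optimality inequalities
$S_\phi(\mathcal{T}^{(\phi)}) \leq S_{\phi^z}(\mathcal{T}^{(\phi)}) + \xi(z) \leq S_{\phi^z}(\mathcal{T}^{(\phi^z)}) + \xi(z)$,
and takes the limit. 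Both proofs are valid; yours is somewhat more modular at the cost of invoking \Cref{z:WYT9Hica} (which the paper's proof does not need), while the paper's is shorter and more self-contained.

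One remark on your closing discussion: you argue that $(\phi_R,\mathcal{N})\mapsto S_\phi(\mathcal{N})$ fails to be jointly continuous at rank-deficient $\phi_R$ because of the $\log\phi_R$ in the relative-entropy expression, and therefore that semi-continuity is forced. This is not the case. Because $\operatorname{tr}_B[\mathcal{N}(\phi_{AR})] = \phi_R$ for any channel $\mathcal{N}$, the relative entropy here equals $-S(B|R)_{\mathcal{N}(\phi_{AR})} = -S(\mathcal{N}(\phi_{AR})) + S(\phi_R)$, a difference of von Neumann entropies, and the von Neumann entropy is (uniformly) continuous on a finite-dimensional state space including at rank-deficient states. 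Since $\mathcal{N}(\phi_{AR})$ is jointly continuous (bilinear) in $(\phi,\mathcal{N})$, the functional $S_\phi(\mathcal{N})$ is jointly continuous. Your use of lower semi-continuity is therefore a stronger assumption than needed but does not introduce any error—the conclusion you draw from it is correct, only the stated justification for preferring it over continuity is off.
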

\begin{proof}[**z:4e07B.62]
  Let $\mathcal{T}_{}^{(\phi)}$ be a maximizer for ${S}_{\phi}^{}({\mathcal{N}})$.
  By continuity of ${S}_{\phi}^{}({\mathcal{N}})$ in $\phi$, there exists $\xi(z)$
  with $\lim_{z\to 0}\xi(z) = 0$ such that
  \begin{align}
    \bigl \lvert { {S}_{\phi^z}^{}({ \mathcal{T}_{}^{(\phi)} }) - {S}_{\phi}^{}({ \mathcal{T}_{}^{(\phi)} }) }\bigr \rvert 
    \leq \xi(z)\ .
  \end{align}
  Recalling that $\mathcal{T}_{}^{(\phi^z)}$ and $\mathcal{T}_{}^{(\phi)}$ maximize respectively
  ${S}_{\phi^z}^{}({\mathcal{N}})$ and ${S}_{\phi}^{}({\mathcal{N}})$,
  \begin{align}
    {S}_{\phi^z}^{}({ \mathcal{T}_{}^{(\phi)} })
    &\leq
    {S}_{\phi^z}^{}({ \mathcal{T}_{}^{(\phi^z)} })\ ;
      &
    {S}_{\phi}^{}({ \mathcal{T}_{}^{} })
    &\leq
    {S}_{\phi}^{}({ \mathcal{T}_{}^{(\phi)} })\ .
  \end{align}
  Then
  ${S}_{\phi}^{}({ \mathcal{T}_{}^{(\phi)} }) %
  \leq {S}_{\phi^z}^{}({ \mathcal{T}_{}^{(\phi)} }) + \xi(z) %
  \leq {S}_{\phi^z}^{}({ \mathcal{T}_{}^{(\phi^z)} }) + \xi(z)$, which implies
  ${S}_{\phi}^{}({ \mathcal{T}_{}^{(\phi)} }) \leq {S}_{\phi}^{}({ \mathcal{T}_{}^{} })$ in the limit
  $z\to 0$.
  Therefore, ${S}_{\phi}^{}({ \mathcal{T}_{}^{} }) = {S}_{\phi}^{}({ \mathcal{T}_{}^{(\phi)} })$ and
  $\mathcal{T}_{}^{}$ also maximizes ${S}_{\phi}^{}({ \mathcal{N} })$.
\end{proof}

For a rank-deficient state $\phi_R$, the associated thermal quantum channel
is generally not unique. Indeed, the channel entropy with respect to $\phi_R$
becomes insensitive to the channel's action outside the support of $\phi_R$.
Therefore, it might be natural to demand of a thermal channel with respect
to a rank-deficient state $\phi_R$ to be achievable as a limit of thermal
channels with respect to full-rank states that converge to $\phi_R$.  In the
examples we study below (cf.\@ \cref{z:VuBBeOAb}), the example in which energy
is preserved on average provides an illustration of a situation where such a
condition would be relevant.

We also prove the following more specific stability result. We show that if we
consider a family of commuting full-rank states $\{{ \phi_R^{z} }\}$ for $z>0$, and
if $\phi_R^z \to \phi_R$ as $z\to 0$, then under suitable conditions, the
thermal quantum channel with respect to $\phi_R^z$ converges to the thermal
quantum channel with respect to $\phi_R$.  The interest of this stability result
is to yield explicit expressions of the parameters $\mu$, $F_R$, $S_{BR}$ and
$Y_{BR}$ of the limiting channel.
This property is useful to derive thermal quantum channels with respect to a
rank-deficient state $\phi_R$.  Under suitable conditions, it suffices to
consider the thermal quantum channels for full-rank states $\phi_R^z >0$ (cf.\@
\cref{z:hd7.cLe.}) and to consider the limit
$\phi_R^z\to \phi_R$.

\begin{proposition}[Stability of the thermal quantum channel for limits of
  commuting input states]
  \label{z:NjFfeh1f}
  Let $\{{ \phi_R^z }\}_{z>0}$ be a family of pairwise commuting full-rank states
  converging to some $\phi_R \equiv \lim_{z\to 0} \phi_R^z$.  For each $z>0$,
  let $\mu_j^z$ and $F_R^z$ be the parameters of the thermal quantum channel
  $\mathcal{T}_{}^{(\phi_R^z)}$ given by \cref{z:hd7.cLe.}.
  We suppose that for all $z>0$,
  \begin{align}
    \Bigl[{ \sum \mu_j^z C^j_{BR} - \mathds{1}_B\otimes F_R^z \,,\; \mathds{1}_B\otimes\phi_R^z }\Bigr]
    = 0\ .
    \label{z:swvf-7vB}
  \end{align}
  Furthermore, assume the following limits exist:
  \begin{align}
    \mu_j &\equiv \lim_{z\to 0} \mu_j^z\ ;
    &
      F_R &\equiv \lim_{z\to 0} F_R^z\ ;
      &
      \mathcal{T}_{}^{} &\equiv \lim_{z\to 0} \mathcal{T}_{}^{(\phi_R^z)}\ ,
  \end{align}
  and assume that $T_{BR} = \mathcal{T}_{}^{}(\Phi_{A:R})$ has full rank.
  Then $\mathcal{T}_{}^{}$ is a quantum thermal channel with respect to $\phi_R$.  Its
  parameters from \cref{z:33B55hFY} are $\mu_j$,
  $F_R$, $S_{BR} = 0$, and
  $Y_{BR} = \Pi^{\phi_R\perp} T_{BR} \Pi^{\phi_R \perp}$.
\end{proposition}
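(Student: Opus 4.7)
The plan is to combine \cref{z:4e07B.62} with a direct simplification of the thermal-channel formula under the commutativity hypothesis~\eqref{z:swvf-7vB}, then identify the parameters block-by-block in the limit $z\to 0$.

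First, since $\mathcal{T}_{}^{(\phi_R^z)}\to\mathcal{T}_{}^{}$ and $\phi_R^z\to\phi_R$ by assumption, \cref{z:4e07B.62} immediately gives that $\mathcal{T}_{}^{}$ is optimal in $\tilde{s}(\phi_R)$, i.e., a thermal channel with respect to $\phi_R$. By \cref{z:33B55hFY}, the Choi matrix $T_{BR}$ therefore admits a decomposition of the form~\eqref{z:TUY44lvP} for \emph{some} parameters, and the task reduces to verifying that the values claimed in the proposition are one such valid choice.

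Second, I would use~\eqref{z:swvf-7vB} to obtain a closed form for $T_{BR}^z \equiv \mathcal{T}_{}^{(\phi_R^z)}(\Phi_{A:R})$. Writing $\tilde{X}_R=\mathds{1}_B\otimes X_R$ for brevity, the commutativity $[\sum_j\mu_j^z C^j_{BR}-\tilde{F}_R^z,\tilde{\phi}_R^z]=0$ together with the fact that $\tilde{\phi}_R^z\log\tilde{\phi}_R^z$ is a function of $\tilde{\phi}_R^z$ lets me split the exponential in \cref{z:hd7.cLe.} via $\exp(A+B)=\exp(A)\exp(B)$ for commuting $A, B$ and cancel the $\tilde{\phi}_R^{z,\pm 1/2}$ sandwich, giving the compact form
\begin{align*}
  T_{BR}^z = \exp\Bigl\{-\tilde{\phi}_R^{z,-1}\Bigl[\sum_j\mu_j^z C^j_{BR}-\tilde{F}_R^z\Bigr]\Bigr\}.
\end{align*}

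Third, I would diagonalize in a common eigenbasis $\{|k\rangle_R\}$ of the pairwise-commuting family $\{\phi_R^z\}_{z>0}\cup\{\phi_R\}$, with eigenvalues $p_k^z\to p_k$. Passing $[T_{BR}^z,\tilde{\phi}_R^z]=0$ to the limit gives $[T_{BR},\tilde{\phi}_R]=0$, so cross terms between $\Pi=\mathds{1}_B\otimes\Pi^{\phi_R}_R$ and $\Pi^\perp=\mathds{1}_B\otimes\Pi^{\phi_R\perp}_R$ vanish and one may set $Y_{BR}=\Pi^\perp T_{BR}\Pi^\perp$. On each $|k\rangle\langle k|_R$-block with $p_k>0$, all the $\phi$-factors are bounded and the reverse of the algebraic manipulation from step two identifies the $k$-th block of $T_{BR}$ with the corresponding block of the first term of~\eqref{z:TUY44lvP} at parameters $(\mu_j, F_R)$ (with $\phi_R^{-1/2}$ read as a pseudo-inverse on the support of $\phi_R$). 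On a block with $p_k=0$, writing $H_k^z=\sum_j\mu_j^z\langle k| C^j_{BR}|k\rangle_R - \langle k| F_R^z|k\rangle\, \mathds{1}_B$, one has $T_{BR}^z|_{kk}=\exp\{-H_k^z/p_k^z\}$, and the full-rank hypothesis on $T_{BR}$ forces this limit to be strictly positive. Since $p_k^z\to 0$ and $H_k^z\to H_k$, this is possible only if $H_k=0$, which (using $0\log 0=0$ so that $\tilde{\phi}_R\log\tilde{\phi}_R$ vanishes on $\Pi^\perp$) is exactly the required condition $\Pi^{\phi_R\perp}_R G_{BR}=0$ with $S_{BR}=0$.

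The remaining requirements of \cref{z:33B55hFY}---$F_R$ Hermitian and $\mu_j$ real (as limits of such), $S_{BR}T_{BR}=0$ via $S_{BR}=0$, $\Pi^{\phi_R}_R Y_{BR}\Pi^{\phi_R}_R=0$ by construction of $Y_{BR}$, and $\operatorname{tr}_B Y_{BR}=\Pi^{\phi_R\perp}_R$ from trace preservation of $\mathcal{T}_{}^{}$ together with the block-diagonal structure---then follow by inspection. The main obstacle I anticipate is the off-support block analysis: invoking the full-rank hypothesis correctly to force $H_k=0$ without extra regularity on the rate at which $p_k^z\to 0$, and cleanly matching the pseudo-inverse interpretation of $\phi_R^{-1/2}$ in~\eqref{z:TUY44lvP} with the on-support limits of the simplified form of $T_{BR}^z$.
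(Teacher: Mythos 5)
Your proposal is correct and follows essentially the same route as the paper: use the commutativity hypothesis to collapse the sandwich and the $\phi_R\log\phi_R$ correction into the single formula $T_{BR}^z = \exp\{-\tilde{\phi}_R^{z,-1}[\sum_j\mu_j^z C^j_{BR}-\tilde{F}_R^z]\}$; use the full-rank hypothesis on $T_{BR}$ to get convergence of $-K_{BR}=\log T_{BR}$; pass commutativity to the limit to conclude $[T_{BR},\phi_R]=0$; and identify the parameters block by block. Your ``$H_k=0$'' argument on the kernel blocks is exactly the paper's observation that $\Pi^{\phi_R\perp}_R\bigl(\sum\mu_j C^j_{BR}-\mathds{1}_B\otimes F_R\bigr)=\Pi^{\phi_R\perp}_R\,\phi_R^{1/2}K_{BR}\phi_R^{1/2}=0$, phrased in a common eigenbasis. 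Your worry about rates is unfounded: since $T_{BR}^z|_{kk}\to T_{BR}|_{kk}>0$ and the logarithm is continuous there, $H_k^z/p_k^z$ converges, and $p_k^z\to 0$ then forces $H_k^z\to 0$ with no further regularity needed.

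The one genuine variation is how optimality is established. You invoke \cref{z:4e07B.62} to get that $\mathcal{T}$ is a thermal channel with respect to $\phi_R$ before doing any algebra, which is an economical move. The paper instead establishes optimality at the end: having shown that $T_{BR}$ has exactly the form~\eqref{z:TUY44lvP} with the claimed parameters, it checks that the trace-preservation and expectation-value constraints survive the limit, and then invokes the if-and-only-if characterization in \cref{z:33B55hFY}. Your appeal to the ``only if'' direction of \cref{z:33B55hFY} after already using \cref{z:4e07B.62} is redundant, though harmless --- once you know the channel is optimal, the statement you actually need is that the explicitly computed decomposition matches the claimed parameters, which your block-diagonal calculation supplies directly. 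Either way the proof closes.
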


\begin{proof}[**z:NjFfeh1f]
  From \cref{z:hd7.cLe.} and
  using~\eqref{z:swvf-7vB},
  the thermal quantum channel for $z>0$ is given by
  \begin{align}
    T_{BR}^z \equiv \mathcal{T}_{}^{(\phi_R^z)}(\Phi_{A:R})
    = \exp\Bigl\{{ -(\phi_R^z)^{-1/2}\,\Bigl({\sum \mu_j^z C^j_{BR} - \mathds{1}_B\otimes F_R^z}\Bigr)
    (\phi_R^z)^{-1/2} }\Bigr\}\ .
    \label{z:AeB8MVHx}
  \end{align}
  Because the limit channel $T_{BR} = \lim_{z\to 0} T_{BR}^z$ has full rank, the
  operator inside the exponential also converges to some finite operator
  \begin{align}
    -K_{BR} \equiv \log({T_{BR}})
    = \lim_{z\to 0}\Bigl\{{
    -(\phi_R^z)^{-1/2}\,\Bigl({\sum \mu_j^z C^j_{BR} - \mathds{1}_B\otimes F_R^z}\Bigr)
    (\phi_R^z)^{-1/2} 
    }\Bigr\}\ .
  \end{align}
  From~\eqref{z:AeB8MVHx} we also find that
  \begin{align}
    [{ T_{BR}^z, \phi_R^z }]
    = \Bigl[{
    \exp\Bigl\{{ -(\phi_R^z)^{-1/2}\,\Bigl({\sum \mu_j^z C^j_{BR} - \mathds{1}_B\otimes F_R^z}\Bigr)
    (\phi_R^z)^{-1/2} }\Bigr\}
    \,,\; \phi_R^z }\Bigr]
    = 0\ ,
  \end{align}
  noting that the terms in the exponential commute with $\phi_R^z$ thanks
  to~\eqref{z:swvf-7vB}.
  In the limit $z\to 0$ we find
  $[ T_{BR}, \phi_R ] = 0$ and therefore
  \begin{align}
    [{ K_{BR}, \phi_R }] = 0\ .
    \label{z:wgVkdyoK}
  \end{align}

  Henceforth we write as a shorthand $\Pi \equiv \Pi^{\phi_R}$.  We find
  \begin{align}
    \Pi K_{BR} = \Pi K_{BR} \Pi
    = \lim_{z\to 0}\Bigl\{{
    \Pi \, ({\phi_R^z})^{-1/2}
    \Bigl({ \sum \mu_j^z C^j_{BR} - \mathds{1}_B\otimes F_R^z }\Bigr)
    ({\phi_R^z})^{-1/2} \, \Pi
    }\Bigr\}\ .
  \end{align}
  Because $\{{ \phi_R^z }\}$ are pairwise commuting, they also commute with
  $\phi_R$ and because of the convergence of the individual eigenvalues in the
  common eigenbasis we find
  $\lim_{z\to 0} \Pi ({\phi_R^z})^{-1/2} = \phi_R^{-1/2}$.  Therefore all terms
  in the expression above converge individually and we find
  \begin{align}
    \Pi K_{BR} 
    = \phi_R^{-1/2} \Bigl({\sum \mu_j C^j_{BR}
    - \mathds{1}_B\otimes F_R}\Bigr) \phi_R^{-1/2}\ .
    \label{z:4l8Ju9fA}
  \end{align}
  On the other hand, taking the limit $z\to 0$
  of~\eqref{z:swvf-7vB} we
  find:
  \begin{align}
    \Bigl[{
    \sum \mu_j C^j_{BR} - \mathds{1}_B\otimes F_R
    \,,\; \phi_R
    }\Bigr] = 0\ .
    \label{z:gRWSSui6}
  \end{align}
  Now observe that
  \begin{align}
    \Bigl({\sum \mu_j^z C^j_{BR} - \mathds{1}\otimes F_R^z }\Bigr)
    &= 
    ({\phi_R^z})^{1/2}  K_{BR}^z ({\phi_R^z})^{1/2}\ ;
    &
      K_{BR}^z
      &\equiv
        ({\phi_R^z})^{-1/2} 
        \Bigl({\sum \mu_j^z C^j_{BR} - \mathds{1}\otimes F_R^z}\Bigr)
        ({\phi_R^z})^{-1/2} \ ,
  \end{align}
  with $K_{BR}^z \to K_{BR}$.  Then
  \begin{align}
    \Pi^\perp \Bigl({ \sum \mu_j C^j_{BR} - \mathds{1}_B\otimes F_R }\Bigr)
    = \Pi^\perp \lim_{z\to 0} \Bigl[{ ({\phi_R^z})^{1/2}  K_{BR}^z ({\phi_R^z})^{1/2} }\Bigr]
    =  \Pi^\perp \phi_R^{1/2} K_{BR} \phi_R^{1/2} = 0\ .
    \label{z:hMYbUZbV}
  \end{align}
  Now let
  $G_{BR} \equiv \phi_R^{1/2} K_{BR} \phi_R^{1/2} + \phi_R\log({\phi_R})$.  Using
  \cref{z:4l8Ju9fA,z:gRWSSui6,z:hMYbUZbV},
  \begin{align}
    G_{BR}
    &= \sum \mu_j C^j_{BR}  - \mathds{1}_B\otimes [{F_R + \phi_R\log({\phi_R})}] \ ,
  \end{align}
  with $\Pi^\perp G_{BR} = 0$.  Here, we set $S_{BR} = 0$.  By construction,
  $\phi_R^{-1/2} G_{BR} \phi_R^{-1/2} = \log(\phi_R) + \Pi K_{BR}$.
  Now let
  $Y_{BR} = \Pi^\perp T_{BR} = \Pi^\perp T_{BR} \Pi^\perp = \Pi^\perp
  {e}^{-\Pi^\perp K_{BR}}$.  Then
  \begin{align}
    T_{BR}
    &= {e}^{-K_{BR}}
    = \Pi {e}^{-\Pi K_{BR} \Pi} + \Pi^\perp {e}^{-\Pi^\perp K_{BR} \Pi^\perp}
    = \phi_R^{-1/2} {e}^{ -\phi_R^{-1/2} G_{BR} \phi_R^{-1/2} } \phi_R^{-1/2} + Y_{BR}\ .
  \end{align}
  We have found $\mu_j$, $F_R$, and $S_{BR}\geq 0$ that satisfy the requirements
  of \cref{z:33B55hFY}.  Furthermore
  \begin{align}
    \operatorname{tr}_B({T_{BR}})
    &= \lim_{z\to 0} \operatorname{tr}_B({T_{BR}^z}) = \mathds{1}_R \ ;
    &
      \operatorname{tr}\bigl({C^j_{BR} T_{BR}}\bigr)
      &= \lim_{z \to 0} \operatorname{tr}\bigl({ C^j_{BR} T_{BR}^z}\bigr) = q_j\ ,
  \end{align}
  so the channel $T_{BR}$ furthermore satisfies all the problem constraints in
  \cref{z:33B55hFY}.  Therefore, $T_{BR}$ is a
  thermal quantum channel with respect to $\phi_R$.
\end{proof}

We anticipate that several assumptions in
\cref{z:NjFfeh1f} might not be necessary to
achieve a similar conclusion.  In particular, the
assumptions~\eqref{z:swvf-7vB},
while convenient and necessary for our proof above, are particularly stringent;
there appears no fundamental reason why they could not, in principle, be
relaxed.

\subsection{Useful lemmas for the thermal channel and the optimal input state}

Here, we prove a handful of lemmas that provide guidance on the optimal channel
and which states $\phi_R$ to consider to achieve the optimal thermal channel.

First, we provide a necessary condition for states $\phi_A$ that is optimal in
the thermodynamic capacity.  Recall that
$\lvert {\phi}\rangle _{AR} = \phi_A^{1/2}\lvert {\Phi_{A:R}}\rangle $ is optimal for the channel
entropy of $\mathcal{N}$ if and only if $\phi_A$ is optimal for the
thermodynamic capacity of a complementary channel $\widehat{N}$.  Hence, this
lemma can be used to characterize optimal states for the channel entropy.

\begin{lemma}
  \label{z:qi4DZIqg}
  Let $\mathcal{N}'_{A\to E}$ be a quantum channel.  A quantum state $\phi_A$
  can only be optimal in the definition of the thermodynamic
  capacity~\eqref{z:NdbPdJS1} if there exists
  $\lambda\in\mathbb{R}$ such that
  \begin{align}
    \log({\phi_A}) - \mathcal{N}'^\dagger\Bigl({ \log\bigl[{ \mathcal{N}'(\phi_A) }\bigr] }\Bigr)
    - \lambda \Pi_A^{\phi_A} = 0\ .
    \label{z:FMvxXU9D}
  \end{align}
  Furthermore, if $\phi_A$
  satisfies~\eqref{z:FMvxXU9D} and
  is full rank, then it is optimal in~\eqref{z:NdbPdJS1}.
\end{lemma}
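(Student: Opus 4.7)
My plan is to derive the claim via a Lagrangian/KKT analysis for the necessity direction, and via the data-processing inequality for the sufficiency claim in the full-rank case. For \textbf{necessity}, I would maximize $f(\sigma) \coloneqq S(\sigma) - S(\mathcal{N}'(\sigma))$ over states $\sigma \geq 0$ with $\operatorname{tr}(\sigma)=1$, introducing a scalar multiplier $\lambda\in\mathbb{R}$ for the trace constraint and a positive semidefinite dual operator $Z\geq 0$ for the semidefiniteness constraint (with complementary slackness $Z\phi_A=0$, so that $Z$ is supported on $\Pi^{\phi_A\perp}$). Computing the first variation via $\delta S(X) = -\operatorname{tr}[(\log X + \mathds{1})\,\delta X]$ and using the trace-preserving property $\mathcal{N}'^\dagger(\mathds{1})=\mathds{1}$ to cancel the $\mathds{1}$-offset terms, stationarity of the Lagrangian yields
\begin{align*}
  -\log\phi_A + \mathcal{N}'^\dagger\bigl({\log\mathcal{N}'(\phi_A)}\bigr) - \lambda \mathds{1} + Z = 0 .
\end{align*}
In the full-rank case $Z=0$ (by complementary slackness and $\phi_A>0$) and $\Pi^{\phi_A}=\mathds{1}$, so this immediately becomes the lemma's operator equation after relabeling $\lambda\to -\lambda$.

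The rank-deficient case requires an additional support-compatibility step. I would consider the rank-$1$ perturbation $\sigma(\epsilon)=(1-\epsilon)\phi_A + \epsilon\, \lvert v\rangle\langle v\rvert$ with $\lvert v\rangle$ orthogonal to the support of $\phi_A$: the new eigenvalue $\epsilon$ contributes $-\epsilon\log\epsilon$ to $S(\sigma(\epsilon))$, whose derivative diverges to $+\infty$ as $\epsilon\to 0^+$. Optimality of $\phi_A$ forces the path-derivative of $f$ at $\epsilon=0^+$ to be non-positive, so the derivative of $S(\mathcal{N}'(\sigma(\epsilon)))$ must likewise diverge to $+\infty$, which requires $\mathcal{N}'(\lvert v\rangle\langle v\rvert)$ to have support outside that of $\mathcal{N}'(\phi_A)$. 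By duality and positivity of $\mathcal{N}'^\dagger$, this implies that $\mathcal{N}'^\dagger(X)$ is supported on $\Pi^{\phi_A}$ whenever $X$ is supported on $\Pi^{\mathcal{N}'(\phi_A)}$; applied to $X=\log\mathcal{N}'(\phi_A)$, the $\Pi^{\phi_A\perp}$-block of $\mathcal{N}'^\dagger(\log\mathcal{N}'(\phi_A))$ vanishes. Combined with the KKT identity, this forces $Z=\lambda\,\Pi^{\phi_A\perp}$, and substituting back collapses $-\lambda\mathds{1}+Z$ into $-\lambda\,\Pi^{\phi_A}$, giving the claimed operator equation.

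For \textbf{sufficiency} in the full-rank case, I would assume $\log\phi_A - \mathcal{N}'^\dagger(\log\mathcal{N}'(\phi_A)) = \lambda\,\mathds{1}$ and take its Hilbert--Schmidt inner product with an arbitrary state $\sigma$ to obtain the key identity $\operatorname{tr}(\sigma\log\phi_A) - \operatorname{tr}(\mathcal{N}'(\sigma)\log\mathcal{N}'(\phi_A)) = \lambda$. Substituting this into the algebraic expansion $f(\sigma)=-\operatorname{tr}(\sigma\log\sigma)+\operatorname{tr}(\mathcal{N}'(\sigma)\log\mathcal{N}'(\sigma))$ rewrites the objective as
\begin{align*}
  f(\sigma) = -D(\sigma\,\Vert\,\phi_A) + D\bigl(\mathcal{N}'(\sigma)\,\Vert\,\mathcal{N}'(\phi_A)\bigr) - \lambda .
\end{align*}
The data-processing inequality $D(\mathcal{N}'(\sigma)\,\Vert\,\mathcal{N}'(\phi_A))\leq D(\sigma\,\Vert\,\phi_A)$ then gives $f(\sigma)\leq -\lambda = f(\phi_A)$ (the last equality by specializing to $\sigma=\phi_A$), establishing global optimality.

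The main obstacle is the rank-deficient direction of the necessity argument: the pointwise KKT stationarity only controls the on-support block of the equation, and pinning down the off-support components requires the entropy-divergence argument above together with careful bookkeeping of the sign conventions relating the Lagrangian's $\lambda$ to the lemma's $\lambda$. The sufficiency proof, in contrast, is short once the stationarity condition is used as a bridge between $f(\sigma)$ and the two relative entropies via data processing.
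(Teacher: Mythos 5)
Your overall architecture — a Lagrangian stationarity computation for the necessity direction and a direct argument for the full-rank sufficiency direction — aligns with the paper's, but both halves contain genuinely different choices worth noting.

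Your sufficiency argument is a genuinely different and arguably cleaner route. The paper's proof of the full-rank sufficiency claim simply observes that $\phi_A$ is a stationary point in the interior of the domain of a convex function, so it must be a global minimizer. You instead derive from the stationarity identity the exact rewriting $f(\sigma) = -D(\sigma\Vert\phi_A) + D(\mathcal{N}'(\sigma)\Vert\mathcal{N}'(\phi_A)) - \lambda$ and invoke the data-processing inequality directly, which avoids appealing to abstract convexity of the objective and makes the optimality gap explicit (it equals $D(\sigma\Vert\phi_A)-D(\mathcal{N}'(\sigma)\Vert\mathcal{N}'(\phi_A))\geq 0$). Both approaches are correct; yours is more self-contained and more informative.

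For the necessity direction there is a genuine gap in the step closing the rank-deficient case. You correctly observe that the Lagrangian stationarity only pins down the $\Pi^{\phi_A}$-block (the paper handles this by restricting variations to a fixed support $\Pi_A$, which also yields only that block). Your entropy-divergence argument is the right idea for the off-support block, but the conclusion you extract from it is too weak. Showing that $\frac{d}{d\epsilon}S(\mathcal{N}'(\sigma(\epsilon)))\rvert_{\epsilon=0^+}$ diverges to $+\infty$ only shows $c := \operatorname{tr}\bigl[\Pi^{\mathcal{N}'(\phi_A)\perp}\,\mathcal{N}'(\lvert v\rangle\!\langle v\rvert)\bigr] > 0$, i.e., that $\mathcal{N}'(\lvert v\rangle\!\langle v\rvert)$ has \emph{some} support outside $\Pi^{\mathcal{N}'(\phi_A)}$; that alone does not imply, by duality, that $\mathcal{N}'^\dagger(\Pi^{\mathcal{N}'(\phi_A)})$ annihilates $\lvert v\rangle$. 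What you actually need is that the support lies \emph{entirely} outside, i.e., $c=1$, and this does follow if you compare the divergence \emph{rates} rather than their mere presence: the derivative of $S(\sigma(\epsilon))$ behaves as $-\log\epsilon + O(1)$ while that of $S(\mathcal{N}'(\sigma(\epsilon)))$ behaves as $-c\log\epsilon + O(1)$, so non-positivity of $\frac{d}{d\epsilon}f(\sigma(\epsilon))\rvert_{\epsilon=0^+}$ forces $1-c\leq 0$; since $c\leq\operatorname{tr}\mathcal{N}'(\lvert v\rangle\!\langle v\rvert)=1$, in fact $c=1$. Then $\operatorname{tr}\bigl[\Pi^{\mathcal{N}'(\phi_A)}\mathcal{N}'(\lvert v\rangle\!\langle v\rvert)\bigr]=0$ gives $\mathcal{N}'^\dagger(\Pi^{\mathcal{N}'(\phi_A)})\lvert v\rangle=0$ by positivity, and sandwiching a bounded Hermitian $X$ supported on $\Pi^{\mathcal{N}'(\phi_A)}$ between $\pm\lVert X\rVert\,\Pi^{\mathcal{N}'(\phi_A)}$ then propagates the conclusion to $\mathcal{N}'^\dagger(X)$, as you intended. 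A secondary issue is that the KKT stationarity identity you invoke is not literally valid at a rank-deficient optimizer because the objective is not Fr\'echet-differentiable there; you should frame the argument as the paper does — derive the on-support block from variations inside $\Pi^{\phi_A}$, and separately derive the vanishing of the off-support blocks from the divergence-rate comparison — rather than positing a global KKT equation and then reverse-engineering $Z$.
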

\begin{proof}[**z:qi4DZIqg]
  We seek to minimize the convex function
  \begin{align}
    f(\phi_A)
    = {S}_{}^{}({\mathcal{N}'(\phi_A)}) - {S}_{}^{}({\phi_A}) = - {S}_{}^{}({E}\mathclose{}\,|\,\mathopen{}{B})_{V\phi_A V^\dagger}\ ,
  \end{align}
  where $V_{A\to BE}$ is a Stinespring dilation of $\mathcal{N}'$.  Writing the
  function as a conditional entropy makes it obvious that $f(\phi_A)$ is convex
  in $\phi_A$.  Fix any projector $\Pi_A$. We'll look for minima of $f(\phi_A)$
  over all quantum states $\phi_A$ that are Hermitian operators supported on
  $\Pi_A$ and which have full rank within $\Pi_A$, a condition we denote by
  $\phi_A \mathrel{{>}{|}_{\Pi_A}} 0$.  Introducing the Lagrange dual variable
  $\lambda$ for the condition $\operatorname{tr}(\phi_A) = 1$, we can write the Lagrangian
  \begin{align}
    \mathcal{L}_{T;\,\Pi_A}[\phi_A, \lambda]
    = f(\phi_R) + \lambda [{1 - \operatorname{tr}(\phi_A)}]\ .
  \end{align}
  The stationary points of $\mathcal{L}_{T;\,\Pi_A}$ are determined by
  requiring the variation $\delta \mathcal{L}_{T;\,\Pi_A}$ to vanish
  when $\phi_A \to \phi_A + \delta \phi_A$.  We calculate
  \begin{align}
    \delta \mathcal{L}_{T;\,\Pi_A}
    &= -\operatorname{tr}\Bigl[{ \bigl({ \log\bigl[{\mathcal{N}'(\phi_A)}\bigr] + \mathds{1}}\bigr) \,
        \mathcal{N}'(\delta\phi_A) }\Bigr]
    + \operatorname{tr}\Bigl[{ \bigl({ \log\bigl({\phi_A}\bigr) + \mathds{1}_A }\bigr) \,\delta\phi_A }\Bigr]
    - \lambda\operatorname{tr}({\delta\phi_A})
    \nonumber\\
    &=
      \operatorname{tr}\Bigl\{{\Bigl[{
      - \mathcal{N}'^\dagger\bigl[{ \log\bigl({\mathcal{N}'(\phi_A)}\bigr) }\bigr]
      - \mathds{1}_A
      + \log({\phi_A})
      + \mathds{1}_A
      - \lambda \Pi_A
      }\Bigr] \, \delta \phi_A
      }\Bigr\}
      \ .
  \end{align}
  Requiring
  $\delta \mathcal{L}_{T;\,\Pi_A} = 0$ for all $\delta\phi_A$ within
  $\Pi_A$, we find
  \begin{align}
    -\mathcal{N'}^\dagger\bigl[{ \log \mathcal{N}'(\phi_A) }\bigr] + \log({\phi_A}) - \lambda \Pi_A
    = 0 \ .
    \label{z:Tu0r.cWs}
  \end{align}
  If $\phi_A$ is any optimal state for the thermodynamic capacity, then $\phi_A$
  must satisfy the condition~\eqref{z:Tu0r.cWs} associated to the
  initial choice of projector $\Pi_A \equiv \Pi_A^{\phi_A}$, leading to the
  stated optimality condition for $\phi_A$.

  On the other hand, if $\Pi_A \equiv \mathds{1}_A$ and a full-rank quantum state
  $\phi_A$
  satisfies~\eqref{z:FMvxXU9D},
  then $\phi_A$ is a stationary point of $\mathcal{L}_{T;\Pi_A}$ in the interior
  of this function's domain.  It is therefore optimal since the problem is
  convex.
\end{proof}

As an example application of \cref{z:qi4DZIqg},
we prove a statement specific to so-called \emph{replacer channels}.  These are
channels that trace out their input and prepare some fixed state.  The following
lemma ensures that condition~\eqref{z:lWRPNfFI}
in \cref{z:TLiwdJGR} is satisfied for such channels, for any
input state.

\begin{lemma}
  \label{z:NrJqAsN1}
  Let $\mathcal{N}_{A\to B}$ be any replacer channel with output state
  $\gamma_B$, i.e., a channel of the form
  $\mathcal{N}_{A\to B}({\cdot}) = \operatorname{tr}({\cdot})\,\gamma_B$.  Then any state
  $\sigma_R$ satisfies
  condition~\eqref{z:lWRPNfFI} for a
  complementary channel $\widehat{\mathcal{N}}$.
\end{lemma}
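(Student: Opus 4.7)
The plan is to exhibit an explicit Stinespring dilation of the replacer channel whose complementary channel has a tensor-product structure, and then verify condition~\eqref{z:lWRPNfFI} by direct computation. For a replacer channel, the complementary channel splits as the identity on the input system tensored with a preparation of a reduced state of the purification of $\gamma_B$, which trivializes the functional-calculus manipulations.

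First, I would fix a purification $\lvert \chi_\gamma \rangle_{BE_B}$ of $\gamma_B$ on an auxiliary system $E_B$, take $E_A \simeq A$, and define the isometry $V_{A\to B E_A E_B}$ by $V \lvert i \rangle_A = \lvert i \rangle_{E_A} \otimes \lvert \chi_\gamma \rangle_{BE_B}$ on the canonical basis of $A$. A quick check confirms that $V^\dagger V = \mathds{1}_A$ and $\operatorname{tr}_{E_A E_B}(V \rho V^\dagger) = \operatorname{tr}(\rho)\,\gamma_B = \mathcal{N}(\rho)$, so $V$ is a Stinespring dilation of $\mathcal{N}$. The associated complementary channel, with $E = E_A E_B$, is the product
\[
\widehat{\mathcal{N}}(\rho_A) = \rho_{E_A} \otimes \gamma'_{E_B}, \qquad \gamma'_{E_B} \equiv \operatorname{tr}_B\!\bigl(\chi_\gamma^{BE_B}\bigr),
\]
where $\gamma'_{E_B}$ has the same spectrum as $\gamma_B$ since $\lvert \chi_\gamma \rangle$ is pure. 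Because any other complementary channel is related to this one by an isometry on the environment, verifying~\eqref{z:lWRPNfFI} for this specific choice suffices.

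Next, I would compute the two ingredients entering~\eqref{z:lWRPNfFI}. Using $\log(X\otimes Y) = \log(X) \otimes \Pi^{Y} + \Pi^{X} \otimes \log(Y)$ on the support of $X\otimes Y$ gives
\[
\log\!\bigl[\widehat{\mathcal{N}}(\phi_A)\bigr] = \log(\phi_{E_A}) \otimes \Pi^{\gamma'}_{E_B} + \Pi^{\phi}_{E_A} \otimes \log(\gamma'_{E_B}).
\]
The adjoint is $\widehat{\mathcal{N}}^\dagger(X_{E_AE_B}) = \operatorname{tr}_{E_B}[X(\mathds{1}_{E_A} \otimes \gamma'_{E_B})]$. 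Applying it termwise and using $\operatorname{tr}(\Pi^{\gamma'}\gamma') = 1$, $\operatorname{tr}(\gamma'\log \gamma') = -S(\gamma_B)$, and the identification $E_A \simeq A$, yields
\[
\widehat{\mathcal{N}}^\dagger\!\bigl(\log[\widehat{\mathcal{N}}(\phi_A)]\bigr) = \log(\phi_A) - S(\gamma_B)\,\Pi^{\phi_A}_A.
\]
Substituting into the left-hand side of~\eqref{z:lWRPNfFI} leaves $S(\gamma_B)\,\Pi^{\phi_A}_A$, which is manifestly proportional to $\Pi^{\phi_A}_A$. Nothing in this derivation used any specific property of $\phi_A$ (equivalently, of $\sigma_R$ under the standard identification $\phi_A = \sigma_R^{t_{R\to A}}$), so the condition holds for every input state.

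The computation is short and essentially mechanical once the tensor-product dilation has been identified. The only minor obstacle is the proper handling of $\log(\phi_A)$ and $\log[\widehat{\mathcal{N}}(\phi_A)]$ when $\phi_A$ is rank-deficient, which is resolved by observing that $\widehat{\mathcal{N}}(\phi_A)$ has support exactly $\Pi^{\phi_A}_A \otimes \Pi^{\gamma'}_{E_B}$, so the product decomposition of the logarithm is well-defined on that support and the argument goes through verbatim.
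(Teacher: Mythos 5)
Your proposal is correct and follows essentially the same route as the paper's proof: both exhibit the tensor-product Stinespring dilation $V\lvert i\rangle_A = \lvert i\rangle_{E_A}\otimes\lvert\chi_\gamma\rangle_{BE_B}$, identify the complementary channel as $\rho_A\mapsto\rho_{E_A}\otimes\gamma'_{E_B}$, and reduce the left-hand side of~\eqref{z:lWRPNfFI} to $S(\gamma_B)\,\Pi^{\phi_A}_A$ by a direct functional-calculus computation. The only cosmetic differences are that the paper fixes the canonical purification $\gamma_B^{1/2}\lvert\Phi_{B:E_B}\rangle$ and assumes $\gamma_B$ full rank WLOG, whereas you allow an arbitrary purification and carry the projector $\Pi^{\gamma'}$ explicitly.
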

\begin{proof}[**z:NrJqAsN1]
  Without loss of generality, we assume that $\gamma_B$ is
  full rank. (Otherwise, decrease the dimension of $B$ with no effect on the
  channel entropy.)
  We write a Stinespring dilation of
  $\mathcal{N}_{A\to B}$ on a system $E=E_AE_B$ with $E_A\simeq A$,
  $E_B\simeq B$:
  \begin{align}
    V_{A\to BE_AE_B}
    &= \mathds{1}_{A\to E_A}\otimes \gamma_B^{1/2}\lvert {\Phi_{B:E_B}}\rangle \ ;
    &
      \mathcal{N}_{A\to B}({\cdot}) = \operatorname{tr}_{E_AE_B}\bigl\{{ V\,({\cdot})\,V^\dagger }\bigr\}\ .
  \end{align}
  A complementary channel to $\mathcal{N}_{A\to B}$ is given by
  \begin{align}
    \widehat{\mathcal{N}}_{A\to E_AE_B}({\cdot})
    &= \operatorname{tr}_B\bigl\{{ V\,({\cdot})\,V^\dagger }\bigr\}
      = ({\cdot})_{E_A}\otimes \gamma_{E_B}\ ,
  \end{align}
  namely, the identity channel which maps the input system $A$ to the output
  system $E_A$ and tensors on the fixed state $\gamma_{E_B}$.  Furthermore,
  $\widehat{\mathcal{N}}_{A\leftarrow E_AE_B}^\dagger({\cdot}) = \bigl[{
  \operatorname{tr}_{E_B}\bigl\{{ \gamma_{E_B} ({\cdot}) }\bigr\} }\bigr]_A$, where the system $E_A$ left over
  after the partial trace is relabeled to $A$. Now, let $K_A = -\log({\sigma_A})$.
  We can compute
  \begin{align}
    -\log\bigl[{\widehat{\mathcal{N}}(\sigma_A)}\bigr]
    &=
    -\log({\sigma_{E_A}\otimes\gamma_{E_B}})
    = K_{E_A}\otimes\mathds{1}_{E_B}
      - \Pi^{\sigma_{E_A}}_{E_A}\otimes\log({\gamma_{E_B}})\ ,
  \end{align}
  which implies
  \begin{align}
    \widehat{\mathcal{N}}^\dagger\bigl({ -\log[{
    \widehat{\mathcal{N}}(\sigma_A) }]}\bigr)
    &= K_{A}\operatorname{tr}({\gamma_{E_B}})
      - \Pi_A^{\sigma_A}\,\operatorname{tr}[{\gamma_{E_B} \log({\gamma_{E_B}})}]
    = K_{A} + \Pi_A^{\sigma_A}\,{S}_{}^{}({\gamma_{B}})\ .
  \end{align}
  The left hand side of
  condition~\eqref{z:lWRPNfFI} then reads
  \begin{align}
    \log({\sigma_A})
    - \mathcal{N}^\dagger\bigl({ \log\bigl[{ \widehat{\mathcal{N}}_{A\to E}(\sigma_A) }\bigr] }\bigr)
    &= -K_A + K_A + \Pi_A^{\sigma_A}\,{S}_{}^{}({\gamma_{B}})
      = \Pi_A^{\sigma_A}\,{S}_{}^{}({\gamma_{B}})\ ,
  \end{align}
  which is proportional to $\Pi_A^{\sigma_A}$ as demanded by
  condition~\eqref{z:lWRPNfFI}.
\end{proof}

We also prove a couple lemmas that provide additional guidance on the thermal
quantum channel in the general case where the constraints obey some symmetry.

\begin{lemma}[Constraints symmetric on the output system]
  \label{z:vTX.liTZ}
  Suppose that there exists a completely positive, trace preserving map
  $\mathcal{F}_{B\to B}$ that is unital (i.e.\@
  $\mathcal{F}_{B\to B}(\mathds{1}_B) = \mathds{1}_B$) and such that
  $\mathcal{F}_{B\to B}^\dagger(C^j_{BR}) = C^j_{BR}$ for all $j=1, \ldots, J$.
  If $\mathcal{T}_{}^{(\phi)}$ is a thermal quantum channel with respect to $\phi$,
  then so is $\mathcal{F}\circ\mathcal{T}_{}^{(\phi)}$.  If $\mathcal{T}_{}^{}$ is a
  thermal quantum channel, then so is $\mathcal{F}\circ\mathcal{T}_{}^{}$.
\end{lemma}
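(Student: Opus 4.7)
The plan is to check that $\mathcal{F}\circ\mathcal{T}_{A\to B}^{(\phi)}$ is feasible for the optimization problem~\eqref{z:.wBvf2qe} and achieves the same objective value as $\mathcal{T}_{A\to B}^{(\phi)}$, so that the optimality of the latter forces the former to be optimal too. The composition is CPTP since both factors are. Feasibility of the linear constraints follows directly from the hypothesis $\mathcal{F}^\dagger(C^j_{BR}) = C^j_{BR}$ (understood as $(\mathcal{F}_{B\to B}\otimes \mathrm{id}_R)^\dagger$): for each $j$,
\begin{align}
  \operatorname{tr}\bigl[{C^j_{BR}\,(\mathcal{F}\circ\mathcal{T}_{}^{(\phi)})(\Phi_{A:R})}\bigr]
  = \operatorname{tr}\bigl[{\mathcal{F}^\dagger(C^j_{BR})\,\mathcal{T}_{}^{(\phi)}(\Phi_{A:R})}\bigr]
  = \operatorname{tr}\bigl[{C^j_{BR}\,\mathcal{T}_{}^{(\phi)}(\Phi_{A:R})}\bigr]
  = q_j\ .
\end{align}

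The key step is to show ${S}_{\phi}^{}({\mathcal{F}\circ\mathcal{T}_{}^{(\phi)}}) \geq {S}_{\phi}^{}({\mathcal{T}_{}^{(\phi)}})$ by data processing. Using the expression ${S}_{\phi}^{}({\mathcal{N}}) = -{D}_{}^{}({\mathcal{N}(\phi_{AR})}\mathclose{}\,\Vert\,\mathopen{}{\mathds{1}_B\otimes\phi_R})$ from \eqref{z:nVlXo461}, apply the CPTP map $\mathcal{F}_B\otimes \mathrm{id}_R$ to both arguments. Its unitality, $\mathcal{F}(\mathds{1}_B) = \mathds{1}_B$, ensures the second argument is preserved: $(\mathcal{F}\otimes\mathrm{id}_R)(\mathds{1}_B\otimes\phi_R) = \mathds{1}_B\otimes\phi_R$. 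Monotonicity of the Umegaki relative entropy under CPTP maps then yields
\begin{align}
  {D}_{}^{}\bigl({(\mathcal{F}\circ\mathcal{T}_{}^{(\phi)})(\phi_{AR})}\mathclose{}\,\big\Vert\,\mathopen{}{\mathds{1}_B\otimes\phi_R}\bigr)
  \leq {D}_{}^{}\bigl({\mathcal{T}_{}^{(\phi)}(\phi_{AR})}\mathclose{}\,\big\Vert\,\mathopen{}{\mathds{1}_B\otimes\phi_R}\bigr)\ ,
\end{align}
and hence ${S}_{\phi}^{}({\mathcal{F}\circ\mathcal{T}_{}^{(\phi)}}) \geq {S}_{\phi}^{}({\mathcal{T}_{}^{(\phi)}})$. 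Since $\mathcal{T}_{}^{(\phi)}$ attains the maximum in~\eqref{z:.wBvf2qe}, equality must hold, and $\mathcal{F}\circ\mathcal{T}_{}^{(\phi)}$ is itself optimal, i.e., a thermal channel with respect to $\phi$.

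For the second claim, recall ${S}_{}^{}({\mathcal{N}}) = \min_{\lvert {\phi}\rangle _{AR}} {S}_{\phi}^{}({\mathcal{N}})$ from~\eqref{z:4jPiiO8e}. The preceding data-processing bound holds for every $\phi_R$, so
\begin{align}
  {S}_{}^{}({\mathcal{F}\circ\mathcal{T}_{}^{}})
  = \min_{\lvert {\phi}\rangle _{AR}} {S}_{\phi}^{}({\mathcal{F}\circ\mathcal{T}_{}^{}})
  \geq \min_{\lvert {\phi}\rangle _{AR}} {S}_{\phi}^{}({\mathcal{T}_{}^{}})
  = {S}_{}^{}({\mathcal{T}_{}^{}})\ .
\end{align}
Feasibility of $\mathcal{F}\circ\mathcal{T}_{}^{}$ is as above, so maximality of ${S}_{}^{}({\mathcal{T}_{}^{}})$ forces equality and $\mathcal{F}\circ\mathcal{T}_{}^{}$ is also a thermal channel.

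The only subtlety worth flagging is that the hypothesis $\mathcal{F}$ is \emph{unital} is exactly what makes the second argument of the relative entropy invariant so that data processing gives a one-sided bound on ${S}_{\phi}^{}$; without unitality one would only recover monotonicity of a relative entropy whose reference state has been transformed, and the conclusion would fail. There is no real obstacle beyond carefully tracking that $\mathcal{F}_{B\to B}$ acts trivially on $R$ when applied to bipartite operators.
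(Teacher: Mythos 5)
Your proof is correct and follows essentially the same route as the paper's: verify feasibility by moving $\mathcal{F}^\dagger$ onto the constraint operators, then show the objective cannot decrease. The only packaging difference is in that second step: you invoke monotonicity of the relative entropy directly on $S_\phi(\mathcal{N}) = -D(\mathcal{N}(\phi_{AR}) \Vert \mathds{1}_B\otimes\phi_R)$, exploiting that unitality fixes the reference operator $\mathds{1}_B\otimes\phi_R$, whereas the paper writes $S(B|R) = S(BR)-S(R)$ and uses the (equivalent) fact that a unital channel cannot decrease von Neumann entropy, noting that $S(R)$ is untouched because $\mathcal{F}$ acts only on $B$. These are the same underlying inequality, so this counts as the same approach; your flagged subtlety about unitality is exactly what both arguments rest on.
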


\begin{proof}[**z:vTX.liTZ]
  Fix a state $\phi_R$ and suppose that $\mathcal{T}_{}^{(\phi)}$ is a thermal
  quantum channel with respect to $\phi_R$.  Observe that the quantum channel
  $\mathcal{F}\circ\mathcal{T}_{}^{(\phi)}$ satisfies all constraints:
  \begin{align}
    \operatorname{tr}\bigl[{ C^j_{BR} \, \mathcal{F}\circ\mathcal{T}_{}^{(\phi)}(\Phi_{A:R}) }\bigr]
    = 
    \operatorname{tr}\bigl[{ \mathcal{F}^\dagger[C^j_{BR}] \, \mathcal{T}_{}^{(\phi)}(\Phi_{A:R}) }\bigr]
    = 
    \operatorname{tr}\bigl[{ C^j_{BR} \, \mathcal{T}_{}^{(\phi)}(\Phi_{A:R}) }\bigr]
    = q_j\ .
  \end{align}
  The channel entropy of $\mathcal{F}\circ\mathcal{T}_{}^{(\phi)}$ with respect to
  $\phi_R$ obeys
  \begin{align}
    {S}_{}^{}({B}\mathclose{}\,|\,\mathopen{}{R})_{{\mathcal{F}\circ\mathcal{T}_{}^{(\phi)}(\phi_{AR})}}
    =
    {S}_{}^{}\bigl ({\mathcal{F}\circ\mathcal{T}_{}^{(\phi)}(\phi_{AR})}\bigr )
    -
    {S}_{}^{}({\phi_R})
    \geq
    {S}_{}^{}\bigl ({\mathcal{T}_{}^{(\phi)}(\phi_{AR})}\bigr )
    -
    {S}_{}^{}({\phi_R})
    = 
    {S}_{}^{}({B}\mathclose{}\,|\,\mathopen{}{R})_{{\mathcal{T}_{}^{(\phi)}(\phi_{AR})}}\ ,
  \end{align}
  recalling that the unital channel $\mathcal{F}$ can only increase a state's
  von Neumann entropy.  Therefore $\mathcal{F}\circ\mathcal{T}_{}^{(\phi)}$ is also
  optimal in~\eqref{z:.wBvf2qe} and is therefore a
  thermal quantum channel with respect to $\phi_R$.  (It might, in general,
  differ from $\mathcal{T}_{}^{(\phi)}$ for rank-deficient $\phi_R$ with respect to
  which thermal quantum channels might not be unique.)

  Now suppose that $\mathcal{T}_{}^{}$ is optimal
  in~\eqref{z:sb5FfEOw}.  Again, the map
  $\mathcal{F}\circ\mathcal{T}_{}^{}$ is a quantum channel that obeys all the
  constraints of~\eqref{z:sb5FfEOw}.  Let $\phi_R$ be the
  optimal state for the channel entropy of $\mathcal{F}\circ\mathcal{T}_{}^{}$, such
  that
  ${S}_{}^{}({\mathcal{F}\circ\mathcal{T}_{}^{}}) =
  {S}_{\phi}^{}({\mathcal{F}\circ\mathcal{T}_{}^{}})$.  Then
  \begin{align}
    {S}_{}^{}({ \mathcal{F}\circ\mathcal{T}_{}^{} })
    &= 
    {S}_{}^{}({B}\mathclose{}\,|\,\mathopen{}{R})_{{\mathcal{F}\circ\mathcal{T}_{}^{}(\phi)}}
    =
    {S}_{}^{}\bigl ({\mathcal{F}[{\mathcal{T}_{}^{}(\phi_{AR})}]}\bigr )
    -
    {S}_{}^{}({\phi_R})
    \geq
    {S}_{}^{}\bigl ({\mathcal{T}_{}^{}(\phi_{AR})}\bigr )
    -
    {S}_{}^{}({\phi_R})
    = 
    {S}_{}^{}({B}\mathclose{}\,|\,\mathopen{}{R})_{{\mathcal{T}_{}^{}(\phi_{AR})}}
    \geq
    {S}_{}^{}({ \mathcal{T}_{}^{} })\ .
  \end{align}
  Therefore, $\mathcal{F}\circ\mathcal{T}_{}^{}$ is also optimal
  in~\eqref{z:sb5FfEOw}, completing the proof.
\end{proof}

We now consider constraints that are present a symmetry on the input system and
show that the corresponding symmetry is inherited by thermal quantum channels.
In order to state the following lemma, we introduce the following notation.  For
any completely positive map $\mathcal{F}_{A\to A}$, we define a corresponding
map on $R$ via:
\begin{align}
  [\mathcal{F}^t]_{R\to R}(\cdot)
  \equiv \bigl({\mathcal{F}_{A\to A}[(\cdot)^t]}\bigr)^t\ .
\end{align}
This map ensures that
$[\mathcal{F}^t]_{R\to R}(\Phi_{A:R}) = \mathcal{F}_{A\to A}(\Phi_{A:R})$, which
also shows that $\mathcal{F}^t$ is completely positive.  Given a Kraus
representation
$\mathcal{F}(\cdot) = \sum_\ell \tilde{F}_\ell(\cdot) \tilde{F}_\ell^\dagger$,
we have
$\mathcal{F}^t(\cdot) = \sum_\ell
\bigl({\tilde{F}_\ell\,(\cdot)^t\,\tilde{F}_\ell^\dagger }\bigr)^t = \sum_\ell
({\tilde{F}_\ell^t})^\dagger (\cdot) \tilde{F}_\ell^t$.  Finally, if
$\mathcal{F}$ is trace-preserving, then so is $\mathcal{F}^t$: Indeed,
$[\mathcal{F}^t]^\dagger(\mathds{1}_A) = \bigl({\mathcal{F}^\dagger[(\mathds{1})^t]}\bigr)^t =
\mathds{1}$.

\begin{lemma}[Constraints symmetric on the input system]
  \label{z:K-Ef8jou}
  Suppose that there exists a completely positive, trace preserving map
  $\mathcal{F}_{A\to A}$ such that
  $(\mathcal{F}^t)^\dagger(C^j_{BR}) = C^j_{BR}$ for all
  $j=1, \ldots, J$.  If
  $\mathcal{T}_{}^{(\phi)}$ is a thermal quantum channel with respect to $\phi$,
  then so is $\mathcal{T}_{}^{(\phi)}\circ\mathcal{F}$.  If $\mathcal{T}_{}^{}$ is a
  thermal quantum channel, then so is $\mathcal{T}_{}^{}\circ\mathcal{F}$.
\end{lemma}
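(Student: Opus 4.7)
The plan is to mirror the proof of \cref{z:vTX.liTZ} above with the roles of input and output exchanged: first verify that $\mathcal{T}_{A\to B}^{(\phi)}\circ\mathcal{F}_{A\to A}$ satisfies all the problem constraints, then establish a lower bound on its channel entropy with respect to $\phi$ via a Stinespring dilation of $\mathcal{F}$ and strong subadditivity of the von Neumann entropy.

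For the constraint part, $\mathcal{T}^{(\phi)}\circ\mathcal{F}$ is a quantum channel as a composition of CPTP maps. Using the identity $\mathcal{F}_{A\to A}(\Phi_{A:R}) = [\mathcal{F}^t]_{R\to R}(\Phi_{A:R})$, the fact that $\mathcal{T}_{A\to B}^{(\phi)}$ and $[\mathcal{F}^t]_{R\to R}$ commute (acting on disjoint systems), and the hypothesis $(\mathcal{F}^t)^\dagger(C^j_{BR}) = C^j_{BR}$, one computes
\begin{align}
\operatorname{tr}\bigl[{C^j_{BR}\,(\mathcal{T}^{(\phi)}\circ\mathcal{F})(\Phi_{A:R})}\bigr]
&= \operatorname{tr}\bigl[{C^j_{BR}\,[\mathcal{F}^t]_R(\mathcal{T}^{(\phi)}(\Phi_{A:R}))}\bigr]
\nonumber\\
&= \operatorname{tr}\bigl[{(\mathcal{F}^t)^\dagger(C^j_{BR})\,\mathcal{T}^{(\phi)}(\Phi_{A:R})}\bigr]
= q_j .
\end{align}

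For the channel-entropy step, pick a Stinespring isometry $V_{A\to AE}$ of $\mathcal{F}$ and set $\lvert {\tilde\phi}\rangle _{AER} \equiv V_{A\to AE}\lvert {\phi}\rangle _{AR}$, a purification of $\mathcal{F}(\phi_{AR})$ whose reduced state on $R$ is still $\phi_R$. Since $(\mathcal{T}^{(\phi)}\circ\mathcal{F})(\phi_{AR}) = \operatorname{tr}_E\bigl[{\mathcal{T}^{(\phi)}(\tilde\phi_{AER})}\bigr]$, strong subadditivity of the von Neumann entropy yields
\begin{align}
S_\phi(\mathcal{T}^{(\phi)}\circ\mathcal{F})
= S(B|R)_{(\mathcal{T}^{(\phi)}\circ\mathcal{F})(\phi_{AR})}
\geq S(B|RE)_{\mathcal{T}^{(\phi)}(\tilde\phi_{AER})}
= S_{\tilde\phi}(\mathcal{T}^{(\phi)}),
\end{align}
where $\lvert {\tilde\phi}\rangle $ is now viewed as a pure state on $A$ with reference system $RE$.

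For the thermal-channel claim (unfixed input), I would apply this bound at the state $\phi^*$ minimizing $S_\sigma(\mathcal{T}\circ\mathcal{F})$ over $\sigma$: then $S(\mathcal{T}\circ\mathcal{F}) = S_{\phi^*}(\mathcal{T}\circ\mathcal{F}) \geq S_{\tilde\phi^*}(\mathcal{T}) \geq \min_\sigma S_\sigma(\mathcal{T}) = S(\mathcal{T})$, and combined with $S(\mathcal{T}\circ\mathcal{F}) \leq S(\mathcal{T})$ from optimality of $\mathcal{T}$, equality holds and $\mathcal{T}\circ\mathcal{F}$ is also a thermal channel. The hard part of the proof is the fixed-input claim: the SSA lower bound only produces $S_{\tilde\phi}(\mathcal{T}^{(\phi)})$ with $\tilde\phi_A = \mathcal{F}(\phi_A) \neq \phi_A$ in general, and closing this gap appears to require the structural characterization from \cref{z:33B55hFY}. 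I expect that in the full-rank case, uniqueness of the optimum (\cref{z:hd7.cLe.}) combined with an explicit verification that the Lagrangian stationarity condition is preserved under precomposition with $\mathcal{F}$ (again using $(\mathcal{F}^t)^\dagger$-invariance of the $C^j$) forces $\mathcal{T}^{(\phi)}\circ\mathcal{F}=\mathcal{T}^{(\phi)}$, with the general case following by the stability results in \cref{z:WYT9Hica,z:4e07B.62}.
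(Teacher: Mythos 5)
Your unfixed-input argument is correct, and it takes a genuinely different route from the paper. The paper's proof of this lemma rests on the claimed identity $\mathcal{T}^{(\phi)}[\mathcal{F}_A(\phi_{AR})] = (\mathcal{F}^t)_R[\mathcal{T}^{(\phi)}(\phi_{AR})]$ followed by the data-processing inequality for the conditional entropy. But that identity only holds on the unnormalized Choi state $\Phi_{A:R}$; on $\phi_{AR}=\phi_R^{1/2}\Phi_{A:R}\phi_R^{1/2}$ it generally fails, because $(\mathcal{F}^t)_R$ does not commute with multiplication by $\phi_R^{1/2}$. Indeed, the $R$-marginals of the two sides are already different --- the left-hand side has marginal $\phi_R$, while the right-hand side has marginal $(\mathcal{F}^t)_R(\phi_R)$, which equals $\phi_R$ only under an extra hypothesis. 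Your route via the Stinespring dilation $V_{A\to AE}$ of $\mathcal{F}$ and strong subadditivity (giving $S_\phi(\mathcal{T}\circ\mathcal{F}) \geq S_{\tilde\phi}(\mathcal{T})$ with $\lvert\tilde\phi\rangle_{AER}=V\lvert\phi\rangle_{AR}$) avoids that problematic commutation entirely and is therefore more robust. Combined with $S_{\tilde\phi}(\mathcal{T}) \geq \min_\sigma S_\sigma(\mathcal{T}) = S(\mathcal{T})$ and the reverse inequality from optimality of $\mathcal{T}$, it correctly establishes the second assertion of the lemma.

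The gap you flag for the fixed-input assertion is genuine, and it is not resolved in the paper either: the paper's argument there relies on the same equality step just discussed, which fails for general $\phi_R$. Your SSA bound produces $S_{\tilde\phi}(\mathcal{T}^{(\phi)})$ with the modified input state $\tilde\phi_A = \mathcal{F}(\phi_A)$, and there is no immediate reason this should dominate $S_\phi(\mathcal{T}^{(\phi)})$. Your proposed fix --- verifying that the Lagrangian stationarity condition of \cref{z:33B55hFY} is preserved under precomposition with $\mathcal{F}$, and invoking uniqueness of the full-rank optimum from \cref{z:hd7.cLe.} plus the stability results for rank-deficient $\phi_R$ --- is the sensible way to try to close this, but it remains a sketch; as written, neither your proposal nor the paper's own proof establishes the fixed-input assertion. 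One further caution: your constraint-check step inherits the paper's transpose convention, and it is worth double-checking the dagger placements when moving $\mathcal{F}_A$ over to the $R$ system. The ricochet on a Kraus operator, $(\tilde F_\ell)_A\lvert\Phi_{A:R}\rangle = (\tilde F_\ell^t)_R\lvert\Phi_{A:R}\rangle$, implies $\mathcal{F}_A(\Phi_{A:R})$ equals the map with Kraus operators $\tilde F_\ell^t$ applied on $R$, and whether that map is $\mathcal{F}^t$ or $(\mathcal{F}^t)^\dagger$ depends on the precise convention, which in turn determines whether the required invariance of $C^j_{BR}$ should be $(\mathcal{F}^t)^\dagger(C^j)=C^j$ or $\mathcal{F}^t(C^j)=C^j$.
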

\begin{proof}[**z:K-Ef8jou]
  Fix a state $\phi_R$ and suppose that $\mathcal{T}_{}^{(\phi)}$ is a thermal
  quantum channel with respect to $\phi_R$.  Observe that the quantum channel
  $\mathcal{T}_{}^{(\phi)}\circ\mathcal{F}$ satisfies all constraints:
  \begin{align}
    \operatorname{tr}\bigl({ C^j_{BR} \, \mathcal{T}_{}^{(\phi)}[{\mathcal{F}({\Phi_{A:R}})}] }\bigr)
    = 
    \operatorname{tr}\bigl({ C^j_{BR} \, [\mathcal{F}^t]_{R\to R}({\mathcal{T}_{}^{(\phi)}[{\Phi_{A:R}}]}) }\bigr)
    = 
    \operatorname{tr}\bigl({ C^j_{BR} \, \mathcal{T}_{}^{(\phi)}[{\Phi_{A:R}}] }\bigr)
    = q_j\ .
  \end{align}
  The channel entropy of $\mathcal{T}_{}^{(\phi)}\circ\mathcal{F}$ with respect to
  $\phi_R$ obeys
  \begin{align}
    {S}_{}^{}({B}\mathclose{}\,|\,\mathopen{}{R})_{{\mathcal{T}_{}^{(\phi)}[{\mathcal{F}_A({\phi_{AR}})}]}}
    &=
    {S}_{}^{}({B}\mathclose{}\,|\,\mathopen{}{R})_{{({\mathcal{F}^t})_R[{\mathcal{T}_{}^{(\phi)}({\phi_{AR}})}]}}
      \geq
    {S}_{}^{}({B}\mathclose{}\,|\,\mathopen{}{R})_{{\mathcal{T}_{}^{(\phi)}({\phi_{AR}})}}\ ,
  \end{align}
  where the inequality follows from the data processing inequality of the
  conditional entropy.  The channel $\mathcal{T}_{}^{(\phi)}\circ\mathcal{F}$ is
  therefore also optimal in~\eqref{z:.wBvf2qe}.

  Now assume that $\mathcal{T}_{}^{}$ is optimal
  in~\eqref{z:sb5FfEOw}.  Again, the map
  $\mathcal{T}_{}^{}\circ\mathcal{F}$ is a quantum channel that obeys all the
  constraints of~\eqref{z:sb5FfEOw}.  Let $\phi_R$ be an
  optimal state for the channel entropy of $\mathcal{T}_{}^{}\circ\mathcal{F}$, such
  that
  ${S}_{}^{}({\mathcal{T}_{}^{}\circ\mathcal{F}}) =
  {S}_{\phi}^{}({\mathcal{T}_{}^{}\circ\mathcal{F}})$.  Then
  \begin{align}
    {S}_{}^{}({ \mathcal{T}_{}^{} \circ \mathcal{F} })
    &= 
    {S}_{}^{}({B}\mathclose{}\,|\,\mathopen{}{R})_{{\mathcal{T}_{}^{}[{\mathcal{F}({\phi})}]}}
    =
    {S}_{}^{}({B}\mathclose{}\,|\,\mathopen{}{R})_{{(\mathcal{F}^t)_{R}[{\mathcal{T}_{}^{}({\phi})}]}}
    \geq
    {S}_{}^{}({B}\mathclose{}\,|\,\mathopen{}{R})_{{\mathcal{T}_{}^{}({\phi})}}
    \geq
    {S}_{}^{}({\mathcal{T}_{}^{}})\ .
  \end{align}
  Therefore, $\mathcal{F}\circ\mathcal{T}_{}^{}$ is also optimal
  in~\eqref{z:sb5FfEOw}, completing the proof.
\end{proof}

If the constraints present a symmetry on their input system, this information is
precious to identify optimal states $\phi_R$ that could be optimal for the
thermal quantum channel.
\begin{lemma}[Symmetry of optimal $\phi_R$ with input-symmetric constraints]
  \label{z:nCrJ.k1B}
  Suppose that there exists a completely positive, trace preserving map
  $\mathcal{F}_{A\to A}$ such that
  $(\mathcal{F}^t)^\dagger(C^j_{BR}) = C^j_{BR}$ for all
  $j=1, \ldots, J$.  
  Let $\phi_A$ be any quantum state.  If
  $\phi_A$ is optimal in~\eqref{z:02ppihHe},
  then so is $\mathcal{F}(\phi_A)$.
\end{lemma}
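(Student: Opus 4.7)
The plan is to establish $\tilde{s}(\phi'_R) \leq \tilde{s}(\phi_R)$ for $\phi'_A := \mathcal{F}(\phi_A)$; combined with optimality of $\phi_A$ as a minimizer of $\tilde{s}$ in~\eqref{z:02ppihHe}, this forces equality and hence makes $\phi'_A$ also optimal. The argument has two ingredients: constraint preservation (supplied by the symmetry hypothesis, as already used in \cref{z:K-Ef8jou}) and a data-processing inequality for the channel entropy under $\mathcal{F}$.

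Let $\mathcal{N}^\star$ be a thermal channel with respect to $\phi'$, so $S_{\phi'}(\mathcal{N}^\star) = \tilde{s}(\phi'_R)$. Exactly as in the proof of \cref{z:K-Ef8jou}, the composite $\mathcal{N}^\star\circ\mathcal{F}$ is c.p.t.p.\ and
\[
\operatorname{tr}\bigl[C^j_{BR}\,\mathcal{N}^\star(\mathcal{F}(\Phi_{A:R}))\bigr]
= \operatorname{tr}\bigl[(\mathcal{F}^t)^\dagger(C^j_{BR})\,\mathcal{N}^\star(\Phi_{A:R})\bigr]
= q_j,
\]
so $\mathcal{N}^\star\circ\mathcal{F}$ is feasible for the maximization defining $\tilde{s}(\phi_R)$, giving $\tilde{s}(\phi_R) \geq S_\phi(\mathcal{N}^\star\circ\mathcal{F})$.

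The core step is to show $S_\phi(\mathcal{N}^\star\circ\mathcal{F}) \geq S_{\phi'}(\mathcal{N}^\star)$ via data processing for the conditional entropy. I would purify the (generally mixed) state $\mathcal{F}_A(\phi_{AR})$ into $\lvert\tilde\phi\rangle_{ARE}$ on an environment $E$, observing that its $A$-marginal is $\mathcal{F}(\phi_A) = \phi'_A$. Since $\lvert\phi'\rangle_{AR} = (\phi'_A)^{1/2}\lvert\Phi_{A:R}\rangle$ is another purification of $\phi'_A$, the standard purification equivalence yields an isometry $V_{R\to RE}$ with $\lvert\tilde\phi\rangle_{ARE} = (\mathds{1}_A\otimes V)\lvert\phi'\rangle_{AR}$. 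Applying $\mathcal{N}^\star_{A\to B}$, invariance of the conditional entropy under isometries on the conditioning system together with strong subadditivity (discarding $E$ from the conditioning system can only increase $S(B|\,\cdot\,)$) gives
\[
S_{\phi'}(\mathcal{N}^\star) = S(B|R)_{\mathcal{N}^\star(\phi'_{AR})} = S(B|RE)_{\mathcal{N}^\star(\tilde\phi_{ARE})} \leq S(B|R)_{\mathcal{N}^\star(\mathcal{F}_A(\phi_{AR}))} = S_\phi(\mathcal{N}^\star\circ\mathcal{F}),
\]
using $\operatorname{tr}_E \lvert\tilde\phi\rangle\mkern -1.8mu\relax \langle\tilde\phi\rvert = \mathcal{F}_A(\phi_{AR})$ and that $\mathcal{N}^\star$ commutes with $\operatorname{tr}_E$ in the last equality. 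Chaining with the previous bound gives $\tilde{s}(\phi'_R) \leq \tilde{s}(\phi_R)$; optimality of $\phi_A$ supplies the reverse inequality, hence equality.

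The only mildly subtle point is the bookkeeping with two purifications of $\phi'_A$ and the embedding $V_{R\to RE}$; once that correspondence is set up, strong subadditivity delivers the data-processing inequality cleanly, and the symmetry hypothesis on $C^j_{BR}$ does the rest by ensuring feasibility of $\mathcal{N}^\star\circ\mathcal{F}$.
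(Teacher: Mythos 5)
Your proof is correct, and while it uses the same two core ingredients as the paper's argument---feasibility of the composed channel $\mathcal{N}^\star\circ\mathcal{F}$, plus a purification-and-strong-subadditivity data-processing step---the top-level framing is genuinely different and, in fact, a bit cleaner. The paper's proof fixes a globally optimal channel $\mathcal{T}$, invokes Lemma~\ref{z:K-Ef8jou} to get that $\mathcal{T}\circ\mathcal{F}$ is also globally optimal, and then leans on the unproven-in-place assertion that an optimal $\phi_A$ in~\eqref{z:02ppihHe} is automatically a minimizer of $S_\phi(\cdot)$ for \emph{that particular} optimal channel $\mathcal{T}\circ\mathcal{F}$ (a saddle-point-structure observation about convex-concave games that the paper does not spell out). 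You instead work directly at the level of $\tilde{s}(\phi_R)$: picking the thermal channel $\mathcal{N}^\star = \mathcal{T}^{(\phi')}$ at $\phi'$ and exhibiting $\mathcal{N}^\star\circ\mathcal{F}$ as a feasible competitor at $\phi$ with $S_\phi(\mathcal{N}^\star\circ\mathcal{F}) \geq S_{\phi'}(\mathcal{N}^\star)$ gives $\tilde{s}(\phi'_R) \le \tilde{s}(\phi_R)$ in one line, and optimality of $\phi_A$ as a \emph{minimizer} of $\tilde{s}$ finishes the argument with no saddle-point reasoning needed. Your inner inequality $S_{\phi'}(\mathcal{N}^\star) \le S_\phi(\mathcal{N}^\star\circ\mathcal{F})$ is proven by purifying $\mathcal{F}_A(\phi_{AR})$ and discarding the purifying system from the conditioning register; this is mathematically identical to the paper's route via a Stinespring dilation $W_{A\to AR_F}$ of $\mathcal{F}_A$, just phrased as a direct purification rather than through the channel's dilation isometry. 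In short: same machinery, tighter packaging, and you avoid a subtlety the paper glosses over.
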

\begin{proof}[**z:nCrJ.k1B]
  Let $\mathcal{T}_{}^{}$ be optimal in~\eqref{z:sb5FfEOw}, or
  equivalently, in~\eqref{z:02ppihHe}.  By
  \cref{z:K-Ef8jou}, the channel
  $\mathcal{T}_{}^{}\circ\mathcal{F}$ is also optimal.  Let $\phi_A$ be optimal
  in~\eqref{z:02ppihHe}, which implies that
  $\phi_A$ is optimal for the channel entropy
  ${S}_{}^{}({\mathcal{T}_{}^{}\circ\mathcal{F}})$.  Then
  \begin{align}
    {S}_{}^{}({\mathcal{T}_{}^{}})
    &= {S}_{}^{}({\mathcal{T}_{}^{}\circ\mathcal{F}})   %
      = {S}_{\phi}^{}({\mathcal{T}_{}^{}\circ\mathcal{F}})
      = {S}_{}^{}({B}\mathclose{}\,|\,\mathopen{}{R})_{{\mathcal{T}_{}^{}[{\mathcal{F}_A({\phi})}]}}
      \ .
      \label{z:VwF9w5nG}
  \end{align}
  Let $W_{A\to A R_F}$ be a Stinespring dilation isometry
  of $\mathcal{F}_A$ with
  $\mathcal{F}_A({\cdot}) = \operatorname{tr}_{R_F}[{ W_{A\to A R_F}\,({\cdot})\,W^\dagger }]$
  with some additional environment system $R_F$.  From the data processing
  inequality of the conditional entropy,
  \begin{align}
    \text{\eqref{z:VwF9w5nG}}
    &=
      {S}_{}^{}({B}\mathclose{}\,|\,\mathopen{}{R})_{{\operatorname{tr}_{R_F}[{\mathcal{T}_{}^{}({W \phi W^\dagger})}]}}
      \geq
      {S}_{}^{}({B}\mathclose{}\,|\,\mathopen{}{R R_F})_{{\mathcal{T}_{}^{}({W \phi W^\dagger})}}
      \ .
      \label{z:jWtGwnQw}
  \end{align}
  Observe that
  $\operatorname{tr}_{R R_F}({ W_{A\to AR_F} \phi_{AR} W^\dagger}) = \mathcal{F}_A({\phi_A})$.
  As a purification of $\mathcal{F}_A({\phi_A})$, the state
  $W_{A\to AR_F} \lvert {\phi}\rangle _{AR}$ is therefore related to
  $\lvert {\phi'}\rangle _{AR} \equiv [{ \mathcal{F}_A({\phi_A}) }]^{1/2} \lvert {\Phi_{A:R}}\rangle $
  by a partial isometry on $R\to R R_F$.  Therefore,
  \begin{align}
    \text{\eqref{z:jWtGwnQw}}
    &=
      {S}_{}^{}({B}\mathclose{}\,|\,\mathopen{}{R})_{{\mathcal{T}_{}^{}({\phi'_{AR}})}}
      \geq
      {S}_{}^{}({\mathcal{T}_{}^{}})\ .
      \label{z:SHnPCWgy}
  \end{align}
  Combining \cref{z:VwF9w5nG,z:jWtGwnQw,z:SHnPCWgy} we
  find
  \begin{align}
     {S}_{}^{}({\mathcal{T}_{}^{}}) = {S}_{}^{}({B}\mathclose{}\,|\,\mathopen{}{R})_{{\mathcal{T}_{}^{}({\phi'_{AR}})}}\ ,
  \end{align}
  and therefore $\phi'_A= \mathcal{F}_A({\phi_A})$ is optimal for the channel
  entropy of $\mathcal{T}_{}^{}$.
\end{proof}

\subsection{Generalized thermal channel: Minimum channel relative entropy}
\label{z:aXvqINGx}

In Jaynes' principle, we maximize the entropy ${S}_{}^{}({\rho})$ of $\rho$ with
respect to linear constraints $\operatorname{tr}({Q_j\rho}) = q_j$ for $j=1,\ldots, J$.
Recalling that ${S}_{}^{}({\rho}) = -{D}_{}^{}({\rho}\mathclose{}\,\Vert\,\mathopen{}{\mathds{1}})$, this maximization can be
understood as finding the state $\rho$ that most resembles $\mathds{1}$, according
to the relative entropy, while being compatible with the constraints.
A slightly more general version of the problem is the \textit{minimum relative entropy problem}, which is the problem of minimizing
${D}_{}^{}({\rho}\mathclose{}\,\Vert\,\mathopen{}{\sigma})$ with respect to $\rho$, for a given state $\sigma$ and
with constraints $\operatorname{tr}({Q_j\rho})=q_j$ as before.
Here, $\sigma$ may represent prior knowledge about $\rho$, or an earlier
estimate of $\rho$ in an iterative learning algorithm.
The solution to the generalized problem is the \textit{generalized thermal state}
\begin{equation}
    \rho=\frac1Z\, e^{\log(\sigma)-\sum\mu_j Q_j}\ .
    \label{z:ZrNdNY2M}
\end{equation}
This state has an operational meaning within the context of the so-called
\textit{quantum Sanov
  theorem}~\cite{R98,R99}.  The
quantum Sanov theorem is a statement about the decay of an error parameter in a
hypothesis test involving i.i.d.\@ states.  Specifically, let $\mathcal{S}$ be a
subset of density operators, and let $\sigma$ be any quantum state.  For $n>0$,
consider the following hypothesis test: in the null hypothesis, we are handed
the state $\sigma^{\otimes n}$, and in the alternative hypothesis, we are handed
a state $\rho^{\otimes n}$ for some unknown $\rho\in\mathcal{S}$.  We seek a
POVM effect $E$ that is capable of successfully identifying any such
$\rho^{\otimes n}$ except with probability $\varepsilon>0$, while maximizing the
probability that $\mathds{1}-E$ successfully identifies $\sigma^{\otimes n}$.  The
best probability for such an $E$ successfully identifying $\sigma$ is
\begin{align}
  \beta_{\varepsilon,n}(\mathcal{S}\Vert\sigma)
  \coloneqq
  \inf_{0\leq E\leq\mathds{1}}
  \Bigl\{{ \operatorname{tr}({E\sigma^{\otimes n}}) :
  \sup_{\rho\in\mathcal{S}}\operatorname{tr}[{({\mathds{1}-E})\rho^{\otimes n}}] \leq \varepsilon }\Bigr\}\ .
\end{align}
The \emph{quantum Sanov theorem} states
that~\cite{R98,R99}
\begin{align}
  \lim_{n\to\infty} -\frac{1}{n} \log\beta_{\varepsilon,n}(\mathcal{S}\Vert\sigma)
  = \inf_{\rho\in\mathcal{S}}{D}_{}^{}({\rho}\mathclose{}\,\Vert\,\mathopen{}{\sigma})\ .
\end{align}
Now suppose that $\mathcal{S} \equiv \{{\rho : \operatorname{tr}({Q_j\rho}) = q_j\ (j=1,2,\ldots, J) }\}$.
The generalized thermal state therefore %
achieves the optimal asymptotic type-II error exponent in a hypothesis test
between $\sigma^{\otimes n}$ and any $\rho^{\otimes n}$ with
$\rho\in\mathcal{S}$.

Here, we derive a quantum channel analog of the generalized thermal
state~\eqref{z:ZrNdNY2M} by optimizing the channel
relative entropy.

Let $A, B$ be quantum systems, and let $R\simeq A$.  Let $\{{ C_{BR}^j }\}_{j=1}^{n_C}$,
$\{{ D_{BR}^\ell }\}_{\ell=1}^{n_D}$, and $\{{ E_{BR}^m }\}_{m=1}^{n_E}$ be collections
of Hermitian operators acting on $BR$, and let $\{{ q_j }\}_{j=1}^{n_C}$,
$\{{ r_\ell }\}_{\ell=1}^{n_D}$, and $\{{ s_m }\}_{m=1}^{n_E}$ be any collections of real
numbers.  Let $\tilde\eta_m \geq 0$ for $m = 1, \ldots, n_E$.
Let $\mathcal{M}_{A\to B}$ be any completely positive map, and let
$\lvert {\phi}\rangle _{AR} = \phi_A^{1/2}\lvert {\Phi_{A:R}}\rangle $, where $\phi_A$ is an arbitrary
quantum state.
Consider the following optimization problem:
\begin{align}
  \label{z:z4Hq5iYa}
  \begin{aligned}[t]
    \textup{minimize:} \quad
    & {D}_{\phi}^{}({\mathcal{N}_{A\to B}}\mathclose{}\,\Vert\,\mathopen{}{\mathcal{M}_{A\to B}})
      + \sum \tilde\eta_m \Bigl({ s_m - \operatorname{tr}\bigl[{E_{BR}^m \mathcal{N}_{A\to B}(\Phi_{A:R})}\bigr] }\Bigr)^2
    \\
    \textup{over:}\quad
    & \mathcal{N}_{A\to B}\ \textup{c.p., t.p.}
    \\
    \textup{such that:}\quad
    & \operatorname{tr}\bigl[{C^j_{BR}\,\mathcal{N}_{A\to B}(\Phi_{A:R})}\bigr] = q_j\quad\text{for \(j=1, \ldots, n_C\)}\ ;
    \\
    & \operatorname{tr}\bigl[{D^\ell_{BR}\,\mathcal{N}_{A\to B}(\Phi_{A:R})}\bigr] \leq r_\ell\quad\text{for \(\ell=1, \ldots, n_D\)}\ .
  \end{aligned}
\end{align}

\begin{theorem}[Minimum channel relative entropy with respect to fixed input $\phi_A$]
  \label{z:gV43EWT.}
  Assume that there exists a quantum channel
  $\mathcal{N}^{(\mathrm{int})}_{A\to B}$ that satisfies all the problem
  constraints and which obeys $\mathcal{N}_{A\to B}(\Phi_{A:R}) > 0$.
  Any quantum channel $\ThChGen[A\to B][\phi]$ is an optimal solution
  to~\eqref{z:z4Hq5iYa}
  if and only if it satisfies all the problem constraints and it is of the form
  \begin{subequations}
    \label{z:drLpOUuq}
    \begin{gather}
      \ThChGen[A\to B][\phi](\Phi_{A:R})
      = 
      \phi_R^{-1/2}\,\exp\Bigl\{{ -\phi_R^{-1/2} G_{BR} \phi_R^{-1/2} }\Bigr\} \, \phi_R^{-1/2} + Y_{BR}\ ;
      \\[1ex]
      G_{BR} = \sum \mu_j C^j_{BR} + \sum \nu_\ell D^\ell_{BR} + \sum w_m E_{BR}^m
        - \mathds{1}_B\otimes F_R
        - \phi_R^{1/2} \log\bigl({\phi_R^{1/2} M_{BR} \phi_R^{1/2}}\bigr) \phi_R^{1/2}
        - S_{BR}\ ,
    \end{gather}
  \end{subequations}
  where $\mu_j, w_m \in\mathbb{R}$, $\nu_\ell \geq 0$,
  where $F_R$ is a Hermitian matrix, 
  where $S_{BR}$ is a positive semidefinite operator satisfying
  $S_{BR}\, \ThChGen[A\to B][\phi](\Phi_{A:R}) = 0$,
  where $\nu_\ell \bigl({r_\ell - \operatorname{tr}\bigl[{D^\ell_{BR} \ThChGen[A\to B][\phi](\Phi_{A:R}) }\bigr]}\bigr) = 0$,
  where $w_m = 2\tilde\eta_m\bigl[{\operatorname{tr}\bigl({ E_{BR}^m \ThChGen[A\to B][\phi](\Phi_{A:R}) }\bigr) - s_m }\bigr]$,
  where $\Pi^{\phi_R\perp} G_{BR} = 0$,
  and where $Y_{BR}$ is a Hermitian
  operator such that $\Pi^{\phi_R}_R Y_{BR} \Pi^{\phi_R}_R = 0$.
  Furthermore, for any such $\ThChGen[A\to B][\phi]$, we have that $G_{BR}$ is
  positive semidefinite and that $\operatorname{tr}_B({ Y_{BR} }) = \Pi_R^{\phi_R\,\perp}$.
  The attained value for the channel relative entropy with respect to
  $\lvert {\phi}\rangle _{AR}$ is
  \begin{align}
    {D}_{\phi}^{}({\mathcal{N}_{A\to B}}\mathclose{}\,\Vert\,\mathopen{}{\mathcal{M}_{A\to B}})
    = \operatorname{tr}({F_R}) - \sum \mu_j q_j - \sum \nu_\ell r_\ell - \sum w_m s_m - \sum \frac{w_m^2}{2\tilde\eta_m}
    \ .
    \label{z:.eduKYhh}
  \end{align}
\end{theorem}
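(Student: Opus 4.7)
The plan is to extend the Lagrangian/variational argument of Proposition~\ref{z:hd7.cLe.} to incorporate the generalizations in~\eqref{z:z4Hq5iYa}: an arbitrary (possibly rank-deficient) input state $\phi_R$, a general completely positive reference map $\mathcal{M}$, inequality constraints, and a quadratic objective penalty. The hypothesis that some feasible $\mathcal{N}^{(\mathrm{int})}$ admits a positive-definite Choi matrix supplies Slater's condition, granting strong duality and the existence of finite KKT multipliers. Joint convexity of $g_{\mathcal{M}}(\mathcal{N},\phi)$ in $\mathcal{N}$ (cf.\@ below~\eqref{z:ZdgA6uUx}), together with the affine constraints and the convex quadratic penalty, ensures that the KKT conditions are both necessary and sufficient for optimality.

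I would form the Lagrangian by introducing multipliers $\mu_j\in\mathbb{R}$ for each equality constraint, $\nu_\ell\geq 0$ with complementary slackness $\nu_\ell(r_\ell-\operatorname{tr}[D^\ell_{BR}N_{BR}])=0$ for each inequality constraint, a Hermitian $Z_R$ for the trace-preserving condition, and a positive semidefinite $S_{BR}$ with $S_{BR}N_{BR}=0$ for the complete-positivity condition $N_{BR}\geq 0$, leaving the quadratic penalty in the objective. Using $\delta\operatorname{tr}[f(X)]=\operatorname{tr}[f'(X)\,\delta X]$ with $X=\phi_R^{1/2}N_{BR}\phi_R^{1/2}$, the variation of ${D}_{\phi}^{}({\mathcal{N}}\mathclose{}\,\Vert\,\mathopen{}{\mathcal{M}})$ produces the logarithmic difference $\phi_R^{1/2}\bigl[\log(\phi_R^{1/2}N_{BR}\phi_R^{1/2})-\log(\phi_R^{1/2}M_{BR}\phi_R^{1/2})\bigr]\phi_R^{1/2}$ together with a term $\mathds{1}_B\otimes\phi_R$, while the quadratic penalty contributes $\sum_m w_m E^m_{BR}$ with $w_m=2\tilde\eta_m(\operatorname{tr}[E^m_{BR}N_{BR}]-s_m)$, matching the theorem's definition at the stationary point. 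Setting $\delta\mathcal{L}=0$ for all Hermitian $\delta N_{BR}$ and absorbing $\mathds{1}_B\otimes\phi_R$ into a redefined $F_R:=Z_R-\phi_R$ gives the stationarity identity which, after conjugation by $\phi_R^{-1/2}$ and exponentiation on the support of $\phi_R$, yields~\eqref{z:drLpOUuq}. Substituting the optimizer into the objective and simplifying the relative-entropy term via the stationarity identity produces~\eqref{z:.eduKYhh}; the residual $-\sum_m w_m^2/(2\tilde\eta_m)$ arises as the Legendre-transform remainder of the quadratic penalty evaluated at the optimum.

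For rank-deficient $\phi_R$ I would decompose $N_{BR}$ into blocks defined by $\Pi_R^{\phi_R}$ and $\Pi_R^{\phi_R\perp}$: the objective is insensitive to the orthogonal residual, so the stationarity condition only constrains the support-block, producing the exponential term; the residual is captured by $Y_{BR}$ subject to positivity and to the trace-preserving condition projected onto $\Pi_R^{\phi_R\perp}$, which fixes $\operatorname{tr}_B Y_{BR}=\Pi_R^{\phi_R\perp}$ and justifies the canonical choice $\Pi_R^{\phi_R}Y_{BR}\Pi_R^{\phi_R}=0$. The additional canonical choice $\Pi_R^{\phi_R\perp}G_{BR}=0$ removes the gauge freedom arising from the fact that only $\phi_R^{-1/2}G_{BR}\phi_R^{-1/2}$ enters the exponential. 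The main obstacle is the bookkeeping around the rank-deficient case: one must verify that the orthogonal block $Y_{BR}$ can always be chosen compatibly with positivity of $N_{BR}$ and the linear constraints without perturbing the exponential form on the support, and that the KKT conditions remain well-posed when $\phi_R^{-1/2}$ is interpreted via its pseudoinverse on $\Pi_R^{\phi_R}$.
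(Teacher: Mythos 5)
Your proposal follows the same Lagrangian/KKT route as the paper's proof and correctly identifies the dual variables ($\mu_j$, $\nu_\ell\geq 0$, $F_R$, $S_{BR}\geq 0$), the stationarity identity, the substitution $F_R := Z_R - \phi_R$, the quadratic penalty contribution $w_m = 2\tilde\eta_m(\operatorname{tr}[E^m_{BR}N_{BR}]-s_m)$, and the block decomposition that isolates $Y_{BR}$ in the rank-deficient case.

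However, there is a gap that the paper resolves and you do not: the gradient-vanishing condition $\delta\mathcal{L}=0$ is only meaningful at points where the objective is differentiable, and the derivative of the relative-entropy term involves $\log(\phi_R^{1/2}N_{BR}\phi_R^{1/2})$, which diverges as $\phi_R^{1/2}N_{BR}\phi_R^{1/2}$ approaches a rank-deficient operator on the support of $\mathds{1}_B\otimes\Pi^{\phi_R}_R$. Introducing the multiplier $S_{BR}$ with $S_{BR}N_{BR}=0$ handles the constraint $N_{BR}\geq 0$, but it does not, by itself, justify that the optimizer lies in the interior where the logarithm is defined; at a boundary optimum the stationarity equation would fail to hold (the subdifferential is empty, since the objective's directional derivative into the domain is $-\infty$). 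The paper closes this hole with Lemma~\ref{z:dTDQ-ZX2}, which uses the strictly feasible interior point $\mathcal{N}^{(\mathrm{int})}$ to show that any optimum must have $\phi_R^{1/2}N_{BR}\phi_R^{1/2}$ full rank within the support of $\mathds{1}_B\otimes\Pi^{\phi_R}_R$. You flag "bookkeeping around the rank-deficient case" and the pseudoinverse interpretation as concerns, but the actual missing step is ruling out the boundary optimum — without it the claim that KKT conditions are both necessary and sufficient is not established. A minor secondary point: you invoke Slater's condition, but the theorem's hypothesis gives strict feasibility only for the positivity constraint, not for the inequality constraints $\operatorname{tr}[D^\ell_{BR}N_{BR}]\leq r_\ell$; the paper is careful to use the refined version of Slater's condition that requires strict feasibility only for the nonaffine constraints, which is what the hypothesis actually delivers.
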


We now recast the problem~\eqref{z:z4Hq5iYa}
into a maximization, exploiting Lagrangian
duality~\cite{R95}.  The advantage of computing this
quantity as a maximization problem is that we can simultaneously maximize over
the state $\phi_R$.  This enables us to minimize the channel relative
entropy~\eqref{z:Y.8CkTW4}, without fixing the reference
state $\phi_R$.

The following theorem provides a maximization problem that based on the Lagrange
dual of~\eqref{z:z4Hq5iYa},
while retaining some elements and variables of the primal problem.  This
maximization problem is amenable to numerical computation.
\begin{theorem}[A maximization problem version of the minimum channel relative
  entropy problem]
  \label{z:t5t-uuUF}
  Consider the setting of problem~\eqref{z:z4Hq5iYa},
  and assume that there exists some quantum channel with positive definite Choi
  matrix that satisfies all problem constraints (as in
  \cref{z:gV43EWT.}).  
  Now consider the following problem:
  \begin{align}
    \label{z:MK9lgNNV}
    \textup{maximize:}
    &\quad
      \operatorname{tr}({F_R}) - \sum \mu_j q_j - \sum \nu_\ell r_\ell - \sum w_m s_m + 1 - \operatorname{tr}\bigl({N_{BR} \phi_R}\bigr)
      - \sum \frac{w_m^2}{4\tilde\eta_m}
      \\[1ex]
    \textup{over:}
    &\quad
      \mu_j\in\mathbb{R}\ (j=1,\ldots,n_C);
      \ \nu_\ell\geq 0\ (\ell=1,\ldots,n_D);
      \ w_m\in\mathbb{R}\ (m=1,\ldots, n_E); 
      \nonumber\\
    &\quad
      F_R=F_R^\dagger;
      \ N_{BR} \geq 0
      \nonumber\\[1ex]
    \textup{subject to:}
    &\quad
      \phi_R^{1/2}\log\bigl({\phi_R^{1/2} N_{BR} \phi_R^{1/2} }\bigr) \phi_R^{1/2}
      - \phi_R^{1/2}\log\bigl({\phi_R^{1/2} M_{BR} \phi_R^{1/2} }\bigr) \phi_R^{1/2}
      \nonumber\\&\qquad\qquad\qquad
      + \sum \mu_j C_{BR}^j + \sum \nu_\ell D_{BR}^\ell + \sum w_m E_{BR}^m
      - \mathds{1}\otimes F_R \geq 0\ ;
      \nonumber\\[1ex]
    &\quad \operatorname{tr}\bigl({ E^m_{BR} N_{BR} }\bigr) = s_m + \frac{w_m}{2\tilde\eta_m}  \ .
    \nonumber
  \end{align}
  The problem~\eqref{z:MK9lgNNV}
  yields the same optimal value as the problem~\eqref{z:z4Hq5iYa},
  and the variables $F_R$, $\mu_j$, $\nu_\ell$, $N_{BR}$ coincide with those for
  optimal thermal channel
  in~\cref{z:gV43EWT.}.
\end{theorem}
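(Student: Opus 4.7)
The plan is to interpret~\eqref{z:MK9lgNNV} as a partial Lagrangian reformulation of the primal problem~\eqref{z:z4Hq5iYa} in which the primal variable $N_{BR}$ is retained alongside dual-type variables. I would introduce Lagrange multipliers $\mu_j\in\mathbb{R}$, $\nu_\ell\geq 0$, and Hermitian $F_R$ attached respectively to the equality observable constraints, the inequality observable constraints, and the trace-preserving constraint $\operatorname{tr}_B(N_{BR})=\mathds{1}_R$; the auxiliary variable $w_m\in\mathbb{R}$ would arise from Legendre-transforming the quadratic penalty via $\tilde\eta_m(s_m-x_m)^2 = \sup_{w_m}\bigl[{w_m(x_m-s_m) - w_m^2/(4\tilde\eta_m)}\bigr]$. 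The semidefinite inequality in~\eqref{z:MK9lgNNV} is then recognizable as a slack form of the KKT stationarity condition from varying the Lagrangian in $N_{BR}$ (matching the operator equation derived in the proof of~\cref{z:gV43EWT.}), while the $E^m$-equality constraint is stationarity in $w_m$. From this setup I would establish the two inequalities on optimal values separately: weak duality (\eqref{z:MK9lgNNV}-value $\leq$ primal value) and achievability via the structural form from~\cref{z:gV43EWT.} (\eqref{z:MK9lgNNV}-value $\geq$ primal value).

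For weak duality, I would pair the semidefinite inequality of~\eqref{z:MK9lgNNV} (at any feasible tuple $(N^\star_{BR}, F_R, \mu_j, \nu_\ell, w_m)$) against the Choi matrix $N_{BR}$ of any primal-feasible channel $\mathcal{N}$. Writing $\tilde{N} = \phi_R^{1/2} N_{BR} \phi_R^{1/2}$ and similarly $\tilde{N}^\star, \tilde{M}$, and using $\operatorname{tr}[(\mathds{1}_B\otimes F_R) N_{BR}] = \operatorname{tr}(F_R)$ together with $\operatorname{tr}(C^j N_{BR}) = q_j$, this yields
\begin{align*}
  \operatorname{tr}(F_R) - \sum \mu_j q_j - \sum \nu_\ell \operatorname{tr}(D^\ell N_{BR}) - \sum w_m \operatorname{tr}(E^m N_{BR})
  \leq \operatorname{tr}[\tilde{N}\log\tilde{N}^\star] - \operatorname{tr}[\tilde{N}\log\tilde{M}]\, .
\end{align*}
I then apply three ingredients: (i) $\nu_\ell\operatorname{tr}(D^\ell N_{BR}) \leq \nu_\ell r_\ell$ from $\nu_\ell\geq 0$ combined with primal feasibility; (ii) the Legendre bound $\sum w_m[\operatorname{tr}(E^m N_{BR}) - s_m] - \sum w_m^2/(4\tilde\eta_m) \leq \sum \tilde\eta_m(s_m - \operatorname{tr}(E^m N_{BR}))^2$; and (iii) the quantum Klein inequality $\operatorname{tr}[\tilde{N}\log\tilde{N}^\star] + \operatorname{tr}(\tilde{N}) - \operatorname{tr}(\tilde{N}^\star) \leq \operatorname{tr}[\tilde{N}\log\tilde{N}]$, combined with $\operatorname{tr}(\tilde{N}) = \operatorname{tr}(\phi_R) = 1$ from trace preservation. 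The slack term $1 - \operatorname{tr}(N^\star_{BR}\phi_R) = 1 - \operatorname{tr}(\tilde{N}^\star)$ present in the~\eqref{z:MK9lgNNV} objective exactly absorbs the $\operatorname{tr}(\tilde{N}^\star)$ correction introduced by Klein, and one concludes that the~\eqref{z:MK9lgNNV} objective is bounded above by ${D}_{\phi}^{}({\mathcal{N}}\mathclose{}\,\Vert\,\mathopen{}{\mathcal{M}}) + \sum \tilde\eta_m(s_m - \operatorname{tr}(E^m N_{BR}))^2$, the primal objective at $\mathcal{N}$.

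For achievability, I would substitute the optimal thermal channel $\mathcal{T}_{A\to B}^{(\phi)}$ together with its parameters $(\mu_j, \nu_\ell, F_R, w_m, S_{BR}\geq 0)$ from~\cref{z:gV43EWT.} as an explicit \eqref{z:MK9lgNNV}-feasible point: the structural form~\eqref{z:drLpOUuq} rearranges directly into the semidefinite inequality with $S_{BR}$ as the slack, and the stationarity relation $w_m = 2\tilde\eta_m[\operatorname{tr}(E^m N_{BR}) - s_m]$ supplies the $E^m$-equality constraint. The~\eqref{z:MK9lgNNV} objective at this point evaluates to the primal minimum by combining the value formula~\eqref{z:.eduKYhh}, the vanishing $1 - \operatorname{tr}(N_{BR}\phi_R) = 0$ at trace-preserving channels, and the identity $\tilde\eta_m(s_m-x_m)^2 = w_m^2/(4\tilde\eta_m)$. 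The main obstacle I anticipate is the rank-deficient case where $\phi_R$ is not full rank: there the term $\phi_R^{1/2}\log(\phi_R^{1/2} N_{BR} \phi_R^{1/2})\phi_R^{1/2}$ is only well-defined on $\mathrm{range}(\phi_R)$, and the Klein-inequality step in weak duality requires a support-aware interpretation, with $\Pi^{\phi_R\perp}$-kernel contributions absorbed into $F_R$ and the $Y_{BR}$ term from~\cref{z:gV43EWT.}. Under the strict feasibility hypothesis of the theorem, however, the relevant optimizers can be obtained as limits of the full-rank situation (cf.~\cref{z:hd7.cLe.}), and this subtlety is confined to the boundary and does not affect the identification of optimal values.
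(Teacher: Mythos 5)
Your proof is correct but proceeds along a genuinely different route from the paper's.  The paper establishes \cref{z:t5t-uuUF} in two stages: it first derives the explicit Lagrange dual of~\eqref{z:z4Hq5iYa} (this is \cref{z:MCoiQlab}), asserting strong duality via a relaxed Slater condition, and then shows that the Lagrange dual is equivalent to~\eqref{z:MK9lgNNV} by reintroducing the primal-like variables $N_{BR}$ and $Y_{BR}$ as reparametrizations of the dual data. Your argument instead sidesteps the explicit Lagrange dual entirely and proves the two inequalities on optimal values directly: for weak duality you trace the semidefinite constraint of~\eqref{z:MK9lgNNV} against any primal-feasible Choi matrix $N_{BR}$ and close the gap with the Fenchel--Young inequality for the quadratic penalty and the Klein inequality $\operatorname{tr}[A\log B] \leq \operatorname{tr}[A\log A] - \operatorname{tr}(A) + \operatorname{tr}(B)$, with the $1 - \operatorname{tr}(N^\star_{BR}\phi_R)$ term in the objective precisely absorbing the normalization correction that Klein introduces; for achievability you substitute the parameters from~\cref{z:gV43EWT.} and recognize $S_{BR}$ as the semidefinite slack. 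Both approaches ultimately rest on~\cref{z:gV43EWT.}, but yours is more elementary and self-contained (Klein + Fenchel replacing the dual-function computation), at the cost of not separately exhibiting the abstract Lagrange dual as a by-product, which the paper records in \cref{z:MCoiQlab} for independent use. One minor point you should state explicitly: the semidefinite constraint in~\eqref{z:MK9lgNNV} implicitly restricts to $N^\star_{BR}$ with $\phi_R^{1/2}N^\star_{BR}\phi_R^{1/2}$ full rank on the range of $\mathds{1}_B\otimes\Pi^{\phi_R}_R$ (otherwise the logarithm is not a finite operator there), and this full-rank property is precisely what guarantees the support-inclusion condition $\operatorname{supp}(\tilde N)\subseteq\operatorname{supp}(\tilde N^\star)$ under which your Klein-inequality step is valid; your closing remark about the rank-deficient $\phi_R$ case is on the right track but should be tied to this observation.
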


The derivation of \cref{z:t5t-uuUF},
including the derivation of the Lagrange dual problem of~\eqref{z:z4Hq5iYa}, is
presented in \cref{z:UmB-TaUp}.

\paragraph*{Remark on the classical minimum relative entropy problem.}
The minimum relative entropy problem has been long studied within classical information theory~\cite{R100,R101}. Given a probability distribution $Q$, we seek to minimize the relative entropy (Kullback--Leibler divergence) $D(P\Vert Q)$ with respect to distributions $P$ that satisfy linear constraints. This problem has also been referred to as the ``principle of minimum cross entropy''~\cite{R5}, and it is the following optimization problem:
\begin{align}\label{z:3gO8.YUP}
    \begin{aligned}
        \text{minimize:} \quad& D(P\Vert Q) \\ 
        \text{subject to:} \quad& P(x)\geq 0~~\forall~x\in\mathcal{X},\\
        & \sum_{x\in\mathcal{X}}P(x)=1,\\
        & \sum_{x\in\mathcal{X}}P(x)F_j(x)=f_j,\quad j\in\{1,2,\dotsc,J\},
    \end{aligned}
\end{align}
where
\begin{align}
    D(P\Vert Q)
    \coloneqq\sum_{x\in\mathcal{X}}P(x)\log_2\left(\frac{P(x)}{Q(x)}\right)\ .
\end{align}
This problem has an operational meaning in the context of the (classical) \textit{Sanov theorem}~\cite{R102,R103,R104} (see also \cite[Section~11.4]{R63}), which states that
\begin{align}
    \lim_{n\to\infty}-\frac{1}{n}\log\Pr[\hat{Q}_n\in\mathcal{S}]
    = \inf_{P\in\mathcal{S}}D(P\Vert Q)\ ,
\end{align}
where $\mathcal{S}=\{P:\sum_{x\in\mathcal{X}}P(x)F_j(x)=f_j,\,j\in\{1,2,\dotsc,J\}\}$ and $\hat{Q}_n$ is the empirical distribution corresponding to taking $n$ iid samples from $Q$. In other words, the solution to the classical generalized maximum-entropy principle corresponds to the optimal (asymptotic) error exponent for the probability that the empirial distribution is in the set $\mathcal{S}$, i.e., satisfies the required constraints.

The solution to \eqref{z:3gO8.YUP} is~\cite{R100,R101} (see also \cite[Section~11.5]{R63})
\begin{align}
  P^{\star}(x)
  = \frac{1}{Z(\vec{\lambda})}Q(x)
  \exp\mathopen{}\left[{\sum_{x'\in\mathcal{X}}\lambda_{x'}F_{x'}(x)}\right]\mathclose{}\ ,
  \label{z:2oTTHSeN}
\end{align}
where
\begin{align}
  Z(\vec{\lambda})
  =\sum_{x\in\mathcal{X}}Q(x)
  \exp\mathopen{}\left[{\sum_{x'\in\mathcal{X}}\lambda_{x'}F_{x}(x')}\right]\mathclose{}\ ,
\end{align}
and the parameters $\vec{\lambda}=(\lambda_x)_{x\in\mathcal{X}}$ are given
analogously to before via
\begin{align}
  f_x = \frac{\partial}{\partial\lambda_x}\log Z(\vec{\lambda})\ .
\end{align}    

\section{Examples of thermal channels}
\label{z:VuBBeOAb}

\subsection{Channels that discard their inputs}

\paragraph{Unconstrained thermal channel.}
The maximum channel entropy over all quantum channels is achieved by the
completely depolarizing channel~\cite{R46},
\begin{align}
  \mathcal{D}_{A\to B}({\cdot})
  = \operatorname{tr}(\cdot)\,\frac{\mathds{1}_B}{d_B}\ .
\end{align}
Its Choi matrix is proportional to the identity operator,
$\mathcal{D}_{A\to B}({\Phi_{A:R}}) = \mathds{1}_{BR}/d_B$.  This channel is
described in the structure given by \cref{z:TLiwdJGR} by
$\phi_R = \mathds{1}_R/d_R$, $F_R = -\log({d_B d_R})\,\mathds{1}_R/d_R$, $S_{BR}=0$,
$Y_{BR}=0$.

\paragraph{Single input-output constraint.}
Let $\sigma_A$ be a fixed quantum state on $A$, let $H_B$ be a Hermitian
operator on $B$, and let $q\in\mathbb{R}$.  We seek the channel
$\mathcal{N}_{A\to B}$ with maximal channel entropy subject to the constraint
$\operatorname{tr}[{\mathcal{N}({\sigma_A})\, H_B}] = q$.  Equivalently,
$\operatorname{tr}({C_{BR}\,N_{BR}}) = q$ with $C_{BR} \equiv H_B\otimes\sigma_A^{t_{A\to R}}$.
For simplicity, we assume $\sigma_R$ to be full rank.
We seek to satisfy the conditions of \cref{z:TLiwdJGR} through
a suitable choice of $F_R$, $\phi_R$, and $\mu$, verifying that the following
map satisfies all the conditions listed in \cref{z:TLiwdJGR}:
\begin{align}
  \mathcal{N}_{A\to B}(\phi_{AR})
  &= \Pi_R^{\phi_R}\,\exp\Bigl\{{ -\mu\phi_R^{-1/2} C_{BR} \phi_R^{-1/2}
    + \mathds{1}_B\otimes ({\phi_R^{-1/2} F_R \phi_R^{-1/2} + \log\phi_R}) }\Bigr\}\ .
\end{align}
First, we seek choices of $F_R$, $\phi_R$, and $\mu$ that ensure the two terms
inside the exponential commute.  Assuming such a choice exists enables us to
factorize the exponential.  Furthermore, we make the choice $\phi_R = \sigma_R$.
We obtain
\begin{align}
  \mathcal{N}_{A\to B}({\sigma_{AR}})
  &=    \exp\bigl\{{ -\mu H_B\otimes\mathds{1}_R
    + \mathds{1}_B\otimes({\phi_R^{-1/2} F_R \phi_R^{-1/2} + \log\phi_R}) }\bigr\}
    \nonumber\\
  &=
    \exp\bigl\{{ -\mu H_B }\bigr\} \otimes
    \exp\bigl\{{- ({\sigma_R^{-1/2} F_R \sigma_R^{-1/2}
    + \log\sigma_R}) }\bigr\}\ ,
\end{align}
writing $\lvert {\sigma}\rangle _{AR} \equiv \sigma_R^{1/2}\,\lvert {\Phi_{A:R}}\rangle $.  We know that
the reduced state of this expression on $R$ must be $\sigma_R$, given that
$\mathcal{N}_{A\to B}$ must be trace-preserving.  This observation motivates the
choice $F_R = -\log({Z})\,\sigma_R$ with the real number $Z=\operatorname{tr}({{e}^{-\mu H_B}})$
chosen to ensure the state is normalized.  We find:
\begin{align}
  \mathcal{N}_{A\to B}({\sigma_{AR}})
  &=
    {e}^{ -\mu H_B }\otimes \mathopen{}\left({\frac{\sigma_R}{Z}}\right)\mathclose{}
    = \gamma_B \otimes \sigma_R\ ;
  &
    \gamma_B &\equiv \frac{{e}^{-\mu H_B}}{Z} \ .
\end{align}
We recognize a channel that traces out its input, replacing it by the thermal state $\gamma_B$:
\begin{align}
  \mathcal{N}_{A\to B}({\cdot}) = \operatorname{tr}({\cdot})\,\gamma_B\ .
  \label{z:4jZiVZRe}
\end{align}
We then naturally choose $\mu$ and $Z$ such that the thermal state $\gamma_B$
satisfies both $\operatorname{tr}(H_B\gamma_B) = q$ and $\operatorname{tr}(\gamma_B) = 1$.  At this point,
our choices for $F_R$ and $\mu$ satisfy all conditions laid out in
\cref{z:hd7.cLe.}; the channel we've found is
therefore the unique thermal channel with respect to $\sigma_R$:
\begin{align}
  \mathcal{T}_{A\to B}^{(\sigma_R)}({\cdot}) = \operatorname{tr}({\cdot})\,\gamma_B\ .
\end{align}
Interestingly, this channel does not depend on $\sigma_R$.
Thanks to \cref{z:NjFfeh1f}, we find that this
quantum channel is also a thermal quantum channel with respect to any
rank-deficient $\sigma_R$.

Therefore, this channel is also the thermal channel $\mathcal{T}_{A\to B}^{}$, that is,
the thermal channel with respect to the optimal state $\phi_R$ in the definition
of the channel entropy.

\paragraph{Output-energy-constrained thermal channel.}
A natural question is, what if we impose a constraint on the output of the
channel, which should always hold regardless of the input?  For instance, we
could require that $\operatorname{tr}[{H_B \mathcal{N}(\sigma)}] = q$ for all input states
$\sigma$.

In light of the previous example, it is clear that the answer is again the
channel that traces out its input and prepares the thermal state $\gamma_B$
compatible with the constraint $\operatorname{tr}[{H_B \mathcal{N}(\sigma)}] = q$.  Indeed, the
channel obtained in the last example already satisfies all the constraints
imposed here.

\subsection{Energy-conserving channels}

\paragraph{Strictly energy-conserving thermal channel.}
Now we imagine we have some global energy conservation constraint on the
channels we consider (or some other superselection rule).  Specifically, let us
consider a setting where the input and output systems coincide, $A \simeq B$,
with an arbitrary fixed Hamiltonian $H_A = H_B$.  We now require the channel
$\mathcal{N}$ to strictly preserve energy: For any state $\lvert {\psi}\rangle _A$ supported
on an eigenspace of $H_A$ with energy $E$, we require that $\lvert {\psi}\rangle _A$ is
mapped to a state that lies in the same eigenspace on $B$.  We can formalize
this condition as follows, where $\Pi^{(E)}$ denotes the eigenspace of the
Hamiltonian for energy $E$:
\begin{align}
  \operatorname{tr}\bigl[{ \bigl({\mathds{1}- \Pi^{(E)}}\bigr) \, \mathcal{N}(\Pi^{(E)}) }\bigr] = 0
  \qquad\text{for all}\ E\ .
\end{align}
Equivalently, $\operatorname{tr}[{ \Pi^{(E)}\,\mathcal{N}(\Pi^{(E)})}] = \operatorname{tr}({\Pi^{(E)}})$, a
constraint encoded as $\operatorname{tr}[{C^{(E)}_{BR} N_{BR}}] = \operatorname{tr}({\Pi^{(E)}})$ with
$C^{(E)}_{BR} = \Pi^{(E)}\otimes\Pi^{(E)}$.  The thermal channel can no longer
be of the form~\eqref{z:4jZiVZRe}, since it must keep states within
whatever energy eigenspaces they started off in.   It is still simple
to guess the form of the thermal channel: The thermal channel completely
depolarizes the state within each energy eigenspace.  
Indeed, for any such $\mathcal{N}$, suppose that
$\lvert {\phi^{(E)}}\rangle _{AR} \equiv (\phi_R^{(E)})^{1/2}\lvert {\Phi_{A:R}}\rangle $, where
$\phi_R^{(E)}$ is supported within $\Pi^{(E)}$.  The state
$\mathcal{N}(\phi_{AR})$ must therefore lie within $\Pi^{(E)}\otimes\Pi^{(E)}$.
Applying a trace-decreasing depolarizing map with support $\Pi^{(E)}$, and using
the data processing inequality of the relative entropy, we find
${D}_{}^{}({ \mathcal{N} }\mathclose{}\,\Vert\,\mathopen{}{ \widetilde{D} }) \geq {D}_{}^{}({ \mathcal{N}(\phi^{(E)}_{AR}) }\mathclose{}\,\Vert\,\mathopen{}{
  \mathds{1}_B\otimes \phi^{(E)}_R }) \geq {D}_{}^{}\bigl ({
  (\Pi^{(E)}/\operatorname{tr}({\Pi^{(E)}})\otimes\phi^{(E)}_R }\mathclose{}\,\big \Vert\,\mathopen{}{ \mathds{1}_B\otimes \phi^{(E)}_R
}\bigr ) = -\log\operatorname{tr}({\Pi^{(E)}})$.  Therefore,
${S}_{}^{}({\mathcal{N}}) \leq \min_E \log\operatorname{tr}({\Pi^{(E)}})$.  On the other hand, this
channel entropy is achieved when $\mathcal{N}$ acts as the completely
depolarizing channel within each subspace $\Pi^{(E)}$.

\paragraph{Average-energy-conserving thermal channel.}
Another constraint we can require is average energy conservation.  If $H_A$ and
$H_B$ are the respective Hamiltonians of $A$ and $B$, we seek the map
$\mathcal{N}$ that maximizes ${S}_{}^{}({\mathcal{N}})$ while ensuring that
for all $\sigma_A$:
\begin{align}
  \operatorname{tr}[{H_B \mathcal{N}(\sigma_A)}] - \operatorname{tr}({H_A\sigma_A}) = 0\ .
  \label{z:RQBcpP4K}
\end{align}
It suffices to impose~\eqref{z:RQBcpP4K} for any finite set of
$\{{ \sigma^{(j)} }\}$ that span the space of Hermitian operators.  E.g., if $A$ is
a single qubit, we could choose for $\{{ \sigma_j }\}$ the set of density matrices
$\mathds{1}/2$, $(\mathds{1}+X)/2$, $(\mathds{1}+Y)/2$, $(\mathds{1}+Z)/2$ where $X,Y,Z$ are the
single-qubit Pauli operators.  Here, and for general $A$, it turns out that a
convenient set of states to impose this constraint for are a spanning set that
contain the energy eigenstates of $H_A$.  Let
$\{{ \lvert {e_\ell}\rangle _A }\}_{\ell=1}^{d_R}$ be an eigenbasis of $H_A$ with eigenvalues
$e_\ell$.  We pick
$\{{ \sigma^{(j)} }\} = \{{ \lvert {\ell}\rangle \mkern -1.8mu\relax \langle{\ell}\rvert _A }\}_{\ell=1}^{d_R} \cup \mathcal{S}$, where
$\mathcal{S}$ is any finite set of operators that complete the states
$\{{ \lvert {\ell}\rangle \mkern -1.8mu\relax \langle{\ell}\rvert _A }\}$ into a spanning set of all Hermitian operators on $A$.

Let $H_R = H_A^{t_{A\to R}}$.  The constraint~\eqref{z:RQBcpP4K} can
be realized by a family of constraint operators
$C^{j}_{BR} = \sigma_R^{(j) 1/2} \, \bigl({H_{B}\otimes\mathds{1}_R - \mathds{1}_B\otimes
H_R}\bigr) \, \sigma_R^{(j) 1/2}$ and with $q_j \equiv 0$ for all $j$.  By
construction, the set of states $\{{ \sigma_R^{(j)} }\}$ contains the set of states
$\{{ \lvert {\ell}\rangle \mkern -1.8mu\relax \langle{\ell}\rvert _R }\}_{\ell=1}^{d_R}$ where
$\lvert {\ell}\rangle \mkern -1.8mu\relax \langle{\ell}\rvert _R \equiv \lvert {\ell}\rangle \mkern -1.8mu\relax \langle{\ell}\rvert _A^{t_{A\to R}}$ is an eigenstate of $H_R$
associated with the eigenvalue $e_\ell$.

The thermal channel's structure is given by \cref{z:TLiwdJGR}.
We have a variable $\mu_j\in\mathbb{R}$ for each $j$.  For all $j$ corresponding
to a $\sigma_R^{(j)} = \lvert {\ell}\rangle \mkern -1.8mu\relax \langle{\ell}\rvert _R$, we write $\mu_\ell$ instead of $\mu_j$.
For all other $j$, we set $\mu_j = 0$.  With our choices of variables, the
expression~\eqref{z:s1gf97hu} takes the form
\begin{align}
  \mathcal{T}_{}^{(\phi)}(\Phi_{A:R})
  =
  \begin{aligned}[t]
    \phi_R^{-1/2}\,\exp\Bigl\{-\phi_R^{-1/2}\Bigl[
    &
    \sum \mu_\ell \lvert {e_\ell}\rangle \mkern -1.8mu\relax \langle{e_\ell}\rvert _R\bigl({H_B - H_R}\bigr)\lvert {e_\ell}\rangle \mkern -1.8mu\relax \langle{e_\ell}\rvert _R
    \\
    &\quad
      - \mathds{1}_B\otimes ({F_R + \phi_R\log\phi_R})
      \Bigr] \phi_R^{-1/2} \Bigr\} \phi_R^{-1/2} + Y_{BR}\ ,
  \end{aligned}
\end{align}
where $Y_{BR}$ satisfies $\Pi_R^{\phi_R} Y_{BR} \Pi_R^{\phi_R} = 0$.  We now
pick $\phi_R = \sum s_\ell \lvert {e_\ell}\rangle \mkern -1.8mu\relax \langle{e_\ell}\rvert $, $F_R = \sum f_\ell \lvert {e_\ell}\rangle \mkern -1.8mu\relax \langle{e_\ell}\rvert $ for
some $s_\ell, f_\ell$ to be fixed later with $s_\ell\geq 0$.  We find
\begin{align}
  \mathcal{T}_{}^{(\phi)}(\Phi_{A:R})
  &= \phi_R^{-1/2} \exp\biggl\{{\sum_{s_\ell\neq 0} \frac1{s_\ell} \Bigl[{
  ({f_\ell + s_\ell\log s_\ell}) \mathds{1}_B
  - \mu_\ell H_B + \mu_\ell e_\ell\mathds{1}_B
  }\Bigr] \otimes \lvert {e_\ell}\rangle \mkern -1.8mu\relax \langle{e_\ell}\rvert _R }\biggr\} \phi_R^{-1/2}
    + Y_{BR}
  \nonumber\\
  &= \sum_{s_\ell\neq 0}
    \phi_R^{-1/2} \Bigl({
    {e}^{\frac{f_\ell}{s_\ell} + \log(s_\ell) + \frac{\mu_\ell}{s_\ell} e_\ell}
    {e}^{-\frac{\mu_\ell}{s_\ell} H_B}
  \otimes \lvert {e_\ell}\rangle \mkern -1.8mu\relax \langle{e_\ell}\rvert _R}\Bigr) \phi_R^{-1/2}
    \ + \ Y_{BR}
    \nonumber\\
  &= \sum_{s_\ell \neq 0}
    {e}^{\frac{f_\ell}{s_\ell} + \frac{\mu_\ell}{s_\ell} e_\ell}
    {e}^{-\frac{\mu_\ell}{s_\ell} H_B}
  \otimes \lvert {e_\ell}\rangle \mkern -1.8mu\relax \langle{e_\ell}\rvert _R
    \ +\  Y_{BR}\ .
    \label{z:mkyHEKHx}
\end{align}
On the support of $\phi_R$, this channel measures its input in the energy basis
and prepares a Gibbs state $\gamma_{\beta_\ell}$ on the output with a
temperature $\beta_\ell \equiv \mu_\ell/s_\ell$ that depends on the measured
input energy.  The Gibbs state is, as usual,
$\gamma_{\beta} \equiv {e}^{-\beta H_B}/Z(\beta)$ with
$Z(\beta) = \operatorname{tr}({{e}^{-\beta H_B}})$.

For the map to conserve average energy as initially demanded, we need that
$\operatorname{tr}({H_B \gamma_{\beta_\ell}}) = e_\ell$.  This implicitly fixes $\beta_\ell$
and thereby $\mu_\ell = s_\ell \beta_\ell$.
For the map to be trace-preserving, we need the reduced state
of~\eqref{z:mkyHEKHx} on $R$ to equal the identity, leading to
\begin{align}
  {e}^{\frac{f_\ell}{s_\ell} + \beta_\ell e_\ell}
  \operatorname{tr}({{e}^{-\beta_\ell H_B}})
  &= 1\ ;
  &
    \operatorname{tr}_B Y_{BR} = \Pi^{\phi_R\,\perp}_R
    \ .
\end{align}
Solving the first equation for $f_\ell/s_\ell$ yields
\begin{align}
  \frac{f_\ell}{s_\ell} =
  \log\Bigl({\frac{1}{Z(\beta_\ell)}}\Bigr) - \beta_\ell e_\ell\ .
\end{align}
At this point, we also choose $Y_{BR}$ to complete the channel to act outside
the support of $\phi_R$ in the same way as it acts within $\phi_R$'s support,
namely by measuring the input energy and preparing a correspondingly energetic
Gibbs state.  The channel then becomes
\begin{align}
  \mathcal{T}_{}^{(\phi)}(\Phi_{A:R})
  = 
  \mathcal{T}_{}^{}(\Phi_{A:R})
  &= \sum_{\ell=1}^{d_R} \frac1{Z(\beta_\ell)} {e}^{-\beta_\ell H_B} \otimes \lvert {e_\ell}\rangle \mkern -1.8mu\relax \langle{e_\ell}\rvert _R\ ,
    \label{z:dpBa9wEu}
\end{align}
where $\beta_\ell$ is implicitly determined from
$\operatorname{tr}({\gamma_{\beta_\ell} H_B}) = e_\ell$, and noting that this map no longer
depends on $s_\ell$, i.e, on $\phi_R$.  The
map~\eqref{z:dpBa9wEu} is a valid c.p., t.p.\@ map of the
form~\eqref{z:s1gf97hu}.

The attained channel
entropy is, according to~\eqref{z:WlEdtU0A},
\begin{align}
  {S}_{}^{}({\mathcal{T}_{}^{(\phi)}})
  = -\sum_\ell f_\ell 
  = \sum_\ell s_\ell\Bigl({ \log[{Z(\beta_\ell)}] + \beta_\ell e_\ell }\Bigr)
  = \sum_\ell s_\ell\,{S}_{}^{}({\gamma_{\beta_\ell}})\ ,
  \label{z:ZbK.ytWi}
\end{align}
where we recognize the expression for the entropy of a thermal state
${S}_{}^{}({\gamma_{\beta}}) = -\operatorname{tr}[{\gamma_\beta \log (\gamma_\beta)}]
= \operatorname{tr}\bigl[{\gamma_\beta\bigl({\beta H + \log[{Z({\beta})}]\mathds{1}}\bigr)}\bigr]
= \beta \operatorname{tr}(\gamma_\beta H) + \log[{Z({\beta})}]$.
The expression is minimized by choosing $s_\ell = 0$ for all terms except the
$\ell$ (or those $\ell$) that have minimal $S(\gamma_{\beta_\ell})$.  For such a
choice of $\{{ s_\ell }\}$, we find
\begin{align}
  {S}_{}^{}({\mathcal{T}_{}^{(\phi)}})
  = \min_\ell
  {S}_{}^{}({ \gamma_{\beta_\ell} })\ ,
\end{align}
recalling that $\beta_\ell$ is determined implicitly by the condition
$\operatorname{tr}({\gamma_{\beta_\ell} H_B}) = e_\ell = \langle {\ell}\mkern 1.5mu\relax \vert \mkern 1.5mu\relax {H_R}\mkern 1.5mu\relax \vert \mkern 1.5mu\relax {\ell}\rangle $.

We can make use of \cref{z:NjFfeh1f} to conclude
that $\mathcal{T}_{}^{}$ is the thermal channel with respect to any $\phi_R$ that is
diagonal in the energy eigenbasis, including among rank-deficient states.
Actually, if $\phi_R$ is rank-deficient, then the channel entropy becomes
insensitive to the channel's action on input states outside the support of
$\phi_R$.  Indeed, the channel could prepare arbitrary, nonthermal, states for
all cases where $\ell\neq 0$, provided they have more entropy than those
$\gamma_{\beta_\ell}$'s where $s_\ell \neq 0$.  On the other hand, requiring
that the thermal channel is a limit of thermal channels with respect to
full-rank states singles out the channel~\eqref{z:dpBa9wEu}.

Furthermore, we can prove that the optimal $\phi_R$ is indeed diagonal in the
energy eigenbasis using \cref{z:nCrJ.k1B}.
Consider first the maximum channel entropy problem including only the
constraints $C^\ell_{BR}$ with $\ell=1, \ldots, d_R$.
\Cref{z:nCrJ.k1B} then states that the
optimal $\phi_R$ is, without loss of generality, diagonal in the energy
eigenbasis, given that all $C^\ell_{BR}$ obey
$C^\ell_{BR} = (\mathcal{F}^t)^\dagger(C^\ell_{BR})$ where
$\mathcal{F}(\cdot) = \sum \lvert {\ell}\rangle \mkern -1.8mu\relax \langle{\ell}\rvert ({\cdot})\lvert {\ell}\rangle \mkern -1.8mu\relax \langle{\ell}\rvert $ is a complete
dephasing operation in the energy eigenbasis.  We proved above that for these
constraints and for energy-diagonal $\phi_R$, the thermal channel takes the
form~\eqref{z:dpBa9wEu}.  Now, this channel automatically
satisfies all remaining constraints with $C^j_{BR}$ for $j > d_R$; therefore
$\mathcal{T}_{}^{}$ in~\eqref{z:dpBa9wEu} is automatically a thermal
channel for the wider, redundant set of constraints, as well.

All in all, we proved that the quantum channel~\eqref{z:dpBa9wEu}
is indeed a quantum thermal channel for the constraints of average energy
conservation for all input states.

\subsection{Channel with Pauli-covariant constraints}
\label{z:JM5--Sw6}

Here, we suppose that $B\simeq A$ with $d_B = d_A \equiv d\in\{2,3,\dotsc\}$.  The discrete Weyl
operators $W^{z,x}$ on a $d$-dimensional system are defined as:
\begin{align}
  W^{z,x}
  &=Z(z)X(x)\ ;
  &
    Z(z) &=\sum_{k=0}^{d-1}e^{\frac{2\pi i kz}{d}}\lvert {k}\rangle \mkern -1.8mu\relax \langle{k}\rvert \ ;
  &
    X(x) &=\sum_{k=0}^{d-1}\lvert {k+x}\rangle \mkern -1.8mu\relax \langle{k}\rvert \ ,
\end{align}
where the addition in the definition of $X(x)$ is performed modulo $d$. These operators generalize the single-qubit Pauli operators to qudits and are sometimes
called qudit Pauli operators.

A map $\mathcal{N}$ is called Pauli-covariant if for all $z,x\in\{0,1,\dotsc,d-1\}$,
\begin{align}
  \mathcal{N}\bigl({W^{z,x}(\cdot)W^{z,x\dagger}}\bigr)
  = W^{z,x}\mathcal{N}(\cdot)W^{z,x\dagger}\ .
\end{align}
If
$\mathcal{N}$ is Pauli-covariant, then
\begin{align}
    \frac{1}{d^2}\sum_{z,x=0}^{d-1}W^{z,x\dagger}
    \mathcal{N}\bigl({W^{z,x}(\cdot)W^{z,x\dagger}}\bigr)W^{z,x} = \mathcal{N}(\cdot)\ .
\end{align}
Letting $N_{BR}$ denote the Choi representation of $\mathcal{N}$, we can write the above equation as follows:
\begin{align}
    \mathcal{B}(N_{BR})
  \coloneqq
  \frac{1}{d^2} \sum_{z,x=0}^{d-1}
  (W_B^{z,x}\otimes W_R^{z,x\ast})^{\dagger}
  N_{BR}
  (W_B^{z,x}\otimes W_R^{z,x\ast})
  = N_{BR}\ .
\end{align}
A convenient way to describe the map $\mathcal{B}$ is using the Bell states. We define the (unnormalized) two-qudit Bell states as follows:
\begin{align}
  \lvert {\Phi_{z,x}}\rangle 
  &\coloneqq (W^{z,x}\otimes\mathds{1})\lvert {\Phi}\rangle \ ;
  &
    \lvert {\Phi}\rangle 
  &= \sum_{k=1}^{d}\lvert {k,k}\rangle \ ,
\end{align}
for $z,x\in\{0,1,\dotsc,d-1\}$. It is straightforward to show that $\mathcal{B}$
is the Bell-basis pinching channel (see, e.g.,
\cite[Appendix~C]{R17}), i.e.,
\begin{align}
  \mathcal{B}(N_{BR})
  = \frac{1}{d^2}\sum_{z,x=0}^{d-1}
  \lvert {\Phi_{z,x}}\rangle \mkern -1.8mu\relax \langle{\Phi_{z,x}}\rvert  N_{BR} \lvert {\Phi_{z,x}}\rangle \mkern -1.8mu\relax \langle{\Phi_{z,x}}\rvert \ .
\end{align}
Therefore, if $\mathcal{N}$ is Pauli-covariant, then its Choi representation is
diagonal in the Bell basis.

A \emph{Pauli channel} is a quantum channel of the form
\begin{align}
    \mathcal{P}(\cdot)
  = \sum_{z,x=0}^{d-1}p_{z,x} W^{z,x} (\cdot) W^{z,x\dagger}\ ,
\end{align}
where $p_{z,x}\geq 0$ and $\sum_{z,x} p_{z,x} = 1$. %
Every Pauli channel is manifestly
Pauli-covariant. Note also that
$\widetilde{\mathcal{D}}(\cdot) = \operatorname{tr}({\cdot})\mathds{1}$ is Pauli-covariant.  It
follows that the entropy of a Pauli channel is simply the entropy of the
probability distribution that defines it~\cite{R79}, i.e.,
\begin{align}
  \label{z:3vMyEQwA}
  {S}_{}^{}({\mathcal{P}})
  = -\sum_{z,x=0}^{d-1}p_{z,x}\log p_{z,x}\  .
\end{align}

Now, consider the maximum entropy problem \eqref{z:sb5FfEOw}, and suppose that the constraint operators $C_{BR}^{j}$ are Pauli-covariant, in the sense that $\mathcal{B}(C_{BR}^j)=C_{BR}^j$ for all $j$. An example of this is Bell sampling, in which $C_{BR}^{z,x}=\frac{1}{d^2}\Phi_{BR}^{z,x}$. This observable corresponds to a channel measurement that consists of preparing the maximally-entangled state $\frac{1}{d}\Phi$, sending one-half of it through the channel, and then performing a Bell measurement, i.e., measuring both systems with respect to the POVM $\{\frac{1}{d}\Phi^{z,x}\}_{z,x}$. 
Because the $C_{BR}^j$ are Pauli-covariant, they are diagonal in the Bell basis, i.e.,
\begin{align}
    C_{BR}^j
    = \frac{1}{d^2}\sum_{z,x=0}^{d-1}c_{z,x}^j\Phi_{BR}^{z,x} \ ,
\end{align}
where $c_{z,x}^j=\frac{1}{d}\operatorname{tr}\bigl[{C_{BR}^j\Phi_{BR}^{z,x}}\bigr]$. The constraint
$\operatorname{tr}[{C_{BR}^jN_{BR}}]=q_j$ is then equivalent to
\begin{align}
    \sum_{z,x=0}^{d-1} c_{z,x}^j
  \underbrace{\frac{1}{d^2}\operatorname{tr}[N_{BR}\Phi_{BR}^{z,x}]}_{\equiv p_{z,x}}
  = q_j\ ,
\end{align}
where $p_{z,x}=\frac{1}{d^2}\operatorname{tr}[N_{BR}\Phi_{BR}^{z,x}]$ %
satisfy $0\leq p_{z,x}\leq 1$ and $\sum_{x,z} p_{z,x} = 1$.
Indeed, because $N_{BR}$ is a Choi matrix, it
holds that $p_{z,x}\geq 0$ for all $z,x\in\{0,1,\dotsc,d-1\}$, and
$\sum_{z,x=0}^{d-1}p_{z,x}=\operatorname{tr}[{N_{BR}\frac{1}{d^2}\sum_{z,x=0}^{d-1}\Phi_{BR}^{z,x}}]=\frac{1}{d}\operatorname{tr}[N_{BR}\mathds{1}_{BR}]=\frac{1}{d}\operatorname{tr}[\mathds{1}_R]=1$,
where we used the fact that
$\frac{1}{d}\sum_{z,x=0}^{d-1}\Phi_{BR}^{z,x}=\mathds{1}_{BR}$.

Now, note that 
\begin{align}
    {S}_{}^{}({\mathcal{N}})
    = -{D}_{}^{}({\mathcal{N}}\mathclose{}\,\Vert\,\mathopen{}{\widetilde{\mathcal{D}}})
    \leq -{D}_{}^{}\bigl ({\Theta_{\mathcal{B}}(\mathcal{N})}\mathclose{}\,\big \Vert\,\mathopen{}{
           \Theta_{\mathcal{B}}(\widetilde{\mathcal{D}})}\bigr )
    =-{D}_{}^{}\bigl ({\Theta_{\mathcal{B}}(\mathcal{N})}\mathclose{}\,\big \Vert\,\mathopen{}{\widetilde{\mathcal{D}}}\bigr )
    =-\sum_{z,x=0}^{d-1}p_{z,x}\log p_{z,x}\ ,
\end{align}
where we have used the data-processing inequality, the fact that $\widetilde{\mathcal{D}}$ is Pauli-covariant, and the expression in \eqref{z:3vMyEQwA} for the entropy of a Pauli channel, noting that $\Theta_{\mathcal{B}}(\mathcal{N})$ is a Pauli channel. Here, $\Theta_{\mathcal{B}}$ is a super channel such that the Choi representation of $\Theta_{\mathcal{B}}(\mathcal{N})$ is $\mathcal{B}(N)$, with $N$ being the Choi representation of $\mathcal{N}$. Explicitly, the channel $\Theta_{\mathcal{B}}(\mathcal{N})$ has the following action:
\begin{align}
  \Theta_{\mathcal{B}}(\mathcal{N}_{A\to B})(\cdot)
  = \operatorname{tr}_R\bigl[{(\cdot)_A^{t_{A\to R}}\,\mathcal{B}(N_{BR})}\bigr].
\end{align}
Combining the above inequality with Pauli-covariance of the constraints, it follows that problem \eqref{z:sb5FfEOw} reduces to the following:
\begin{align}
  \label{z:B8hGV.Fe}
  \begin{aligned}[t] 
    \textup{maximize:} \quad
    & -\sum_{z,x=0}^{d-1} p_{z,x}\log p_{z,x}
    \\
    \textup{over:}\quad
    & p_{z,x}\geq 0,\, \sum_{z,x=0}^{d-1} p_{z,x}=1
    \\
    \textup{such that:}\quad
    & \sum_{z,x=0}^{d-1}c_{z,x}^jp_{z,x}= q_j\quad\text{for \(j=1, \ldots, J\)}\ ,
  \end{aligned}
\end{align}
which is nothing but the usual (classical) maximum-entropy problem. The optimal
channel is therefore a Pauli channel for which the associated probability
distribution has the form of a Gibbs/thermal distribution, i.e.,
\begin{align}
    N_{BR}^{\star}
  = \sum_{z,x=0}^{d-1}
  \frac{1}{Z} e^{-\sum_{j=1}^J\sum_{z,x=0}^{d-1}c_{z,x}^j\mu_{z,x}^j} \Phi_{BR}^{z,x}
  \ .
\end{align}

\subsection{Classical thermal channel}
\label{z:9P7ZvFQO}

We now study the classical version of thermal quantum channels and connect our
results to known concepts from classical information theory. Specifically, we consider the special case of \eqref{z:sb5FfEOw} in which all constraints are diagonal in the joint computational basis of $B$ and $R$. In this case, we show that the problem reduces to a classical version of the maximum channel entropy problem.

Let us start by computing the quantum channel entropy ${S}_{}^{}({\mathcal{N}})$ for a quantum
channel implementing a classical stochastic map. 
A classical stochastic mapping in $d\in\{2,3,\dotsc\}$ dimensions is defined by a $d\times d$ matrix $T$ of conditional probabilities, i.e., $T=\sum_{j,k=1}^d T_{k|j}\lvert {k}\rangle \mkern -1.8mu\relax \langle{j}\rvert $, such that $T_{k|j}$ represents the probability that a system transitions to the state $k$ from the state $j$. As such, the columns of $T$ sum to one, i.e., $\sum_{k=1}^d T_{k|j}=1$ for all $j\in\{1,2,\dotsc,d\}$. We can write this as a quantum channel in the following way:
\begin{equation}\label{z:pFhIXD-e}
    \mathcal{N}_{A\to B}(\lvert {i}\rangle \mkern -1.8mu\relax \langle{j}\rvert )=\delta_{i,j}\sum_{k=1}^dT_{k|j}\lvert {k}\rangle \mkern -1.8mu\relax \langle{k}\rvert ,
\end{equation}
for all $i,j\in\{1,2,\dotsc,d\}$. The Choi representation of $\mathcal{N}$ is then
\begin{equation}
    N_{BR}=\sum_{j,k=1}^d T_{k|j}\lvert {k,j}\rangle \mkern -1.8mu\relax \langle{k,j}\rvert .
\end{equation}

\begin{proposition}[Entropy of a classical stochastic mapping]
  \label{z:A9iM-feK}
  Let $\mathcal{N}$ be the quantum channel corresponding to a classical
  stochastic mapping $T$, as in \eqref{z:pFhIXD-e}. Its entropy is
  \begin{align}
    S(\mathcal{N})
    = \min_{j\in\{1,2,\dotsc,d\}}\left\{-\sum_{k=1}^d T_{k|j}\log T_{k|j}\right\}\ .
  \end{align}
\end{proposition}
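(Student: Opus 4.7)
My plan is to prove the claim by establishing matching upper and lower bounds on the channel entropy, both reducing to the classical quantity $H_j \equiv -\sum_k T_{k|j}\log T_{k|j}$. I will use the variational formula ${S}_{}^{}({\mathcal{N}}) = \min_{\lvert {\phi}\rangle _{AR}} \bigl[{ S(\mathcal{N}(\phi_{AR})) - S(\phi_R) }\bigr]$ from \eqref{z:hkVuHdrl}, together with the key elementary observation that $\mathcal{N}\circ\mathcal{F}_A = \mathcal{N}$, where $\mathcal{F}_A(\cdot) = \sum_i \lvert {i}\rangle \mkern -1.8mu\relax \langle{i}\rvert (\cdot)\lvert {i}\rangle \mkern -1.8mu\relax \langle{i}\rvert $ is the computational dephasing on $A$; this is immediate from \eqref{z:pFhIXD-e}.

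For the upper bound, I would substitute the pure product input $\lvert {\phi}\rangle _{AR} = \lvert {j}\rangle _A\otimes\lvert {j}\rangle _R$ for each $j\in\{1,\ldots,d\}$. By \eqref{z:pFhIXD-e}, the output is a product, $\mathcal{N}(\phi_{AR}) = \bigl({ \sum_k T_{k|j}\lvert {k}\rangle \mkern -1.8mu\relax \langle{k}\rvert _B }\bigr) \otimes \lvert {j}\rangle \mkern -1.8mu\relax \langle{j}\rvert _R$, so that $S(\mathcal{N}(\phi_{AR})) = H_j$ and $S(\phi_R) = 0$, giving $S(B|R)_{\mathcal{N}(\phi)} = H_j$. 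Minimizing over $j$ yields ${S}_{}^{}({\mathcal{N}}) \leq \min_j H_j$.

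For the matching lower bound, let $\lvert {\phi}\rangle _{AR}$ be any pure input. The invariance $\mathcal{N}\circ\mathcal{F}_A = \mathcal{N}$ implies $\mathcal{N}(\phi_{AR}) = \mathcal{N}(\eta_{AR})$ with $\eta_{AR} \equiv (\mathcal{F}_A\otimes\mathcal{I}_R)(\phi_{AR}) = \sum_i p_i \lvert {i}\rangle \mkern -1.8mu\relax \langle{i}\rvert _A\otimes \tau_i^R$, a classical--quantum state (with each $\tau_i^R$ of unit trace for $p_i>0$). I would then introduce an auxiliary classical register $A''$ and the extension $\tilde\eta_{AA''R} = \sum_i p_i \lvert {i}\rangle \mkern -1.8mu\relax \langle{i}\rvert _A\otimes \lvert {i}\rangle \mkern -1.8mu\relax \langle{i}\rvert _{A''}\otimes \tau_i^R$, which reduces to $\eta_{AR}$ upon tracing out $A''$. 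Applying $\mathcal{N}_{A\to B}$ on $A$ and invoking strong subadditivity in the form $S(B|R)\geq S(B|RA'')$ (equivalently $I(B:A''|R)\geq 0$), I obtain $S(B|R)_{\mathcal{N}(\phi)} = S(B|R)_{\mathcal{N}(\tilde\eta)} \geq S(B|RA'')_{\mathcal{N}(\tilde\eta)}$. Since $A''$ is classical and conditioning on $A''=i$ leaves $BR$ in the product state $\bigl({ \sum_k T_{k|i}\lvert {k}\rangle \mkern -1.8mu\relax \langle{k}\rvert _B }\bigr) \otimes \tau_i^R$, the last quantity equals $\sum_i p_i H_i \geq \min_j H_j$. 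Combined with the upper bound, this gives $S(\mathcal{N}) = \min_j H_j$.

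The argument contains no genuine analytic obstacle; the single conceptual ingredient is the introduction of the classical copy register $A''$, which together with strong subadditivity converts the minimum of $S(B|R)$ over arbitrary pure inputs into the classical average $\sum_i p_i H_i$ of single-input entropies, trivially bounded below by $\min_j H_j$.
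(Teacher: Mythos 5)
Your proposal is correct, and it takes a genuinely different route to the same endpoint. The paper establishes the reduction to diagonal input states by invoking \cref{z:nCrJ.k1B} with $\mathcal{F}$ the computational dephasing, and then computes $S(B|R)_{\omega} = \sum_j p_j H_j$ directly for $\rho_R = \sum_j p_j \lvert j \rangle\mkern -1.8mu\relax\langle j \rvert$, concluding by optimizing over the distribution $\{p_j\}$. You instead keep the input $\lvert\phi\rangle_{AR}$ fully general, observe the elementary invariance $\mathcal{N}\circ\mathcal{F}_A = \mathcal{N}$ (immediate from \cref{z:pFhIXD-e}), and convert the minimization over all pure inputs into the classical average $\sum_i p_i H_i$ via a copy register $A''$ and strong subadditivity $S(B|R)\geq S(B|RA'')$; the matching upper bound is then supplied by product inputs $\lvert j\rangle_A\otimes\lvert j\rangle_R$. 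Your argument has the virtue of being self-contained: it does not route through \cref{z:nCrJ.k1B}, which is formulated for the \emph{constrained} maximum-channel-entropy problem and whose application to the entropy of a single fixed channel requires a somewhat indirect reading. On the other hand, the paper's approach is shorter once that lemma is accepted, and it exposes the explicit formula $S(B|R) = \sum_j p_j H_j$ for all diagonal inputs, which your bound only recovers as an inequality. Both proofs correctly arrive at $S(\mathcal{N}) = \min_j H_j$.
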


\begin{proof}[**z:A9iM-feK]
  Invoking \cref{z:nCrJ.k1B} with
  $\mathcal{F}$ the completely dephasing channel in the canonical basis, the
  optimal input state in the definition of the channel entropy of $\mathcal{N}$
  can be chosen without loss of generality to be diagonal in the canonical
  basis.
  Therefore, let $\rho_R=\sum_{j=1}^d p_j\lvert {j}\rangle \mkern -1.8mu\relax \langle{j}\rvert $ be an input distribution, such that
  $p_j\geq 0$, $\sum_{j=1}^d p_j=1$. The corresponding output distribution
  is
  \begin{align}
      \omega_{BR}
      = \rho_R^{1/2}N_{BR}\rho_R^{1/2}
      = \sum_{k,j=1}^d T_{k|j}p_j\lvert {k,j}\rangle \mkern -1.8mu\relax \langle{k,j}\rvert .
  \end{align}
  The conditional entropy ${S}_{}^{}({B}\mathclose{}\,|\,\mathopen{}{R})_{\omega}$ is then
    \begin{align}
      {S}_{}^{}({B}\mathclose{}\,|\,\mathopen{}{R})_{\omega}
      &={S}_{}^{}({BR})_{\omega} - {S}_{}^{}({R})_{\omega}
        \nonumber\\
      &=-\sum_{k,j=1}^d T_{k|j}p_j\log(T_{k|j}p_j)+\sum_{j=1}^dp_j\log p_j
        \nonumber\\
      &=-\sum_{k,j=1}^dT_{k|j}p_j\left(\log T_{k|j}+\log p_j\right)+\sum_{j=1}^d p_j\log p_j
        \nonumber\\
      &=-\sum_{k,j=1}^d p_jT_{k|j}\log T_{k|j}-\sum_{j=1}^d\underbrace{\left(\sum_{k=1}^d T_{k|j}\right)}_{=1\,\,\forall j}p_j\log p_j+\sum_{j=1}^d p_j\log p_j
        \nonumber\\
      &=-\sum_{k,j=1}^d p_jT_{k|j}\log T_{k|j}\ .
    \end{align}
    Therefore,
    \begin{align}
      {S}_{}^{}({\mathcal{N}})
      = \min\left\{-\sum_{k,j=1}^d p_jT_{k|j}\log T_{k|j}:p_j\in[0,1],\,\sum_j p_j=1\right\}.
    \end{align}
    Using the fact that $\sum_{j=1}^d p_j=1$, we get
    \begin{align}
      \sum_{k,j=1}^d p_jT_{k|j}\log T_{k|j}
      &= \sum_{j=1}^d p_j\underbrace{\sum_{k=1}^dT_{k|j}\log T_{k|j}}_{\leq\max_j \sum_{k=1}^d T_{k|j}\log T_{k|j}}
        \nonumber\\
      &\leq\left(\sum_{j=1} p_j\right)\max_{j\in\{1,2,\dotsc,d\}}\sum_{k=1}^d T_{k|j}\log T_{k|j}=\max_{j\in\{1,2,\dotsc,d\}}\sum_{k=1}^dT_{k|j}\log T_{k|j}\ ,
    \end{align}
    which implies that
    ${S}_{}^{}({\mathcal{N}})\geq\min_{j\in\{1,2,\dotsc,d\}}\left\{-\sum_{k=1}^dT_{k|j}\log
      T_{k|j}\right\}$. Furthermore, by picking the distribution $\rho_R$ such
    that $p_{j}=\delta_{j,j^{\star}}$, with
    $j^{\star}=\argmax_{j\in\{1,2,\dotsc,d\}}\sum_{k=1}^d T_{k|j}\log T_{k|j}$,
    we obtain
    ${S}_{}^{}({\mathcal{N}})\leq\min_{j\in\{1,2,\dotsc,d\}}\left\{-\sum_{k=1}^d
      T_{k|j}\log T_{k|j}\right\}$. This concludes the proof.
\end{proof}

We now show that the problem \eqref{z:sb5FfEOw}, in the
presence of constraint operators that are diagonal in the joint computational
basis of $B$ and $R$, reduces to a classical maximum channel entropy problem. 
Let $\mathcal{Z}(\cdot) = \sum_{k=1}^{d}\lvert {k}\rangle \mkern -1.8mu\relax \langle{k}\rvert (\cdot)\lvert {k}\rangle \mkern -1.8mu\relax \langle{k}\rvert $ be the dephasing channel with respect to the orthonormal basis $\{\lvert {k}\rangle \}_{k=1}^d$.
Suppose that the constraint operators $C_{BR}^j$ satisfy
\begin{align}
    C_{BR}^j = (\mathcal{Z}_B\otimes\mathcal{Z}_R)(C_{BR}^j)\ ,
\end{align}
for all $j$. This implies that
\begin{align}
  C_{BR}^j = \sum_{k,\ell=1}^d c_{k,\ell}^j\lvert {k,\ell}\rangle \mkern -1.8mu\relax \langle{k,\ell}\rvert \ ,
\end{align}
where $c_{k,\ell}^j=\langle {k,\ell}\rvert C_{BR}^j\lvert {k,\ell}\rangle $. The constraint
$\operatorname{tr}[C_{BR}^jN_{BR}]=q_j$ is then equivalent to
\begin{align}
    \operatorname{tr}[C_{BR}^jN_{BR}]
  = \sum_{k,\ell=1}^d c_{k,\ell}^j\langle {k,\ell}\rvert N_{BR}\lvert {k,\ell}\rangle  = q_j.
\end{align}
Now, observe that $\langle {k,\ell}\rvert N_{BR}\lvert {k,\ell}\rangle \geq 0$ for all
$k,\ell\in\{1,2,\dotsc,d\}$, and
$\sum_{k=1}^d\langle {k,\ell}\rvert N_{BR}\lvert {k,\ell}\rangle =\langle {\ell}\rvert _R\operatorname{tr}_B[N_{BR}]\lvert {\ell}\rangle _R=\langle {\ell}\mkern 1.5mu\relax \vert\mkern 1.5mu\relax {\ell}\rangle =1$
for all $\ell\in\{1,2,\dotsc,d\}$, where we used the fact that
$\operatorname{tr}_B[N_{BR}]=\mathds{1}_R$. This means that
$\langle {k,\ell}\rvert N_{BR}\lvert {k,\ell}\rangle \equiv T_{k|\ell}$ defines a stochastic
matrix. In other words, the constraint is equivalent to
\begin{align}
    \sum_{k,\ell=1}^d c_{k,\ell}^jT_{k|\ell} = q_j.
\end{align}
Furthermore, let $\Theta_{\mathcal{Z}}$ be the superchannel such that the Choi representation of $\Theta_{\mathcal{Z}}(\mathcal{N})$ is $(\mathcal{Z}_B\otimes\mathcal{Z}_R)(N_{BR})$, with $N_{BR}$ being the Choi representation of $\mathcal{N}$. Observe that $\Theta_{\mathcal{Z}}(\widetilde{\mathcal{D}})=\widetilde{\mathcal{D}}$. This fact, along with the data-processing inequality, implies that
\begin{align}
   {S}_{}^{}({\mathcal{N}})
  = -{D}_{}^{}({\mathcal{N}}\mathclose{}\,\Vert\,\mathopen{}{\widetilde{\mathcal{D}}})
  &\leq -{D}_{}^{}({\Theta_{\mathcal{Z}}(\mathcal{N})}\mathclose{}\,\Vert\,\mathopen{}{
        \Theta_{\mathcal{Z}}(\widetilde{\mathcal{D}})})
  = -D(\Theta_{\mathcal{Z}}(\mathcal{N})\Vert\widetilde{\mathcal{D}})
    \nonumber\\
  &=\min_{\ell\in\{1,2,\dotsc,d\}}\left\{-\sum_{k=1}^dT_{k|\ell}\log T_{k|\ell}\right\},
\end{align}
where for the final equality we used the fact that
$(\mathcal{Z}_B\otimes\mathcal{Z}_R)(N_{BR})=\sum_{k,\ell=1}^d
T_{k|\ell}\lvert {k,\ell}\rangle \mkern -1.8mu\relax \langle{k,\ell}\rvert $ along with
Proposition~\ref{z:A9iM-feK}. Our maximum classical channel
entropy problem is then equivalent to the following problem:
\begin{align}
  \label{z:HTRrXESa}
  \begin{aligned}[t] 
    \textup{maximize:} \quad
    & \min_{\ell\in\{1,2,\dotsc,d\}}\left\{-\sum_{k=1}^{d} T_{k|\ell}\log T_{k|\ell}\right\}
    \\
    \textup{over:}\quad
    & T_{k|\ell}\geq 0,\,\, \sum_{k=1}^{d} T_{k|\ell}=1\,\,\forall\,\ell
    \\
    \textup{such that:}\quad
    & \sum_{k,\ell=0}^{d}c_{k,\ell}^jT_{k|\ell}= q_j\quad\text{for \(j=1, \ldots, J\)}\ ,
  \end{aligned}
\end{align}
which is the classical analogue of our maximum channel entropy problem. The solution to this problem is a special case of \cref{z:TLiwdJGR}.

Problems similar to \eqref{z:HTRrXESa} have been considered before.
In \R\cite{R61,R62},
the entropy of a stochastic mapping (equivalently, a Markov chain transition matrix) is defined with respect to a fixed input distribution, or it is required that the input distribution is a stationary distribution of the stochastic mapping being optimized; see also \cite[Chapter~4]{R63}. The problem with a fixed input distribution is:
\begin{align}
  \label{z:5QSwGAcy}
  \begin{aligned}[t] 
    \textup{maximize:} \quad
    & -\sum_{k=1}^{d} p_kT_{k|\ell}\log T_{k|\ell}
    \\
    \textup{over:}\quad
    & T_{k|\ell}\geq 0,\,\, \sum_{k=1}^{d} T_{k|\ell}=1\,\,\forall\,\ell
    \\
    \textup{such that:}\quad
    & \sum_{k,\ell=1}^{d}c_{k,\ell}^jT_{k|\ell}= q_j\quad\text{for \(j=1, \ldots, J\)}\ ,
  \end{aligned}
\end{align}
where $\{p_k\}_{k=1}^d$ is the fixed input distribution. 
This problem is the classical analog of
problem~\eqref{z:.wBvf2qe}, and $T_{k|\ell}$
is a classical analog of the thermal
quantum channel with respect to a fixed $\phi_R$.
The solution to such a problem can be obtained as a special case of either \cref{z:hd7.cLe.} or \cref{z:33B55hFY}. Assume for simplicity that the initial distribution has full rank. We have also shown above that it suffices to optimize with respect to Choi matrices $N_{BR}$ such that $N_{BR}=\sum_{k,\ell=1}^d T_{k|\ell}\lvert {k,\ell}\rangle \mkern -1.8mu\relax \langle{k,\ell}\rvert $. Therefore, by examining the proof of \cref{z:hd7.cLe.}, we conclude that the operator $F_R$ in \eqref{z:8KUY6eJu} can be taken diagonal in the computational basis, i.e., $F_R=\sum_{k=1}^df_k\lvert {k}\rangle \mkern -1.8mu\relax \langle{k}\rvert $ for $f_k\in\mathbb{R}$. We also have $C_{BR}^j=\sum_{k,\ell=1}^d c_{k,\ell}^j\lvert {k,\ell}\rangle \mkern -1.8mu\relax \langle{k,\ell}\rvert $. Therefore, the Choi matrix $T_{BR}^{(\phi)}$ in \eqref{z:drLpOUuq} has the form
\begin{equation}
    T_{BR}^{(\phi)}=\sum_{k,\ell=1}^d p_{\ell}^{-2}\exp\left(p_{\ell}^{-1}\sum_{j=1}^d \mu_jc_{k,\ell}^j\right)\exp(-f_{\ell}p_{\ell}^{-1})\lvert {k,\ell}\rangle \mkern -1.8mu\relax \langle{k,\ell}\rvert ,
\end{equation}
where the $\mu_j$ are coefficients corresponding to the constraints $\sum_{k,\ell=1}^d c_{k,\ell}^j T_{k|\ell}=q_j$. Let
\begin{equation}
    Z_{\ell}\equiv\sum_{k=1}^d\exp\left(p_{\ell}^{-1}\sum_{j=1}^d\mu_j c_{k,\ell}^j\right).
\end{equation}
Then, the requirement $\operatorname{tr}_B[T_{BR}^{(\phi)}]=\mathds{1}_R$ implies that
\begin{equation}
    Z_{\ell}p_{\ell}^{-2}\exp(-f_{\ell}p_{\ell}^{-1})=1\quad\forall~\ell\in\{1,2,\dotsc,d\}.
\end{equation}
Imposing this requirement immediately leads to
\begin{equation}
    T_{BR}^{(\phi)}=\sum_{k,\ell=1}^d\frac{1}{Z_{\ell}}\exp\left(p_{\ell}^{-1}\sum_{j=1}^d\mu_j c_{k,\ell}^j\right)\lvert {k,\ell}\rangle \mkern -1.8mu\relax \langle{k,\ell}\rvert .
\end{equation}

\section{Learning algorithm for quantum channels}
\label{z:wXQ-uunR}

A prominent application of the maximum-entropy principle for quantum states is tomography --- in particular, the reconstruction of quantum states using ``incomplete'' knowledge, in the form of expectation-value estimates for a given set of observables~\cite{R64,R65,R66,R67,R68}. The maximum-entropy approach to state tomography mandates that our estimate of the quantum state should be the one that maximizes the entropy subject to the constraints corresponding to our expectation-value estimates.

Recent years have seen a resurgence in the idea of learning using incomplete
knowledge, with it being referred to as ``shadow tomography'', i.e., learning a
state in terms of its expectation values on a given set of observables, often
provided randomly from a known
ensemble~\cite{R11,R105}. This concept has been
combined with the maximum-entropy principle to obtain quantum state learning algorithms~\cite{R69,R70,R13,R14}.
These learning algorithms are based on an online procedure, in which a guess of the true quantum state is updated iteratively as more observable data becomes available. Suppose $\rho^{(t)}$ is a guess of the true state at time step $t\in\{1,2,\dotsc\}$ of the algorithm. Given a number of uses of the true state, a measurement of an observable $E^{(t)}$ is then made, and an estimate $s^{(t)}$ of the expectation value of this observable with respect to the true state is provided.
Using this estimate, 
an updated guess $\rho^{(t+1)}$ of the true state is obtained as a solution to the following optimization problem~\cite{R80,R69}:
\begin{align}
  \label{z:ihMgQosJ}
    \begin{aligned}
        \text{minimize:}\quad & {D}_{}^{}({\rho}\mathclose{}\,\Vert\,\mathopen{}{\rho^{(t)}})
        + \eta L_t(\rho) \\
        \text{subject to:}\quad & \rho\geq 0,\,\operatorname{tr}[\rho]=1,
    \end{aligned}
\end{align}
where $L_t(\rho)=(\operatorname{tr}[\rho E^{(t)}]-s^{(t)})^2$ is a loss function, which quantifies the error in the estimate $s^{(t)}$ compared to the expectation value of $E^{(t)}$ with respect to $\rho$. The ``learning rate'' $\eta>0$ models the tradeoff between keeping the new estimate close to the old one, represented by the first term in the objective function, and minimizing the loss in the second term. The optimization problem \eqref{z:ihMgQosJ} can be solved to obtain the following explicit update rule~\cite{R80,R69}:
\begin{equation}
    \rho^{(t+1)}=\frac{\exp(G^{(t)})}{\operatorname{tr}[\exp(G^{(t)})]},\quad G^{(t)}=\log(\rho^{(t)})-2\eta(\operatorname{tr}[\rho^{(t)}E^{(t)}]-s^{(t)})E^{(t)}.
\end{equation}
Under certain conditions on the learning rate $\eta$, this algorithm is guaranteed to converge to the true state as $t\to\infty$~\cite{R80,R69}.

Here, we consider the analogous learning problem for quantum channels. A prior work~\cite{R106} has applied the quantum state maximum-entropy principle to the Choi states of quantum channels. We go beyond this here by using the quantum channel relative entropy, which %
involves an optimization over all input states. We consider an online learning setting in which we are tasked with learning a quantum channel in a sequential manner. 
Specifically, given an arbitrary sequence of channel observables, our
algorithm iteratively updates a current guess, or estimate, of the unknown
channel as more observable data is made available.  At each iteration, our
learning algorithm estimates the expectation value of a given channel observable by making use the unknown channel a
fixed number of times. The estimate incurs a loss, depending how close it is to the true expectation value, and this loss is used to compute an
updated estimate of the unknown channel. The update rule is chosen such that
over many iterations, the estimate hopefully approaches the channel with maximal
entropy that is compatible with the measured data.

Concretely, our algorithm is as follows. It is a direct generalization of the
learning algorithm considered in \R\cite{R80,R69}
in the context of quantum state learning. 

\begin{algorithm}[H]
  \begin{algorithmic}[1]
    \Require {$\eta \in (0, 1)$; $\mathcal{M}^{(0)} = \mathcal{D}$.}
    \For{\texttt{$t = 1, 2, \ldots, T$}}
    \State Receive the observable $E_{RB}^{(t)}$.
    \State Obtain an estimate $s^{(t)}$ of the true expectation value.
    \State Update: $\mathcal{M}^{(t)}=\argmin\bigl\{{
      {D}_{}^{}({\mathcal{N}}\mathclose{}\,\Vert\,\mathopen{}{\mathcal{M}^{(t-1)}})
      + \eta L_t(\mathcal{N}):\mathcal{N}\textup{ cp. tp.}
    }\bigr\}$.
    \EndFor
    \Ensure { $\mathcal{M}^{(T)}$} 
  \end{algorithmic}
  \caption{Minimum relative entropy channel learning}
  \label{z:igCtH9MB}
\end{algorithm}

\Cref{z:igCtH9MB} consists of the following elements.

\begin{itemize}
    \item An initial guess of $\mathcal{M}^{(0)}=\mathcal{D}$, the completely depolarizing channel. This is the channel the maximizes the channel entropy in the absence of any prior knowledge, i.e., expectation-value estimates.

    \item Given an observable $E_{RB}^{(t)}$ at time step $t\in\{1,2,\dotsc\}$, the estimate $s^{(t)}$ is obtained via a running average, similar to \R\cite{R69}. Specifically,
        \begin{equation}\label{z:g841q3VJ}
            s^{(t)}=\frac{(n_{E^{(t)}}-1)s^{(t-1)}+\hat{s}^{(t)}}{n_{E^{(t)}}},
        \end{equation}
        where $n_{E^{(t)}}$ is the number of times $E_{BR}^{(t)}$ has appeared up to time $t$ and $\hat{s}^{(t)}$ is the empirical estimate of the expectation value at time step $t$, obtained using a given number of channel uses.

    \item In order to obtain an updated estimate of the unknown channel, our algorithm solves a special case of the general minimum channel relative entropy problem in \eqref{z:z4Hq5iYa}, namely:
        \begin{align}
          \label{z:8R8JZliG}
            \begin{aligned}
                \text{minimize:}\quad & {D}_{}^{}({\mathcal{N}_{A\to B}}\mathclose{}\,\Vert\,\mathopen{}{\mathcal{M}_{A\to B}^{(t)}})
                + \eta L_t(\mathcal{N}_{A\to B}) \\
                \text{subject to:}\quad & \mathcal{N}_{A\to B} \text{ cp. tp.,}
            \end{aligned}
        \end{align}
        where 
        \begin{equation}
            L_t(\mathcal{N}_{A\to B})\coloneqq\bigl({s^{(t)}-\operatorname{tr}[E_{BR}^{(t)}\mathcal{N}_{A\to B}(\Phi_{A:R})]}\bigr)^2
        \end{equation}
        is the loss function at time step $t\in\{1,2,\dotsc\}$, which simply computes the squared error of the estimate $s^{(t)}$
compared to the expectation value with respect to $\mathcal{N}_{A\to B}$. Note that this optimization problem is a direction generalization of the one in \eqref{z:ihMgQosJ}. Note also that we consider the quadratic term in \eqref{z:8R8JZliG} to model the loss, rather than an equality constraint, in order to account for the statistical fluctuations in the estimates $s^{(t)}$. 

\end{itemize}

As a proof-of-principle example, we apply
\cref{z:igCtH9MB} to the learning of single-qubit
channels. For this, we let
$\mathcal{S}=\{\lvert {0}\rangle \mkern -1.8mu\relax \langle{0}\rvert ,\lvert {1}\rangle \mkern -1.8mu\relax \langle{1}\rvert ,\lvert {\pm}\rangle \mkern -1.8mu\relax \langle{\pm}\rvert ,\lvert {\pm\I}\rangle \mkern -1.8mu\relax \langle{\pm\I}\rvert \}$
be the set of single-qubit stabilizer states and $\mathcal{P}=\{X,Y,Z\}$ be the
set of non-identity Pauli operators.  The channel observables are chosen %
of the
form $E_{BR}=P_B\otimes\rho_R$, where $P_B\in\mathcal{P}$ and
$\rho_R\in\mathcal{S}$. In every iteration of \Cref{z:igCtH9MB}, we make a uniformly random choice of
$P\in\mathcal{P}$ and $\rho\in\mathcal{S}$, take the learning rate to be $\eta=0.15$, obtain the empirical estimates $\hat{s}^{(t)}$ in \eqref{z:g841q3VJ} with $10\,000$ uses of the unknown channel, and solve the problem~\eqref{z:8R8JZliG}
numerically using the semidefinite programming techniques put forward in
\R\cite{R107,R57}. Our code makes use of
the QuTip~\cite{R108},
SciPy~\cite{R109} and CVXPY~\cite{R110} software
frameworks.

\begin{figure}
    \centering
    \includegraphics[width=0.48\textwidth]{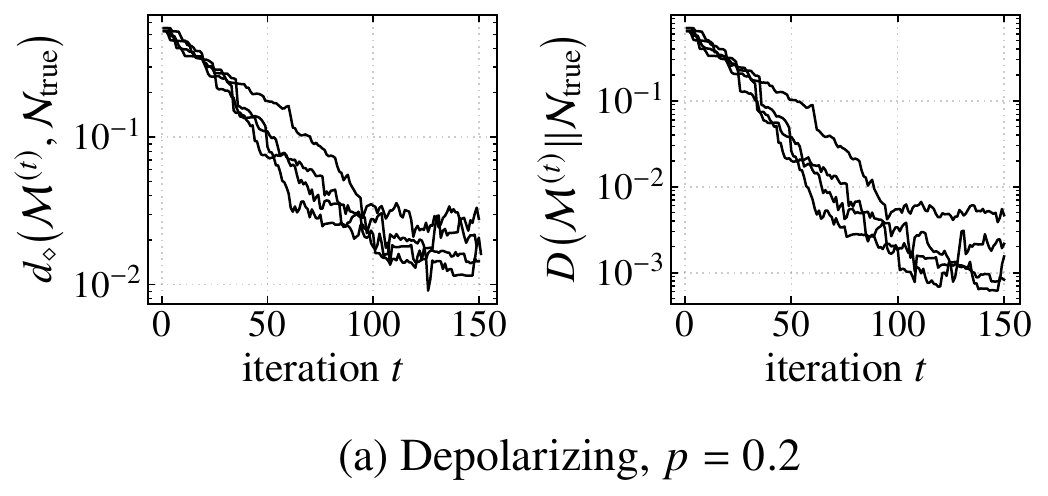}\quad\includegraphics[width=0.48\textwidth]{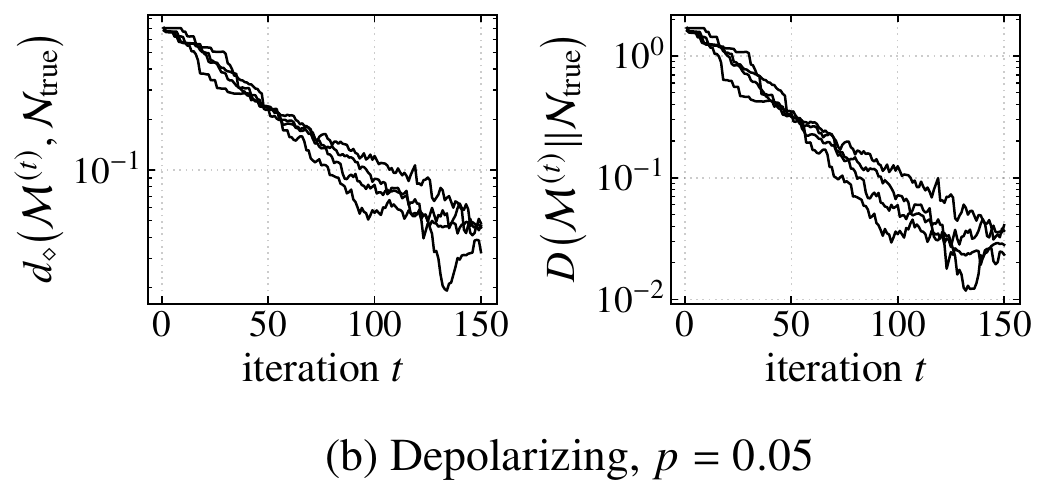}\\[3ex]\includegraphics[width=0.48\textwidth]{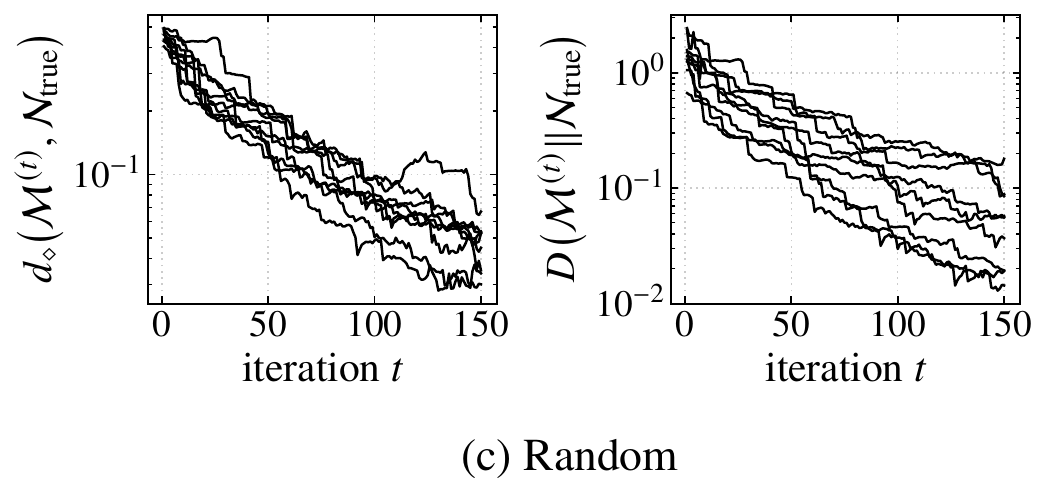}\quad\includegraphics[width=0.48\textwidth]{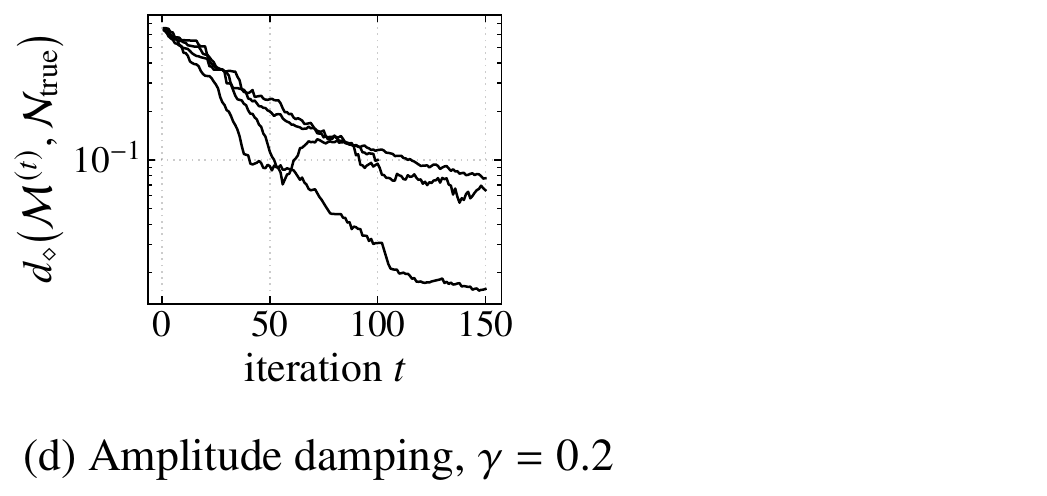}
    \caption{Learning of quantum channels using \cref{z:igCtH9MB}. In all cases, we take $\eta=0.15$ as the learning rate. We plot the diamond-norm distance $d_{\diamond}(\mathcal{M}^{(t)},\mathcal{N}_{\text{true}})=\frac{1}{2}\|\mathcal{M}^{(t)}-\mathcal{N}_{\text{true}}\|_{\diamond}$ between the channel $\mathcal{M}^{(t)}$ at every iteration of the algorithm and the true channel $\mathcal{N}_{\text{true}}$. We also plot the channel relative entropy, $D(\mathcal{M}^{(t)}\Vert\mathcal{N}_{\text{true}})$, of the same channels. We consider the cases that $\mathcal{N}_{\text{true}}$ is: (a) the depolarizing channel $\mathcal{D}_p$, defined in \eqref{z:K9m1mp5S}, with $p=0.2$; (b) the depolarizing channel with $p=0.05$; (c) randomly-generated channels; and (d) the amplitude-damping channel $\mathcal{A}_{\gamma}$, defined in \eqref{z:MHRlVog6}, with $\gamma=0.2$. For this channel, we have omitted the relative entropy plot because of the fact that it does not have full Kraus rank, and therefore the support condition required for a finite value of the relative entropy is not necessarily satisfied.}
    \label{z:Sss.OiJv}
\end{figure}

Our results are shown in \cref{z:Sss.OiJv}. We took as the true, unknown channel the depolarizing channel, amplitude-damping channel, and randomly-generated channels. The depolarizing and amplitude-damping channels are defined as 
\begin{align}
    \mathcal{D}_p(\cdot)&=(1-p)(\cdot)+\frac{p}{3}(X(\cdot)X+Y(\cdot)Y+Z(\cdot)Z),\label{z:K9m1mp5S}\\
    \mathcal{A}_{\gamma}(\cdot)&=K_1(\cdot)K_1^{\dagger}+K_2(\cdot)K_2^{\dagger},\quad K_1=\begin{pmatrix} 1 & 0\\ 0 & \sqrt{1-\gamma}\end{pmatrix},\,K_2=\begin{pmatrix} 0 & \sqrt{\gamma} \\ 0 & 0 \end{pmatrix}.\label{z:MHRlVog6}
\end{align}
The random qubit-to-qubit channels are defined by random Choi matrices, which we generate as follows~\cite{R111}. We generate a $4\times 4$ random complex matrix $G$ by sampling the real and imaginary parts of every matrix element of $G$ from the standard normal distribution. We then let $P_{BR}\equiv GG^{\dagger}$ and $Q_R=\operatorname{tr}_B[P_{BR}]$, such that our desired Choi matrix is $N_{BR}=Q_R^{-1/2}P_{BR}Q_R^{-1/2}$. 
We calculated both the diamond-norm distance between the guess and the true channel at every iteration, and also their channel relative entropy.
Our results
seem to indicate %
that \Cref{z:igCtH9MB} defines a sequence of guesses that
converges to the true channel in the limit of a large number of iterations.
Our study here is meant to serve as an initial proof
of concept, while a rigorous analysis of the algorithm's convergence rate, error
bounds, and other algorithmic guarantees goes beyond the scope of the present
paper.
The convergence guarantees of the state learning
algorithms~\cite{R80,R69,R13} rely on the fact that the relative entropy is a
so-called \emph{Bregman divergence}~\cite{R112}; it remains
unclear whether the channel entropy has the same property.
Therefore, it may be the case that proving the %
 convergence of
\Cref{z:igCtH9MB} could require %
 a different, or entirely new technique.

\section{Microcanonical derivation of the thermal channel}
\label{z:VbRRCMzN}

Here, we present an alternative derivation of the thermal channel: We generalize
to quantum channels the argument that a thermal state of a system $S$ can be
expressed as the reduced state on $S$ of a joint microcanonical state on $S$ and
a large heat bath.
We show that the thermal channels we obtain in this way coincide with the
thermal channels that we defined in
\cref{z:u4ZpmFvZ}.

An adaptation of the standard argument in statistical mechanics to derive the
thermal state on $S$ from the microcanonical state on a larger system proceeds
as follows.  Consider the heat bath to be an additional $n-1$ copies of $S$, for
some large $n$.  (Perhaps $S$ is a single particle of a large, $n$-particle gas
evolving as a closed system.)  For simplicity, suppose that the particles are
completely noninteracting, leading to the total Hamiltonian
$H_{\mathrm{tot}} = H_1 + H_2 + \cdots H_n$ with $H_i$ the system Hamiltonian
applied to the $i$-th particle.  We define the microcanonical subspace at energy
$[E,E+\Delta E]$ as the subspace spanned by all energy eigenstates of
$H_{\mathrm{tot}}$ whose energies lie in $[E, E+\Delta E]$.  Now assume that the
global state is a microcanonical state $\pi(E, \Delta E)$ at energy
$[E,E+\Delta E]$, which assigns an equal probability weight to all states in the
corresponding microcanonical subspace.  Using standard typicality arguments, one
can show that the reduced state $\rho_1$ on a single copy of $S$ obeys
\begin{align}
  \rho_1 \approx \frac{{e}^{-\beta H_1}}{Z}\ ,
  \label{z:vdMGaB8Q}
\end{align}
where $\beta$ can be determined from $E$ from the known energy density
$\operatorname{tr}(\rho_1 H_1) = E/n$.  It is one of the keystone results of statistical
mechanics and information theory that the forms of the canonical
state~\eqref{z:vdMGaB8Q} and the constrained maximum entropy
state~\eqref{z:Jdr9-JJH} coincide.

The argument above carries over to the case where multiple conserved charges
$Q^{(1)}, \ldots Q^{(J)}$ are present, rather than the energy $H$ alone.  If the
charges all commute, then the microcanonical subspace can be defined as the one
spanned by the simultaneous eigenvectors of all the charge operators whose
eigenvalue associated with charge $Q^{(j)}$ lies in a fixed interval
$[ q_j, q_j + \Delta q_j ]$.  This case typically arises when we construct the
grand canonical ensemble in statistical mechanics, where the charges are energy
and particle number.

On the other hand, more involved proof techniques are required in the case where
the conserved charges fail to
commute~\cite{R26}.  In this case, we cannot
define a microcanonical subspace from simultaneous eigenspaces of the charge
operators, as these do not necessarily exist.  Instead, one can resort to an
\emph{approximate microcanonical subspace}, constructed as
follows~\cite{R26}.  Given noncommuting charges
$Q^{(1)}$, \ldots, $Q^{(j)}$, we can construct their $n$-copy versions
$\bar{Q}^{(j)} = (1/n)\sum_{i=1}^n Q^{(j)}_i$, where $Q^{(j)}_i$ represents the
$j$-th charge operator applied on the $i$-th copy of the system.  It turns out
that the $\{{ \bar{Q}^{(j)} }\}$ approximately
commute~\cite{R113}.  Furthermore, it is possible to find
a subspace of the $n$-copy system with the following properties: (i)~Any quantum
state $\rho$ with high weight in the subspace has sharp statistics for
$\bar{Q}^{(j)}$ around $q_j$, for all $j$; (ii)~Any quantum state $\rho$ with
sharp statistics for charge $\bar{Q}^{(j)}$ around $q_j$ (for all $j$) has high
weight in the subspace.  Here, we say that $\rho$ has sharp statistics for a
charge $\bar{Q}^{(j)}$ around $q_j$ if the measurement outcome distribution of
$\bar{Q}^{(j)}$ on $\rho$ has weight at least $1-\delta$ in the window
$[q_j - \eta, q_j + \eta]$, for suitable tolerance parameters $\delta, \eta$.
Such a subspace is called an \emph{approximate microcanonical subspace}.  It
captures approximately all quantum states that have sharp statistics
simultaneously for all the charges, providing an approximate version of the
microcanonical subspace in the case of commuting observables.
The maximally mixed state in this subspace is referred to as an
\emph{approximate microcanonical state}.
In \R\cite{R26}, it was shown that the reduced
state on a single system of the approximate microcanonical state is close to the
thermal state~\eqref{z:Jdr9-JJH}.

Here, we adapt this argument to the context of quantum channels.  Consider $n$
input systems $A^n$, $n$ output systems $B^n$, and let $R^n \simeq A^n$.  Let
$\{{ C^j_{BR} }\}_{j=1}^J$ be a collection of channel observables, and let
$q_j \in \mathbb{R}$ for $j=1, \ldots J$.  The channel observables represent
``conserved channel charges.''  Recall a channel observable is meant to be
measured against the Choi matrix
$N_{BR} \equiv \mathcal{N}_{A\to B}(\Phi_{A:B})$ of a channel
$\mathcal{N}_{A\to B}$, yielding the expectation value
$\operatorname{tr}\bigl[{ C^j_{BR} N_{BR} }\bigr]$.  Loosely speaking, the $R$ system of a channel
observable may be interpreted as being fed into the channel's input, and the
channel's output is measured against the $B$ part of the charge; this
interpretation is accurate if the channel observable is of the form
$Q^j_B \otimes \rho^j_R$.

We need to identify a physical measurement that can test whether or not a given
channel satisfies the desired constraints.  For any full-rank $\sigma_R$, a
given constraint operator $C^j_{BR}$ can always be written in the form
$C^j_{BR} = \sigma_R^{1/2} H^{j,\sigma}_{BR} \sigma_R^{1/2}$ with
$H^{j,\sigma}_{BR} = \sigma_R^{-1/2} C^j_{BR} \sigma_R^{-1/2}$.
A physical experiment whose expectation value reveals the constraint operator
$C^j_{BR}$'s expectation value consists in preparing
$\lvert {\sigma}\rangle _{AR} \equiv \sigma_R^{1/2}\lvert {\Phi_{A:R}}\rangle $, applying the channel on
$A\to B$, and measuring $H^{j,\sigma}_{BR}$.  Indeed:
\begin{align}
  \operatorname{tr}\bigl[{ H^{j,\sigma}_{BR} \mathcal{N}({\sigma_{AR}})}\bigr]
  = 
  \operatorname{tr}\bigl[{ \sigma_R^{-1/2} C^{j}_{BR} \sigma_R^{-1/2}\,
  \mathcal{N}({\sigma_{R}^{1/2} \Phi_{A:R} \sigma_R^{1/2}})}\bigr]
  = \operatorname{tr}\bigl[{ C^{j}_{BR}\, \mathcal{N}({\Phi_{A:R}})}\bigr]\ .
\end{align}

Importantly, the constraint operator $C^j_{BR}$ alone provides no guidance as to
which input state $\sigma_R$ was meant to be used to test the constraint.  Any
full-rank input state $\sigma_R$ can be used in the construction above.

Even more importantly, a general quantum channel $\mathcal{E}_{A^n \to B^n}$ can
distinguish i.i.d.\@ states arbitrarily well in the limit $n\to\infty$ and its
action can therefore differ significantly on different i.i.d.\@ inputs.  Testing
the constraint solely on a fixed i.i.d.\@ input would, therefore, allow the
channel to act freely on all other i.i.d.\@ states.  Therefore, we need to
ensure the constraints are tested \emph{for all input states}, at least in the
limit $n\to\infty$.

We now construct an $n$-copy measurement with respect to an arbitrary input
state $\sigma_R$ to test the constraint $C^j_{BR}$ on
$\mathcal{E}_{A^n\to B^n}$.
Let $\sigma_R$ be any full-rank quantum state and let
$\lvert {\sigma}\rangle _{AR} \equiv \sigma_R^{1/2} \lvert {\Phi_{A:R}}\rangle $.  We prepare the state
$\lvert {\sigma}\rangle _{AR}^{\otimes n}$ and we send the copies of $A$ through
$\mathcal{E}_{A^n\to B^n}$, resulting in the state
$\mathcal{E}_{A^n\to B^n}(\sigma_{AR}^{\otimes n})$.  We now measure the
operator $H^{j,\sigma}_{BR}$ on each copy and compute the sample average of the
outcomes.  This procedure is equivalent to measuring a global observable
$\overline{H^{j,\sigma}}_{B^nR^n}$ on the resulting state
$\mathcal{E}_{A^n\to B^n}(\sigma_{AR}^{\otimes n})$, where
\begin{align}
  \overline{H^{j,\sigma}}_{B^nR^n}
  &=
    \frac1n\sum_{i=1}^n 
    ({\mathds{1}_{BR}})^{\otimes(i-1)}
    \otimes ({\sigma_{R_i}^{-1/2} C_{B_iR_i}^j \sigma_{R_i}^{-1/2}})
    \otimes ({\mathds{1}_{BR}})^{\otimes(n-i)}\ .
  \label{z:RVr9.4VA}
\end{align}

We use an overline notation to represent the $n$-sample average observable; specifically,
for an observable $O_A$, we write
$\overline{O}_{A^n} \equiv (1/n)\sum_{i=1}^n \mathds{1}_A^{\otimes (i-1)} \otimes O_{A_i} \otimes \mathds{1}_A^{\otimes (n-i)}$ as the sample average observable associated with $O$ and which
is an operator on $A^n$.

We may now sketch our generalization of the approximate microcanonical subspace
to quantum channels.
We identify a POVM effect $P_{B^nR^n}$, which we term \emph{approximate
  microcanonical channel operator}, with the following properties
($\eta,\delta,\epsilon,\eta',\delta',\epsilon'>0$ are tolerance parameters):
\begin{enumerate}[label=(\alph*)]
\item Suppose a quantum channel $\mathcal{E}_{A^n\to B^n}$ satisfies
  $\operatorname{tr}\bigl[{P_{B^nR^n} \, \mathcal{E}_{A^n\to B^n}(\sigma_{AR}^{\otimes n}) }\bigr] \geq
  1-\epsilon$ for ``most'' states $\sigma$, where
  $\lvert {\sigma}\rangle _{AR} \equiv \sigma_R^{1/2}\lvert {\Phi_{A:R}}\rangle $; then for all $j$ and
  for ``most'' $\sigma$,
  \begin{align}
    \operatorname{tr}\Bigl[{\Bigl\{{ \overline{H^{j,\sigma}}_{BR} \in [q_j - \eta, q_j + \eta] }\Bigr\} \;
    \mathcal{E}_{A^n\to B^n}(\sigma_{AR}^{\otimes n}) }\Bigr]
    \geq 1 - \delta\ ,
    \label{z:gjO6FWMB}
  \end{align}
  where $\bigl\{{ C^j_{BR} \in [q_j - \eta, q_j + \eta] }\bigr\}$ denotes the projector
  onto the eigenspaces of $C^j_{BR}$ associated with eigenvalues within $\eta$
  of $q_j$.

\item Suppose a quantum channel $\mathcal{E}_{A^n\to B^n}$ satisfies
  $\operatorname{tr}\Bigl[{\Bigl\{{ \overline{H^{j,\sigma}}_{BR} \in [q_j - \eta, q_j + \eta] }\Bigr\}
  \; \mathcal{E}_{A^n\to B^n}(\sigma_{AR}^{\otimes n}) }\Bigr] \geq 1 - \delta'$ for
  all $j$ and for ``most'' states $\sigma$.  Then
  \begin{align}
    \operatorname{tr}\bigl[{P_{B^nR^n} \, \mathcal{E}_{A^n\to B^n}(\sigma_{AR}^{\otimes n}) }\bigr]
    \geq 1-\epsilon\ ,
    \label{z:sYU6AmKw}
  \end{align}
  also for  ``most'' states $\sigma$.
\end{enumerate}
The conditions above do not hold for states $\sigma$ that have very small
eigenvalues.  Specifically, the sets of states designated vaguely above as
``most states'' are defined as sets of all quantum states whose eigenvalues are
above a suitable threshold.  The threshold can be made arbitrarily small at the
cost of loosening the other tolerance parameters.  All these parameters along
with the thresholds can be taken to go to zero for large $n$.

The construction of an approximate microcanonical channel operator is a first
result presented in this section:
\begin{theorem*}[Construction of an approximate microcanonical channel operator; informal]
  There exists an explicit construction of an approximate microcanonical channel
  operator $P_{B^nR^n}$, which is furthermore permutation invariant.
\end{theorem*}
A formal statement appears as \cref{z:JjqyeW8p} below.

The reason that we should not consider $\sigma$ with minuscule eigenvalues is
the following.  The observable $\sigma_R^{-1/2} C^j_{BR} \sigma_R^{-1/2}$ that
appears in \eqref{z:RVr9.4VA} has a norm that can diverge as the
smallest nonzero eigenvalue of $\sigma_R$ goes to zero.  The statistics of such
an observable can fluctuate wildly: When estimating the expectation value of
this observable over a finite number of samples, a single low-probability
outcome with a very large measurement result can significantly influence the
sample average.  This poses an issue for conditions of the
form~\eqref{z:gjO6FWMB}, which state that
the measurement statistics of such observables are sharp.  This issue does not
arise if we are guaranteed an upper bound on the norm of
$\sigma_R^{-1/2} C^j_{BR} \sigma_R^{-1/2}$; such a guarantee can be enforced by
ensuring that all eigenvalues of $\sigma_R$ are above some threshold.

Armed with an approximate microcanonical channel operator $P_{B^nR^n}$, we can
define a \emph{microcanonical channel}.  We define the microcanonical channel as
the channel with maximal channel entropy that has high weight with respect to
$P_{B^nR^n}$.  This definition mirrors the property of a microcanonical state
being the most entropic among all states supported on the microcanonical
subspace.  We show that a microcanonical channel leads to thermal channels in
the following sense: If we apply the microcanonical channel on $n$ copies of a
fixed state $\phi_{AR}$, then the reduced state on the first system pair $AR$ is
close to the state obtained by applying a thermal channel with respect to
$\phi$, denoted $\mathcal{T}_{A\to B}^{(\phi)}$, onto $\phi$ [cf.\@
\cref{z:.wBvf2qe}]:

\begin{theorem*}[Thermal channels from a microcanonical channel; informal]
  Let $\Omega_{A^n\to B^n}$ be a permutation-invariant microcanonical channel.
  For any full-rank state $\phi_R$, let
  $\lvert {\phi}\rangle _{AR} = \phi_R^{1/2}\lvert {\Phi_{A:R}}\rangle $.  Then
  \begin{align}
    \operatorname{tr}_{2, \ldots, n}\Bigl\{{ \Omega_{A^n\to B^n}[\phi_{AR}^{\otimes n} ] }\Bigr\}
    \approx
    \mathcal{T}_{A\to B}^{(\phi)}(\phi_{AR})\ .
  \end{align}
\end{theorem*}
A formal statement appears as \cref{z:1tRx46YH}
below.

The remainder of this section is devoted to a precise formulation and careful
proof of both the above theorems.  Our proofs are inspired by an alternative
construction of the approximate microcanonical subspace presented in
\R\cite{R71}.

As an intermediate step, we present a custom, ``constrained,'' postselection
theorem for channels that is likely of independent interest.  Namely, we extend
standard postselection techniques~\cite{R72,R73,R74,R75,R76} to a channel version in
which a permutation-invariant channel is operator-upper-bounded by an integral
over i.i.d.\@ channels, where the integrand further includes a fidelity term of
the i.i.d.\@ channel with the original channel.

First, we present in \cref{z:05ekSHxY}
our custom postselection theorem.
We then detail in \cref{z:-1cmqcJ7} the definition of an
approximate microcanonical channel operator.  As a first warm-up result, we show
in \cref{z:Z6.vONxH} that an approximate
microcanonical channel operator acts as a channel analog of a typical projector
for a thermal channel: It always assigns high weight to the $n$-fold tensor
product of a thermal channel associated with the same charge values $q_j$.
In \cref{z:RH9GpkuM}, we show how to recover the
thermal channels derived in \cref{z:u4ZpmFvZ} from
the microcanonical channel.
We finally dive in \cref{z:STrCXktf} into the details of
our construction of an approximate microcanonical channel operator.

\subsection{A constrained channel postselection theorem}
\label{z:05ekSHxY}

An intermediate result in this section can be of independent interest in the
context of the theory of i.i.d.\@ channels in quantum information theory.
Specifically, we prove a tighter (``constrained'') version of a postselection
theorem~\cite{R72,R74,R73,R75,R76} for quantum channels, in
which the integrand of the upper bound in the postselection operator inequality
includes a fidelity term, generalizing the constrained state postselection
theorems in~\cite[Appendix~B]{R74} and
\R\cite{R75} as well as the channel postselection
theorem in~\cite[Corollary~3.3]{R73}.

To state our postselection theorem, we introduce the following \emph{de Finetti
  state}:
\begin{align}
  \zeta_{R^n} = \operatorname{tr}_{A^n}\biggl[{ \int d\psi_{AR}\, \lvert {\psi}\rangle \mkern -1.8mu\relax \langle{\psi}\rvert _{AR} }\biggr]\ ,
  \label{z:4bxW5WdC}
\end{align}
where the integration is carried out of the measure on the pure states
$\lvert {\psi}\rangle _{AR}$ of $AR$ induced by the Haar measure on $\mathrm{U}({d_Ad_R})$, and where
the measure is normalized in such a way that $\operatorname{tr}\bigl({\zeta_{R^n}}\bigr) = 1$.
The de Finetti state appears in quantum versions of de Finetti's
theorem~\cite{R114,R115,R116} and in the postselection
technique~\cite{R72}.

\begin{theorem}[Constrained channel postselection theorem]
  \label{z:UEUfoDLC}
  Let $A,B$ be quantum systems and let $R\simeq A$.  Let $n>0$.  There exists a
  universal measure $d\mathcal{M}_{A\to B}$ on quantum channels $A\to B$ such
  that for any permutation-invariant quantum channel $\mathcal{E}_{A^n\to B^n}$,
  and for any permutation-invariant operators $X_{R^n}, Y_{R^n}$,
  \begin{align}
    \hspace*{3em}&\hspace*{-3em}
    X_{R^n}^\dagger Y_{R^n}\,
    E_{B^nR^n}\,
    Y_{R^n}^\dagger X_{R^n}
    \nonumber\\
    &\leq \operatorname{poly}(n)
    \int dM_{BR}
    \,
      M_{BR}^{\otimes n}
    \,
    F^2\Bigl({
    \mathcal{M}_{A\to B}^{\otimes n}\bigl({ X_{R^n} \zeta_{A^nR^n} X^\dagger_{R^n} }\bigr)
      \, ,\,
    \mathcal{E}_{A^n\to B^n}\bigl({ Y_{R^n} \zeta_{A^nR^n} Y_{R^n}^\dagger }\bigr)
    }\Bigr)\ ,
  \end{align}
  where $M_{BR} \equiv \mathcal{M}_{A\to B}(\Phi_{A:R})$ is the Choi matrix of
  $\mathcal{M}_{A\to B}$, where $dM_{BR}$ is the measure on Choi matrices
  corresponding to the channel measure $d\mathcal{M}_{A\to B}$, where
  $E_{B^nR^n} \equiv \mathcal{E}_{A^n\to B^n}({\Phi_{A^n:R^n}})$ is the Choi
  matrix of $\mathcal{E}_{A^n\to B^n}$, and where
  $\lvert {\zeta}\rangle _{A^nR^n} \equiv \zeta_{R^n}^{1/2}\,\lvert {\Phi_{A^n:R^n}}\rangle $ with
  $\zeta_{R^n}$ defined in~\eqref{z:4bxW5WdC}.
\end{theorem}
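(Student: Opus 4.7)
The plan is to combine the standard channel postselection technique with a Uhlmann-type purification argument to extract the fidelity weighting. The starting observation is that every relevant object enjoys some permutation symmetry: $\mathcal{E}_{A^n\to B^n}$ is permutation invariant by assumption, and since $\lvert\Phi_{A^n:R^n}\rangle$ is invariant under the diagonal action of $S_n$ on $A^n\otimes R^n$, the Choi matrix $E_{B^nR^n}$ is permutation invariant on $(BR)^{\otimes n}$. Together with the assumed permutation invariance of $X_{R^n}, Y_{R^n}$ on $R^n$, the entire left-hand side of the claim lives in the commutant of a suitable $S_n$ action. Schur--Weyl duality then decomposes this commutant into a direct sum over Young diagrams $\lambda \vdash n$ of isotypic components, reducing the operator inequality to a componentwise analysis.

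The measure $d\mathcal{M}$ on channels, and the polynomial prefactor, would be generated from the standard identity
\[
\int d\psi_{AR}\, \lvert\psi\rangle\langle\psi\rvert_{AR}^{\otimes n}
\;=\; \binom{d_A d_R + n - 1}{n}^{-1} \Pi_{\mathrm{sym}}^{(AR)^n}\ ,
\]
which expresses the projector onto the totally symmetric subspace of $(AR)^{\otimes n}$ as a Haar integral over product pure states.  Partial tracing over $A^n$ yields exactly the de Finetti state $\zeta_{R^n}$, while each pure state $\lvert\psi\rangle_{AR}$ defines, via its Schmidt decomposition into the $R\simeq A$ reference, a rank-one Choi matrix; a trace-preserving rescaling turns this into a completely positive trace-preserving map $\mathcal{M}_{A\to B}$, and the pushed-forward Haar measure plays the role of $d\mathcal{M}$.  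The binomial coefficient is the $\operatorname{poly}(n)$ factor in the conclusion, together with additional $\dim\mathcal{U}_\lambda$ factors from the Schur--Weyl step.

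The delicate part is introducing the fidelity weighting.  My plan is to invoke Uhlmann's theorem to choose joint purifications $\lvert\tilde{\mathcal{M}}\rangle$ and $\lvert\tilde{\mathcal{E}}\rangle$ of $\mathcal{M}^{\otimes n}(X_{R^n}\zeta X_{R^n}^\dagger)$ and of $\mathcal{E}(Y_{R^n}\zeta Y_{R^n}^\dagger)$ on a common space $B^nR^nE$ built from Stinespring dilations, with the property that $|\langle\tilde{\mathcal{M}}|\tilde{\mathcal{E}}\rangle|^2 = F^2$.  Expanding $\lvert\tilde{\mathcal{E}}\rangle\langle\tilde{\mathcal{M}}\rvert \cdot \lvert\tilde{\mathcal{M}}\rangle\langle\tilde{\mathcal{E}}\rvert$ and tracing out the environment should produce, up to reorganization of factors, a term proportional to $M_{BR}^{\otimes n}$ together with $X^\dagger Y E_{B^nR^n} Y^\dagger X$, weighted by the $F^2$ overlap.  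The main obstacle will be reconciling the two different symmetries in play---the $R^n$-only permutation symmetry of $X_{R^n}, Y_{R^n}$ versus the diagonal $(AR)^n$ symmetry of $\zeta$---while keeping the operator ordering on the left-hand side exactly $X^\dagger Y E Y^\dagger X$ and not a trace or other scalar functional of it.  Careful bookkeeping of Schur--Weyl dimensions across all $\lambda \vdash n$ is required to confirm that the prefactor remains polynomial in $n$.
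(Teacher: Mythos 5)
Your proposal assembles the right ingredients (Schur--Weyl decomposition, the symmetric-subspace Haar identity, the de Finetti state, and Uhlmann's theorem), but two load-bearing pieces of the argument are missing or misdirected, so as written the plan does not close.

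First, your proposed measure on channels does not actually live in the right space. You construct $d\mathcal{M}$ from the Haar ensemble of pure states $\lvert\psi\rangle_{AR}$ and then rescale to get ``a completely positive trace-preserving map $\mathcal{M}_{A\to B}$.'' But a pure state on $A R$ yields, after your Schmidt trick, a rank-one operator on $AR$, not a Choi matrix on $BR$: the output system $B$ nowhere appears, and in general $d_B\neq d_A$. The paper instead fixes a reference dilation $\lvert\Psi^0\rangle_{EBR}$ with $E=E_B E_R$ and $\operatorname{tr}_{EB}\Psi^0_{EBR}=\mathds{1}_R$, and defines the measure by Haar-twirling $W_{EB}$ on that ket, which yields genuine Stinespring dilations of channels $A\to B$. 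The crucial computation is then the Haar-twirl formula (Lemma~\ref*{z:smOb5fVV} / Proposition~\ref*{z:7r5n0QDD}) giving
$\int dW\, W^{\otimes n}[\Psi^0]^{\otimes n}W^{\dagger\otimes n}
= \Pi^{\mathrm{Sym}}_{(EBR)^n}\sum_\lambda ({d_{\mathcal{P}_\lambda}}/{d_{\mathcal{Q}_\lambda}})\Pi^\lambda_{R^n}
= \alpha^{-1}\zeta_{R^n}^{-1}\Pi^{\mathrm{Sym}}$.
Without that specific $\zeta_{R^n}^{-1}$ dependence, you cannot reproduce $\Pi^{\mathrm{Sym}} = \alpha\,\zeta_{R^n}\Xi$, which is what lets you sandwich the left-hand side by two copies of $\Xi$ while peeling off $\zeta_{R^n}^{1/2}$ factors that later become the de Finetti purification appearing inside the fidelity.

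Second, the mechanism by which $M^{\otimes n}_{BR}$ and the $F^2$ factor appear as operator and scalar respectively is not the ``expand $\lvert\tilde{\mathcal E}\rangle\langle\tilde{\mathcal M}\rvert\cdot\lvert\tilde{\mathcal M}\rangle\langle\tilde{\mathcal E}\rvert$'' step you suggest; that produces a rank-one operator on the purifying space, not the asymmetric product form $X^\dagger Y E Y^\dagger X$ dominated by a positive-operator-weighted integral. The actual argument needs: (i) the observation that $E_{B^nR^n}$ is supported inside $\Pi^{\mathrm{Sym}}_{(EBR)^n}$, so the left-hand side can be conjugated by $\Pi^{\mathrm{Sym}}=\alpha\zeta_{R^n}\Xi$ on both sides; (ii) a Carath\'eodory decomposition of the Haar-average superoperator into a $\operatorname{poly}(n)$-sized convex combination of i.i.d.\ terms $(\Psi^{(W_\ell)})^{\otimes n}$; (iii) the pinching inequality (Lemma~\ref*{z:h6tAs5Us}) to kill the off-diagonal $\ell\neq\ell'$ cross terms at the cost of a $\operatorname{poly}(n)$ prefactor; and (iv) a further average over the reference unitary $W'$ to restore the Haar measure. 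Only after these steps is the outer pair of $(\Psi^{(W)})^{\otimes n}$ factors traced to give $M_{BR}^{\otimes n}$ and the inner overlap bounded by the fidelity via Uhlmann. Your sketch skips (i)--(iv) entirely and hopes the reorganization ``should produce'' the result; that is the part of the proof that actually carries the content, and it is where the $\operatorname{poly}(n)$ factor and the precise form of the purifications in the fidelity both originate.
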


The arguments of the fidelity term can be also be reformulated in terms of the
Choi matrices $M_{BR}$ and $E_{B^nR^n}$ as
$X_{R^n}\zeta_{R^n}^{1/2}\,M_{BR}^{\otimes n}\,\zeta_{R^n}^{1/2} X^\dagger_{R^n}$
and $Y_{R^n}\zeta_{R^n}^{1/2}\,E_{B^nR^n}\,\zeta_{R^n}^{1/2} Y^\dagger_{R^n}$,
respectively.

A suitable choice of the operators $X_{R^n}$, $Y_{R^n}$ can help derive upper
bounds on the fidelity term by influencing the inputs to
$\mathcal{M}^{\otimes n}$ and $\mathcal{E}$.  We can choose, for instance,
$X_{R^n}$, $Y_{R^n}$ to be typical projectors with respect to some state of
interest or projectors onto selected Schur-Weyl blocks.  A suitable choice for
these operators enables us to derive the following corollary, suitable for upper
bounding the application of a permutation-invariant channel on an arbitrary
i.i.d.\@ input state:

\begin{corollary}
  \label{z:pnP5Wiy4}
  Let $\mathcal{E}_{A^n\to B^n}$ be any permutation-invariant quantum
  channel. Let $\sigma_R$ be any state and let
  $\lvert {\sigma}\rangle _{AR} = \sigma_R^{1/2}\lvert {\Phi_{A:R}}\rangle $.  Let $w>0$.  Then there
  exists $\Delta_{B^nR^n}\geq 0$ with
  $\operatorname{tr}({\Delta_{B^nR^n}}) \leq \operatorname{poly}({n}) {e}^{-n w/2}$ such that
  \begin{align}
    \mathcal{E}_{A^n\to B^n}\bigl({\sigma_{AR}^{\otimes n}}\bigr)
    \leq
    \operatorname{poly}(n) \biggl[{\int \! dM_{BR} \, \mathcal{M}^{\otimes n}\bigl({\sigma_{AR}^{\otimes n}}\bigr)
    \max_{\substack{\tau_R:\\ F(\tau_R, \sigma_R)\geq e^{-w}}}
    F^2\bigl({ \mathcal{M}^{\otimes n}\bigl({\tau_{AR}^{\otimes n}}\bigr) ,
        \mathcal{E}\bigl({\tau_{AR}^{\otimes n}}\bigr) }\bigr) }\biggr]
    \ +\     \Delta_{B^nR^n}
    \ ,
  \end{align}
  where $M_{BR} \equiv \mathcal{M}_{A\to B}(\Phi_{A:R})$ and where
  $\lvert {\tau}\rangle _{AR} \equiv \tau_R^{1/2} \lvert {\Phi_{A:R}}\rangle $.
\end{corollary}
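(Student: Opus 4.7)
The plan is to deduce \cref{z:pnP5Wiy4} from the constrained channel postselection theorem \cref{z:UEUfoDLC} by applying it with permutation-invariant operators $X_{R^n}=Y_{R^n}$ that restrict the de~Finetti state to its $\sigma$-typical component. Specifically, I will take $X_{R^n}=Y_{R^n}=\Pi^{(\sigma,w)}_{R^n}$, where $\Pi^{(\sigma,w)}_{R^n}$ is a permutation-invariant projector commuting with $\sigma_R^{\otimes n}$ and satisfying two properties: \textbf{(P1)} $\operatorname{tr}\bigl[(\mathds{1}-\Pi^{(\sigma,w)}_{R^n})\sigma_R^{\otimes n}\bigr]\leq\operatorname{poly}(n)\,e^{-nw/2}$; and \textbf{(P2)} $\Pi^{(\sigma,w)}_{R^n}\,\zeta_{A^nR^n}\,\Pi^{(\sigma,w)}_{R^n}$ is dominated, up to exponentially small corrections, by a convex combination of i.i.d.\@ states $\tau_{AR}^{\otimes n}$ with $F(\tau_R,\sigma_R)\geq e^{-w}$. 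Such a projector can be built by combining a Schur--Weyl block projector (keeping Young diagrams $\lambda$ with normalized shape $\lambda/n$ close to $\mathrm{spec}(\sigma_R)$) with an eigenbasis-alignment filter obtained via Haar-twirl integration analogous to the one used in the proof of \cref{z:UEUfoDLC}.

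With $X=Y=\Pi\equiv\Pi^{(\sigma,w)}_{R^n}$, the left-hand side of \cref{z:UEUfoDLC} reduces to $\Pi\,E_{B^nR^n}\,\Pi$. Sandwiching both sides of the resulting operator inequality by $\sigma_R^{1/2\otimes n}$ on $R^n$, and using $[\sigma_R^{\otimes n},\Pi]=0$ together with the identity $\sigma_R^{1/2\otimes n}M_{BR}^{\otimes n}\sigma_R^{1/2\otimes n}=\mathcal{M}^{\otimes n}(\sigma_{AR}^{\otimes n})$, yields $\Pi\,\mathcal{E}(\sigma_{AR}^{\otimes n})\,\Pi\leq\operatorname{poly}(n)\int dM_{BR}\,\mathcal{M}^{\otimes n}(\sigma_{AR}^{\otimes n})\,F^2\bigl(\mathcal{M}^{\otimes n}(\Pi\zeta\Pi),\mathcal{E}(\Pi\zeta\Pi)\bigr)$. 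The elementary operator inequality $\mathcal{E}(\sigma_{AR}^{\otimes n})\leq 2\Pi\mathcal{E}(\sigma_{AR}^{\otimes n})\Pi+2\Pi^\perp\mathcal{E}(\sigma_{AR}^{\otimes n})\Pi^\perp$ (valid for any positive operator and complementary projectors) then identifies the error term $\Delta_{B^nR^n}\equiv 2\Pi^\perp\mathcal{E}(\sigma_{AR}^{\otimes n})\Pi^\perp\geq 0$; trace preservation of $\mathcal{E}$ gives $\operatorname{tr}_{B^n}E_{B^nR^n}=\mathds{1}_{R^n}$, so $\operatorname{tr}[\Delta_{B^nR^n}]=2\operatorname{tr}[\Pi^\perp\sigma_R^{\otimes n}]\leq\operatorname{poly}(n)\,e^{-nw/2}$ by property (P1).

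To upper-bound the fidelity term by the maximum in the corollary, I will exploit Schur--Weyl structure: both $\mathcal{M}^{\otimes n}(\Pi\zeta\Pi)$ and $\mathcal{E}(\Pi\zeta\Pi)$ are permutation-invariant on $B^nR^n$, hence block-diagonal under the Schur--Weyl decomposition of $(BR)^{\otimes n}$, with $\operatorname{poly}(n)$-many blocks for fixed $d_B,d_R$. Fidelity decomposes additively over orthogonal blocks, $F=\sum_\lambda F_\lambda$, and Cauchy--Schwarz gives $F^2\leq\operatorname{poly}(n)\max_\lambda F_\lambda^2$. Combining property (P2) with the Schur--Weyl correspondence between irreducible blocks and i.i.d.\@ operators on $\tau^{\otimes n}$ (whose spectrum matches $\lambda/n$), each $F_\lambda^2$ can be identified with $F^2(\mathcal{M}^{\otimes n}(\tau_{AR}^{\otimes n}),\mathcal{E}(\tau_{AR}^{\otimes n}))$ for some $\tau$ with $F(\tau_R,\sigma_R)\geq e^{-w}$. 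Pulling this pointwise (in $\mathcal{M}$) bound inside the integral over $M_{BR}$ yields the claimed inequality.

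The main technical obstacle is the construction of $\Pi^{(\sigma,w)}_{R^n}$ satisfying (P2). A bare Schur--Weyl projector only controls the spectrum of $\tau_R$, whereas the fidelity constraint $F(\tau_R,\sigma_R)\geq e^{-w}$ also restricts the alignment of eigenbases of $\tau_R$ with those of $\sigma_R$; an additional filtering step built via Haar-twirl integration formulas (of the kind used in the proof of \cref{z:UEUfoDLC}) appears necessary. A secondary challenge is the final fidelity-to-max reduction: since fidelity is jointly \emph{concave} rather than convex in its arguments, bounding $F(\mathcal{M}(\Pi\zeta\Pi),\mathcal{E}(\Pi\zeta\Pi))$ by a max over individual $\tau^{\otimes n}$ cannot be done by a naive convexity argument, and must instead go through the Schur--Weyl block-diagonal structure together with the polynomial bound on the number of blocks.
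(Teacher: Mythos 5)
Your proposal heads toward the right theorem but diverges significantly from the paper's proof, and the divergence is precisely where the gaps are.

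The paper does \emph{not} use a projector. It sets $X_{R^n}=Y_{R^n}=L_{R^n}$ with $L_{R^n}\equiv\widehat\sigma_{R^n}^{1/2}\zeta_{R^n}^{-1/2}$, where $\widehat\sigma_{R^n}=\int_{F^2(\sigma,\tau)\geq e^{-w}}d\tau\,\tau_R^{\otimes n}$. This $L_{R^n}$ is Hermitian, $0\leq L\leq\mathds{1}$, but not idempotent. The choice is what makes the whole argument go through with no heavy construction: the crucial identity $L_{R^n}\lvert\zeta\rangle_{A^nR^n}=\widehat\sigma_{R^n}^{1/2}\lvert\Phi_{A^n:R^n}\rangle$ means the fidelity term in \cref{z:UEUfoDLC} becomes a fidelity between states on $\widehat\sigma_{A^nR^n}$, which is manifestly a purification of $\widehat\sigma_{A^n}=\int_{F^2\geq e^{-w}}d\tau\,\tau_A^{\otimes n}$. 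By contrast, you would need to \emph{construct} a projector that both (P1) carries almost all of $\sigma^{\otimes n}$'s mass and (P2) restricts $\zeta$ to the right $\tau$-ensemble; you yourself flag this as the ``main technical obstacle'' and propose an unspecified ``eigenbasis-alignment filter.'' That is the gap: no such construction is provided, and a pure Schur--Weyl block projector demonstrably does not suffice, since it constrains only the spectrum of $\tau_R$ and not its eigenbasis alignment with $\sigma_R$. The paper's $L_{R^n}$ automatically carries the fidelity constraint because $\widehat\sigma$ is \emph{defined} by that constraint.

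The second gap is in your fidelity-to-max step. You propose to use Schur--Weyl block-additivity of the fidelity together with Cauchy--Schwarz, then ``identify'' each block-fidelity $F_\lambda$ with $F\bigl(\mathcal{M}^{\otimes n}(\tau^{\otimes n}),\mathcal{E}(\tau^{\otimes n})\bigr)$ for some $\tau$ with spectrum $\bar\lambda$. This identification fails: the de~Finetti state is uniform on each block (it equals $\zeta_\lambda\Pi^\lambda$), whereas $\tau^{\otimes n}$ restricted to the $\lambda$-block is $q_\lambda(\tau)\otimes\mathds{1}_{\mathcal{P}_\lambda}$, which is not uniform. No single $\tau$ reproduces the restricted de Finetti state on any block, so there is no $\tau$ to identify each $F_\lambda$ with. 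The paper instead observes that $\widehat\sigma_{A^n}$ equals the marginal of $\bar\omega_{A^n\bar R^n}=\int_{F^2\geq e^{-w}}d\tau\,\lvert\tau\rangle\mkern-1.8mu\relax\langle\tau\rvert^{\otimes n}_{A\bar R}$, invokes Carath\'eodory to write $\bar\omega$ as a $\operatorname{poly}(n)$-size convex combination $\sum_\ell\bar\kappa_\ell\,\tau^{(\ell)\otimes n}$, purifies it with a $\operatorname{poly}(n)$-dimensional register $R'$ holding the index $\ell$, uses the unitary equivalence of purifications to equate the fidelity to one involving $\bar\omega_{A^n\bar R^nR'}$, and finally decoheres $R'$ (data processing) so that \cref{z:2jsqnpDK} splits the fidelity as $\sum_\ell\bar\kappa_\ell F\bigl(\mathcal{M}^{\otimes n}(\tau^{(\ell)\otimes n}),\mathcal{E}(\tau^{(\ell)\otimes n})\bigr)\leq\max_\ell(\cdots)\leq\max_{\tau:F^2\geq e^{-w}}(\cdots)$. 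No Schur--Weyl block argument on $(BR)^n$ is involved, and joint concavity of fidelity never becomes an obstacle because the convex-combination reduction goes in the favorable direction (max over the ensemble). I suggest you replace the projector construction and the block-decomposition step with this $L_{R^n}$-based and Carath\'eodory-based route.
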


We prove \cref{z:UEUfoDLC} and
\cref{z:pnP5Wiy4} in
\cref{z:WfwOwBXs,z:6QVwEcrI}.

We also provide proofs of two statements that are used in the proof of
\cref{z:UEUfoDLC}, but which can be of independent
interest and which we state for reference.  To a large extent, they are part of
the field's folklore and follow directly from other well-known results; cf.\@ in
particular \R\cite{R117,R83,R118}.
A first lemma simply determines the block-diagonal structure of the de Finetti
state~\eqref{z:4bxW5WdC} in the Hilbert space structure
imposed by Schur-Weyl duality.  A brief introduction to Schur-Weyl duality,
along with relevant definitions and notation conventions, appear in
\cref{z:pJbOynb9}.  To understand the following lemma at this stage, it
suffices to know that $\{{ \Pi^\lambda_{R^n} }\}_\lambda$ are a set of orthogonal
projectors with $\sum_{\lambda} \Pi^\lambda_{R^n} = \mathds{1}_{R^n}$, where
$\lambda$ ranges over an index set that we denote by $\Young(d_R, n)$;
furthermore, $d_{\mathcal{Q}_\lambda}, d_{\mathcal{P}_\lambda}$ are positive
integers with
$\operatorname{tr}({\Pi^\lambda_{R^n}}) = d_{\mathcal{Q}_\lambda} d_{\mathcal{P}_\lambda}$ and
$d_{\mathcal{Q}_\lambda} \leq \operatorname{poly}(n)$.
\begin{lemma}[Schur-Weyl structure of the de~Finetti state]
  \label{z:TvA2E9KA}
  The de Finetti state has the following decomposition in Schur-Weyl blocks:
  \begin{align}
    \zeta_{R^n} &=
    \frac1{ d_{\mathrm{Sym}(n,d_R^2)} }
    \sum_{\lambda\in\Young(d_R, n)}
    \frac{d_{\mathcal{Q}_\lambda}}{d_{\mathcal{P}_\lambda}} \Pi^\lambda_{R^n}\ ,
  \end{align}
  where $d_{\mathrm{Sym}(n,d_R^2)}$ is the dimension of the symmetric subspace
  of $n$ copies of $\mathbb{C}^{d_R^2}$.
\end{lemma}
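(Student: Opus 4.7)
The plan is to leverage two standard Haar/Schur--Weyl identities, then reduce the partial trace to a character computation that is resolved by the Schur--Weyl decomposition of $R^n$.

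First, I will invoke the well-known Haar-integration formula
\begin{align*}
\int d\psi \, \lvert {\psi}\rangle \mkern -1.8mu\relax \langle{\psi}\rvert _{AR}^{\otimes n}
= \frac{1}{d_{\mathrm{Sym}(n,d_A d_R)}}\, \Pi^{(AR)^n}_{\mathrm{Sym}}\ ,
\end{align*}
where the integrand is interpreted as $n$ copies of the Haar-random pure state $\lvert {\psi}\rangle _{AR}$ and $\Pi^{(AR)^n}_{\mathrm{Sym}}$ is the projector onto the symmetric subspace of $(AR)^n$ under the diagonal permutation action. Recalling $A\simeq R$ (so $d_A d_R = d_R^2$), this reduces the computation of $\zeta_{R^n}$ to the computation of $\operatorname{tr}_{A^n}\bigl[{\Pi^{(AR)^n}_{\mathrm{Sym}}}\bigr]$ up to the stated prefactor.

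Next, I will write the symmetric-subspace projector using the standard group-averaging formula $\Pi^{(AR)^n}_{\mathrm{Sym}} = (1/n!)\sum_{\pi\in S_n} \pi_{A^n}\otimes \pi_{R^n}$, where $\pi_{A^n}$ and $\pi_{R^n}$ denote the usual permutation representations. Since $\operatorname{tr}[{\pi_{A^n}}] = d_A^{c(\pi)}$, where $c(\pi)$ is the number of cycles of $\pi$, tracing out $A^n$ yields
\begin{align*}
\operatorname{tr}_{A^n}\bigl[{\Pi^{(AR)^n}_{\mathrm{Sym}}}\bigr]
= \frac{1}{n!}\sum_{\pi\in S_n} d_A^{c(\pi)}\, \pi_{R^n}\ .
\end{align*}
The function $\pi\mapsto d_A^{c(\pi)}$ is precisely the character of the natural $S_n$-representation on $A^n$, so Schur--Weyl duality on $A^n$ gives $d_A^{c(\pi)} = \sum_{\lambda\in\Young(d_A,n)} d_{\mathcal{Q}_\lambda}\,\chi_\lambda(\pi)$, where $\chi_\lambda$ is the character of the $S_n$-irrep $\mathcal{P}_\lambda$.

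Finally, I will apply the standard central-projector identity
\begin{align*}
\sum_{\pi\in S_n} \chi_\lambda(\pi)\, \pi_{R^n}
= \frac{n!}{d_{\mathcal{P}_\lambda}}\,\Pi^\lambda_{R^n}\ ,
\end{align*}
valid for the $\lambda$-isotypic projector $\Pi^\lambda_{R^n}$ (with $\lambda\in\Young(d_R,n)$, and recalling that $S_n$-irreducible characters are real). Substituting and using $d_A = d_R$ collapses the double sum to a single sum over $\lambda$ and yields
\begin{align*}
\operatorname{tr}_{A^n}\bigl[{\Pi^{(AR)^n}_{\mathrm{Sym}}}\bigr]
= \sum_{\lambda\in\Young(d_R,n)} \frac{d_{\mathcal{Q}_\lambda}}{d_{\mathcal{P}_\lambda}}\, \Pi^\lambda_{R^n}\ ,
\end{align*}
which, combined with the $1/d_{\mathrm{Sym}(n,d_R^2)}$ prefactor from the Haar identity, is the stated decomposition. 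The main point where one must be careful is lining up the two Schur--Weyl decompositions (on $A^n$ and on $R^n$), but since $A\simeq R$ the indexing set $\Young(d_R,n)$ and the multiplicities $d_{\mathcal{Q}_\lambda}$ agree on both sides, and no genuine obstacle arises.
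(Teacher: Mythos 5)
Your proof is correct, and it takes a genuinely different, more elementary route than the paper's. The paper obtains this lemma as a byproduct of a more general machinery: it first proves \cref{z:smOb5fVV} (essentially \cref{z:7r5n0QDD}), which computes the Haar twirl $\int dW_S\,W_S^{\otimes n}[\Psi^0_{SR}]^{\otimes n}W_S^{\otimes n\dagger}$ for an arbitrary Choi-normalized $\Psi^0$ via the integration formula \cref{z:ewy6vxah}; it then specializes to $\lvert\Psi^0\rangle = \lvert\Phi_{S:R}\rangle$, observes that the twirl trivially equals $\mathds{1}_{R^n}$ (since $\operatorname{tr}_S[\Phi_{S:R}]=\mathds{1}_R$), and deduces the Schur-Weyl coefficients of $\zeta_{R^n}$ by inverting the resulting block-diagonal identity. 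Your argument instead computes $\operatorname{tr}_{A^n}[\Pi^{\mathrm{Sym}}_{(AR)^n}]$ head-on: expand the symmetric projector as a permutation average, trace out $A^n$ copy-by-copy to produce the cycle-count factor $d_A^{c(\pi)}$, recognize this as the Schur-Weyl character expansion $\sum_\lambda d_{\mathcal{Q}_\lambda}\chi_\lambda(\pi)$, and then collapse the $\pi$-sum using the central-projector formula (the paper's \cref{z:HrtNdJyg}). Both routes rest on the same underlying representation theory, but yours is self-contained and avoids invoking the Haar-twirl integration formula or the intermediate lemma; the paper's route is less direct here but earns its keep because the same general lemma is needed elsewhere (for the constrained postselection theorem). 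One small presentational remark: the final worry about "lining up" the Schur-Weyl decompositions on $A^n$ versus $R^n$ is not really an obstacle in your argument --- you only use the $A^n$ side for a scalar (the character), so the issue is just that $d_A = d_R$ makes the index set and multiplicities $d_{\mathcal{Q}_\lambda}$ coincide, as you correctly note.
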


A second intermediate claim used in the proof of
\cref{z:UEUfoDLC} concerns a specific average over random
unitaries.  Specifically, we consider a nonnormalized pure state
$\lvert {\Psi^0}\rangle _{SR}$ over two systems $S,R$, such that
$\operatorname{tr}_S[{\Psi^0_{SR}}] = \mathds{1}_R$.  Such an operator could be the Choi matrix of
an isometric quantum channel.  We compute the average, over all unitaries $W_S$
according to the Haar measure, of the $n$-fold tensor product of the rotated
state $W_S\Psi^0_{SR} W_S^\dagger$.  This average can be viewed as a channel
version of the average in~\eqref{z:4bxW5WdC} that defines the
de~Finetti state.  In the following proposition, $\Pi^{\mathrm{Sym}}_{(SR)^n}$
denotes the symmetric subspace of $({\mathscr{H}_S\otimes\mathscr{H}_R})^{\otimes n}$, i.e., the
subspace spanned by all states $\lvert {\psi}\rangle _{(SR)^n}$ that are invariant under any
permutation of the copies of the system $(SR)$.
\begin{proposition}[Haar twirl of an isometric channel's Choi matrix]
  \label{z:7r5n0QDD}
  Let $S$, $R$ be any quantum systems with $d_S\geq d_R$, and let $n>0$.  Let
  $\lvert {\Psi^0}\rangle _{SR}$ be any ket such that $\operatorname{tr}_S[{ \Psi^0_{SR} }] = \mathds{1}_R$.
  Then
  \begin{align}
    \int dW_{S} \, W_{S}^{\otimes n}\,
    [{\Psi^0_{SR}}]^{\otimes n} \, W_{S}^{\otimes n\,\dagger}
    =
    \Pi^{\mathrm{Sym}}_{(SR)^n}
    \sum_{\lambda\in\Young(d_R,n)}
    \frac{d_{\mathcal{P}_\lambda}}{d_{\mathcal{Q}_\lambda}}
    \Pi_{R^n}^\lambda
    = d_{\mathrm{Sym}(n,d_R^2)}^{-1}  \zeta_{R^n}^{-1} \,
    \Pi_{(SR)^n}^{\mathrm{Sym}} \ ,
    \label{z:axznnWkb}
  \end{align}
  further noting that $\bigl[{ \zeta_{R^n}, \Pi_{(SR)^n}^{\mathrm{Sym}} }\bigr] = 0$.
\end{proposition}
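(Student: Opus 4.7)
The plan is to exploit Howe duality for the commuting $\mathrm{U}(d_S)\times\mathrm{U}(d_R)$-action on the symmetric subspace of $(SR)^n$, which is the cleanest route from the symmetries of the Haar twirl to the stated block-diagonal form.

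First I would observe that the hypothesis $\operatorname{tr}_S[\Psi^0_{SR}]=\mathds{1}_R$ forces $\lvert{\Psi^0}\rangle_{SR}$ to have all $d_R$ Schmidt coefficients equal to one, so $\lvert{\Psi^0}\rangle_{SR}=(V\otimes\mathds{1}_R)\lvert{\Phi_{A:R}}\rangle$ for some isometry $V:\mathscr{H}_A\to\mathscr{H}_S$ with $A\simeq R$. Any two such kets with $R$-marginal $\mathds{1}_R$ differ by a unitary on $S$ (using $d_S\geq d_R$), and right-invariance of the Haar measure implies that the averaged operator $\bar\rho\equiv\int dW_S\,W_S^{\otimes n}\,[\Psi^0_{SR}]^{\otimes n}\,W_S^{\otimes n\dagger}$ depends only on $\operatorname{tr}_S\Psi^0=\mathds{1}_R$.

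Next I would collect three symmetries that pin down $\bar\rho$: (a) it is supported on $\Pi^{\mathrm{Sym}}_{(SR)^n}$, since $[\Psi^0]^{\otimes n}$ is and $W_S^{\otimes n}$ commutes with the diagonal $(SR)$-permutations; (b) it commutes with $W_S^{\otimes n}$ for every $W_S\in\mathrm{U}(d_S)$ by construction; (c) it commutes with $V_R^{\otimes n}$ for every $V_R\in\mathrm{U}(d_R)$, because $(\mathds{1}_S\otimes V_R)\Psi^0(\mathds{1}_S\otimes V_R^\dagger)$ still has $R$-marginal $\mathds{1}_R$ and therefore yields the same $\bar\rho$. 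By Howe duality, $\Pi^{\mathrm{Sym}}_{(SR)^n}\cdot\mathscr{H}_{SR}^{\otimes n}\simeq\bigoplus_{\lambda\in\Young(d_R,n)}\mathcal{Q}^S_\lambda\otimes\mathcal{Q}^R_\lambda$, so Schur's lemma applied to each commuting action forces $\bar\rho=\sum_\lambda\alpha_\lambda\,\Pi^{\mathrm{Sym}}_{(SR)^n}\big\rvert_\lambda$ for scalars $\alpha_\lambda$.

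To compute $\alpha_\lambda$ I would use the partial-trace identity $\operatorname{tr}_{S^n}\bar\rho=[\operatorname{tr}_S\Psi^0]^{\otimes n}=\mathds{1}_{R^n}$, immediate from cyclicity applied to $W_S^{\otimes n}$ inside the Haar integral. I would then write $\Pi^{\mathrm{Sym}}_{(SR)^n}\big\rvert_\lambda$ inside the ambient $(\mathcal{Q}^S_\lambda\otimes\mathcal{P}_\lambda)\otimes(\mathcal{Q}^R_\lambda\otimes\mathcal{P}_\lambda)$ block as $\mathds{1}_{\mathcal{Q}^S_\lambda}\otimes\mathds{1}_{\mathcal{Q}^R_\lambda}\otimes\lvert{\phi_\lambda}\rangle\langle{\phi_\lambda}\rvert$, where $\lvert{\phi_\lambda}\rangle$ spans the one-dimensional $S_n$-invariant subspace of $\mathcal{P}_\lambda\otimes\mathcal{P}_\lambda$; Schur orthogonality gives $\operatorname{tr}_{\mathcal{P}_\lambda}\lvert{\phi_\lambda}\rangle\langle{\phi_\lambda}\rvert=\mathds{1}_{\mathcal{P}_\lambda}/d_{\mathcal{P}_\lambda}$, so $\operatorname{tr}_{S^n}\Pi^{\mathrm{Sym}}_{(SR)^n}\big\rvert_\lambda=(d_{\mathcal{Q}_\lambda}/d_{\mathcal{P}_\lambda})\,\Pi^\lambda_{R^n}$. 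This pins down $\alpha_\lambda=d_{\mathcal{P}_\lambda}/d_{\mathcal{Q}_\lambda}$ and yields the first claimed equality. The second equality then follows by inverting \cref{z:TvA2E9KA} to write $d_{\mathrm{Sym}(n,d_R^2)}^{-1}\zeta_{R^n}^{-1}=\sum_\lambda(d_{\mathcal{P}_\lambda}/d_{\mathcal{Q}_\lambda})\,\Pi^\lambda_{R^n}$ and using $\Pi^\lambda_{R^n}\,\Pi^{\mathrm{Sym}}_{(SR)^n}=\Pi^{\mathrm{Sym}}_{(SR)^n}\big\rvert_\lambda$. Finally, $[\zeta_{R^n},\Pi^{\mathrm{Sym}}_{(SR)^n}]=0$ follows because $\zeta_{R^n}$ is a sum of $S_n$-invariant operators on $R^n$ and hence commutes with every diagonal $(SR)$-permutation.

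The main obstacle is the Howe/Schur-Weyl bookkeeping: verifying that the $(\mu,\nu)$-block of $\Pi^{\mathrm{Sym}}_{(SR)^n}$ inside the full decomposition $\mathscr{H}_S^{\otimes n}\otimes\mathscr{H}_R^{\otimes n}=\bigoplus_{\mu,\nu}(\mathcal{Q}^S_\mu\otimes\mathcal{P}_\mu)\otimes(\mathcal{Q}^R_\nu\otimes\mathcal{P}_\nu)$ is non-zero only when $\mu=\nu$ and then equals the one-dimensional $S_n$-invariant line in $\mathcal{P}_\mu\otimes\mathcal{P}_\nu$, which rests on $\dim(\mathcal{P}_\mu\otimes\mathcal{P}_\nu)^{S_n}=\delta_{\mu\nu}$ from Schur orthogonality for the real irreps of $S_n$; a basis-free formulation via the canonical intertwiner $\mathcal{P}_\mu\to\mathcal{P}_\mu$ avoids choosing any explicit basis of the $S_n$-isotypes.
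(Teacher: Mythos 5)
Your proof of the first equality is correct and takes a genuinely different route from the paper's.  The paper (via Lemma~\cref{z:smOb5fVV}) plugs $[\Psi^0]^{\otimes n}$ into the explicit Haar-twirl integration formula of \cref{z:ewy6vxah}, computes the partial trace over $S^n$ by hand, and then uses \cref{z:Qql-Fkhq} to move Schur-Weyl projectors from $S^n$ to $R^n$.  You instead characterize $\bar\rho$ by three invariances (support on $\Pi^{\mathrm{Sym}}_{(SR)^n}$, and commutation with $W_S^{\otimes n}$ and $V_R^{\otimes n}$ for all unitaries), invoke Howe $\mathrm{GL}\textrm{--}\mathrm{GL}$ duality to reduce each block to a scalar by Schur's lemma, and fix the scalar via $\operatorname{tr}_{S^n}\bar\rho=\mathds{1}_{R^n}$.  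Your symmetry argument bypasses the explicit integral entirely and is cleaner; the paper's is a self-contained computation from a cited formula.  Both are valid derivations of the block structure.

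However, there is a genuine gap in your final step.  Your partial-trace calculation gives $\operatorname{tr}_{S^n}\Pi^{\mathrm{Sym}}_{(SR)^n}\big\rvert_\lambda = \bigl(\dim(\mathcal{Q}_\lambda)/d_{\mathcal{P}_\lambda}\bigr)\,\Pi^\lambda_{R^n}$ where $\dim(\mathcal{Q}_\lambda)$ is the dimension of the $\mathrm{U}(d_S)$-irreducible representation labelled by $\lambda$, hence $\alpha_\lambda = d_{\mathcal{P}_\lambda}/\dim_{\mathrm{U}(d_S)}(\mathcal{Q}_\lambda)$.  But \cref{z:TvA2E9KA} expresses the Schur-Weyl coefficients of $\zeta_{R^n}$ via $\dim_{\mathrm{U}(d_R)}(\mathcal{Q}_\lambda)$, the $\mathrm{U}(d_R)$-irrep dimension, since $\zeta_{R^n}$ lives on $R^n$.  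For $d_S>d_R$ — precisely the regime in which this proposition is later used, with $S\equiv EB$ and $d_S=d_E d_B = d_B^2 d_R$ — these two dimensions differ for every nontrivial $\lambda$, so ``the second equality then follows by inverting \cref{z:TvA2E9KA}'' is not justified, and the displayed second equality in \cref{z:axznnWkb} in fact fails.  A quick $n=1$ check makes the mismatch concrete: the Haar twirl equals $\mathds{1}_{SR}/d_S$, with trace $d_R$, while $d_{\mathrm{Sym}(1,d_R^2)}^{-1}\zeta_{R}^{-1}\Pi^{\mathrm{Sym}}_{SR}=\mathds{1}_{SR}/d_R$, with trace $d_S$.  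Your first equality and your $\alpha_\lambda$ are correct; the ``inverse-de-Finetti'' rewriting that is consistent with them reads $d_{\mathrm{Sym}(n,d_S^2)}^{-1}\zeta_{S^n}^{-1}\,\Pi^{\mathrm{Sym}}_{(SR)^n}$ (with the de~Finetti state on $S^n$), which does match your first formula after moving $\Pi^\lambda_{S^n}$ to $\Pi^\lambda_{R^n}$ on the symmetric subspace via \cref{z:Qql-Fkhq}.  You should flag this discrepancy rather than paper over it.
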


\subsection{Definition of an approximate microcanonical channel operator}
\label{z:-1cmqcJ7}

We aim to define an approximate microcanonical channel operator in such a way
that it can identify channels $\mathcal{E}_{A^n\to B^n}$ displaying suitably
sharp statistics with respect to the constraint operators $C^j_{BR}$.
Specifically, we might demand that the observable
$\overline{H^{j,\sigma}}_{B^nR^n}$ defined in~\eqref{z:RVr9.4VA} has
sharp statistics around $q_j$ on the state
$\mathcal{E}_{A^n\to B^n}(\sigma_{AR}^{\otimes n})$, for any
$\lvert {\sigma}\rangle _{AR} = \sigma_R^{1/2}\lvert {\Phi_{A:R}}\rangle $ and for any $j$.  This
condition cannot hold, however, for all $\sigma_R$:  If $\sigma_R$ has nearly
vanishing eigenvalues, the norm of the observable
$\sigma_R^{-1/2} C^j_{BR} \sigma_R^{-1/2}$ can diverge to infinity, which in
turn can prevent the concentration of the outcomes of
$\sigma_R^{-1/2} C^j_{BR} \sigma_R^{-1/2}$ at large $n$.  (This can be seen, for
instance, in Hoeffding's bound: The exponent in the upper bound on the tail
probability depends on the inverse square of the range of values a random
variable can take.)  To remedy this issue, we ask that the observable
$\overline{H^{j,\sigma}}_{B^nR^n}$ has sharp statistics on
$\mathcal{E}_{A^n\to B^n}(\sigma_{AR}^{\otimes n})$ for any state $\sigma_R$
that satisfies $\sigma_R\geq y\mathds{1}$ for some fixed threshold value $y$, i.e.,
all eigenvalues of $\sigma_R$ are greater than or equal to $y$.  This assumption
ensures that $\overline{H^{j,\sigma}}_{B^nR^n}$ has bounded norm: For any
$\sigma_R\geq y\mathds{1}$, we find
\begin{align}
  \bigl \lVert { \overline{H^{j,\sigma}}_{B^nR^n} }\bigr \rVert 
  \leq
  \bigl \lVert { \sigma_R^{-1/2}\,C^{j}_{BR}\,\sigma_R^{-1/2} }\bigr \rVert 
  \leq \bigl \lVert {C^j_{BR}}\bigr \rVert \,\bigl \lVert {\sigma_R^{-1/2}}\bigr \rVert ^2
  = \frac{ \lVert {C^j_{BR}}\rVert  }{ \lambda_{\mathrm{min}}(\sigma) }
  \leq y^{-1}\lVert {C^j_{BR}}\rVert \ .
  \label{z:iA0NNPDu}
\end{align}
The lower the threshold value $y$ is chosen, the more states $\sigma_R$ the
condition holds for; yet the slower $\overline{H^{j,\sigma}}_{B^nR^n}$
concentrates in $n$.  In the limit $n\to \infty$, we can take $y\to 0$, meaning
that the condition includes all full-rank states $\sigma_R$.

\begin{definition}[Approximate microcanonical channel operator]
  \label{z:Caf8GVY-}
  An operator $P_{B^nR^n}$ satisfying $0\leq P \leq \mathds{1}$ is called an
  \emph{$(\eta,\epsilon,\delta, y, \nu, \eta',\epsilon',\delta', y',
    \nu')$-approximate microcanonical channel operator} with respect to
  $\{{ (C^j_{BR}, q_j) }\}$ if the following two conditions hold.  The conditions
  are formulated in terms of
  $P_{B^nR^n}^\perp \equiv \mathds{1}_{B^nR^n} - P_{B^nR^n}$ and use the shorthand
  $\lvert {\sigma_{AR}}\rangle  \equiv \sigma_R^{1/2}\lvert {\Phi_{A:R}}\rangle $ for any $\sigma_R$:
  \begin{enumerate}[label=(\alph*)]
  \item For any channel $\mathcal{E}_{A^n\to B^n}$ such that
    \begin{align}
      \max_{\sigma_{R}\geq y\mathds{1}} \operatorname{tr}\bigl[{
      P_{B^n R^n}^\perp \, \mathcal{E}_{A^n\to B^n}\bigl({\sigma_{AR}^{\otimes n}}\bigr)
      }\bigr] &\leq \epsilon\ ,
    \end{align}
    then for all $j=1, \ldots, J$,
    \begin{align}
      \max_{\sigma_{R}\geq \nu y\mathds{1}} \operatorname{tr}\Bigl[{
        \Bigl\{{ \overline{H^{j,\sigma}}_{B^nR^n} \notin [q_j\pm \eta ] }\Bigr\} \,
        \mathcal{E}_{A^n\to B^n}(\sigma_{AR}^{\otimes n})
        }\Bigr] &\leq \delta\ ,
        \label{z:APDcifUu}
    \end{align}
    where $\bigl\{{ X \notin I }\bigr\}$ denotes the projector onto the eigenspaces of a
    Hermitian operator $X$ associated with eigenvalues not in a set
    $I\subset\mathbb{R}$.

  \item %
    For any channel $\mathcal{E}_{A^n\to B^n}$ such that
    \begin{align}
      \max_{\sigma_{R}\geq y'\mathds{1}} \operatorname{tr}\Bigl[{
        \Bigl\{{ \overline{H^{j,\sigma}}_{B^nR^n} \notin [q_j\pm \eta' ] }\Bigr\} \,
        \mathcal{E}_{A^n\to B^n}(\sigma_{AR}^{\otimes n})
        }\Bigr] &\leq \delta' \qquad\text{for all}\ j=1,\ldots, J\ ,
    \end{align}
    then
    \begin{align}
      \max_{\sigma_{R}\geq \nu' y'\mathds{1}} \operatorname{tr}\bigl[{
      P_{B^nR^n}^\perp \mathcal{E}_{A^n\to B^n}(\sigma_{AR}^{\otimes n})
      }\bigr] &\leq \epsilon'\ .
          \label{z:msGKg5m1}
    \end{align}
  \end{enumerate}
\end{definition}

In order for this definition to make sense, the parameters of the approximate
microcanonical channel operator should satisfy
\begin{align}
  \begin{aligned}
  0 &< \eta \leq \frac{2}{y}\lVert {C^j_{BR}}\rVert \ ;
  &
    0 &< \epsilon \leq 1\ ;
  &
    0 &< \delta \leq 1\ ;
  &
    0 &< y < 1/(\nu d_R)\ ;
  &
    \nu &> 0\ ;
  \\
  0 &< \eta' \leq \frac{2}{y}\lVert {C^j_{BR}}\rVert \ ;
  &
    0 &< \epsilon' \leq 1\ ;
  &
    0 &< \delta' \leq 1\ ;
  &
    0 &< y' < 1/(\nu' d_R)\ ;
  &
    \nu' &> 0\ .
  \end{aligned}
\end{align}

\subsection{The approximate microcanonical channel operator identifies i.i.d.\@
  channels with correct constraints}
\label{z:Z6.vONxH}

As a first warm-up lemma, we show that our notion of approximate microcanonical
channel operator attributes high weight to the $n$-fold tensor product of a
channel that satisfies all the constraints specified by $\{{C^j_{BR}, q_j}\}$.  We
can think of an approximate microcanonical channel operator as a test that
accepts any i.i.d.\@ channel that is feasible
in~\eqref{z:sb5FfEOw}.  This property holds in particular for
the thermal channels defined via maximum-channel-entropy principles in
\cref{z:7suOBm0G}.

\begin{lemma}[Approximate microcanonical channel operators capture i.i.d.\@
  channels with compatible constraints]
  \label{z:LEyJcHwi}
  Let $P_{B^nR^n}$ be an
  $(\eta, \epsilon, \delta, y, \nu, \eta', \epsilon', \delta', y', \nu')$-approximate
  microcanonical channel operator. 
  Let $\mathcal{N}_{A\to B}$ be any channel such that
  $\operatorname{tr}\bigl[{ C^j_{BR} \mathcal{N}_{A\to B}(\Phi_{A:R}) }\bigr] = q_j$ for
  $j=1, \ldots, J$.  Assuming that
  $2\lVert {C^j_{BR}}\rVert ^2\,\log({2/\delta'}) \leq n \eta'^2 y'^2$ for all
  $j=1,\ldots, J$, then
  \begin{align}
    \max_{\sigma_{R}\geq \nu' y'\mathds{1}}
    \operatorname{tr}\bigl[{ P_{B^nR^n}^\perp \,
        \mathcal{N}_{A\to B}^{\otimes n} (\sigma_{AR}^{\otimes n})
    }\bigr]
    \leq \epsilon'\ .
    \label{z:ZnVyvQV1}
  \end{align}
\end{lemma}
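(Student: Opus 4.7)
The plan is to verify the hypothesis of condition (b) in \cref{z:Caf8GVY-} for the channel $\mathcal{N}_{A\to B}^{\otimes n}$, at which point the claim~\eqref{z:ZnVyvQV1} follows immediately from~\eqref{z:msGKg5m1}. Concretely, I need to show that for every $\sigma_R \geq y'\mathds{1}$ and every $j$,
\begin{align}
\operatorname{tr}\Bigl[ \bigl\{ \overline{H^{j,\sigma}}_{B^nR^n} \notin [q_j \pm \eta'] \bigr\} \, \mathcal{N}_{A\to B}^{\otimes n}(\sigma_{AR}^{\otimes n}) \Bigr] \leq \delta'\ .
\end{align}

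To do so, I first observe that for the i.i.d.\@ output state $\bigl[\mathcal{N}(\sigma_{AR})\bigr]^{\otimes n}$, the measurement of $\overline{H^{j,\sigma}}_{B^nR^n}$ can be realized by independently measuring the single-copy observable $H^{j,\sigma}_{BR} = \sigma_R^{-1/2} C^j_{BR} \sigma_R^{-1/2}$ on each copy (the corresponding local operators manifestly commute) and then averaging the outcomes. This reduces the statement to a classical concentration inequality for the sample mean of $n$ i.i.d.\@ real random variables $X_1, \ldots, X_n$. The per-copy expectation is
\begin{align}
\mathbb{E}[X_i] = \operatorname{tr}\bigl[ H^{j,\sigma}_{BR} \, \mathcal{N}(\sigma_{AR}) \bigr] = \operatorname{tr}\bigl[ \sigma_R^{-1/2} C^j_{BR} \sigma_R^{-1/2} \, \mathcal{N}(\sigma_R^{1/2} \Phi_{A:R} \sigma_R^{1/2}) \bigr] = \operatorname{tr}\bigl[ C^j_{BR}\, \mathcal{N}(\Phi_{A:R}) \bigr] = q_j\ ,
\end{align}
using the constraint that $\mathcal{N}$ satisfies the $q_j$ conditions. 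The per-copy range is controlled by the eigenvalue threshold assumption $\sigma_R \geq y'\mathds{1}$, which via~\eqref{z:iA0NNPDu} gives $\lVert H^{j,\sigma}_{BR} \rVert \leq \lVert C^j_{BR} \rVert / y'$; hence each $X_i$ takes values in an interval of length at most $2\lVert C^j_{BR} \rVert / y'$.

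Hoeffding's inequality then yields
\begin{align}
\Pr\bigl[ |\bar{X} - q_j| \geq \eta' \bigr] \leq 2\exp\bigl(-n\eta'^2 y'^2 / (2\lVert C^j_{BR} \rVert^2)\bigr)\ ,
\end{align}
and the hypothesis $2\lVert C^j_{BR} \rVert^2 \log(2/\delta') \leq n\eta'^2 y'^2$ is precisely what is needed to make the right-hand side $\leq \delta'$, uniformly in $\sigma_R \geq y'\mathds{1}$ and in $j$. Applying condition (b) of \cref{z:Caf8GVY-} to the channel $\mathcal{N}^{\otimes n}$ then gives the desired bound~\eqref{z:ZnVyvQV1}.

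I do not expect any serious obstacle here: the only mild subtlety is checking that the joint measurement of the commuting local observables truly induces i.i.d.\@ classical outcomes on the product state (which it does, since the single-copy distribution factorizes), and carefully matching the constants in Hoeffding's bound with the quantitative assumption of the lemma. The role of the threshold $y'$ is essential, since without it $\lVert H^{j,\sigma}_{BR} \rVert$ is unbounded and no concentration bound of this form can hold; this justifies the form of the assumption appearing in the lemma and explains why the second hypothesis in \cref{z:Caf8GVY-} is phrased with a minimum eigenvalue condition.
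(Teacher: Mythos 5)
Your proposal is correct and follows essentially the same approach as the paper: reduce to an i.i.d.\@ concentration bound via Hoeffding's inequality for the single-copy outcomes of $\sigma_R^{-1/2}C^j_{BR}\sigma_R^{-1/2}$ (whose per-copy mean is $q_j$ by the constraint and whose range is controlled by $\sigma_R\geq y'\mathds{1}$), then invoke condition~(b) of \cref{z:Caf8GVY-}. The constants in the Hoeffding bound and the way the hypothesis $2\lVert C^j_{BR}\rVert^2\log(2/\delta')\leq n\eta'^2 y'^2$ is used also match the paper's proof.
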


\begin{proof}[**z:LEyJcHwi]
  Let $\sigma_R \geq y'\mathds{1}$ and write
  $\lvert {\sigma}\rangle _{AR} = \sigma_R^{1/2}\lvert {\Phi_{A:R}}\rangle $.  Measuring
  $\overline{H^{j,\sigma}}_{B^nR^n}$ on the state $ \mathcal{N}_{A\to B}
  ^{\otimes n}(\sigma_{AR}^{\otimes n})$ corresponds
  to measuring $\sigma_R^{-1/2} C^{j}_{BR} \sigma_R^{-1/2}$ on each individual
  copy of $\mathcal{N}_{A\to B}(\sigma_{AR})$
  and computing the sample average of the outcomes.  The average of the
  single-copy outcome random variable is simply
  $\operatorname{tr}\bigl[{\sigma_R^{-1/2} C^{j}_{BR} \sigma_R^{-1/2} \,
    \mathcal{N}_{A\to B}(\sigma_{AR})
  }\bigr] = \operatorname{tr}\bigl[{ C^{j}_{BR}\,
    \mathcal{N}_{A\to B}(\Phi_{A:R})
  }\bigr] = q_j$.  From Hoeffding's inequality,
  \begin{align}
    \operatorname{tr}\Bigl({
       \bigl\{{ \overline{H^{j,\sigma}}_{B^nR^n} \notin [q_j \pm \eta'] }\bigr\} \,
        \bigl[{\mathcal{N}_{A\to B}(\sigma_{AR})}\bigr]^{\otimes n}
    }\Bigr)
    &\leq 2\exp\Biggl({-\frac{2 \,\eta'^2\, n}{
      4\, \bigl \lVert {\sigma_R^{-1/2} C^j_{BR} \sigma_R^{-1/2}}\bigr \rVert ^2}}\Biggr)
    \\
    &\leq 2\exp\Biggl({-\frac{\eta'^2\, y'^2\, n}{ 2\, \bigl \lVert {C^j_{BR}}\bigr \rVert ^2}}\Biggr)
    \leq \delta'\ ,
  \end{align}
  using~\eqref{z:iA0NNPDu} and where the last inequality follows
  from the additional assumption in the proposition statement.
  The defining properties of the approximate microcanonical channel operator
  finally guarantees
  that~\eqref{z:ZnVyvQV1} holds.
\end{proof}

Our construction for $P_{B^nR^n}$, detailed in
\cref{z:STrCXktf} below, has an even stronger property:
Not only does it correctly identify any i.i.d.\@ channel with the correct
constraints, but it also correctly rejects any i.i.d.\@ channel with a
constraint that is violated.

\subsection{Thermal channel from a microcanonical channel}
\label{z:RH9GpkuM}

Given an approximate microcanonical channel operator, we can define an channel
analogue of the microcanonical state.  Recall that given a microcanonical
subspace, we define the microcanonical state as the maximally mixed state within
that subspace.  Equivalently, it is the maximally entropic state that is
supported within the microcanonical subspace.  We extend this definition to
channels:

\begin{definition}
  \label{z:v1wYNQz5}
  Let $P_{B^nR^n}$ be a
  $(\eta, \epsilon, \delta, y, \nu, \eta', \epsilon', \delta', y',
  \nu')$-approximate microcanonical channel operator with respect to
  $\{{ (C^j_{BR}, q_j) }\}$.  Then the associated \emph{approximate microcanonical
    channel} is defined as the channel $\Omega_{A^n\to B^n}$ that maximizes the
  channel entropy ${S}_{}^{}({\Omega_{A^n\to B^n}})$ subject to the constraint
  \begin{align}
    \max_{\sigma_{R}\geq y\mathds{1}} \operatorname{tr}\bigl[{
        P^\perp_{B^nR^n} \, \Omega_n\bigl({
            \sigma_{AR}^{\otimes n}
        }\bigr)
    }\bigr]
    \leq \epsilon\ .
    \label{z:Y7YnuhFN}
  \end{align}
\end{definition}

The following theorem statement makes reference to the thermal channel
$\mathcal{T}_{A\to B}^{(\phi)}$ with respect to a state $\phi$, defined in
\cref{z:0QjUzwvU}.

\begin{theorem}[The microcanonical channel resembles the thermal channel on a single copy]
  \label{z:1tRx46YH}
  Let $\Omega_n$ be a approximate microcanonical channel associated with a
  $(\eta,\epsilon,\delta,y,\nu, \eta',\epsilon',\delta',y',\nu')$-approximate
  microcanonical channel operator $P_{B^nR^n}$, and let
  \begin{align}
    \omega_{BR}
    = \frac1n \sum_{i=1}^n\operatorname{tr}_{n\setminus i}\bigl[{\Omega_n\bigl({\phi_{AR}^{\otimes n}}\bigr)}\bigr]\ ,
  \end{align}
  where $\operatorname{tr}_{n\setminus i}$ denotes the partial trace over all copies of $(BR)$
  except $(BR)_i$.  Let $\phi_R>0$ be any full-rank state with
  $\lambda_{\mathrm{min}}(\phi_R) \geq \nu y$ and
  $\lambda_{\mathrm{min}}(\phi_R) \geq y'$ and let $\mathcal{T}_{A\to B}^{(\phi)}$ be the
  thermal channel with respect to $\phi$.
  Assume that $2\lVert {C^j_{BR}}\rVert ^2\,\log({2/\delta'}) \leq n \eta'^2 y'^2$ for
  all $j=1,\ldots,J$.  Additionally, we assume that $\epsilon' \leq \epsilon$.
  Then
  \begin{align}
    {D}_{}^{}\bigl ({ \omega_{BR} }\mathclose{}\,\big \Vert\,\mathopen{}{
    \mathcal{N}_{\mathrm{th}}(\phi_{AR}) }\bigr )
    \leq
    \sum \mu_j \bigl({\eta + 2y^{-1}\bigl \lVert {C^j_{BR}}\bigr \rVert \,\epsilon}\bigr)\ .
  \end{align}
\end{theorem}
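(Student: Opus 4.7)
The plan is to expand $D(\omega_{BR} \,\Vert\, \mathcal{T}^{(\phi)}(\phi_{AR}))$ using the explicit Gibbs form of $\mathcal{T}^{(\phi)}(\phi_{AR})$ from \cref{z:hd7.cLe.}, and then to bound the two resulting contributions using (i) the sharp-statistics property of $P_{B^nR^n}$ and (ii) the optimality of $\Omega_n$ as a maximum channel entropy channel. Since $P_{B^nR^n}$ is permutation invariant, the constraint defining $\Omega_n$ is invariant under the twirl $\Omega_n \mapsto (n!)^{-1}\sum_\pi \pi_{B^n}\circ \Omega_n \circ \pi_{A^n}^{-1}$; concavity of the channel entropy then allows us to assume without loss of generality that $\Omega_n$ is itself permutation invariant. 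Consequently $\Omega_n(\phi_{AR}^{\otimes n})$ is permutation invariant on $(BR)^n$, every single-copy marginal equals $\omega_{BR}$, and because $\Omega_n$ leaves the $R$-systems untouched one has $\omega_R = \phi_R$.

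Writing $H^{j,\phi}_{BR} \equiv \phi_R^{-1/2} C^j_{BR} \phi_R^{-1/2}$, \cref{z:hd7.cLe.} yields $\log \mathcal{T}^{(\phi)}(\phi_{AR}) = -\sum_j \mu_j H^{j,\phi}_{BR} + \mathds{1}_B \otimes \bigl(\phi_R^{-1/2} F_R \phi_R^{-1/2} + \log \phi_R\bigr)$ and the identity $\operatorname{tr}(F_R) = \sum_j \mu_j q_j - S_\phi(\mathcal{T}^{(\phi)})$. Combining these with $\omega_R = \phi_R$ and the standard decomposition $S(B|R)_\omega = S(\omega) - S(\phi_R)$ gives the clean expression
\begin{align*}
D\bigl(\omega_{BR}\,\Vert\, \mathcal{T}^{(\phi)}(\phi_{AR})\bigr)
\;=\; \bigl[\,S_\phi(\mathcal{T}^{(\phi)}) - S(B|R)_\omega\,\bigr]
\;+\; \sum_j \mu_j \bigl(\operatorname{tr}[H^{j,\phi}_{BR}\,\omega_{BR}] - q_j\bigr).
\end{align*}
This reduces the problem to upper-bounding the two bracketed pieces separately.

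For the sum over $j$, note that $\operatorname{tr}[H^{j,\phi}_{BR}\omega_{BR}] = \operatorname{tr}[\overline{H^{j,\phi}}_{B^nR^n}\,\Omega_n(\phi_{AR}^{\otimes n})]$ by permutation invariance. Since $\phi_R \geq \nu y\,\mathds{1}$, part (a) of \cref{z:Caf8GVY-} applied to $\Omega_n$---which satisfies $\operatorname{tr}[P^\perp \Omega_n(\phi^{\otimes n})] \leq \epsilon$ by definition---gives the sharp-statistics bound $\operatorname{tr}[\{|\overline{H^{j,\phi}} - q_j| > \eta\}\,\Omega_n(\phi^{\otimes n})] \leq \delta$. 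Splitting the expectation into its window-restricted part and its tail, and using $\|\overline{H^{j,\phi}} - q_j\mathds{1}\| \leq 2 y^{-1}\|C^j_{BR}\|$ via the norm estimate \eqref{z:iA0NNPDu}, produces $|\operatorname{tr}[H^{j,\phi}\omega] - q_j| \leq \eta + 2 y^{-1}\|C^j_{BR}\|\,\delta$, which matches the stated bound once $\delta$ is traced back to $\epsilon$ through the explicit construction of $P_{B^nR^n}$. For the entropy gap, \cref{z:LEyJcHwi}---together with the standing Hoeffding-type assumption and the hypothesis $\epsilon' \leq \epsilon$---shows that $\mathcal{T}^{(\phi)\otimes n}$ is feasible in the program defining $\Omega_n$. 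Optimality of $\Omega_n$ then gives $S(\Omega_n) \geq S(\mathcal{T}^{(\phi)\otimes n}) = n\,S(\mathcal{T}^{(\phi)})$ (using additivity of the channel entropy). On the other hand, $S(\Omega_n) \leq S_{\phi^{\otimes n}}(\Omega_n) = S(B^n|R^n)_{\Omega_n(\phi^{\otimes n})}$, and subadditivity of conditional entropy combined with permutation invariance gives $S(B^n|R^n)_{\Omega_n(\phi^{\otimes n})} \leq n\,S(B|R)_\omega$. Dividing by $n$ yields $S(B|R)_\omega \geq S(\mathcal{T}^{(\phi)})$, so the first bracket is nonpositive whenever $\phi$ is an optimal input realizing $S(\mathcal{T}^{(\phi)}) = S_\phi(\mathcal{T}^{(\phi)})$.

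The main obstacle is this last identification: the chain I outlined above only delivers $S(\mathcal{T}^{(\phi)}) \leq S(B|R)_\omega$, which is strictly weaker than the $S_\phi(\mathcal{T}^{(\phi)}) \leq S(B|R)_\omega$ that is needed to make the first bracket nonpositive; the gap is the slack $S_\phi(\mathcal{T}^{(\phi)}) - S(\mathcal{T}^{(\phi)}) \geq 0$. The clean statement of the theorem is consistent with the implicit assumption that $\phi$ is taken to be an optimal input for $\mathcal{T}^{(\phi)}$'s channel entropy (as is natural via the minimax identity \eqref{z:02ppihHe}), in which case this slack vanishes. A secondary technical point is verifying that the parameter regimes line up so that the two directions of \cref{z:Caf8GVY-} can be invoked as needed --- in particular that $\lambda_{\min}(\phi_R) \geq \nu y$ enables the sharp-statistics bound on $\Omega_n(\phi^{\otimes n})$, while $\lambda_{\min}(\phi_R) \geq y'$ together with $\epsilon' \leq \epsilon$ ensures that $\mathcal{T}^{(\phi)\otimes n}$ lies in the feasible set of the program defining $\Omega_n$.
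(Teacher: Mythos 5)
Your decomposition is essentially the same as the paper's -- expand $D$ against the Gibbs form of $\mathcal{T}^{(\phi)}(\phi_{AR})$ from Proposition~\ref{z:hd7.cLe.}, control the $\sum\mu_j$ piece by the sharp-statistics property of $P_{B^nR^n}$, and control the entropy piece by comparing $\Omega_n$ against the feasible $[\mathcal{T}^{(\phi)}]^{\otimes n}$. Your ``clean expression'' $D = [S_\phi(\mathcal{T}^{(\phi)}) - S(B\vert R)_\omega] + \sum_j\mu_j\bigl(\operatorname{tr}[H^{j,\phi}\omega] - q_j\bigr)$ is algebraically equivalent to what the paper computes in Eqs.~\eqref{z:xSVFMXXz}--\eqref{z:4dxypuAD}, just organized more transparently.

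The gap you flagged is real, and it is present in the paper's own proof, not only in your reproduction. Tracing the paper's inequalities carefully: the optimality of $\Omega_n$ combined with Lemma~\ref{z:LEyJcHwi} and additivity gives $\tfrac1n S(\Omega_n) \geq S(\mathcal{T}^{(\phi)})$, and the concavity/subadditivity chain gives $S(\omega_{BR}) \geq \tfrac1n S(\Omega_n) + S(\phi_R)$; combining with the $\log$-expansion and Eq.~\eqref{z:4qQyS3JJ} produces
\begin{align*}
D\bigl(\omega_{BR}\,\big\Vert\,\mathcal{T}^{(\phi)}(\phi_{AR})\bigr)
\;\leq\; \bigl[S_\phi(\mathcal{T}^{(\phi)}) - S(\mathcal{T}^{(\phi)})\bigr] + \sum_j\mu_j\,h_j\ ,
\end{align*}
with $h_j$ as in the paper. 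The bracket is nonnegative in general and vanishes if and only if $\phi$ achieves the minimum in $S(\mathcal{T}^{(\phi)}) = \min_{\phi'}S_{\phi'}(\mathcal{T}^{(\phi)})$, i.e., $\phi$ is a saddle point of the minimax~\eqref{z:02ppihHe}. The paper's final ``plugging in~\eqref{z:4qQyS3JJ}'' step silently drops this bracket, so the stated bound for arbitrary $\phi_R$ above the eigenvalue threshold overclaims. One can easily construct examples (e.g.\ the average-energy-conserving channel of Section~\ref{z:VuBBeOAb} with $\phi_R = \mathds{1}_R/d_R$) where $S_\phi(\mathcal{T}^{(\phi)}) - S(\mathcal{T}^{(\phi)})$ is a constant independent of $n$, so the slack cannot be absorbed into the error terms. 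The fix is to either restrict $\phi$ to the optimal input state, or to replace the definition of $\Omega_n$ by the maximizer of $S_{\phi^{\otimes n}}(\Omega_n)$, or to carry the bracket explicitly in the theorem statement.

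One further point worth pinning down precisely rather than hand-waving: part~(a) of Definition~\ref{z:Caf8GVY-} controls the tail probability $\operatorname{tr}[\{\overline{H^{j,\phi}}\notin[q_j\pm\eta]\}\,\Omega_n(\phi^{\otimes n})]$ by $\delta$, not by $\epsilon$ (the parameter $\epsilon$ is the hypothesis on $\Omega_n$'s weight against $P^\perp$; the output is $\delta$). So the resulting bound $\lvert h_j\rvert \leq \eta + 2y^{-1}\lVert C^j_{BR}\rVert\,\delta$ should carry $\delta$, whereas both the theorem statement and the final line of the paper's proof use $\epsilon$. You noticed this discrepancy and said you would ``trace $\delta$ back to $\epsilon$ through the construction,'' but for a specific construction like Theorem~\ref{z:JjqyeW8p} these two parameters are genuinely different quantities with different decay rates in $n$, so the substitution is not cosmetic.
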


If the approximate microcanonical channel operator $P_{B^nR^n}$ is
permutation-invariant, then the approximate microcanonical channel $\Omega_n$
can also be chosen to be permutation-invariant.  In this case, $\omega_{BR}$ is
simply the reduced state of $\Omega_n(\phi_{AR}^{\otimes n})$ on any of the $n$
copies of $BR$,
\begin{align}
  \omega_{BR} = \operatorname{tr}_{n-1}\bigl[{ \Omega_n(\phi_{AR}^{\otimes n}) }\bigr]\ .
\end{align}
Our construction of an approximate microcanonical channel operator, which we
detail further below, has this property.

\begin{proof}[**z:1tRx46YH]
  This proof is inspired by an analogous statement for quantum states in
  \R\cite{R26}.
  From the definition of the relative entropy,
  \begin{align}
    {D}_{}^{}\bigl ({ \omega_{BR} }\mathclose{}\,\big \Vert\,\mathopen{}{
         \mathcal{T}_{A\to B}^{(\phi)}(\phi_{AR})
    }\bigr )
    &= -S\bigl({ \omega_{BR} }\bigr)
    - \operatorname{tr}\Bigl({ \omega_{BR} \,
    \log\bigl[{ \mathcal{T}_{A\to B}^{(\phi)}(\phi_{AR}) }\bigr] }\Bigr)\ .
      \label{z:xSVFMXXz}
  \end{align}
  \Cref{z:hd7.cLe.} asserts that the maximum-entropy
  thermal channel $\mathcal{T}_{A\to B}^{(\phi)}$ with respect to a full-rank state
  $\phi_R$ obeys, for some Hermitian operator $F_R$ and real values $\mu_j$,
  \begin{subequations}
    \begin{align}
      \mathcal{T}_{A\to B}^{(\phi)}(\phi_{AR})
      &= \exp\Bigl\{{ \phi_R^{-1/2}\Bigl[{\mathds{1}_B\otimes F_R - \sum\mu_j C^j_{BR} }\Bigr]\phi_R^{-1/2} }\Bigr\}\ ;
        \label{z:LbB.WJ3B}
      \\[1ex]
      {S}_{\phi}^{}\bigl ({\mathcal{T}_{A\to B}^{(\phi)}}\bigr )
      &= \sum \mu_j q_j -\operatorname{tr}\bigl({F_R}\bigr) - {S}_{}^{}({\phi_R})\ .
        \label{z:4qQyS3JJ}
    \end{align}
  \end{subequations}
  Consider the second term in~\eqref{z:xSVFMXXz}.  We find
  \begin{align}
    \operatorname{tr}\Bigl({ \omega_{BR}\,
    \log\bigl[{ \mathcal{T}_{A\to B}^{(\phi)}({\phi_{AR}}) }\bigr] }\Bigr)
    &= \frac1n \operatorname{tr}\Bigl[{ \Omega_n\bigl({\phi_{AR}^{\otimes n}}\bigr)\,
      \sum_{i=1}^n \phi_{R_i}^{-1/2}
      \Bigl({\mathds{1}_{B_i}\otimes F_{R_i} - \sum_{j=1}^J \mu_j C^j_{(BR)_i}}\Bigr) \phi_{R_i}^{-1/2}
      }\Bigr]
      \nonumber\\
    &= \frac1n\sum_{i=1}^n
      \operatorname{tr}({ F_{R_i} })
      - \frac1n\sum_{i=1}^n \sum_{j=1}^J \mu_j \operatorname{tr}\Bigl[{ \Omega_n\bigl({\phi_{AR}^{\otimes n}}\bigr)
      \,\phi_{R_i}^{-1/2} C^j_{(BR)_i}\phi_{R_i}^{-1/2} }\Bigr]
    \nonumber\\
    &=
      \operatorname{tr}({ F_{R} })
      - \sum_{j=1}^J \mu_j \operatorname{tr}\Bigl[{ \Omega_n\bigl({\phi_{AR}^{\otimes n}}\bigr)
      \, \overline{H^{j,\phi_R}}_{B^nR^n} }\Bigr]\ ,
      \label{z:4dxypuAD}
  \end{align}
  where we used $\Omega_n^\dagger(\mathds{1}_{B^n}) = \mathds{1}_{A^n}$ in the second
  equality.  Using our assumption that $P_{B^nR^n}$ is an approximate
  microcanonical channel operator along
  with~\eqref{z:Y7YnuhFN}, we have that
  $\operatorname{tr}\bigl[{\Omega_n\bigl({\phi_{AR}^{\otimes n}}\bigr) \,
  \overline{H^{j,\phi_R}}_{B^nR^n} }\bigr]$ must concentrate around $q_j$ for each
  $j$ [cf.~\eqref{z:APDcifUu}].
  Specifically, let
  \begin{align}
    h_j = \operatorname{tr}\bigl[{\Omega_n\bigl({\phi_{AR}^{\otimes n}}\bigr) \, \overline{H^{j,\phi_R}}_{B^nR^n} }\bigr] - q_j\ ;
  \end{align}
  now with $R = \bigl\{{ \overline{H^{j,\phi_R}}_{B^nR^n} \in [{ q_j \pm \eta }] }\bigr\}$
  and
  $R^\perp = \bigl\{{ \overline{H^{j,\phi_R}}_{B^nR^n} \notin [{ q_j \pm \eta
    }] }\bigr\} = \mathds{1}- R$, we have
  $\bigl \lVert { \bigl({\overline{H^{j,\phi_R}}_{B^nR^n} - q_j\mathds{1}}\bigr) R }\bigr \rVert  \leq \eta$ and
  \begin{align}
    \bigl \lvert { h_j }\bigr \rvert 
    &= 
    \Bigl \lvert {
      \operatorname{tr}\bigl[{\Omega_n\bigl({\phi_{AR}^{\otimes n}}\bigr) \,
      \bigl({\overline{H^{j,\phi_R}}_{B^nR^n} - q_j}\bigr)R}\bigr]
    + \operatorname{tr}\bigl[{\Omega_n\bigl({\phi_{AR}^{\otimes n}}\bigr) \,
      \bigl({\overline{H^{j,\phi_R}}_{B^nR^n} - q_j}\bigr) R^\perp}\bigr]
    }\Bigr \rvert 
      \nonumber\\
    &\leq
      \bigl \lVert {\bigl({\overline{H^{j,\phi_R}}_{B^nR^n} - q_j}\bigr) R}\bigr \rVert \,
      + \bigl \lVert {\overline{H^{j,\phi_R}}_{B^nR^n} - q_j}\bigr \rVert \,
      \operatorname{tr}\bigl[{\Omega_n\bigl({\phi_{AR}^{\otimes n}}\bigr) R^\perp}\bigr]
      \nonumber\\
    &\leq
      \eta + \bigl \lVert {\overline{H^{j,\phi_R}}_{B^nR^n}}\bigr \rVert \,\epsilon
      + \bigl \lVert {C^{j}_{BR}}\bigr \rVert \,\epsilon
      \leq
      \eta + 2 y^{-1}\bigl \lVert {C^{j}_{BR}}\bigr \rVert \,\epsilon
      \ ,
  \end{align}
  where in arriving at the third line we used the crude inequality
  $\bigl \lVert {\overline{H^{j,\phi_R}}_{B^nR^n} - q_j}\bigr \rVert  \leq 
  \bigl \lVert {{H^{j,\phi_R}}_{B^nR^n}}\bigr \rVert  + \bigl \lVert {C^{j}_{BR}}\bigr \rVert $.

  Consider now the first term in~\eqref{z:xSVFMXXz}.  Using the
  concavity and the subadditivity of the von Neumann entropy, and recalling the
  expression for the channel entropy in terms of the state von Neumann entropy
  ${S}_{}^{}({\mathcal{N}}) = \min_{\lvert {\phi'}\rangle _{AR}} \bigl[{
  {S}_{}^{}({\mathcal{N}(\phi'_{AR})}) - {S}_{}^{}({\phi'_R}) }\bigr]$, we find
  \begin{align}
    {S}_{}^{}\bigl ({\omega_{BR}}\bigr )
    \geq
    \frac1n\, \sum_{i=1}^n {S}_{}^{}\Bigl ({\operatorname{tr}_{n\setminus i}\bigl[{ \Omega_n\bigl({\phi_{AR}^{\otimes n}}\bigr) }\bigr]}\Bigr )
    \geq
    \frac1n\,{S}_{}^{}\Bigl ({\Omega_n\bigl[{\phi_{AR}^{\otimes n}}\bigr]}\Bigr )
    \geq
    \frac1n {S}_{}^{}\bigl ({\Omega_n}\bigr ) + {S}_{}^{}\bigl ({\phi_{AR}}\bigr )\ .
  \end{align}
  Now recall that $\Omega_n$ maximizes ${S}_{}^{}\bigl ({\Omega_n}\bigr )$ subject to the
  condition~\eqref{z:Y7YnuhFN}.  Another channel that
  satisfies condition~\eqref{z:Y7YnuhFN} is
  $\bigl[{\mathcal{T}_{A\to B}^{(\phi)}}\bigr]^{\otimes n}$, thanks to
  \cref{z:LEyJcHwi} as well as our additional
  assumption that $\epsilon'\leq\epsilon$.  Therefore,
  \begin{align}
    \frac1n {S}_{}^{}\bigl ({\Omega_n}\bigr )
    \geq
    \frac1n {S}_{}^{}\bigl ({\mathcal{T}_{A\to B}^{(\phi)}}\bigr )
    = {S}_{}^{}\bigl ({\mathcal{T}_{A\to B}^{(\phi)}}\bigr ) \ ,
  \end{align}
  using the additivity of the channel entropy under tensor products.

  Combining the above, we find
  \begin{align}
    {D}_{}^{}\bigl ({ \omega_{BR} }\mathclose{}\,\big \Vert\,\mathopen{}{ \mathcal{T}_{A\to B}^{(\phi)}(\phi_{AR}) }\bigr )
    \leq - {S}_{}^{}\bigl ({\mathcal{T}_{A\to B}^{(\phi)}}\bigr )  - {S}_{}^{}\bigl ({\phi_{AR}}\bigr )
    - \operatorname{tr}\bigl({ F_R }\bigr) + \sum \mu_j \bigl({q_j + h_j}\bigr)\ .
  \end{align}
  Plugging in~\eqref{z:4qQyS3JJ} yields
  \begin{align}
    {D}_{}^{}\bigl ({ \operatorname{tr}_{n-1}\bigl[{\phi_{AR}^{\otimes n}}\bigr] }\mathclose{}\,\big \Vert\,\mathopen{}{ \mathcal{T}_{A\to B}^{(\phi)}(\phi_{AR}) }\bigr )
    \leq
    \sum \mu_j h_j \leq
    \sum \mu_j \bigl({ \eta + 2y^{-1}\bigl \lVert {C^{j}_{BR}}\bigr \rVert \,\epsilon }\bigr)\ ,
  \end{align}
  as claimed.
\end{proof}

\subsection{Construction of an approximate microcanonical channel operator}
\label{z:STrCXktf}

We now present an explicit construction of an approximate microcanonical channel
operator.  This construction can be viewed as an extension to quantum channels
of the alternative construction in \R\cite{R71} of an
approximate microcanonical subspace for quantum states.
We define the operator $P_{B^nR^n}$ as the effective POVM outcome associated
with a specific protocol producing the output ``SUCCESS.''  The protocol
additionally depends on a parameter $m$ (with $0<m<n$) and on a condition
function $\chi(\tilde\sigma,\boldsymbol j,\boldsymbol z)$ (which takes values in
$\{{0,1}\}$) that we define and specify later.
The protocol proceeds as follows:
\begin{enumerate}[label={\arabic*.}]
\item[0.] For a better intuitive understanding of this protocol, we imagine the
  input state is $\mathcal{E}_{A^n\to B^n}(\sigma_{AR}^{\otimes n})$ with
  $\lvert {\sigma}\rangle _{AR}=\sigma_R^{1/2}\lvert {\Phi_{A:R}}\rangle $.  This input state is, however,
  not a part of the protocol that technically defines $P_{B^nR^n}$;

\item We randomly permute all copies of $(BR)$, with the effect of symmetrizing
  the input state;

\item \label{z:PrLOj3lZ} %
  We use $m$ out of the $n$ copies of $(BR)$ to run a suitable state estimation
  procedure on the input registers $R^m$, arriving at an approximation
  $\tilde\sigma_R$ of the actual input state $\sigma_R$;

\item For each of the remaining $\bar{n} \equiv n-m$ copies,
  $i = 1, \ldots, \bar{n}$, we pick $j_i \in \{{ 1, \ldots J}\}$ independently and
  uniformly at random (these correspond to random choices of measurement
  settings).  %

\item On each copy of those remaining $\bar{n}$ copies of $(BR)$ labeled by
  $i=1, \ldots, \bar{n}$, we measure the Hermitian observable
  $\tilde\sigma_R^{-1/2} C_{BR}^{j_i} \tilde\sigma_R^{-1/2}$, obtaining the
  outcome $z_i \in \mathbb{R}$.  %

\item A condition
  $\chi(\tilde\sigma, \boldsymbol j, \boldsymbol z) \in \{{ 0, 1 }\}$ is tested on
  the measurement outcomes $\boldsymbol z$, the estimated input state
  $\tilde\sigma$, the randomly sampled measurement settings $\boldsymbol j$, and
  parameters such as $\eta$, $q_j$.  If
  $\chi(\tilde\sigma, \boldsymbol j, \boldsymbol z) = 1$, we output ``SUCCESS.''
  Otherwise, we output ``FAILURE.''
\end{enumerate}

The approximate microcanonical channel operator $P_{B^nR^n}$ we construct is
obtained by a specific choice of a condition function
$\chi(\tilde\sigma, \boldsymbol j, \boldsymbol z)$ to be defined soon below. 
However, we first need to prove some properties of protocols of the above form
for other condition functions, to form important building blocks for our proofs.

For any final condition $\chi(\tilde\sigma, \boldsymbol j, \boldsymbol z)$, we
can write the operator $P^{\chi}_{B^nR^n}$ resulting from the above protocol as:
\begin{align}
  P^\chi_{B^nR^n}
  &= \mathcal{S}_{(BR)^n}\Biggl\{{
    \int d\tilde\sigma \, R_{R^m}^{(\tilde\sigma)\dagger}R_{R^m}^{(\tilde\sigma)}
    \otimes
    \frac1{J^{\bar n}} \sum_{\boldsymbol j} \int d\boldsymbol z\,
    \chi(\tilde\sigma,\boldsymbol j, \boldsymbol z)\,
    \biggl[{\bigotimes_{i=1}^n
    \bigl\{{ \tilde\sigma_{R_i}^{-\frac12} C^{j_i}_{B_iR_i} \tilde\sigma_{R_i}^{-\frac12} = z_i }\bigr\}}\biggr]
    }\Biggr\}
    \ ,
    \label{z:qVquPoYP}
\end{align}
where:
\begin{itemize}
\item $\boldsymbol j \equiv (j_i)_{i=1}^{\bar n}$ with $j_i \in \{{1, \ldots, J}\}$
  for $i = 1, \ldots \bar{n}$;
\item $\boldsymbol z \equiv (z_i)_{i=1}^{\bar n}$ with $z_i\in\mathbb{R}$ for
  $i = 1, \ldots, \bar{n}$;
\item
  $R_{R^m}^{(\tilde\sigma)} = (\tilde\sigma_R^{\otimes m})^{1/2} \zeta_{R^m}$ is
  such that
  $\bigl\{{ R_{R^m}^{(\tilde\sigma)\dagger } R_{R^m}^{(\tilde\sigma)}
  }\bigr\}_{\tilde\sigma}$ is a pretty good measurement on $R^m$
  associated with the family of states $\{{ \tilde\sigma^{\otimes m} }\}$ (cf.\@
  \cref{z:Ehi5guwR});
\item $\mathcal{S}_{(BR)^n}\bigl\{{ \,\cdot\, }\bigr\}$ is the quantum channel that
  randomly permutes the copies of $(BR)^n$.
\end{itemize}

Let us first prove some elementary properties of $P^{\chi}_{B^nR^n}$ for general
unspecified condition functions
$\chi(\tilde\sigma, \boldsymbol j, \boldsymbol z)$:

\begin{lemma}[Elementary properties of $P^{\chi}_{B^nR^n}$]
  \label{z:-xGoRZBj}
  The following properties hold:
  \begin{enumerate}[label=(\roman*)]
    \item \label{z:pLOBrNeb} We have
    $0 \leq P^\chi_{B^nR^n} \leq \mathds{1}_{B^nR^n}$.  Furthermore, for
    $\chi_{\mathrm{yesss}}(\tilde\sigma,\boldsymbol j,\boldsymbol z) = 1$ then
    $P^{\chi_{\mathrm{yesss}}}_{B^nR^n} = \mathds{1}_{B^nR^n}$;
  \item \label{z:NxY3iV0.} The operator
    $P^\chi_{B^nR^n}$ is linear in $\chi$: If
    $\chi(\tilde\sigma,\boldsymbol j, \boldsymbol z) = a \chi_1(\tilde\sigma,\boldsymbol j,
    \boldsymbol z) + b \chi_2(\tilde\sigma,\boldsymbol j, \boldsymbol z)$ for
    $a,b\in\mathbb{R}$, then
    $P^\chi_{B^nR^n} = a P^{\chi_1}_{B^nR^n} + b P^{\chi_2}_{B^nR^n}$;
  \item \label{z:W4OXReVA} The operator
    $P^\chi_{B^nR^n}$ obeys a monotonicity property in $\chi$: If
    $\chi_1(\tilde\sigma,\boldsymbol j, \boldsymbol z) \leq \chi_2(\tilde\sigma,\boldsymbol
    j, \boldsymbol z)$, then $P^{\chi_1}_{B^nR^n} \leq P^{\chi_2}_{B^nR^n}$;
  \end{enumerate}
\end{lemma}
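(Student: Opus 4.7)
The plan is to read off each of the three properties directly from the explicit expression \eqref{z:qVquPoYP}, appealing only to basic facts about its constituent pieces: the spectral projectors of the observables $\tilde\sigma_{R_i}^{-1/2} C^{j_i}_{B_iR_i} \tilde\sigma_{R_i}^{-1/2}$, the completeness of the pretty-good-measurement operators $\bigl\{{R_{R^m}^{(\tilde\sigma)\dagger}R_{R^m}^{(\tilde\sigma)}}\bigr\}$ from \cref{z:Ehi5guwR}, and the fact that the random-permutation superoperator $\mathcal{S}_{(BR)^n}$ is a completely positive and unital quantum channel (being a uniform convex combination of unitary conjugations).

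For part \ref{z:pLOBrNeb}, the positivity $P^{\chi}_{B^nR^n} \geq 0$ is immediate whenever $\chi \geq 0$: the integrand in \eqref{z:qVquPoYP} is a tensor product of positive semidefinite operators weighted by the nonnegative scalar $\chi(\tilde\sigma,\boldsymbol j,\boldsymbol z)$, and integration, summation, and the action of $\mathcal{S}_{(BR)^n}$ all preserve positivity. To prove the upper bound, I would evaluate $P^{\chi_{\mathrm{yesss}}}_{B^nR^n}$ explicitly when $\chi_{\mathrm{yesss}}\equiv 1$: the spectral completeness $\int dz_i\, \bigl\{{\tilde\sigma_{R_i}^{-1/2} C^{j_i}_{B_iR_i}\tilde\sigma_{R_i}^{-1/2} = z_i}\bigr\} = \mathds{1}$ collapses each $z_i$-integral to the identity, the normalized average $J^{-\bar n}\sum_{\boldsymbol j} 1 = 1$ eliminates the sum over measurement settings, the pretty-good-measurement completeness $\int d\tilde\sigma\, R_{R^m}^{(\tilde\sigma)\dagger}R_{R^m}^{(\tilde\sigma)} = \mathds{1}_{R^m}$ eliminates the $\tilde\sigma$-integral, and finally $\mathcal{S}_{(BR)^n}(\mathds{1}_{B^nR^n}) = \mathds{1}_{B^nR^n}$, yielding $P^{\chi_{\mathrm{yesss}}}_{B^nR^n} = \mathds{1}_{B^nR^n}$.

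Parts \ref{z:NxY3iV0.} and \ref{z:W4OXReVA} then follow in short order. Linearity is manifest because $\chi$ enters \eqref{z:qVquPoYP} only as a scalar factor inside a linear integral and sum over $(\tilde\sigma, \boldsymbol j, \boldsymbol z)$, and all remaining operations are $\mathbb{R}$-linear. For monotonicity, $\chi_1 \leq \chi_2$ implies $\chi_2 - \chi_1 \geq 0$; applying the positivity half of part \ref{z:pLOBrNeb} to the nonnegative condition function $\chi_2 - \chi_1$ gives $P^{\chi_2 - \chi_1}_{B^nR^n} \geq 0$, and linearity then yields $P^{\chi_2}_{B^nR^n} - P^{\chi_1}_{B^nR^n} = P^{\chi_2 - \chi_1}_{B^nR^n} \geq 0$. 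The upper bound $P^{\chi}_{B^nR^n} \leq \mathds{1}_{B^nR^n}$ in part \ref{z:pLOBrNeb} for any $0 \leq \chi \leq 1$ also follows by monotonicity applied to $\chi \leq \chi_{\mathrm{yesss}}$.

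Since the claim is labeled elementary, I do not anticipate any genuine obstacle; the only care required is to cite the pretty-good-measurement completeness from \cref{z:Ehi5guwR} correctly and to keep the order of operations (symmetrization outside, nested integrals/sums inside) straight so that the various completeness relations can be collapsed independently.
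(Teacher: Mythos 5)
Your proof is correct and follows essentially the same route as the paper's: positivity and linearity are read off directly from the integral representation \eqref{z:qVquPoYP}, the constant condition function is shown to yield $\mathds{1}_{B^nR^n}$ by collapsing the spectral, measurement-setting, and pretty-good-measurement completeness relations, and the upper bound then follows from monotonicity applied to $\chi \leq 1$. The only cosmetic difference is that you spell out the logical dependency (monotonicity as positivity plus linearity) slightly more explicitly than the paper, which simply asserts that linearity and monotonicity are immediate from the definition.
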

\begin{proof}[**z:-xGoRZBj]
    For any $\chi$, the operator $P^\chi_{B^nR^n}$ is positive semidefinite by
  definition.  The linearity and monotonicity of $P^\chi_{B^nR^n}$ in $\chi$ are
  also immediate from its definition.  With
  $\chi_{\mathrm{boring}}(\tilde\sigma,\boldsymbol j, \boldsymbol z) \equiv 1$, we
  find
  \begin{align}
      P^{\chi_{\mathrm{boring}}}_{B^nR^n}
    &=
      \int d\tilde\sigma\,
      R^{(\tilde\sigma) \dagger} R^{(\tilde\sigma)}
      \otimes
      \frac1{J^{\bar n}}\sum_{\boldsymbol j}
      \Biggl[{\bigotimes_{i=1}^{\bar n}
      \underbrace{
      \int dz\,
      \bigl\{{ \tilde\sigma_{R_i}^{-\frac12} C^{j_i}_{B_iR_i} \tilde\sigma_{R_i}^{-\frac12} = z }\bigr\}
      }_{=\mathds{1}_{B_iR_i}}}\Biggr]
      \nonumber\\
    &= \int d\tilde\sigma\,
      R^{(\tilde\sigma) \dagger} R^{(\tilde\sigma)}
      \otimes \mathds{1}_{B^{\bar n}R^{\bar n}}
      = \mathds{1}_{B^nR^n}\ .
  \end{align}
  Since for any $\chi$, we have
  $\chi(\tilde\sigma,\boldsymbol j, \boldsymbol z) \leq 1 =
  \chi_{\mathrm{boring}}(\tilde\sigma,\boldsymbol j, \boldsymbol z)$, the above
  facts imply that $0 \leq P^\chi_{B^nR^n} \leq \mathds{1}_{B^nR^n}$.  We have
  established~\ref{z:pLOBrNeb},
  \ref{z:NxY3iV0.},
  and~\ref{z:W4OXReVA}.
\end{proof}

The main remaining ingredient is to determine the condition function $\chi$ in
order to define our approximate microcanonical channel operator, and to prove
that all the desired properties laid out in \cref{z:v1wYNQz5} are
satisfied.  We proceed through some intermediate results that involve operators
$P^{\chi}_{B^nR^n}$ with different useful condition functions $\chi$.

The condition functions we consider make use of the following quantities, which
are functions of $(\boldsymbol j, \boldsymbol z)$:
\begin{align}
  \tilde{z}_i^j & = \begin{cases}
                      J z_i &\text{if \(j_i = j\),}\\
                      0 &\text{if \(j_i \neq j\);}
                    \end{cases}
  &
    \nu_j(\boldsymbol j, \boldsymbol z)
  &= \frac1n \sum_{i=1}^n \tilde{z}_i^j\ .
    \label{z:x9yjGcyJ}
\end{align}
The quantities $\tilde z_i^j$ and $\nu_j(\boldsymbol j, \boldsymbol z)$ can be
thought of as random variables depending on the estimate $\tilde\sigma$,
$\boldsymbol j$ along with the random outcomes $z_i$ of the protocol outlined
above.  The variable $\tilde z_i^j$ takes the value of the sample $z_i$ scaled
by $J$ if we happened to measure $j$ on the $i$-th copy, otherwise it takes the
value zero.  Roughly speaking, we can imagine that we sort all measurement
outcomes $z_i$ by the choices $j_i$, i.e., collecting all measurement outcomes
associated with $j_i=1$ separately from those with $j_i=2$, etc.; the
$\nu_j(\boldsymbol j, \boldsymbol z)$ can roughly be thought of as taking the
sample averages of each of those outcomes per choice of $j$.  (This rough
explanation would be accurate if we had exactly $n/J$ samples for each choice of
measurement setting.  But since each $j_i$ is chosen independently at random,
the number of samples per choice of measurement setting fluctuates around $n/J$
by $O(\sqrt{n})$.)

We construct our approximate microcanonical channel operator in two steps.  As a
first step, we construct an operator $P^{\chi}_{B^nR^n}$ that can discriminate
between i.i.d.\@ channels based on their expectation values with respect to the
observables $C^j_{BR}$, by identifying a suitable condition function $\chi$.
Then, we use this construction to build our approximate microcanonical channel
operator.

\subsubsection{Construction of a tester that discriminates i.i.d.\@ channels
  based on their expectation values}

First, we investigate the following condition function.  For any positive
semidefinite operator $M_{BR}$ with $\operatorname{tr}_B(M_{BR}) = \mathds{1}_R$, for any $h>0$,
and for any $j \in \{{1, \ldots, J}\}$, we define:
\begin{align}
  \chi_{j;M;>h}(\tilde\sigma,\boldsymbol j, \boldsymbol z)
  &= \chi\Bigl\{{
    \bigl \lvert { \nu_j(\boldsymbol j, \boldsymbol z) - \operatorname{tr}\bigl({C^j_{BR} M_{BR}}\bigr)
    }\bigr \rvert  > h }\Bigr\}\ ,
    \label{z:ASsLiTCZ}
\end{align}
where $\chi\{{ \cdots }\}$ is the characteristic function equal to one whenever
the condition $(\cdots)$ is true and zero if it is false.  The condition
function $\chi_{j;M;>h}$ tests whether the variable
$\nu_j(\boldsymbol j, \boldsymbol z)$, computed based on the estimated
$\tilde\sigma$, the sampled $\boldsymbol j$, and the measured $\boldsymbol z$,
deviates from the expectation value $\operatorname{tr}\bigl({C^j_{BR} M_{BR}}\bigr)$ by more than
$h$.  Recalling that $\nu_j(\boldsymbol j, \boldsymbol z)$ is meant to represent
an estimation of the average of the outcome of measurement setting $j$, we
expect this sample average to concentrate around the ideal expectation value
$\operatorname{tr}\bigl({C^j_{BR} M_{BR}}\bigr)$ for large $n$.  The following lemma establishes this
fact:
\begin{lemma}
  \label{z:HY7bJv7-}
  Let $M_{BR}$ be the Choi matrix of a quantum channel $\mathcal{M}_{A\to B}$.
  For any $j=1, \ldots, J$, for any $0< y' < 1/d_R$, for all
  $0< h < \lVert {C^j_{BR}}\rVert $, and for all $\sigma_R \geq y' \mathds{1}_R$ with
  corresponding $\lvert {\sigma}\rangle _{AR} \equiv \sigma_R^{1/2}\,\lvert {\Phi_{A:R}}\rangle $, we
  have
  \begin{align}
    \operatorname{tr}\bigl[{
    P^{\chi_{j;M;>h}}_{B^nR^n}
    \,\mathcal{M}^{\otimes n}(\sigma_{AR}^{\otimes n})
    }\bigr]
    \leq \operatorname{poly}(n)\,\exp\biggl\{{
    -n \min\Bigl({ \frac{m}{n}, \frac{\bar{n}}{n} }\Bigr) \, \frac{h^8\, y'^8}{5^8\, \lVert {C^j_{BR}}\rVert ^8}
    }\biggr\}\ .
  \end{align}
\end{lemma}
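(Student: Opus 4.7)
The goal is to bound, from above, the probability that the protocol defining $P^{\chi_{j;M;>h}}_{B^nR^n}$ outputs ``SUCCESS'' when applied to the true i.i.d.\@ state $\mathcal{M}^{\otimes n}(\sigma_{AR}^{\otimes n})$. Since the event $\chi_{j;M;>h}$ tests whether the empirical quantity $\nu_j(\boldsymbol j,\boldsymbol z)$ deviates from $\operatorname{tr}(C^j_{BR} M_{BR})$ by more than $h$, we expect this probability to be small whenever the protocol's estimate $\tilde\sigma$ is close to the true input $\sigma$, because conditional on $\tilde\sigma\approx\sigma$, the variables $\tilde z_i^j$ defined in~\eqref{z:x9yjGcyJ} are i.i.d.\@ with mean close to $\operatorname{tr}(C^j_{BR}M_{BR})$ and bounded range. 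The proof therefore splits into three ingredients: (a)~the state-estimation step produces $\tilde\sigma$ close to $\sigma$ with high probability; (b)~conditional on a good estimate, the expectation of $\tilde z_i^j$ is close to $\operatorname{tr}(C^j_{BR}M_{BR})$; (c)~conditional on a good estimate, a sample-average concentration bound (Hoeffding) controls the deviation of $\nu_j$ from its conditional mean.

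For~(a), I plan to use the pretty-good-measurement accuracy on the de~Finetti averaged family $\{\tilde\sigma^{\otimes m}\}$ together with~\cref{z:TvA2E9KA} to express $\int d\tilde\sigma\, R^{(\tilde\sigma)\dagger} R^{(\tilde\sigma)}\otimes(\cdot)$ in Schur--Weyl adapted form, and conclude that for any fixed $\sigma$, the induced distribution over $\tilde\sigma$ concentrates (in fidelity or trace distance) around $\sigma$ up to a $\operatorname{poly}(m)$ polynomial prefactor and an error exponentially small in $m$ whenever $\|\tilde\sigma-\sigma\|_1$ exceeds a small threshold $\delta_\sigma$. The lower bound $\sigma\geq y'\mathds{1}$ is used both to control $\|\sigma^{-1/2}\|$ in the pretty-good measurement analysis, and also to ensure that the subsequent observables are bounded. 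For~(b), using $\sigma,\tilde\sigma\geq y'\mathds{1}/2$ (say) and $\|\tilde\sigma-\sigma\|_1\leq \delta_\sigma$, a short calculation shows that $\bigl|\operatorname{tr}(\tilde\sigma_R^{-1/2}C^j_{BR}\tilde\sigma_R^{-1/2}\mathcal{M}(\sigma_{AR}))-\operatorname{tr}(C^j_{BR}M_{BR})\bigr|$ is bounded by a constant times $\delta_\sigma\,\lVert C^j_{BR}\rVert / y'^2$, exploiting $\|\tilde\sigma^{-1/2}-\sigma^{-1/2}\|\leq O(\|\tilde\sigma-\sigma\|/y'^{3/2})$. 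For~(c), conditional on $\tilde\sigma$, each $\tilde z_i^j$ lies in an interval of width at most $2J\,\lVert C^j_{BR}\rVert/y'$ by the analogue of~\eqref{z:iA0NNPDu}, and the $\tilde z_i^j$ are i.i.d.\@ across the $\bar n$ surviving copies (thanks to the preliminary random permutation), so Hoeffding yields an exponential bound of the form $2\exp\{-c\,\bar n\,h^2 y'^2/(J^2\lVert C^j\rVert^2)\}$.

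The main steps will then be carried out as follows. First, I decompose the acceptance probability as
\begin{align*}
\operatorname{tr}\bigl[P^{\chi_{j;M;>h}}_{B^nR^n}\,\mathcal{M}^{\otimes n}(\sigma_{AR}^{\otimes n})\bigr]
\leq \int d\tilde\sigma\,p(\tilde\sigma\mid\sigma)\;\Pr\!\bigl[|\nu_j-\operatorname{tr}(C^j_{BR}M_{BR})|>h\,\bigm|\,\tilde\sigma\bigr],
\end{align*}
where $p(\tilde\sigma\mid\sigma)$ is the probability density of producing estimate $\tilde\sigma$ under the pretty-good measurement applied to $\sigma^{\otimes m}$ (suitably post-processed through the $\mathcal{M}^{\otimes n}$ action, which commutes because the estimation uses only the $R^m$ subsystem). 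I split the $d\tilde\sigma$ integral into the ``good'' region $G=\{\tilde\sigma:\|\tilde\sigma-\sigma\|_1\leq \delta_\sigma\}$ and its complement. On $G$, I bound the conditional tail probability by Hoeffding after using step~(b) to replace the conditional mean by $\operatorname{tr}(C^j_{BR}M_{BR})$ at the cost of shrinking $h$ to $h-O(\delta_\sigma\lVert C^j_{BR}\rVert/y'^2)$. Outside $G$, I use the pretty-good-measurement estimate from~(a) to bound the mass by $\operatorname{poly}(m)\exp(-\Omega(m\,\delta_\sigma^2\,y'^c))$ for some small $c$.

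The main technical obstacle is a careful choice of the threshold $\delta_\sigma$ to balance the two contributions. Optimizing $\delta_\sigma$ so that the pretty-good-measurement failure probability and the Hoeffding tail after bias absorption give comparable bounds will typically set $\delta_\sigma\sim h\, y'^2/\lVert C^j_{BR}\rVert$ up to constants, and propagating this through the pretty-good-measurement bound produces an exponent scaling like $m\,(h\,y'^2/\lVert C^j_{BR}\rVert)^{2a}\,y'^b$ for some small positive $a,b$, and Hoeffding contributes $\bar n\,h^2\,y'^2/\lVert C^j_{BR}\rVert^2$. Combining these, and tolerating crude bookkeeping, will give an overall exponent of the form $-\min(m/n,\bar n/n)\,n\,h^8 y'^8/(c\lVert C^j_{BR}\rVert^8)$ with $c$ a small numerical constant (stated as $5^8$ in the claim). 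Making this bookkeeping explicit, so as to actually land on the stated power of $8$ and constant $5^8$, is expected to be the most delicate part; I anticipate needing to carefully control the sensitivity of the pretty-good measurement in the low-eigenvalue regime $\sigma\geq y'\mathds{1}$, where one converts fidelity guarantees to operator-inequality guarantees at the expense of several factors of $y'$.
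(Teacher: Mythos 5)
Your proof plan has the same three-part skeleton as the paper's proof: (a)~split the $\tilde\sigma$-integral into a "good estimate" region where $\tilde\sigma$ is close to $\sigma$ and its complement, (b)~on the good region, bound the shift of the conditional mean of $\nu_j$ away from $\operatorname{tr}(C^j_{BR}M_{BR})$, then apply Hoeffding with the threshold shrunk by that shift, and (c)~on the bad region, invoke the pretty-good-measurement guarantee of \cref{z:Ehi5guwR} to bound the mass. That matches the paper's argument step for step.

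The one genuinely different ingredient is step~(b). The paper rewrites the bias as $\operatorname{tr}\bigl[\tilde\sigma_R^{-1/2}C^j_{BR}\tilde\sigma_R^{-1/2}\,\mathcal{M}(\sigma_{AR}-\tilde\sigma_{AR})\bigr]$ and applies H\"older, so the key quantity is $\|\sigma_{AR}-\tilde\sigma_{AR}\|_1$, the trace distance between the \emph{bipartite} purifications $\sigma_R^{1/2}|\Phi\rangle$ and $\tilde\sigma_R^{1/2}|\Phi\rangle$. This is controlled via \cref{z:X2zwPk80} (a Bhatia-type bound on $\|\sqrt{A}-\sqrt{B}\|_2$), which works for arbitrary states at the price of turning a purified distance $w=P(\sigma_R,\tilde\sigma_R)$ into a bound $\lesssim w^{1/4}$; chained with $w\leq\sqrt{x}$ this produces the $x^{1/8}$ and ultimately the $h^8 y'^8/(5^8\|C^j_{BR}\|^8)$ exponent. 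You instead propose to exploit the lower bound $\sigma,\tilde\sigma\gtrsim y'\mathds{1}$ to get an operator-Lipschitz bound for $t\mapsto t^{-1/2}$ (or $t\mapsto t^{1/2}$) on $[y',1]$, giving a bias linear in $\|\tilde\sigma-\sigma\|$ with polynomial $y'$ factors. This is a valid alternative and in fact a \emph{stronger} one in the regime covered by the lemma: setting the bias $\sim \delta_\sigma\|C^j_{BR}\|/y'^2$ equal to $h/2$ gives $\delta_\sigma\sim hy'^2/\|C^j_{BR}\|$, and since the pretty-good-measurement tail decays like $\exp(-mx)$ with $x\sim\delta_\sigma^2$, the PGM exponent comes out $\sim m\,h^2y'^4/\|C^j_{BR}\|^2$, a better power than the paper's $m\,h^8y'^8/(5^8\|C^j_{BR}\|^8)$. (The paper acknowledges in its discussion section that the $y^8$ scaling is an artifact of using \cref{z:X2zwPk80,z:bQEMTISK}, so you have effectively found the "more refined argument" they allude to.) The one thing I would flag is the closing paragraph of your plan: after correctly observing that your balancing gives exponents of the form $m\,(hy'^2/\|C^j\|)^{2a}\,y'^b$, you then assert "this will give an overall exponent of the form $h^8y'^8/(c\|C^j\|^8)$" as if it falls out---it does not; you would be \emph{weakening} your naturally tighter exponent to match the lemma's stated form, and the constant $5^8$ is specific to the paper's bookkeeping and would not reappear. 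So the plan proves the claim (and more), but be explicit that you are discarding strength to match the statement, rather than presenting the $h^8y'^8$ form as what your argument naturally produces.

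One small technical caution: in step~(b) you write $\|\tilde\sigma-\sigma\|_1\leq\delta_\sigma$ and $\tilde\sigma\geq y'\mathds{1}/2$ as if both are given, but the PGM only provides a fidelity guarantee $F^2(\tilde\sigma,\sigma)\geq e^{-x}$; you need to derive the trace-distance and eigenvalue-floor statements from it, exactly as the paper derives $D(\tilde\sigma_R,\sigma_R)\leq\sqrt{x}$ and $\tilde\sigma_R\geq(y'-\sqrt{x})\mathds{1}=y''\mathds{1}$. Also watch the Lipschitz conversion: what you actually need is a bound on $\|\sigma^{1/2}-\tilde\sigma^{1/2}\|_2$ (Hilbert--Schmidt), since $\|\sigma_{AR}-\tilde\sigma_{AR}\|_1\lesssim\|\sigma_R^{1/2}-\tilde\sigma_R^{1/2}\|_2$; the operator-norm bound you sketch for $\tilde\sigma^{-1/2}-\sigma^{-1/2}$ is usable but the route through $\|\sqrt{\sigma}-\sqrt{\tilde\sigma}\|_2\leq\|\sigma-\tilde\sigma\|_2/(2\sqrt{y'})$ is slightly cleaner.
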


We prove this lemma in \cref{z:iBCoMrzo}.  The
core part of the proof is an application of Hoeffding's bound.  Some challenges
include the fact that while the true input state is $\sigma_R$, the measurement
that is carried out by the protocol is
$\tilde\sigma_R^{-1/2} C^j_{BR} \tilde\sigma_R^{-1/2}$ where
$\tilde\sigma_R \approx \sigma_R$.  Properties of the pretty good measurement
combined with a suitable application of continuity bounds enable us to show that
the true expectation value of the measurement outcomes does not deviate too far
from the ideal expectation value $\operatorname{tr}\bigl({C^j_{BR} M_{BR}}\bigr)$ in order to apply
Hoeffding's bound.

The above lemma enables the construction of a test that can discriminate
channels based on their expectation values with respect to the charges
$C^j_{BR}$.  Specifically, fix some real values
$\boldsymbol q = (q_j)_{j=1}^{J} \in \mathbb{R}^J$, let
$0<h'< \min_j\lVert {C^j_{BR}}\rVert $, and define
\begin{subequations}
  \label{z:ZEhFJ.kh}
  \begin{align}
    \chi_{\boldsymbol q;\leq h'}(\tilde\sigma,\boldsymbol j, \boldsymbol z)
    &=
      \chi\Bigl\{{ 
      \forall\ j\in \{{1,\ldots, J}\} :\ 
      \bigl \lvert { \nu_j(\boldsymbol j, \boldsymbol z) - q_j }\bigr \rvert  \leq h'
      }\Bigr\}\ ;
      \label{z:usabQCei}
    \\
    \chi_{\boldsymbol q;\not \leq h'}(\tilde\sigma,\boldsymbol j, \boldsymbol z)
    &=
      \chi\Bigl\{{ \exists\ j \in \{{1, \ldots J}\} :\ 
      \bigl \lvert { \nu_j(\boldsymbol j, \boldsymbol z) - q_j }\bigr \rvert  > h' }\Bigr\}
      = 1 - \chi_{\boldsymbol q;\leq h'}(\tilde\sigma,\boldsymbol j, \boldsymbol z)
      \ .
      \label{z:FdxPFyNe}
  \end{align}
\end{subequations}

The POVM $\bigl\{{ P^{\chi_{\boldsymbol q;\leq q}}_{B^nR^n}, %
  P^{\chi_{\boldsymbol q;\not \leq q}}_{B^nR^n} }\bigr\}$ defined via these condition
functions via~\eqref{z:qVquPoYP} behaves as a
test that determines whether an i.i.d.\@ channel
$\mathcal{M}_{A\to B}^{\otimes n}$ with Choi matrix $M_{BR}$ has expectation
values $\operatorname{tr}({C^j_{BR}\,M_{BR}})$ that are all close to the $q_j$'s, or if there
is at least one value that deviates far from the corresponding $q_j$.

\begin{proposition}[General i.i.d.\@ channel discriminator]
  \label{z:lmCUsTOF}
  The following statements hold:
  \begin{enumerate}[label=(\roman*)]
  \item \label{z:wdi9pumP}
    Let $0 < a < h'$, let $0<y'<1/d_R$, and
    let $\sigma_R \geq y'\mathds{1}$ with corresponding
    $\lvert {\sigma}\rangle _{AR} \equiv \sigma_R^{1/2}\,\lvert {\Phi_{A:R}}\rangle $.  Let
    $\mathcal{M}_{A\to B}$ be any quantum channel such that
    \begin{align}
      \bigl \lvert { \operatorname{tr}\bigl[{C^j_{BR} \mathcal{M}\bigl({\Phi_{A:R}}\bigr) }\bigr]  - q_j }\bigr \rvert 
      < a\quad\forall\ j=1, \ldots, J.
    \end{align}
    Then
    \begin{align}
      \operatorname{tr}\Bigl[{
      P^{\chi_{\boldsymbol{q};\not \leq h'}}_{B^nR^n}
      \,\mathcal{M}^{\otimes n}(\sigma_{AR}^{\otimes n})
      }\Bigr]
    \leq
    \operatorname{poly}({n}) \exp\biggl\{{
    -n\,\min\Bigl({\frac{m}{n}, \frac{\bar n}{n}}\Bigr)\,
      \frac{(h' - a)^8\,y'^8}{5^8 \max_j\lVert {C^j_{BR}}\rVert ^8}
    }\biggr\}\ .
    \end{align}
  \item \label{z:9tjp9JsO} 
    Let $b > 0$ such that $h' < b \leq \min_j \lVert {C^j_{BR}}\rVert $, let
    $0<y'<1/d_R$, and let $\sigma_R \geq y'\mathds{1}$ with corresponding
    $\lvert {\sigma}\rangle _{AR} \equiv \sigma_R^{1/2}\,\lvert {\Phi_{A:R}}\rangle $.  Let
    $\mathcal{M}_{A\to B}$ be any quantum channel such that there exists
    $j_0 \in \{{1, \ldots, J}\}$ with
    \begin{align}
      \bigl \lvert { \operatorname{tr}\bigl[{C^{j_0}_{BR} \mathcal{M}\bigl({\Phi_{A:R}}\bigr) }\bigr]  - q_{j_0} }\bigr \rvert 
      > b\ .
      \label{z:.24EDF0r}
    \end{align}
    Then
    \begin{align}
      \operatorname{tr}\Bigl[{
      P^{\chi_{\boldsymbol{q};\leq h'}}_{B^nR^n}
      \,\mathcal{M}^{\otimes n}(\sigma_{AR}^{\otimes n})
      }\Bigr]
      &\leq
        \operatorname{poly}({n}) \exp\biggl\{{
        -n\,\min\Bigl({\frac{m}{n}, \frac{\bar n}{n}}\Bigr)\,
        \frac{(b - h')^8\,y'^8}{5^8\,\lVert {C^{j_0}_{BR}}\rVert ^8}
        }\biggr\}
        \nonumber\\
        &\leq \operatorname{poly}({n}) \exp\biggl\{{
    -n\,\min\Bigl({\frac{m}{n}, \frac{\bar n}{n}}\Bigr)\,
      \frac{(b - h')^8\,y'^8}{5^8 \max_j\lVert {C^{j}_{BR}}\rVert ^8}
    }\biggr\}\ .
    \end{align}
  \end{enumerate}
\end{proposition}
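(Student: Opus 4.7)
The plan is to reduce both statements to \cref{z:HY7bJv7-} by bounding the condition functions $\chi_{\boldsymbol q;\leq h'}$ and $\chi_{\boldsymbol q;\not\leq h'}$ pointwise by single-$j$ conditions of the form $\chi_{j;M;>h}$, and then transferring these inequalities to operator inequalities on the $P^{\chi}$'s via the linearity and monotonicity properties established in \cref{z:-xGoRZBj}.

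For part~\ref{z:wdi9pumP}, the hypothesis is that $\bigl \lvert  \operatorname{tr}\bigl[{C^j_{BR}\mathcal{M}(\Phi_{A:R})}\bigr] - q_j \bigr \rvert  < a$ for all $j$. By the triangle inequality, whenever $\lvert \nu_j(\boldsymbol j,\boldsymbol z) - q_j\rvert  > h'$ we have $\bigl \lvert  \nu_j(\boldsymbol j,\boldsymbol z) - \operatorname{tr}\bigl({C^j_{BR}M_{BR}}\bigr) \bigr \rvert  > h' - a$. Combined with the union bound $\chi_{\boldsymbol q;\not\leq h'} \leq \sum_{j=1}^J \chi\bigl\{{ \lvert \nu_j - q_j\rvert  > h' }\bigr\}$, this gives the pointwise inequality
\begin{align}
    \chi_{\boldsymbol q;\not\leq h'}(\tilde\sigma,\boldsymbol j,\boldsymbol z)
    \leq \sum_{j=1}^J \chi_{j;M;>(h'-a)}(\tilde\sigma,\boldsymbol j,\boldsymbol z)\ .
\end{align}
By the linearity and monotonicity of $P^{\chi}_{B^nR^n}$ in $\chi$ (\cref{z:-xGoRZBj}~\ref{z:NxY3iV0.} and~\ref{z:W4OXReVA}), this passes to $P^{\chi_{\boldsymbol q;\not\leq h'}}_{B^nR^n} \leq \sum_j P^{\chi_{j;M;>(h'-a)}}_{B^nR^n}$. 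Applying \cref{z:HY7bJv7-} to each term with $h = h'-a$ and absorbing the sum over $j$ and the replacement $\lVert {C^j_{BR}}\rVert  \to \max_j \lVert {C^j_{BR}}\rVert $ into the $\operatorname{poly}(n)$ prefactor yields the stated bound.

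For part~\ref{z:9tjp9JsO}, the hypothesis~\eqref{z:.24EDF0r} singles out one index $j_0$. The event $\chi_{\boldsymbol q;\leq h'} = 1$ requires in particular that $\lvert \nu_{j_0}(\boldsymbol j,\boldsymbol z) - q_{j_0}\rvert  \leq h'$, so by the triangle inequality $\bigl \lvert  \nu_{j_0}(\boldsymbol j,\boldsymbol z) - \operatorname{tr}\bigl({C^{j_0}_{BR} M_{BR}}\bigr)\bigr \rvert  > b - h'$. This gives
\begin{align}
    \chi_{\boldsymbol q;\leq h'}(\tilde\sigma,\boldsymbol j,\boldsymbol z)
    \leq \chi_{j_0;M;>(b-h')}(\tilde\sigma,\boldsymbol j,\boldsymbol z)\ ,
\end{align}
and hence, by monotonicity, $P^{\chi_{\boldsymbol q;\leq h'}}_{B^nR^n} \leq P^{\chi_{j_0;M;>(b-h')}}_{B^nR^n}$. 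A direct application of \cref{z:HY7bJv7-} with $h = b - h'$ and $j = j_0$ then yields both forms of the claimed bound, the second following from $\lVert {C^{j_0}_{BR}}\rVert  \leq \max_j \lVert {C^j_{BR}}\rVert $.

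The whole proof is essentially a bookkeeping exercise in combining a union bound on the condition functions with the building-block concentration estimate of \cref{z:HY7bJv7-}. No step appears particularly delicate; the only minor subtlety is ensuring that $h' - a$ and $b - h'$ both lie in the admissible range $(0, \lVert {C^j_{BR}}\rVert )$ required by \cref{z:HY7bJv7-}, which follows from the hypotheses $a < h'$ and $h' < b \leq \min_j \lVert {C^j_{BR}}\rVert $.
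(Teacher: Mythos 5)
Your proposal is correct and follows essentially the same route as the paper: a pointwise triangle-inequality reduction of the condition functions $\chi_{\boldsymbol q;\not\leq h'}$ (via a union bound) and $\chi_{\boldsymbol q;\leq h'}$ (via the single index $j_0$) to the single-$j$ conditions $\chi_{j;M;>h}$, followed by the monotonicity of $P^\chi$ and a term-by-term application of \cref{z:HY7bJv7-}. The only minor wording quibble is that passing from $\lVert C^j_{BR}\rVert$ to $\max_j\lVert C^j_{BR}\rVert$ in the exponent is a relaxation of the decay rate rather than something absorbed into the $\operatorname{poly}(n)$ prefactor, but your final bound and range checks are right.
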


We present the full proof of this
\lcnamecref{z:lmCUsTOF} in
\cref{z:zCc78QJk}.  The main proof
strategy is to reduce the
conditions~\eqref{z:ZEhFJ.kh} to
conditions of the
type~\eqref{z:ASsLiTCZ}.
In~\ref{z:wdi9pumP}: If a channel
$\mathcal{M}_{A\to B}$ has expectation values $a$-close to the $q_j$'s, then the
event $\bigl \lvert {\nu_j(\boldsymbol j, \boldsymbol z) - q_j}\bigr \rvert  > h'$ can only
happen if
$\bigl \lvert {\nu_j(\boldsymbol j, \boldsymbol z) - \operatorname{tr}(C^j_{BR} M_{BR})}\bigr \rvert  > h' -
a$, whose probability we can upper bound
using~\cref{z:HY7bJv7-}.  A similar argument holds
for~\ref{z:9tjp9JsO}.

We left the dependency of $m$ on $n$ general in the statement of
\cref{z:HY7bJv7-} and
\cref{z:lmCUsTOF}; a suitable choice might be to set
$m$ to a constant fraction of $n$, say, $m = c n$ with $0<c<1$.  If furthermore
$c\leq 1/2$, then we can simplify the terms in the bounds using
\begin{align}
  \min\Bigl({ \frac{m}{n} \,,\; \frac{\bar{n}}{n} }\Bigr) = c\ .
\end{align}
In the following, we can simply pick $c=1/2$.

We observe that the optimal decay rates in the expressions above might not
happen at $c=1/2$, as opposed to the bounds we proved.  Indeed, the proof of our
bounds proceeded with crude inequalities of the type $y'^8 < y'$ for $y'<1$ in
order to obtain a simple expression as the decay rate.  It is possible that a
more careful analysis in the proof of
\cref{z:lmCUsTOF} would reveal a better choice of $m$
as a function of $n$ other than $m=n/2$.

\subsubsection{Construction of an approximate microcanonical operator}

We construct an approximate microcanonical operator $P_{B^nR^n}$, along with its
complement $P^\perp_{B^nR^n} \equiv \mathds{1}_{B^nR^n} - P_{B^nR^n}$, by choosing
the operator $P^\chi_{B^nR^n}$ associated with a condition function $\chi$ of
the form~\eqref{z:ZEhFJ.kh}.
The following theorem establishes that the operator constructed in this way can
satisfy the requirements of an approximate microcanonical channel operator
(\cref{z:Caf8GVY-}).
\begin{theorem}[Construction of an approximate microcanonical channel operator]
  \label{z:JjqyeW8p}
  Let $\boldsymbol q = \{{ q_j }\}_{j=1}^J$, let
  $0<\eta'<\eta < \min_j\lVert {C^j_{BR}}\rVert $, and write $\bar\eta=(\eta'+\eta)/2$.
  Let
  \begin{align}
    P_{B^nR^n}
    &\equiv P^{\chi_{\boldsymbol q;\leq\bar\eta}}_{B^nR^n} \ ;
    &
      P^\perp_{B^nR^n}
    &\equiv  P^{\chi_{\boldsymbol q;\not\leq\bar\eta}}_{B^nR^n} \ ,
  \end{align}
  where $ P^{\chi_{\boldsymbol q;\leq\bar\eta}}_{B^nR^n}, %
  P^{\chi_{\boldsymbol q;\not\leq\bar\eta}}_{B^nR^n} $ are defined
  in~\eqref{z:qVquPoYP} with $m = n/2$ and
  using~\eqref{z:ZEhFJ.kh}.  The
  following statements hold:
  \begin{enumerate}[(\roman*)]
  \item \label{z:W3JG7egy}%
    For any $\epsilon>0$, $\nu>1$, and for any $0<y<1/(\nu d_R)$, let
    $\mathcal{E}_{A^n\to B^n}$ be any quantum channel such that
    \begin{align}
      \max_{\sigma_R\geq y\mathds{1}}
      \operatorname{tr}\bigl[{ P^\perp_{B^nR^n} \mathcal{E}\bigl({\sigma_{AR}^{\otimes n}}\bigr) }\bigr]
      \leq \epsilon\ ,
    \end{align}
    using the shorthand $\lvert {\sigma}\rangle _{AR} \equiv \sigma_R^{1/2}\lvert {\Phi_{A:R}}\rangle $.
    Assume furthermore that
    $\nu \geq 1+ ({\eta-\eta'})/(4\max_j \lVert {C^j_{BR}}\rVert )$.  Then, for any
    $j=1, \ldots, J$,
    \begin{multline}
      \max_{\sigma_R\geq \nu y\mathds{1}} \operatorname{tr}\Bigl[{
      \bigl\{{ \overline{H^{j,\sigma}}_{B^nR^n} \notin [{q_j \pm \eta}] }\bigr\}
      \,
      \mathcal{E}_{A^n\to B^n}\bigl({\sigma_{AR}^{\otimes n}}\bigr)
      }\Bigr]
      \\
      \leq \operatorname{poly}({n}) \exp\Biggl\{{
        -n y^8 \min\biggl({
            -\frac{\log(\epsilon)}{n y^8}
            \,,\;
            \frac{c'({\eta-\eta'})^8}{\max_j \lVert {C^j_{BR}}\rVert ^8}
        }\biggr)
      }\Biggr\}\ ,
      \label{z:jFwlnxgT}
    \end{multline}
    with $c'=1/(2\times 5^8)$.

  \item \label{z:wEDQuyLi}%
    For any $\delta'>0$, $\nu' > 1$, and for any
    $0<y'<1/(\nu' d_R)$, let $\mathcal{E}_{A^n\to B^n}$ be any quantum channel
    such that for all $j=1, \ldots, J$,
    \begin{align}
      \max_{\sigma_R\geq  y'\mathds{1}} \operatorname{tr}\Bigl[{
      \bigl\{{ \overline{H^{j,\sigma}}_{B^nR^n} \notin [{q_j \pm \eta'}] }\bigr\}
      \,
      \mathcal{E}_{A^n\to B^n}\bigl({\sigma_{AR}^{\otimes n}}\bigr)
      }\Bigr]
      \leq \delta'\ ,
    \end{align}
    using the shorthand $\lvert {\sigma}\rangle _{AR} \equiv \sigma_R^{1/2}\lvert {\Phi_{A:R}}\rangle $.
    Assume furthermore that
    $\nu' \geq 1+ ({\eta - \eta'})/(4 \max_j\lVert {C^j_{BR}}\rVert )$.  Then
    \begin{align}
      \max_{\sigma_R\geq \nu' y'\mathds{1}}
      \operatorname{tr}\bigl[{ P^\perp_{B^nR^n} \mathcal{E}\bigl({\sigma_{AR}^{\otimes n}}\bigr) }\bigr]
      &
      \leq
      \operatorname{poly}({n}) \exp\Biggl\{{
      -n y'^8 \min\Biggl({
      -\frac{\log\bigl({\delta'}\bigr)}{ n y'^8 } ,
      \frac{c' ({\eta-\eta'})^8}{\,\max_j \lVert {C^j_{BR}}\rVert ^8 }
      }\Biggr)
      }\Biggr\}
      \ ,
      \label{z:aM9IqJnT}
    \end{align}
    with $c'=1/(2\times 5^8)$.
  \end{enumerate}
\end{theorem}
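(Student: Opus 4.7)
My approach would be to reduce the analysis of the general permutation-invariant channel $\mathcal{E}_{A^n \to B^n}$ to i.i.d.\@ channels via the constrained channel postselection theorem (\cref{z:pnP5Wiy4}), and to leverage the i.i.d.\@ discrimination guarantees of \cref{z:lmCUsTOF} together with the elementary fidelity bound $F^2(\rho,\sigma) \leq 2\bigl({\operatorname{tr}[{E\rho}]\operatorname{tr}[{E\sigma}] + \operatorname{tr}[{(\mathds{1}-E)\rho}]\operatorname{tr}[{(\mathds{1}-E)\sigma}]}\bigr)$ to transfer the hypothesis on $\mathcal{E}$ onto the i.i.d.\@ candidates. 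Since $P_{B^nR^n}$ is manifestly permutation invariant (built using the symmetrization map $\mathcal{S}_{(BR)^n}$ in~\eqref{z:qVquPoYP}), since each projector $\bigl\{{\overline{H^{j,\sigma}}_{B^nR^n}\notin[q_j\pm\eta]}\bigr\}$ is a spectral projector of a permutation-invariant operator, and since the input $\sigma_{AR}^{\otimes n}$ is also permutation invariant, one may replace $\mathcal{E}$ by its permutation-average without affecting either expectation value of interest; I therefore assume $\mathcal{E}$ itself is permutation invariant in what follows.

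For part~\ref{z:W3JG7egy}, write $Q_j \equiv \bigl\{{\overline{H^{j,\sigma}}_{B^nR^n}\notin[q_j\pm\eta]}\bigr\}$ for a fixed $\sigma_R\geq\nu y\mathds{1}$, and apply \cref{z:pnP5Wiy4} to obtain
\begin{align}
  \operatorname{tr}\bigl[{Q_j\,\mathcal{E}(\sigma_{AR}^{\otimes n})}\bigr]
  \leq \operatorname{poly}(n)\int dM_{BR}\;\operatorname{tr}\bigl[{Q_j\,\mathcal{M}^{\otimes n}(\sigma_{AR}^{\otimes n})}\bigr]\,F^2_M
  + \operatorname{tr}\bigl[{Q_j\,\Delta_{B^nR^n}}\bigr]\ ,
  \nonumber
\end{align}
with $F^2_M = \max_{\tau:F(\tau,\sigma)\geq e^{-w}} F^2\bigl({\mathcal{M}^{\otimes n}(\tau_{AR}^{\otimes n}),\mathcal{E}(\tau_{AR}^{\otimes n})}\bigr)$ and $\operatorname{tr}(\Delta) \leq \operatorname{poly}(n) e^{-nw/2}$. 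I then split the integral at the threshold $h_1 \equiv (\eta+\bar\eta)/2$ on the quantity $\operatorname{tr}[{C^j_{BR}M_{BR}}]$. In the \emph{close} region $\lvert {\operatorname{tr}[{C^j_{BR}M_{BR}}] - q_j}\rvert \leq h_1$, a Hoeffding bound applied to $n$ i.i.d.\@ samples of $\sigma_R^{-1/2}C^j_{BR}\sigma_R^{-1/2}$ (of operator norm at most $\lVert {C^j_{BR}}\rVert /(\nu y)$) on $\mathcal{M}(\sigma_{AR})$ yields $\operatorname{tr}\bigl[{Q_j\,\mathcal{M}^{\otimes n}(\sigma^{\otimes n})}\bigr] \leq \operatorname{poly}(n) \exp\bigl\{{-n(\eta-h_1)^2(\nu y)^2/(2\lVert {C^j_{BR}}\rVert ^2)}\bigr\}$, with $F^2_M\leq 1$. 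In the \emph{far} region $\lvert {\operatorname{tr}[{C^j_{BR}M_{BR}}] - q_j}\rvert > h_1$, part~\ref{z:9tjp9JsO} of \cref{z:lmCUsTOF} (taking $b=h_1$, $h'=\bar\eta$) gives $\operatorname{tr}\bigl[{P_{B^nR^n}\,\mathcal{M}^{\otimes n}(\tau^{\otimes n})}\bigr] \leq \operatorname{poly}(n) e^{-nC_1}$ with $C_1 \propto (\eta-\eta')^8 y^8/\max_j\lVert {C^j_{BR}}\rVert ^8$ for every $\tau$ in the fidelity max, provided $\tau_R\geq y\mathds{1}$; combining with the hypothesis $\operatorname{tr}\bigl[{P^\perp_{B^nR^n}\,\mathcal{E}(\tau^{\otimes n})}\bigr]\leq\epsilon$ through the displayed fidelity bound with $E=P_{B^nR^n}$ yields $F^2_M \leq \operatorname{poly}(n)\bigl({\epsilon + e^{-nC_1}}\bigr)$. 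Summing the two contributions together with the $\Delta$ error then produces~\eqref{z:jFwlnxgT}.

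Part~\ref{z:wEDQuyLi} proceeds along the same lines with the roles of $P_{B^nR^n}$ and of the projectors $Q_j$ swapped: one bounds $\operatorname{tr}\bigl[{P^\perp_{B^nR^n}\,\mathcal{E}(\sigma^{\otimes n})}\bigr]$ by the postselection integral, splits at the symmetric threshold $(\bar\eta+\eta')/2$ on the i.i.d.\@ expectation values, handles the close region via part~\ref{z:wdi9pumP} of \cref{z:lmCUsTOF}, and handles the far region via Hoeffding applied to the displaced mean $\operatorname{tr}[{C^{j_0}_{BR}M_{BR}}]$ combined with the hypothesis $\operatorname{tr}\bigl[{Q_{j_0}\,\mathcal{E}(\tau^{\otimes n})}\bigr]\leq\delta'$ through the same fidelity bound, this time with $E = Q_{j_0}$. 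The principal technical obstacle I anticipate is controlling the perturbation $\tau$ in the fidelity maximum: the hypotheses on $\mathcal{E}$ are stated only for inputs supported above the threshold $y$ (resp.~$y'$), whereas the postselection theorem allows $\tau$ to range over an $F$-ball of radius $e^{-w}$ around the chosen $\sigma$. Translating $F(\tau_R,\sigma_R)\geq e^{-w}$ into an operator-norm estimate $\lVert {\tau_R-\sigma_R}\rVert _\infty \leq 2\sqrt{2w}$ via the purified-distance-to-trace-distance chain and demanding $\tau_R\geq y\mathds{1}$ forces $\nu y \geq y + 2\sqrt{2w}$; matching the resulting permissible $w$ against the exponential rate $C_1 \sim (\eta-\eta')^8/\max_j\lVert {C^j_{BR}}\rVert ^8$ is what ultimately produces the stated condition $\nu \geq 1 + (\eta-\eta')/(4\max_j\lVert {C^j_{BR}}\rVert )$ up to constants, while keeping $e^{-nw/2}$ from dominating the exponential rate in the final bound.
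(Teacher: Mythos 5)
Your proposal is correct and follows essentially the same route as the paper's proof: reduce $\mathcal{E}$ to i.i.d.\@ channels via the constrained postselection theorem, split the resulting integral on the distance $\lvert \operatorname{tr}[C^j_{BR}M_{BR}]-q_j\rvert $, use Hoeffding when $\mathcal{M}$ is close, and bound the fidelity factor via the two-outcome test $\{P_{B^nR^n},P^\perp_{B^nR^n}\}$ (resp.\@ $\{Q_{j_0},Q_{j_0}^\perp\}$) combined with \cref{z:lmCUsTOF} and the hypothesis on $\mathcal{E}$ when $\mathcal{M}$ is far. The one genuine difference is that you invoke \cref{z:pnP5Wiy4} as a black box, whereas the paper re-derives the smoothing from scratch (constructing $L_{R^n} = \widehat\sigma_{R^n}^{1/2}\zeta_{R^n}^{-1/2}$, applying the gentle-measurement lemma, using Carath\'eodory to re-purify the conditioned de~Finetti state, and invoking \cref{z:UEUfoDLC} directly). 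The paper explicitly apologizes that it was written before \cref{z:pnP5Wiy4} was isolated and was not simplified to use it; your version is the cleaner presentation the authors say they would have preferred. Your identification of the threshold-shrinkage issue (translating the fidelity ball around $\sigma_R$ into an operator-norm lower bound on the competing $\tau_R$, choosing $w \sim (\nu-1)^2 y^2$, and matching against the rate $(\eta-\eta')^8 y^8$) is exactly what the paper does and is what yields the condition $\nu \geq 1+(\eta-\eta')/(4\max_j\lVert C^j_{BR}\rVert)$.
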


We prove \cref{z:JjqyeW8p} in
\cref{z:p2qVGb1M}.  The overarching proof strategy
is to use our constrained postselection theorem
(\cref{z:UEUfoDLC}) to reduce the global channel
$\mathcal{E}_{A^n\to B^n}$ to i.i.d.\@ channels
$\mathcal{M}_{A\to B}^{\otimes n}$.

\subsubsection{Parameter regimes for the construction of an
  approximate microcanonical operator}

\Cref{z:JjqyeW8p} implies that there exist approximate
microcanonical channel operators with the following parameters.  Let
$c_{\mathrm{min}} \equiv \min_j \lVert {C^j_{BR}}\rVert $,
$c_{\mathrm{max}} \equiv \max_j \lVert {C^j_{BR}}\rVert $,
and $c'' = (c_{\mathrm{min}}/c_{\mathrm{max}})^8 \cdot (2\times 10^8)^{-1}$.
Let $\alpha_1>0$, $\alpha_2>0$, $\beta_1>0$, $\beta_2>0$, $\gamma>0$, with
$\gamma+\beta_1 < 1/8$ and $\gamma+\beta_2 < 1/8$, and set
\begin{align}
  y &= n^{-\beta_1}\ ;
  &
    y' &= n^{-\beta_2}\ ;
  &
    \eta &= c_{\mathrm{min}} n^{-\gamma}\ ;
  &
    \eta'&= \eta/2\ ;
  &
    \nu &= \nu' = 3/2\ .
\end{align}
Observe that $\eta,\eta' < c_{\mathrm{min}}$ and
$(\eta-\eta') \leq 2c_{\mathrm{min}} \leq 2 c_{\mathrm{max}}$ so the
$\eta,\eta',\nu,\nu'$ parameters satisfy the constraints of
\cref{z:JjqyeW8p}.  
Then, by \cref{z:JjqyeW8p}, there exists an
$(\eta,\epsilon,\delta, y, \nu, \eta',\epsilon',\delta', y',\nu')$-approximate
microcanonical channel operator, with
\begin{align}
  \begin{aligned}
  \epsilon &= \exp\bigl({-n^{\alpha_1}}\bigr)\ ;
  &\qquad\qquad
  \delta &=
           \operatorname{poly}({n}) \exp\biggl\{{
               - n^{\min\bigl({ \alpha_1, 1-8\beta_1-8\gamma + \frac{\log(c'')}{\log(n)} }\bigr) }
           }\biggr\}\ ;
  \\
    \delta' &= \exp\bigl({-n^{\alpha_2}}\bigr)\ ;
  &
    \epsilon' &=
           \operatorname{poly}({n}) \exp\biggl\{{- n^{
                \min\bigl({ \alpha_2, 1-8\beta_2-8\gamma + \frac{\log(c'')}{\log(n)} }\bigr)
           }}\biggr\}
           \ .
  \end{aligned}
\end{align}
To be within the scope of \cref{z:1tRx46YH} (to
show that we recover a thermal channel from a microcanonical channel), we need
some further restrictions on the parameters.  Namely, the conditions
$\epsilon'\leq\epsilon$ and
$2c_{\mathrm{max}}^2 \log(2/\delta') \leq n\eta'^2 y'^2$ are satisfied for large
enough $n$ if
\begin{align}
  \min\Bigl({ \alpha_2, 1-8\beta_2-8\gamma}\Bigr)
  &>
  \alpha_1\ ;
  &
    1 - 2\beta_2 - 2\gamma
  &> \alpha_2\ .
\end{align}
Concretely, we can choose $\alpha_1 = 1 - 17\gamma$, $\alpha_2 = 1 - 5\gamma$,
$0 < \gamma = \beta_1 = \beta_2 < 1/16$, in which case
\begin{align}
  \delta
  &= \operatorname{poly}({n}) \exp\bigl({-n^{1-17\gamma}}\bigr)\ ;
  &
    \epsilon'
  &= \operatorname{poly}({n}) \exp\bigl({-n^{1-17\gamma}}\bigr)\ .
\end{align}
Choosing arbitrarily small $\gamma>0$ will have
$\epsilon,\delta,\delta',\epsilon'$ all decay almost as $\sim\exp(-n)$.

We'll keep the degree of the $\operatorname{poly}(n)$ polynomial term a closely-guarded state
secret communicated solely via Signal messenger.

\section{Passivity and resource-theoretic considerations for the thermal quantum channel}
\label{z:MO8enbcL}

\subsection{Thermal quantum channels are passive}

An important property obeyed by the thermal state is its energy passivity.
Given a Hamiltonian $H$, a state $\rho$ is \emph{energetically passive} if for
any unitary operation $U$ we have $\operatorname{tr}({ U \rho U^\dagger H}) \geq \operatorname{tr}({\rho H})$.
I.e., a unitary operation can only increase the energy of the state.  A state
$\rho$ is \emph{energetically completely passive} if $\rho^{\otimes n}$ is
passive for all $n$ with respect to the $n$-copy Hamiltonian
$H^{(n)} = H_1 + H_2 + \cdots$.

Energy passivity refers to the following property of the thermal state: It is
impossible to lower the thermal state's average energy by applying a unitary.
This property is reversed for negative temperatures.  In a spin system at nearly
maximal energy, where the thermal state has negative temperature, it is
impossible to \emph{increase} the energy of the state by applying a unitary.
The sign of the temperature indicates the direction in which it is impossible to
change the energy.

In the presence of multiple conserved quantities, a thermal state can act as a
``converter'' between different charges, lowering one charge at the expense of
increasing another one. %
For instance, in a grand canonical setting, there might be a unitary that lowers
the energy at the expense of increasing the number of particles.  To formulate
passivity in the presence of multiple charges, we ask here that the unitary
lowers the state's energy (at positive temperature) without increasing any of
the other charges (at positive chemical potentials).

It is worth phrasing this version of the passivity property for states more
generally in the context of multiple conserved charges.  From Lagrange duality,
the ``generalized chemical potentials'' $\mu_j$ in \cref{z:chtPfev.}
provide information about the ``direction'' in which the constraint
$\operatorname{tr}(\rho Q_j)=q_j$ is active~\cite{R95}, in the
following sense.  If $\mu_j=0$, then the constraint is not active; it can be
removed without changing the optimal state $\gamma$.  If $\mu_j > 0$, the
constraint is active in the positive direction: it can be replaced by an
inequality $\operatorname{tr}({\rho Q_j})\leq q_j$ without changing the optimal state $\gamma$.
Finally if $\mu_j < 0$, the constraint is active in the other direction and can
be replaced by $\operatorname{tr}({\rho Q_j})\geq q_j$ without changing the optimal state
$\gamma$.
A passivity property for the thermal state with respect to one of the charges,
say $Q_1$, can be proven as follows.
We first assume that $\mu_j \geq 0$ for all $j=1,\ldots,J$ (or else we flip the
corresponding $Q_j, q_j$ to $-Q_j, -q_j$).  Then all equality constraints
$\operatorname{tr}({\rho Q_j}) = q_j$ can be replaced by inequalities $\operatorname{tr}({\rho Q_j})\leq q_j$
without changing the optimal state $\gamma$.
We ask whether there exists a unitary $U$ such that
$\operatorname{tr}({Q_1 U \gamma U^\dagger}) < \operatorname{tr}({Q_1 \gamma})$ and such that for all
$j=2, \ldots, J$, we have $\operatorname{tr}({Q_j U \gamma U^\dagger}) \leq \operatorname{tr}({Q_j \gamma})$.
Suppose such a $U$ existed, with $\bar\gamma= U \gamma U^\dagger$ satisfying
$\operatorname{tr}({Q_1 \bar\gamma}) = \operatorname{tr}({Q_1 U \gamma U^\dagger}) \equiv \bar{q}_1
< \operatorname{tr}({Q_1 \gamma}) = q_1$.
By sensitivity analysis and Lagrange
duality~\cite{R95}, and since $\mu_1>0$, we must have
${S}_{}^{}({\bar\gamma}) < {S}_{}^{}({\gamma})$.  (The dual variable associated with a
constraint determines the variation of the objective function if the constraint
is perturbed.)  But this statement contradicts the fact that $\bar\gamma$ and
$\gamma$ are related by a unitary and must therefore have the same entropy.

As it turns out, the above argument can be extended to a passivity passivity for
the thermal quantum channel.  Consider the
problem~\eqref{z:sb5FfEOw} and let $\mathcal{T}$ be the
corresponding thermal quantum channel of the
form~\eqref{z:s1gf97hu} with generalized chemical
potentials $\{{\mu_j}\}$.  Assume that $\mu_j\geq0$ for all $j=1, \ldots, J$ and
that $\mu_1 > 0$.  We ask whether there exist unitary operations $U_A, U_B'$
such that for the unitarily rotated channel
$\mathcal{T}'({\cdot}) \equiv U_B' \mathcal{T}\bigl({ U_A ({\cdot}) U_A^\dagger }\bigr)
U_B'^\dagger$ we have $\operatorname{tr}[{C^1_{BR}\mathcal{T}'(\Phi_{A:R})}] < q_1$ while still
obeying all remaining inequality constraints.

Suppose such $U_A, U_B'$ existed.  By sensitivity analysis of convex problems,
it must hold that ${S}_{}^{}({\mathcal{T}'}) < {S}_{}^{}({\mathcal{T}})$.
Let $U_R \equiv (U_A)^{t_{A\to R}}$.
Exploiting unitary invariance of the relative entropy for the unitary
$U_B^\dagger \otimes U_R^\dagger$, we find
\begin{align}
  {S}_{}^{}({\mathcal{T}'})
  &= - \max_{\phi_{R}}
  {D}_{}^{}\Bigl ({ U_B\mathcal{T}\bigl({
  U_A \phi_R^{1/2} \Phi_{A:R} \phi_R^{1/2} U_A^\dagger
  }\bigr) U_B^\dagger
  }\mathclose{}\,\Big \Vert\,\mathopen{}{\mathds{1}_B\otimes \phi_R}\Bigr )
  \nonumber\\
  &= - \max_{\phi_{R}}
  {D}_{}^{}\Bigl ({ \mathcal{T}\bigl({
  U_R^\dagger \phi_R^{1/2} (U_A)^t \Phi_{A:R}
    (U_A^\dagger)^t \phi_R^{1/2} U_R^\dagger
  }\bigr) 
  }\mathclose{}\,\Big \Vert\,\mathopen{}{\mathds{1}_B\otimes U_R^\dagger \phi_R U_R}\Bigr )
    \nonumber\\
  &= -\max_{\phi_R'} {D}_{}^{}\Bigl ({ \mathcal{T}\bigl({
    \phi_R'^{1/2} \Phi_{A:R} \phi_R'^{1/2}
  }\bigr) 
  }\mathclose{}\,\Big \Vert\,\mathopen{}{\mathds{1}_B\otimes \phi'_R}\Bigr )
    \quad = {S}_{}^{}({\mathcal{T}})\ ,
\end{align}
letting $\phi'_R = U_R^\dagger \phi_R U_R$.  This statement contradicts our
earlier conclusion that ${S}_{}^{}({\mathcal{T}'}) < {S}_{}^{}({\mathcal{T}})$.
In conclusion, there can exist no unitaries $U_A, U_B'$ such that
$\operatorname{tr}[{C^1_{BR}\mathcal{T}'(\Phi_{A:R})}] < q_1$ and
$\operatorname{tr}[{C^j_{BR}\mathcal{T}'(\Phi_{A:R})}] \leq q_j$ for all $j>1$.

We expect it is possible to continue along this approach and generalize the idea
of complete passivity to channels.  The anticipation is that, for a given set of
constraints and generalized chemical potentials, the unique completely passive
channel should be the thermal quantum channel.  We discuss some challenges in
extending this argument from states to channels in the discussion section below.

\subsection{Challenges for a thermodynamic resource theory of channels}

A resource theory studies possible transformations that an agent can perform on
an abstract set of objects.  The objects considered here can be quantum states
or quantum channels.  (The term `dynamical resource theory' is sometimes used
when the objects are quantum channels.)
The agent is allowed to perform any sequence of operations
from a fixed set (the \emph{free operations}).  They are allowed to tensor in
any additional object from another fixed set (the \emph{free states} or
\emph{free channels}).  
The resource theory of thermodynamics for states
provides a solid basis to refine the laws of thermodynamics in the quantum,
single-shot regime (cf.\@ e.g.\@~\cite{R30,R33,R119,R120,R26,R121,R122,R123,R34,R124} and references therein).

The state resource theory of work and heat~\cite{R122,R125,R71} considers a resource theory in
which both purity and energy are individual resources.  Specifically, free
operations in this resource theory are defined as unitary operations that are
strictly energy-conserving, and there are no free states.  Purity is a resource:
Outputting a state with low entropy requires an initial state that itself is
sufficiently pure.  Energy is also a resource: Producing an output state at a
given energy requires an input state with that energy, and changing the energy
of a state requires an opposite energy change of some ancillary system.
In this resource theory, an ancillary system $A$ in the thermal state
$\gamma_\beta \propto {e}^{-\beta H_A}$ has the property of enabling conversion
of purity to energy.  Given an input state with high purity but low energy, and
given access to $\gamma_\beta$, we can produce an output state with high energy.
Asymptotically reversible, an amount of negative entropy $-dS$ is converted into
energy $dE$ at a proportion determined by the temperature of the thermal state,
$\beta dE = dS$.  This relation is a manifestation of the first law of
thermodynamics.  We might view the thermal state as a ``bank,'' converting an
amount of one ``currency'' (energy) into another ``currency'' (purity), at a
fixed rate (determined by the thermal state's temperature).

In a more general setting, we consider the merging of two individual arbitrary
resource theories~\cite{R125}.  The corresponding
multi-resource theory is identified as the resource theory whose free operations
lie in the intersection of both resource theories.  If both resource theories
are individually asymptotically reversible with corresponding monotones
$E_1(\rho)$ and $E_2(\rho)$, and under certain additional assumptions, then
there is a special type of state (``bank state'') that enables the conversion of
one type of resource into another.  A state $\tau$ is a bank state if and only
if for all $\sigma$~\cite{R125},
\begin{align}
  E_1(\sigma) > E_1(\tau)
  \quad\text{or}\quad
  E_2(\sigma) > E_2(\tau)
  \quad\text{or}\quad
  [\ E_1(\sigma) = E_1(\tau) ~~\text{and}~~ E_2(\sigma) = E_2(\tau)\ ]\ .
  \label{z:fZv-ZoXu}
\end{align}
It is also clear from the resource diagrams of~\cite{R125}
that a bank state is a state that minimizes one resource if the other resource
is kept constant.  In the case of energy and purity, this minimization
corresponds to Jaynes' principle.

Recently, several quantum resource theories have been extended from states to
channels~\cite{R50,R47,R126,R53,R52}.
It would be natural to assume that in a channel version of the resource theory
of thermodynamics, the thermal quantum channel plays a role that is analogous to
the thermal state in the quantum state resource theory of thermodynamics.
In particular, one might expect that a thermal quantum channel would enable the
conversion between two putative resources of channel purity and channel energy.

Here, we point to missing foundations to establish a thermodynamic resource theory
of channels that would have such a property.

We outline a challenge in identifying a channel version of noisy
operations~\cite{R127}, a degenerate version of thermodynamics
where the system Hamiltonian is trivial~\cite{R31}.
In the resource theory of noisy operations,
a state $\rho$ with high entropy ${S}_{}^{}({\rho})$ is less useful than a state with low
entropy.
Anticipating that the channel's entropy ${S}_{}^{}({\mathcal{N}})$ should play an analogous
role to the state's entropy, we find that the identity channel would be the most
resourceful channel given that it has minimal entropy.
This observation is in tension with most
common channel resource theories, in which the identity channel is considered a no-op
allowed for free (cf.\@ e.g.\@~\cite{R50}).
We anticipate that to construct a thermodynamic resource theory of channels, it is
useful to consider a scenario in which the identity channel is resourceful.
Such a scenario occurs in the context of quantum communication, where
the identity channel describes
perfect communication between two parties.
One typically aims to distill such a highly resourceful channel
using any available lower quality noisy channels.
A scenario in which the reversible conversion rate is the channel
entropy is detailed in \R\cite{R46}.  One considers
a three-party setting in which Alice communicates to Bob and Eve via a pure broadcast
channel modeled by an isometry $V_{A\to BE}$.  The optimal rate at which Bob can perform
quantum state merging~\cite{R128,R129} of
his state with Eve coincides with the entropy of the channel
$\mathcal{N}_{A\to B}(\cdot) = \operatorname{tr}_E[{ V_{A\to B}\,({\cdot})\,V^\dagger }]$.

Let us now suppose that we constructed a resource
theory of channels in which the resource is channel purity, as measured by
$-{S}_{}^{}({\mathcal{N}})$; we assume this resource theory provides
some satisfactory (even if rough) channel analog of the resource
theory of noisy operations.
Mimicking the state approach to the resource theory of work and
heat~\cite{R122,R125,R71}, one would consider a multi-resource
theory combining the channel purity resource theory with a channel energy resource theory.
The latter might be defined, for instance,
by considering channel superoperations that strictly conserve
both the input and output energy of any channel.
To establish the thermal quantum channel as being able to convert between resources,
the full analysis of ref.~\cite{R125} would have to be carried out
again in the channel setting.  In particular, one would have to ensure that both
individual channel resource theories are asymptotically reversible with a single
monotone.
One might anticipate, in such a case, that the ``bank channel'' defined analogously
to~\eqref{z:fZv-ZoXu}, is the quantum thermal channel.  This would follow
from the fact that the quantum thermal channel would optimize one resource monotone
(the channel entropy) under a constraint fixing the other monotone (an energy monotone,
which one would consider as a constraint in the definition of the thermal quantum channel).

\section{Discussion}
\label{z:aLWklD3j}

We establish the concept of a \emph{thermal channel} as an extension to quantum
channels of the thermal state.
We present two independent constructions of the thermal channel, extending
different equivalent constructions of the thermal state, and we
show that they lead to the same channels.
The widespread relevance of the thermal state throughout
physics, information theory, machine learning, and quantum computing,
inspires promising applications for the analogous concept for quantum channels.

We extend Jaynes' fundamental maximum entropy
principle~\cite{R3,R4,R6} to quantum channels,
exploiting recent extensions of the concept of information-theoretic entropy to
channels~\cite{R79,R45,R130,R93,R50,R46}.  Specifically, we determine which quantum channel
$\mathcal{T}$ has maximal \emph{channel entropy} subject to a set of linear
constraints.
The channel $\mathcal{T}$ has a form that extends the exponential form
of the Gibbs distribution of the thermal state, in a way that accounts for
the optimal input state in the definition of the channel entropy.
We find an explicit form for thermal channels resulting from the maximum channel
entropy principle.  Such channels have a Choi matrix with an exponential form
reminiscent of the thermal state.  The form also involves a state $\phi_R$,
interpreted as a hypothetical input state to the channel, and identified as the
state that is optimal in the definition of the channel entropy.

A second independent approach, which extends the microcanonical ensemble for
quantum states to quantum channels, reinforces the maximum channel-entropy
principle approach by leading to the same concept of a thermal channel.
Specifically, we identify a set of channels that act on $n$ copies of the
input system and for which measurement of the constraint operators give
suitably sharp statistics for almost all input states.  We define the
\emph{microcanonical channel} as the channel that is most ``mixed'' (according to
its channel entropy) in this set.
If we act on any i.i.d.\@ state $\phi^{\otimes n}$, the microcanonical channel's
reduced action on a single pair of input and output systems reduces to the
thermal channel with respect to $\phi$.

The general mathematical structure of the thermal quantum channel
(\cref{z:TLiwdJGR}) involves a state $\phi_R$, defined
implicitly as the input for which the corresponding channel produces the least
entropy relative to $R$.  If the constraints obey some symmetry on their input
system, the $\phi_R$ inherits the same symmetry (cf.\@
\cref{z:nCrJ.k1B,z:JM5--Sw6,z:9P7ZvFQO}).  This property significantly
narrows down the possible optimal $\phi_R$ in cases, for example, where the
constraint operators are Pauli-covariant, are classical, or all commute with a
fixed operator on $R$.
Yet the optimal state $\phi_R$ might be difficult to determine in general from
the constraint operators directly.  In such cases, it is convenient to fix
$\phi_R$ and to compute the \emph{thermal quantum channel with respect to
  $\phi_R$}, defined as a channel maximizing
${S}_{}^{}({B}\mathclose{}\,|\,\mathopen{}{R})_{{\mathcal{N}(\phi_{AR})}}$ subject to the given constraints but for
fixed $\lvert {\phi}\rangle _{AR} \equiv \phi_R^{1/2}\lvert {\Phi_{A:R}}\rangle $.  For full-rank
$\phi_R$, the maximizer is unique and has the form given in
\cref{z:hd7.cLe.}.
\Cref{z:33B55hFY} gives the mathematical form of
the thermal quantum channel with respect to a general $\phi_R$.  The
interpretation of fixing $\phi_R$ is to quantify the channel's \emph{average}
output entropy (relative to $R$) over input states, weighted by $\phi_R$; in
contrast, ${S}_{}^{}({\mathcal{N}})$ computes the minimum of the output entropy
(relative to $R$) over all inputs.
The channel entropy with respect to $\phi_R$ can vary significantly as a
function of $\phi_R$.  Consider a channel
$\mathcal{T}(\cdot) = \langle {0}\mkern 1.5mu\relax \vert \mkern 1.5mu\relax {\cdot}\mkern 1.5mu\relax \vert \mkern 1.5mu\relax {0}\rangle _A\lvert {0}\rangle \mkern -1.8mu\relax \langle{0}\rvert _B +
(1-\langle {0}\mkern 1.5mu\relax \vert \mkern 1.5mu\relax {\cdot}\mkern 1.5mu\relax \vert \mkern 1.5mu\relax {0}\rangle _A)\,\mathds{1}_B/d_B$, which outputs the maximally mixed state
for nearly all inputs.  (Such a channel may arise as a thermal quantum channel
through a particular type of constraint, such as strict energy conservation with
respect to a Hamiltonian $H = \lvert {0}\rangle \mkern -1.8mu\relax \langle{0}\rvert $.)  In such a case, the channel's entropy
with respect to the maximally mixed state is high, $\sim ({1-1/d_A})\log(d_B)$,
whereas the channel's entropy is zero as attained by $\phi_R = \lvert {0}\rangle \mkern -1.8mu\relax \langle{0}\rvert _R$.

A possible alternative approach to define the thermal channel might have been to
maximize the entropy of a channel's normalized Choi state subject to the
constraints.  (The requirement that the state be maximally mixed on the
reference system could be imposed by further linear constraints.)  From the
state maximum-entropy principle, the solution is a Choi state with the
exponential form of a thermal state.
In fact, this approach coincides with the thermal channel with respect to the
input maximally mixed state $\phi_R = \mathds{1}_R/d_R$.  However, this approach
neglects the fact that the channel can act very differently on distinct input
states.  The channel's entropy, for instance, can vary significantly if it is
computed with respect to a different input state.
Such a behavior can appear naturally for large $n$, a regime in which all
i.i.d.\@ states are nearly perfectly distinguishable; in this regime, a $n$-copy
channel can choose to act as it pleases on different i.i.d.\@ inputs.
The concept of thermal channel defined in this work avoids designating \emph{a
  priori} a preferred input state.  This property is evident in the
microcanonical approach: There exist channels acting on $n$ copies of the inputs
with sharp constraint-measurement statistics for the maximally mixed input state
but where those measurements can fluctuate significantly for other i.i.d.\@
inputs.

Our constructions reduce to the standard thermal state simply by considering the
input system to be a trivial system (a one-dimensional system spanned by a
single state $\lvert {0}\rangle $).  In this case, the channel entropy is the output state's
entropy, and the constraints we consider translate to linear constraints on the
output state.  Therefore, the maximum-channel-entropy principle coincides with
the state maximum-entropy principle.  Furthermore, our microcanonical approach
reduces to the concept of an approximate microcanonical subspace
(cf.~\R\cite{R26}) on $n$ copies of the system,
whose reduced state on a single copy is close to the thermal state.

Our approach works for arbitrary linear constraints on the channel, including
inequality constraints as well as constraints associated with charges that do
not commute.
Inequality constraints are useful, for example, should we wish to constrain an
expectation value to an interval
$\operatorname{tr}[{ C^j_{BR} \mathcal{N}({\Phi_{A:R}}) }] \in [q_j - \epsilon, q_j +
\epsilon]$, as well as for passivity arguments (cf.\@
\cref{z:MO8enbcL}).
Noncommuting constraints appear already in the case of quantum states.
A microcanonical derivation of the thermal states with noncommuting charges
presented a number of challenges owing to the fact that there are generally no
common eigenspaces to noncommuting
observables~\cite{R26}.  
Recently, a number of platforms and settings were investigated where
noncommuting conserved charges can lead to the so-called \emph{non-Abelian
  thermal state}~\cite{R41,R131,R40}.
We anticipate similar exciting applications for thermal quantum channels with
respect to noncommuting constraints.

Recently, \R\cite{R57} considered the problem of
optimizing the relative entropy between quantum channels using semidefinite
programming, by discretizing an integral representation of the relative
entropy~\cite{R132}, and the techniques of
\R\cite{R56}.
Their optimization is well-suited for computing resource measures in a resource
theory of channels, which involves minimizing the channel relative entropy with
its second argument ranging over a convex set of free operations.
Their representation can further be leveraged to numerically compute
approximations of the thermal quantum channel, by optimizing over the first
argument of the channel relative entropy rather than the second.
We employ their techniques for computing the updates in our proof of concept
learning algorithm runs in \cref{z:wXQ-uunR}.
While the optimization in the maximum channel entropy principle has favorable
convexity properties, it appears difficult to obtained closed form expressions
of the ``chemical potentials'' $\mu_j$, the ``operator free energy'' $F_R$, and
of $\phi_R$ in the thermal channel, beyond the conditions stated in
\cref{z:TLiwdJGR}.
However, a similar issue already arises for quantum states: While finding
$\gamma_S(\beta)$ is a convex optimization problem, determining the partition
function $Z(\beta)$ (from which we can compute physical properties of the
system, including a relation between $\beta$ and the constraint energy $E$) can
be hard (cf.\@ e.g.\@ \cite{R133}).

What channel would one find if we minimized the thermodynamic
capacity $T(\mathcal{N})$ rather than maximizing the channel's entropy
${S}_{}^{}({\mathcal{N}})$?  After all, these quantities are equivalent up to a sign
and up to exchanging the output and environment systems
[cf. \cref{z:vTzieZ6M}]; the two optimizations only
differ in whether the channel or its complement is subject to the constraints.
The optimization of the channel entropy is ultimately justified by our
microcanonical channel arguments.  Also, optimizing $T(\mathcal{N})$ appears
poorly motivated for singling out a unique thermal channel in most cases.  In
the absence of constraints, the unique channel that maximizes the channel
entropy is the fully depolarizing channel.  On the other hand, any unital
channel minimizes the thermodynamic capacity if the input and output system
dimensions coincide; the unital channels form a large set that includes
depolarizing channels, the identity channel, as well as measurement/dephasing
channels.
(It can appear counterintuitive that the optimization of the channel entropy and
that of the thermodynamic capacity are qualitatively so different, in the light
of the equivalence of these measures
in~\eqref{z:vTzieZ6M}.  The difference lies in the
dimensionalities of the output and environment systems.  Specifically,
maximizing the channel entropy $A\to B$ is equivalent to minimizing the
thermodynamic capacity of a channel $A\to E$, but whose Stinespring dilation
environment is constrained to be of dimension at most $d_B$ with $d_E = d_Ad_B$.
The latter constraint severely restricts the channels considered in this
optimization.)

Our microcanonical approach to define the thermal channel introduces an
additional form of typicality for quantum channels and multipartite or relative
quantum states~\cite{R134,R135,R136,R126,R137,R82,R138}.
A distinct feature of our approximate microcanonical operator, as opposed to
typical projectors for states, is that relevant concentration
properties hold \emph{for (almost) all input states to the channel}.  Indeed,
the operator $P_{B^nR^n}$ we construct selects a set of quantum channels
$\{{ \mathcal{E}_{A^n\to B^n} }\}$ with some desired concentration properties
by giving high weight to all states of the form
$\mathcal{E}_{A^n\to B^n}(\sigma_{AR}^{\otimes n})$ for $\mathcal{E}$ in this
set (with
$\lvert {\sigma}\rangle _{AR}=\sigma_R^{1/2}\lvert {\Phi_{A:R}}\rangle $; as long as $\sigma_R$ avoids
nearly vanishing eigenvalues), while leaving low weight to all such states
for channels $\mathcal{E}$ that fail to satisfy the desired concentration
properties.
A naive usage of a state typical projector fails to capture this property.  Using
a projector onto suitable charge eigenspaces (or an approximate microcanonical
projector~\cite{R26,R71})
for a state of the form
$\mathcal{E}_{A^n\to B^n}(\sigma_{AR}^{\otimes n})$
depends on a choice of $\sigma_{AR}$, and rejects states of the form
$\mathcal{E}_{A^n\to B^n}(\sigma_{AR}'^{\otimes n})$ because of the different
reduced state on $R^n$.
Rather, the operator must not reject states based on their reduced state on
$R^n$, but rather only select states with specific correlations between $R^n$
and $B^n$.

We furthermore anticipate that our construction can be leveraged to define a
channel analog of a state's \emph{typical projector}.  A quantum channel can be
uniquely singled out by $d_A^2 (d_B^2 - 1)$ independent linear constraints
The microcanonical operator associated with such constraints can be thought of
as a generalized typical subspace for that channel, as it would select only
global channels compatible with the statistics of the $n$-copy i.i.d.\@ channel.
(Again, the typical projector for a channel's Choi state
$[\mathcal{M}(d_A^{-1}\Phi_{A:R})]^{\otimes n}$ would fail to attribute high
weight to operators of the type $[\mathcal{M}(\sigma_{AR})]^{\otimes n}$ where
$\sigma_R$ is not maximally mixed.) %

Defining the \emph{microcanonical channel} from an associated approximate
microcanonical channel operator presents challenges that do
not appear in the case of quantum states.  For quantum states, once a
microcanonical subspace (approximate or not) is identified, it suffices
to normalize the projector onto the subspace to unit trace to find the
most equiprobable state in that subspace.  This state is simultaneously
the most entropic state in that subspace, the unique state
that is invariant under all unitaries within the subspace, as well as
the average state under the measure induced by the Haar measure on those
unitaries.  These properties leave little ambiguity in defining the
microcanonical state.
In the case of quantum channels, however, defining the microcanonical channel
from an approximate microcanonical channel operator $P_{B^nR^n}$ presents
new challenges.  First, it is unclear if the operator $P_{B^nR^n}$ has a reduced
state on $R^n$ that is proportional to the identity $\mathds{1}_{R^n}$, meaning
we might not obtain a valid quantum channel if we simply normalize
$P_{B^nR^n}$ by a suitable constant.  We could attempt to compute the
reduced operator $P_{R^n} = \operatorname{tr}_{B^n}({P_{B^nR^n}})$, and define the now
valid quantum channel
$\Omega'_{B^nR^n} \equiv P_{R^n}^{-1/2} P_{B^nR^n} P_{R^n}^{-1/2}$.
But because of the factors
$P_{R^n}^{-1/2}$, it is unclear if the channel $\Omega'_{B^nR^n}$
inherits the concentration properties captured by $P_{B^nR^n}$ in the
first place---how might we prove that
$\operatorname{tr}[{P_{B^nR^n} \Omega'_{A^n\to B^n}(\sigma_{AR}^{\otimes n})}] \approx 1$?
Alternatively, we could attempt to define a microcanonical channel as an
average over all quantum channels in the ``subspace'' defined by $P_{B^nR^n}$.
Say,
$\Omega''_{B^nR^n} =
\int_{\min_\sigma \operatorname{tr}[P\mathcal{E}(\sigma^{\otimes n})]\geq1-\epsilon} dE_{B^nR^n} \,
E_{B^nR^n}$,
where the measure $dE_{B^nR^n}$ is induced by the Haar measure $dW_{E^nB^nR^n}$
on all isometries $A^n\to B^nE^n$ with $E \simeq BR$.
But it is unclear that there is a transitive unitary group action under which
the measure $dE_{B^nR^n}$ (or $dW_{E^nB^nR^n}$) is invariant, given the presence
of constraints and given the requirement that $E_{B^nR^n}$ be the Choi matrix of
a quantum channel; it is therefore unclear how to compute this average channel,
or if we can show that this channel achieves the maximal channel entropy within the
set of channels with high weight under $P$ for almost all $\sigma$.
An disadvantage of our \cref{z:v1wYNQz5} is that it makes reference
to the channel entropy.  This fact muddles an argument to claim a new
operational interpretation of the channel entropy.  Had the definition of the
microcanonical channel not made reference to the channel entropy, we could
the channel entropy would have found a new operational interpretation as
the quantity to maximize to find reduced states of the microcanonical channel
acting on arbitrary input states.
It is also natural to ask whether we could find an approximate microcanonical
channel operator that is a \emph{projector}, rather than an operator satisfying
$0\leq P_{B^nR^n} \leq \mathds{1}$, analogously to the case of the approximate
microcanonical subspace~\cite{R26,R71}.  It appears possible that we could achieve this by
using an argument similar to the proof in \R\cite{R71}.

We expect several potential improvements to our bounds.
The scaling $y^8$
that appears in these bounds are likely a product of our proof techniques
involving \cref{z:X2zwPk80,z:bQEMTISK}
(\cref{z:JVIqJyHK}); a more refined argument might yield
better bounds. 
Furthermore, the degree of the polynomial in front of the exponential decay terms
in \cref{z:JjqyeW8p} is likely prohibitive in practice for
moderate $n$; it arises from the techniques based on Schur-Weyl duality and
the postselection technique, and might be improved using an alternative analysis.
Also, it appears likely that the protocol defining $P_{B^nR^n}$
could combine the input state estimation with the constraint value estimation,
rather than discarding the samples that were used to estimate the
input state (\cref{z:STrCXktf}).

There are multiple approaches to single out the thermal state beyond Jaynes'
maximum entropy principle and the microcanonical approach
(\cref{z:YsMnGFzU}).  
\begin{figure}
  \centering
  \includegraphics{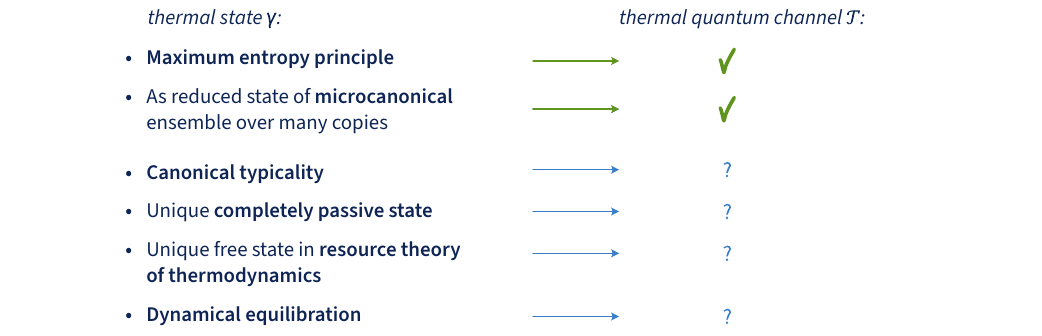}
  \caption{Extending the multiple approaches to define the thermal state to
    quantum channels.  In this work, we extend the maximum entropy principle and
    the microcanonical approach to quantum channels.  We anticipate other
    approaches can be extended to quantum channels, as well.  These approaches
    include canonical typicality~\protect\cite{R27},
    complete
    passivity~\protect\cite{R139,R29,R26}, free resources in the resource
    theory of thermodynamics~\protect\cite{R32,R34} and standard dynamical equilibration arguments
    (e.g.~\protect\cite{R19}).}
  \label{z:YsMnGFzU}
\end{figure}
We anticipate a research program of
understanding how to extend these definitions from states to channels, and to
determine whether they lead to the same thermal quantum channel.  
One such approach is to invoke dynamical equilibration arguments~\cite{R0,R1,R2,R19,R20,R21,R22}.
The thermal state is typically the state to which a many-body system
equilibrates after long times.
We anticipate such arguments could be extended to the case of channels, to prove
that the system's evolution $\mathcal{U}_t$ \emph{equilibrates} in some sense to
the thermal quantum channel.  This equilibration might happen on average,
$\int dt\,\mathcal{U}_t \approx \mathcal{T}$, or might be apparent for a set of
accessible observables $\{{ C^j_{BR} }\}$:
$\operatorname{tr}[{C^j_{BR} \mathcal{U}_t(\Phi_{A:R}) }] \to \operatorname{tr}[{ C^j_{BR}
\mathcal{T}(\Phi_{A:R}) }]$ as $t\to\infty$.  Such arguments would likely require
finer assumptions about the details of the evolution $\mathcal{U}_t$ that go
beyond a maximum channel entropy principle or a microcanonical approach.  This
type of argument would provide an appealing picture of how the evolution of a
system, seen as a full quantum process, converges to the thermal quantum
channel.
Another approach to characterize the thermal state is via the resource theory
of thermodynamics.
In a resource theory of quantum
channels~\cite{R50,R53,R140,R126,R52},
a measure of resourcefulness of a channel $\mathcal{N}$ is the channel
relative entropy with respect to the set of free channels, namely
the smallest channel relative entropy of $\mathcal{N}$ with respect to
some free channel $\mathcal{M}$~\cite{R53,R51,R57}.
The problem considered in this work
is a related problem: supposing we have a single free channel, the maximally
depolarizing channel $\mathcal{D}$, then our task is to find the channel
$\mathcal{N}$ that has the smallest channel relative entropy with respect
to $\mathcal{D}$, subject to a set of constraints. 
Our approach might therefore identify free states in a resource
theory of channels in the presence of additional, linear constraints on the
channels.  For example, if we have a global symmetry where operators are
restricted to act within charge sectors only, then the thermal channel is
a depolarizing map acting within each sector.  This channel appears suitable
for use as a free channel in such a resource theory (see \cref{z:MO8enbcL}
for a discussion of some challenges).

As also discussed in our companion overview paper (ref.~\cite{R18}),
the thermal quantum channel is the ``least informative'' channel that can model
some unknown or complex thermalizing dynamics of a many-body system.  The channel
nature of the problem enables $\mathcal{T}$ to model partial or ``local'' thermalizing
effects that keep some memory of the initial state of the system.  Such is the
case in the example of the average energy conservation constraint in
\cref{z:VuBBeOAb}.
The thermal quantum channel might therefore provide a well-founded model for
local relaxation effects that are known to occur, for example, in Gaussian
systems~\cite{R141,R142,R143}.  
We anticipate further uses of interest for the thermal quantum channel to model
settings with several thermalization mechanisms operating on different
time scales, such as in hydrodynamic regimes~\cite{R144,R145,R146}.

Finally, our work highlights an exciting opportunity to extend a vast landscape
of concepts and methods from the thermal state to the quantum quantum channel,
thereby establishing to which extent
the thermal quantum channel can enjoy a similar level of universality and broad
applicability as the thermal state.

\paragraph*{Note added:}
Our results were submitted to
\href{https://sites.google.com/view/beyondiid13}{%
  \emph{Beyond i.i.d.\@ in information theory} 2025} in April 2025
and accepted as a talk in early June 2025 (cf.\@ 
\url{https://sites.google.com/view/beyondiid13/program}).
During the final stages of completion of our manuscript,
a paper with independent related work by Siddhartha Das and Ujjwal Sen
appeared on the arXiv on July 1, 2025 [Das and Sen, arXiv:2506.24079].

\begin{acknowledgments}
The authors thank
David Jennings,
Jens Eisert,
Michael Walter,
Hamza Fawzi, Omar Fawzi,
Gereon Ko\ss{}mann,
Mark Mitchison,
John Preskill,
and Andreas Winter
for insightful discussions.
PhF is supported by the project FOR 2724 of the Deutsche Forschungsgemeinschaft
(DFG).
\end{acknowledgments}

\appendix

\section{Some general lemmas}
\label{z:0JwGPCOy}
\label{z:JVIqJyHK}

Recall that $P(\rho,\sigma) = \bigl[{ 1 - F^2(\rho,\sigma) }\bigr]^{1/2}$ is the
\emph{purified distance} of states.

\begin{lemma}[Reference state smoothing]
  \label{z:X2zwPk80}
  Let $A\simeq R$ and let $\rho_R$, $\sigma_R$ be any two quantum states on $R$.
  Then
  \begin{align}
    F\Bigl({\rho_R^{1/2} \lvert {\Phi_{A:R}}\rangle , \sigma_R^{1/2} \lvert {\Phi_{A:R}}\rangle }\Bigr)
    =
    \operatorname{tr}\bigl({\rho_R^{1/2}\sigma_R^{1/2}}\bigr)
    \geq 1 - \sqrt{ 2 P\bigl({\sigma_R, \rho_R}\bigr) }\ .
  \end{align}
\end{lemma}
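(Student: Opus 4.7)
My plan is to split the statement into the equality and the inequality, each handled by a standard computation.

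For the equality, I will note that $\rho_R^{1/2}\lvert{\Phi_{A:R}}\rangle$ and $\sigma_R^{1/2}\lvert{\Phi_{A:R}}\rangle$ are (subnormalized) pure states on $AR$, and that the fidelity between any two pure states $\lvert{\psi}\rangle,\lvert{\phi}\rangle$ (normalized or not) reduces, after a direct computation of $\lVert{|\psi\rangle\mkern-1.8mu\relax\langle{\psi}|^{1/2}|\phi\rangle\mkern-1.8mu\relax\langle{\phi}|^{1/2}}\rVert_1$, to $\lvert{\langle{\psi}|\phi\rangle}\rvert$. Applying this with $\lvert{\psi}\rangle = \rho_R^{1/2}\lvert{\Phi_{A:R}}\rangle$ and $\lvert{\phi}\rangle = \sigma_R^{1/2}\lvert{\Phi_{A:R}}\rangle$, I will then use the elementary property $\langle{\Phi_{A:R}}\rvert(\mathds{1}_A\otimes X_R)\lvert{\Phi_{A:R}}\rangle = \operatorname{tr}(X_R)$ (recalled in \cref{z:6goiAHYb}) to rewrite the overlap as $\operatorname{tr}(\rho_R^{1/2}\sigma_R^{1/2})$. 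Finally, since $\rho_R^{1/2}\sigma_R^{1/2} = \rho_R^{1/4}(\rho_R^{1/4}\sigma_R^{1/2}\rho_R^{1/4})\rho_R^{-1/4}$ is similar to a positive operator (or by writing $\operatorname{tr}(\rho_R^{1/2}\sigma_R^{1/2}) = \operatorname{tr}(\sigma_R^{1/4}\rho_R^{1/2}\sigma_R^{1/4})$), this trace is real and nonnegative, so no absolute value is needed in the identification with the fidelity.

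For the inequality, I will use the Powers--Stormer inequality $\lVert{\sqrt{\rho_R}-\sqrt{\sigma_R}}\rVert_2^2 \leq \lVert{\rho_R-\sigma_R}\rVert_1$. Expanding the squared $2$-norm gives
\begin{align}
  2 - 2\operatorname{tr}\bigl({\rho_R^{1/2}\sigma_R^{1/2}}\bigr)
  \leq \lVert{\rho_R-\sigma_R}\rVert_1 = 2D(\rho_R,\sigma_R)\ ,
\end{align}
hence $\operatorname{tr}(\rho_R^{1/2}\sigma_R^{1/2}) \geq 1 - D(\rho_R,\sigma_R)$. The Fuchs--van~de~Graaf bound $D(\rho_R,\sigma_R) \leq P(\rho_R,\sigma_R)$ stated in \cref{z:6goiAHYb} then yields $\operatorname{tr}(\rho_R^{1/2}\sigma_R^{1/2}) \geq 1 - P(\rho_R,\sigma_R)$. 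To match the form in the lemma, I will observe that $P(\rho_R,\sigma_R)\in[0,1]$ implies $P \leq \sqrt{2P}$, so that $1 - P \geq 1 - \sqrt{2P}$, giving the claimed bound (with some slack, which presumably aligns more smoothly with how the lemma is invoked downstream).

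No step presents a substantive obstacle: the identity is a one-line consequence of the definitions, and the inequality reduces to the classical chain Powers--Stormer $\Rightarrow$ trace distance $\Rightarrow$ purified distance. If anything, the only point requiring a small amount of care is verifying that $\operatorname{tr}(\rho_R^{1/2}\sigma_R^{1/2})$ is nonnegative (so that it literally equals the fidelity rather than its modulus); this is a short cyclicity-of-trace argument as noted above.
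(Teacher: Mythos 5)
Your proof is correct, and it rests on the same key ingredient as the paper's: the Powers--St\o{}rmer inequality $\lVert{\sqrt{\rho}-\sqrt{\sigma}}\rVert_2^2 \leq \lVert{\rho-\sigma}\rVert_1$, which is precisely what the paper invokes through Bhatia's Theorem~X.I.3 specialized to the Schatten $2$-norm. Where the two arguments diverge is in how they pass from that $2$-norm bound to a lower bound on $\operatorname{tr}(\rho^{1/2}\sigma^{1/2})$. The paper writes $\operatorname{tr}(\rho^{1/2}\sigma^{1/2}) = 1 + \operatorname{tr}[\rho^{1/2}(\sigma^{1/2}-\rho^{1/2})]$ and applies H\"older's inequality to get $\operatorname{tr}(\rho^{1/2}\sigma^{1/2}) \geq 1 - \lVert{\rho^{1/2}}\rVert_2\,\lVert{\sigma^{1/2}-\rho^{1/2}}\rVert_2 \geq 1 - \sqrt{2P}$. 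You instead use the exact algebraic identity $\lVert{\rho^{1/2}-\sigma^{1/2}}\rVert_2^2 = 2 - 2\operatorname{tr}(\rho^{1/2}\sigma^{1/2})$ (valid since both states are normalized), which combined with Powers--St\o{}rmer immediately gives the stronger intermediate bound $\operatorname{tr}(\rho^{1/2}\sigma^{1/2}) \geq 1 - D(\rho,\sigma) \geq 1 - P(\rho,\sigma)$, and then you relax to $1 - \sqrt{2P}$ at the end. Your route is thus slightly cleaner and produces a genuinely tighter inequality ($1 - P$ instead of $1 - \sqrt{2P}$) before the final cosmetic relaxation; the paper's H\"older step is looser but not incorrect. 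The equality part of both proofs is the same: compute the overlap $\langle{\Phi_{A:R}}\rvert \rho_R^{1/2}\sigma_R^{1/2}\lvert{\Phi_{A:R}}\rangle = \operatorname{tr}(\rho_R^{1/2}\sigma_R^{1/2})$ and check nonnegativity via cyclicity of the trace.
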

\begin{proof}[**z:X2zwPk80]
  A key ingredient of this proof is a result presented in Bhatia's book on
  matrix analysis \cite[Theorem~X.I.3]{R147}.  This
  result implies that for all positive semidefinite operators $A,B$, we have
  \begin{align}
    \bigl \lVert { \sqrt{A} - \sqrt{B} }\bigr \rVert _2
    \leq \bigl \lVert { \sqrt{ \lvert { A - B }\rvert  } }\bigr \rVert _2\ .
  \end{align}
  Let $w = P(\sigma_R, \rho_R) = \sqrt{1 - F^2(\sigma_R, \rho_R)}$.  By the
  theorem in Bhatia's book,
  \begin{align}
    \bigl \lVert { \sqrt{\rho} - \sqrt{\sigma} }\bigr \rVert _2
    \leq \bigl \lVert { \sqrt{ \lvert { \rho - \sigma }\rvert  } }\bigr \rVert _2
    = \bigl[{ \operatorname{tr}\, \lvert {\rho-\sigma}\rvert  }\bigr]^{1/2} =\sqrt{ 2D(\rho,\sigma) } \leq \sqrt{2w} \ ,
  \end{align}
  writing $\rho\equiv\rho_R$ and $\sigma\equiv\sigma_R$ for short.
  We then see, using H\"older's inequality, that
  \begin{align}
    \operatorname{tr}\bigl({\rho^{1/2} \sigma^{1/2}}\bigr)
    &= \operatorname{tr}({\rho}) + \operatorname{tr}\bigl[{\rho^{1/2}\bigl({\sigma^{1/2} - \rho^{1/2}}\bigr)}\bigr]
      \geq 1 - \bigl \lVert {\rho^{1/2}\bigl({\sigma^{1/2} - \rho^{1/2}}\bigr)}\bigr \rVert _1
      \nonumber\\
    &\geq 1 - \bigl \lVert {\rho^{1/2}}\bigr \rVert _2 \,\bigl \lVert {\sigma^{1/2} - \rho^{1/2}}\bigr \rVert _2
      \geq 1 - \sqrt{2w}\ ,
  \end{align}
  using the fact that $\bigl \lVert {\rho^{1/2}}\bigr \rVert _2 = \sqrt{\operatorname{tr}({\rho})} = 1$.
  The claim follows by noting that
  \begin{align}
    F\bigl({\rho_R^{1/2} \lvert {\Phi_{A:R}}\rangle , \sigma_R^{1/2} \lvert {\Phi_{A:R}}\rangle }\bigr)
    = \bigl \lvert { \langle {\Phi_{A:R}}\mkern 1.5mu\relax \vert \mkern 1.5mu\relax { \rho_R^{1/2} \sigma_R^{1/2} }\mkern 1.5mu\relax \vert \mkern 1.5mu\relax {\Phi_{A:R}}\rangle  }\bigr \rvert 
    = \operatorname{tr}\bigl({\rho_R^{1/2}\sigma_R^{1/2}}\bigr)\ .
    \tag*\qedhere
  \end{align}
\end{proof}

The gentle measurement lemma has a widespread use across quantum information
theory and appears in multiple standard references, including textbooks such
as~\cite{R79}.  A proof of the specific version we state
here can be found, for instance, as \cite[Lemma~B.2]{R82}.
\begin{proposition}[Gentle measurement lemma]
  \noproofref
  \label{z:bQEMTISK}
  Let $\rho$ be any subnormalized quantum state and let $0 \leq R \leq \mathds{1}$.
  Let $\delta\geq 0$ such that $\operatorname{tr}({R^2\rho})\geq 1 - \delta$.  Then
  \begin{align}
    P \bigl({\rho, R \rho R}\bigr) \leq \sqrt{2\delta}\ .
  \end{align}
\end{proposition}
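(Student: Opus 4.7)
The plan is to prove the gentle measurement bound directly by lower bounding the fidelity $F(\rho, R\rho R)$ via Uhlmann's theorem, then converting the fidelity bound into a purified distance bound. The payoff of this route is that Uhlmann's theorem allows one to pick any convenient purification on one side, and the partial isometry $R \otimes \mathds{1}$ (extended to an isometry on a larger space if needed) will produce a natural purification of $R\rho R$ from a purification of $\rho$.

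Concretely, I would let $|\psi\rangle_{AA'}$ be an arbitrary purification of $\rho$ on some auxiliary system $A' \simeq A$. Then $(R_A \otimes \mathds{1}_{A'})|\psi\rangle_{AA'}$ is a (subnormalized) vector whose reduced state on $A$ is exactly $R \rho R$, i.e. a purification of $R\rho R$ (possibly embedded in a larger space). Uhlmann's theorem then gives
\begin{align}
    F(\rho, R\rho R) \;\geq\; \bigl\lvert \langle \psi | (R_A \otimes \mathds{1}_{A'}) |\psi\rangle \bigr\rvert \;=\; \operatorname{tr}(R\rho)\ .
\end{align}
Now the key elementary inequality is that $0 \leq R \leq \mathds{1}$ implies $R \geq R^2$ (since $R(\mathds{1} - R) \geq 0$ in the operator order, as $R$ and $\mathds{1}-R$ are commuting positive semidefinite operators). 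Combined with the hypothesis $\operatorname{tr}(R^2 \rho) \geq 1 - \delta$, this yields $F(\rho, R\rho R) \geq \operatorname{tr}(R\rho) \geq \operatorname{tr}(R^2 \rho) \geq 1 - \delta$.

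It then remains to convert the fidelity bound into the claimed purified distance bound. From $F \geq 1 - \delta$ one gets $F^2 \geq (1-\delta)^2 = 1 - 2\delta + \delta^2 \geq 1 - 2\delta$, so
\begin{align}
    P(\rho, R\rho R) \;=\; \sqrt{1 - F^2(\rho, R\rho R)} \;\leq\; \sqrt{2\delta}\ ,
\end{align}
as claimed. The only subtlety is that $\rho$ is allowed to be subnormalized, so one should invoke the generalized fidelity convention for purified distance on subnormalized states; but since $R \rho R$ has trace bounded above by $\operatorname{tr}(\rho)$, the additional $\sqrt{(1-\operatorname{tr}\rho)(1-\operatorname{tr}(R\rho R))}$ term in the generalized fidelity only helps and does not disturb the argument. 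I expect no real obstacle in this proof — the only thing one must be careful about is ensuring the purification-based fidelity bound is applied correctly in the subnormalized setting, which is standard.
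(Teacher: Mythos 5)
Your proof is correct and is essentially the standard Uhlmann-based argument (the paper itself defers this lemma to Lemma~B.2 of reference~[82], which proceeds along the same lines): purify $\rho$, note $(R\otimes\mathds{1})\lvert\psi\rangle$ purifies $R\rho R$, lower bound $F(\rho,R\rho R)\geq\operatorname{tr}(R\rho)\geq\operatorname{tr}(R^2\rho)\geq 1-\delta$ using $R\geq R^2$, and convert via $F^2\geq 1-2\delta$. You also correctly flag that in the subnormalized setting the generalized fidelity only makes the bound stronger, so there is no gap.
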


The following is a straightforward consequence of the data processing inequality
for the fidelity.  It is convenient to have it in this form for direct use in our proofs:
\begin{lemma}[Upper bound on fidelity through distinguishing test]
  \label{z:MduRjmZz}
  Let $\rho,\sigma$ be any subnormalized quantum states and let
  $\{{ Q, Q^\perp }\}$ be a two-outcome POVM.  Then
  \begin{align}
    F(\rho, \sigma) \leq \sqrt{\operatorname{tr}({Q\rho})} + \sqrt{\operatorname{tr}({Q^\perp\sigma})}\ .
  \end{align}
\end{lemma}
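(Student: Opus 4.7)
The plan is to apply the data processing inequality for fidelity under the measurement channel associated with the POVM $\{Q, Q^\perp\}$, reducing the problem to bounding a classical (Bhattacharyya) fidelity between two binary probability distributions.

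First, I would define the measurement channel $\mathcal{M}(X) = \operatorname{tr}(QX)\,\lvert 0\rangle\mkern -1.8mu\relax \langle 0\rvert + \operatorname{tr}(Q^\perp X)\,\lvert 1\rangle\mkern -1.8mu\relax \langle 1\rvert$. Since $\mathcal{M}$ is a completely positive trace-preserving map (when applied to normalized inputs; for subnormalized inputs it preserves the trace), the monotonicity of the fidelity under CPTP maps yields $F(\rho, \sigma) \leq F(\mathcal{M}(\rho), \mathcal{M}(\sigma))$. The outputs of $\mathcal{M}$ are diagonal in the computational basis, so the fidelity reduces to its classical form:
\begin{align}
F(\mathcal{M}(\rho), \mathcal{M}(\sigma))
= \sqrt{\operatorname{tr}(Q\rho)\,\operatorname{tr}(Q\sigma)} + \sqrt{\operatorname{tr}(Q^\perp\rho)\,\operatorname{tr}(Q^\perp\sigma)}\ .
\end{align}

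Second, I would use the fact that $\rho, \sigma$ are subnormalized and $0 \leq Q, Q^\perp \leq \mathds{1}$, which implies $\operatorname{tr}(Q\sigma) \leq \operatorname{tr}(\sigma) \leq 1$ and $\operatorname{tr}(Q^\perp\rho) \leq \operatorname{tr}(\rho) \leq 1$. Consequently, $\sqrt{\operatorname{tr}(Q\rho)\,\operatorname{tr}(Q\sigma)} \leq \sqrt{\operatorname{tr}(Q\rho)}$ and $\sqrt{\operatorname{tr}(Q^\perp\rho)\,\operatorname{tr}(Q^\perp\sigma)} \leq \sqrt{\operatorname{tr}(Q^\perp\sigma)}$, which together yield the claimed bound.

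There is no substantial obstacle here: both ingredients (monotonicity of the fidelity under CPTP maps, and the closed form of the classical fidelity for binary distributions) are completely standard. The only minor care required is to confirm that data processing of the fidelity applies to subnormalized states in the convention used in the paper, which is indeed the case since the fidelity is defined formally for any positive semidefinite operators in the preliminaries and the measurement channel preserves trace.
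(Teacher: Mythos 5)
Your proof is correct and takes essentially the same route as the paper: data processing of the fidelity under the two-outcome measurement channel, reduction to the classical Bhattacharyya coefficient, then dropping the factors $\sqrt{\operatorname{tr}(Q\sigma)}\leq 1$ and $\sqrt{\operatorname{tr}(Q^\perp\rho)}\leq 1$. The paper's proof is just a more compact presentation of the same argument.
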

\begin{proof}[**z:MduRjmZz]
  From the data processing inequality for the fidelity,
  \begin{align}
    F(\rho,\sigma)
    &\leq F\Bigl({ \bigl[{\operatorname{tr}({Q\rho}), \operatorname{tr}({Q^\perp \rho}) }\bigr],
      \bigl[{\operatorname{tr}({Q\sigma}), \operatorname{tr}({Q^\perp \sigma}) }\bigr] }\Bigr)
      \nonumber\\
    &= \sqrt{\operatorname{tr}({Q\rho})}\sqrt{\operatorname{tr}({Q\sigma})} + \sqrt{\operatorname{tr}({Q^\perp\rho})}\sqrt{\operatorname{tr}({Q^\perp\sigma})}
      \leq \sqrt{\operatorname{tr}({Q\rho})} + \sqrt{\operatorname{tr}({Q^\perp\sigma})}\ .
      \tag*\qedhere
  \end{align}
\end{proof}

The fidelity between two classical-quantum states takes a simple form.
\begin{lemma}
  \label{z:2jsqnpDK}
  Let $\{{ p_k }\}$ be a subnormalized probability distribution and let
  $\{{ \rho_k }\}$, $\{{ \sigma_k }\}$ be two families of quantum states.  Then
  \begin{align}
    F\biggl({ \sum_k p_k \lvert {k}\rangle \mkern -1.8mu\relax \langle{k}\rvert \otimes\rho_k \,,\; \sum_k p_k \lvert {k}\rangle \mkern -1.8mu\relax \langle{k}\rvert \otimes\sigma_k }\biggr)
    = \sum p_k F\bigl({\rho_k, \sigma_k}\bigr)\ .
  \end{align}
\end{lemma}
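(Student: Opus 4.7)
The plan is to compute the fidelity directly from its definition $F(\rho,\sigma) = \operatorname{tr}\bigl[(\rho^{1/2}\sigma\rho^{1/2})^{1/2}\bigr]$, exploiting the block-diagonal structure that the classical register $\{\lvert k\rangle\}$ imposes on both states. Writing $\rho = \sum_k p_k \lvert k\rangle\mkern -1.8mu\relax \langle k\rvert \otimes \rho_k$ and $\sigma = \sum_k p_k \lvert k\rangle\mkern -1.8mu\relax \langle k\rvert \otimes \sigma_k$, both operators are block-diagonal with orthogonal blocks labeled by $k$, so functional calculus acts blockwise: $\rho^{1/2} = \sum_k p_k^{1/2} \lvert k\rangle\mkern -1.8mu\relax \langle k\rvert \otimes \rho_k^{1/2}$.

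Plugging this into the definition I would obtain
\begin{align}
  \rho^{1/2}\,\sigma\,\rho^{1/2}
  = \sum_k p_k^2 \,\lvert k\rangle\mkern -1.8mu\relax \langle k\rvert \otimes \rho_k^{1/2}\sigma_k\rho_k^{1/2}\ ,
\end{align}
which is again block-diagonal. Taking the operator square root blockwise yields
\begin{align}
  \bigl(\rho^{1/2}\sigma\rho^{1/2}\bigr)^{1/2}
  = \sum_k p_k \,\lvert k\rangle\mkern -1.8mu\relax \langle k\rvert \otimes \bigl(\rho_k^{1/2}\sigma_k\rho_k^{1/2}\bigr)^{1/2}\ ,
\end{align}
and tracing over both systems gives $\sum_k p_k \operatorname{tr}\bigl[(\rho_k^{1/2}\sigma_k\rho_k^{1/2})^{1/2}\bigr] = \sum_k p_k F(\rho_k,\sigma_k)$, which is the claimed identity.

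There is no substantive obstacle here; the only minor care needed is in handling the subnormalization (allowing $p_k=0$ and $\sum_k p_k \leq 1$). For indices with $p_k = 0$, the corresponding block contributes zero to every expression above, so we may restrict the sums to $k$ with $p_k > 0$ without affecting the argument, and subnormalization of $\{p_k\}$ is harmless since the definition of fidelity extends formally to positive semidefinite operators as noted in the preliminaries. The extension of $\sqrt{\cdot}$ to positive semidefinite operators makes each step valid without any positivity or invertibility assumption on the individual $\rho_k,\sigma_k$.
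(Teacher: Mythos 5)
Your proof is correct and uses essentially the same idea as the paper: both arguments exploit the fact that the classical register makes everything block-diagonal, so the fidelity decomposes over the blocks. The only cosmetic difference is which equivalent expression for the fidelity you start from — you use $F(\rho,\sigma) = \operatorname{tr}\bigl[(\rho^{1/2}\sigma\rho^{1/2})^{1/2}\bigr]$ and push the block structure through the square root and trace, while the paper uses $F(\rho,\sigma) = \lVert\sqrt{\rho}\sqrt{\sigma}\rVert_1$ and observes that $\sqrt{\rho}\sqrt{\sigma} = \bigoplus_k p_k\rho_k^{1/2}\sigma_k^{1/2}$, then invokes additivity of the trace norm over direct sums. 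Both routes are equally elementary; the paper's version is marginally shorter since it avoids forming the sandwich $\rho^{1/2}\sigma\rho^{1/2}$, but nothing substantive distinguishes the two.
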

\begin{proof}[**z:2jsqnpDK]
  Write
  \begin{align}
    F\biggl({ \sum_k p_k \lvert {k}\rangle \mkern -1.8mu\relax \langle{k}\rvert \otimes\rho_k \,,\; \sum_k p_k \lvert {k}\rangle \mkern -1.8mu\relax \langle{k}\rvert \otimes\sigma_k }\biggr)
    &= \Bigl \lVert { \sum \lvert {k}\rangle \mkern -1.8mu\relax \langle{k}\rvert  \otimes \bigl({ p_k \rho_k^{1/2} \sigma_k^{1/2} }\bigr) }\Bigr \rVert _1
    = \Bigl \lVert { \bigoplus \bigl({ p_k \rho_k^{1/2} \sigma_k^{1/2} }\bigr) }\Bigr \rVert _1
    \nonumber\\
    &= \sum \Bigl \lVert { p_k \rho_k^{1/2} \sigma_k^{1/2} }\Bigr \rVert _1
    = \sum p_k F\bigl({\rho_k, \sigma_k}\bigr)\ .
  \end{align}
\end{proof}

We also need the following generalization of the ``pinching lemma.''  This
standard lemma has appeared many times in the quantum information literature;
cf.\@ e.g.\@~\cite[Lemma~B.1]{R82} for a proof.
\begin{lemma}
  \noproofref
  \label{z:h6tAs5Us}
  Let $\{{ E_k }\}_{k=1}^M$ be a collection of $M$ operators.  Then, for any
  $A\geq 0$,
  \begin{align}
    \biggl({\sum_{k=1}^M E_k }\biggr) \, A \, \biggl({\sum_{k=1}^M E_k }\biggr)^\dagger
    \leq M\, \sum_{k=1}^M E_k A E_k^\dagger \ .
  \end{align}
\end{lemma}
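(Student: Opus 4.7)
The plan is to factor out $A^{1/2}$ and reduce the statement to the standard operator Cauchy--Schwarz inequality
\begin{align}
  \Bigl(\sum_k F_k\Bigr)\Bigl(\sum_k F_k\Bigr)^\dagger \leq M \sum_k F_k F_k^\dagger\ .
\end{align}
Since $A \geq 0$, I can set $F_k \equiv E_k A^{1/2}$, so that $F_k F_k^\dagger = E_k A E_k^\dagger$ and $\bigl(\sum_j E_j\bigr) A \bigl(\sum_k E_k\bigr)^\dagger = \bigl(\sum_j F_j\bigr)\bigl(\sum_k F_k\bigr)^\dagger$. The claim is thus entirely reformulated in the $F_k$.

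The core inequality on the $F_k$'s follows from the elementary ``operator AM--GM'' bound, obtained by expanding $(F_j - F_k)(F_j - F_k)^\dagger \geq 0$:
\begin{align}
  F_j F_k^\dagger + F_k F_j^\dagger \leq F_j F_j^\dagger + F_k F_k^\dagger\ .
\end{align}
Expanding $\bigl(\sum_j F_j\bigr)\bigl(\sum_k F_k\bigr)^\dagger = \sum_{j,k} F_j F_k^\dagger$, separating the diagonal terms $j=k$ and grouping the off-diagonal pairs $\{j,k\}$ with $j\neq k$, and then applying the above bound to each off-diagonal pair, the right-hand side becomes $\sum_k F_k F_k^\dagger + (M-1) \sum_k F_k F_k^\dagger = M \sum_k F_k F_k^\dagger$, which is exactly what is claimed.

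As an alternative and equally short route, one can view the same step as a direct application of the operator Cauchy--Schwarz inequality $(XY^\dagger)(XY^\dagger)^\dagger \leq (XX^\dagger)(YY^\dagger)$ applied to the block row operators $X = (F_1, \ldots, F_M)$ and $Y = (\mathds{1}, \ldots, \mathds{1})$, using $XY^\dagger = \sum_k F_k$, $XX^\dagger = \sum_k F_k F_k^\dagger$, and $YY^\dagger = M\mathds{1}$.

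There is essentially no obstacle here: this is textbook material, and the whole argument fits in a few lines once the substitution $F_k = E_k A^{1/2}$ is made. The only minor care needed is to emphasize that the inequality is an operator inequality (not merely a trace inequality), which is automatic from the AM--GM derivation above since each pairwise bound is an operator inequality.
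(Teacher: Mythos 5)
The paper does not supply its own proof for this lemma---it marks it \textit{noproofref} and cites Lemma~B.1 of an earlier reference---so there is no in-paper argument to compare against. Your proof is correct and complete: the substitution $F_k = E_k A^{1/2}$ (legitimate since $A\geq 0$, so $A^{1/2}$ is Hermitian) reduces the claim to $\bigl(\sum_j F_j\bigr)\bigl(\sum_k F_k\bigr)^\dagger \leq M \sum_k F_k F_k^\dagger$, and expanding $(F_j - F_k)(F_j - F_k)^\dagger \geq 0$ over the $\binom{M}{2}$ unordered off-diagonal pairs, each index appearing in $M-1$ of them, produces exactly the factor $M$. This is the standard proof of this pinching-type inequality. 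One small caution on your ``alternative'' route: $(XY^\dagger)(XY^\dagger)^\dagger \leq (XX^\dagger)(YY^\dagger)$ is not a valid operator inequality in general (the right-hand side need not even be Hermitian); it works here only because $YY^\dagger = M\mathds{1}$ is a scalar multiple of the identity, so what is actually used is $X(Y^\dagger Y)X^\dagger \leq \lVert Y^\dagger Y\rVert\, XX^\dagger = M\, XX^\dagger$. Your primary AM--GM argument, however, needs no such caveat and is fully rigorous.
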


In our proofs, we need a POVM that is capable, when acting on an $m$-fold
i.i.d.\@ state $\sigma^{\otimes m}$, of estimating the state $\sigma$.  While
multiple POVMs have this property (cf.\@
e.g.\@~\cite{R78}), we focus on the following
\emph{pretty good measurement}~\cite{R148,R149,R150,R79}.

\begin{proposition}
  \noproofref
  \label{z:Ehi5guwR}
  Let $R$ be a quantum system and let $m>0$.  For any $\tilde\sigma_R$, let
  \begin{align}
    R_{R^m}^{(\tilde\sigma)}
    &\equiv \bigl[{\tilde\sigma_R^{\otimes m}}\bigr]^{1/2} \, \zeta_{R^m}^{-1/2}
      = R_{R^m}^{(\tilde\sigma) \dagger}\ ;
  \end{align}
  where $\zeta_{R^m} = \int d\sigma'_{R} \, \sigma_{R}'^{\otimes m}$ is the
  de~Finetti state introduced in the main text and in
  \cref{z:Mf3kbNlp}.  Then
  \begin{align}
    \int d\tilde\sigma \, R_{R^m}^{(\tilde\sigma)\dagger} R_{R^m}^{(\tilde\sigma)} = \mathds{1}\ ,
    \label{z:1gWrSiuG}
  \end{align}
  so $\bigl\{{ R^{(\tilde\sigma)\dagger}R^{(\tilde\sigma)} }\bigr\}$
  is a POVM.  Furthermore, for any
  $x>0$,
  \begin{align}
    \int_{F^2(\tilde\sigma,\sigma)\leq e^{-x}} d\sigma\,
    \operatorname{tr}\bigl({ R^{(\tilde\sigma)\dagger}R^{(\tilde\sigma)}\,
    \sigma^{\otimes m} }\bigr)
    \leq \operatorname{poly}(m) \exp(-mx)\ .
  \end{align}
\end{proposition}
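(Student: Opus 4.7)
The completeness relation~\eqref{z:1gWrSiuG} follows immediately from the defining property $\int d\tilde\sigma\,\tilde\sigma^{\otimes m} = \zeta_{R^m}$ of the de~Finetti state: substituting the definition of $R^{(\tilde\sigma)}$ yields
\begin{align*}
\int d\tilde\sigma\, R^{(\tilde\sigma)\dagger} R^{(\tilde\sigma)}
= \zeta_{R^m}^{-1/2}\,\Bigl({\int d\tilde\sigma\,\tilde\sigma^{\otimes m}}\Bigr)\,\zeta_{R^m}^{-1/2}
= \zeta_{R^m}^{-1/2}\,\zeta_{R^m}\,\zeta_{R^m}^{-1/2}
= \mathds{1}_{R^m}\ ,
\end{align*}
which is well-defined because $\zeta_{R^m}$ has full rank on $R^m$---every Schur-Weyl projector $\Pi^\lambda$ appears with strictly positive coefficient in \cref{z:TvA2E9KA}.

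For the decay bound, my plan is to establish the pointwise estimate $\operatorname{tr}[{R^{(\tilde\sigma)\dagger}R^{(\tilde\sigma)}\,\sigma^{\otimes m}}] \leq \operatorname{poly}(m)\, F^{2m}(\tilde\sigma,\sigma)$ and then conclude by integration, since the restricted domain $\{\sigma:F^2(\tilde\sigma,\sigma)\leq e^{-x}\}$ has $d\sigma$-measure at most $\int d\sigma = \operatorname{tr}(\zeta_{R^m}) = 1$. The pointwise estimate rests on two ingredients. First, the scalar fidelity inequality $\operatorname{tr}({\tilde\sigma\sigma}) \leq F^2(\tilde\sigma,\sigma)$: writing $X = \tilde\sigma^{1/2}\sigma^{1/2}$, its nonnegative singular values $\{s_i\}$ satisfy $\sum_i s_i^2 \leq (\sum_i s_i)^2$, i.e.\@ $\|X\|_2^2 \leq \|X\|_1^2$; since $\operatorname{tr}({\tilde\sigma\sigma}) = \|X\|_2^2$ and $F(\tilde\sigma,\sigma) = \|X\|_1$, this tensorizes to $\operatorname{tr}({\tilde\sigma^{\otimes m}\sigma^{\otimes m}}) = \operatorname{tr}({\tilde\sigma\sigma})^m \leq F^{2m}(\tilde\sigma,\sigma)$. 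Second, the Schur-Weyl structure of $\zeta_{R^m}$ from \cref{z:TvA2E9KA}: permutation invariance makes $\tilde\sigma^{\otimes m}$ and $\sigma^{\otimes m}$ block-diagonal and trivial on each $S_m$-irrep factor $\mathcal{Q}_\lambda$, and $\zeta_{R^m}^{-1/2}$ shares the same block structure, so $\operatorname{tr}[{\zeta^{-1/2}\tilde\sigma^{\otimes m}\zeta^{-1/2}\sigma^{\otimes m}}]$ collapses to a sum over $\lambda\in\mathrm{Young}(d_R,m)$ weighted by the block coefficients of $\zeta_{R^m}^{-1}$.

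The main technical obstacle will be controlling this block sum uniformly in $m$: on Schur-Weyl blocks where $\zeta_{R^m}$ has small eigenvalues, $\zeta_{R^m}^{-1/2}$ has correspondingly large operator norm; however, on exactly these blocks the projections of $\tilde\sigma^{\otimes m}$ and $\sigma^{\otimes m}$ must be proportionately suppressed, because the total weight $\sum_\lambda d_{\mathcal{Q}_\lambda}\operatorname{tr}({\tilde\sigma_\lambda^{\mathcal{P}}\sigma_\lambda^{\mathcal{P}}})$ reproduces $\operatorname{tr}({\tilde\sigma\sigma})^m$. Tracking this cancellation, while using that $|\mathrm{Young}(d_R,m)|$ and the individual Weyl-dimension factors $d_{\mathcal{P}_\lambda}$ grow only polynomially in $m$ (of degree at most $d_R(d_R-1)/2$), absorbs all spurious scalings into the $\operatorname{poly}(m)$ prefactor. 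This block-by-block analysis is essentially the same type of bookkeeping used in proofs of the quantum Sanov theorem~\cite{R98,R99}, and a direct reduction to that analysis (applied to the de~Finetti ensemble) would likely suffice.
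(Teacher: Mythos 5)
Your argument for the completeness relation~\eqref{z:1gWrSiuG} matches the paper's exactly: substitute the definition of $R^{(\tilde\sigma)}$ and use $\int d\tilde\sigma\,\tilde\sigma^{\otimes m}=\zeta_{R^m}$; this part is correct.

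For the decay bound, your overall strategy — establish the pointwise estimate
\begin{align*}
\operatorname{tr}\bigl[{R^{(\tilde\sigma)\dagger}R^{(\tilde\sigma)}\,\sigma^{\otimes m}}\bigr] \leq \operatorname{poly}(m)\,F^{2m}(\tilde\sigma,\sigma)
\end{align*}
and then integrate — is precisely the paper's strategy. However, your outline of \emph{why} this pointwise estimate holds has a genuine gap. Your first ingredient, $\operatorname{tr}(\tilde\sigma^{\otimes m}\sigma^{\otimes m}) \leq F^{2m}(\tilde\sigma,\sigma)$, bounds the unweighted Hilbert--Schmidt overlap, but the quantity actually needed is $\operatorname{tr}[\zeta^{-1/2}\tilde\sigma^{\otimes m}\zeta^{-1/2}\sigma^{\otimes m}]$, and the Schur--Weyl block coefficients of $\zeta_{R^m}^{-1}$ can be exponentially large in $m$ (e.g.\ on blocks where $d_{\mathcal{P}_\lambda}$ is exponentially large). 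Your proposed fix — that the total weight $\sum_\lambda d_{\mathcal{P}_\lambda}\operatorname{tr}[q_\lambda(\tilde\sigma)q_\lambda(\sigma)]$ equals $\operatorname{tr}(\tilde\sigma\sigma)^m$ and therefore blocks with small $\zeta_\lambda$ carry proportionately less weight — only shows the unweighted block contributions sum to $\operatorname{tr}(\tilde\sigma\sigma)^m$; it does not show that the reweighted contributions $\zeta_\lambda^{-1}d_{\mathcal{P}_\lambda}\operatorname{tr}[q_\lambda(\tilde\sigma)q_\lambda(\sigma)]$ are each individually bounded by a polynomial times $F^{2m}$. A total identity does not by itself constrain each reweighted summand.

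What is actually required, and what the paper invokes, is a nontrivial per-block estimate due to Haah et al.\ (\cite[\S V.A]{R78}), namely
\begin{align*}
\operatorname{tr}\bigl[{M^{(\tilde\sigma)}_{R^m}\sigma^{\otimes m}}\bigr] \leq \sum_{\lambda\in\Young(d_R,m)} \frac{d_{\mathcal{Q}_\lambda}^2}{e^{m S(\bar\lambda)}(d_{\mathcal{Q}_\lambda}\zeta_\lambda)}\,F^{2m}(\tilde\sigma,\sigma)\ ,
\end{align*}
from which, after inserting the $\zeta_\lambda$ from \cref{z:TvA2E9KA} and using $d_{\mathcal{P}_\lambda}\leq e^{m S(\bar\lambda)}$, the desired $\operatorname{poly}(m)\,F^{2m}$ bound follows. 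The fidelity factor in this per-block bound comes from refined properties of the $\mathrm{GL}(d)$-irrep evaluations $q_\lambda$, not from the total-weight argument. Your pointer to the quantum Sanov theorem proofs \cite{R98,R99} is in the right family but is not the lemma the paper uses; the essential missing content in your plan is precisely this Haah-et-al.\ fidelity bound for the pretty good measurement.
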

\begin{proof}[**z:Ehi5guwR]
  That $R_{R^m}^{(\tilde\sigma)\dagger} = R_{R^m}^{(\tilde\sigma)}$ follows from
  the fact that $\zeta_{R^m}$ is constant over each Schur-Weyl block (cf.\@
  e.g.\@ \cref{z:TvA2E9KA}) and therefore commutes with
  the permutation-invariant operator $\tilde\sigma_{R}^{\otimes m}$.
  \Cref{z:1gWrSiuG} holds by definition of
  $\zeta_{R^m}$.

  Now let $x>0$ and write the shorthand
  $M_{R^m}^{(\tilde\sigma)} \equiv
  R_{R^m}^{(\tilde\sigma)\dagger}R_{R^m}^{(\tilde\sigma)}$.  We make use of
  Schur-Weyl notation introduced in \cref{z:pJbOynb9}.  In
  \cite[\S\,V.A, after Eq.~(16)]{R78}, it was proven
  that for any states $\tilde\sigma_R$, $\sigma_R$,
  \begin{align}
    \operatorname{tr}\bigl[{ M_{R^m}^{(\tilde\sigma)}  \sigma_R^{\otimes m}}\bigr]
    &\leq
    \sum_{\lambda\in\Young(d_R, m)}
    \frac{ d_{\mathcal{Q}_\lambda}^2 }{
          {e}^{m{S}_{}^{}({\bar\lambda})} (d_{\mathcal{Q}_\lambda}\zeta_\lambda) }
    \bigl[{F(\sigma_R,\tilde\sigma_R)}\bigr]^{2m}
    \ ;
    &
    \zeta_\lambda
      &= \frac1{d_{\mathcal{Q}_\lambda}} \int d\sigma_R \operatorname{tr}\bigl[{q_\lambda(\sigma_R)}\bigr]\ .
  \end{align}
  The coefficients $\zeta_\lambda$ are precisely the the Schur-Weyl block
  coefficients of the de Finetti state
  $\zeta_{R^m} = \sum_\lambda \zeta_\lambda \Pi_{R^m}^\lambda$.
  \Cref{z:TvA2E9KA} provides the values of these
  coefficients,
  $\zeta_\lambda = d_{\mathcal{Q}_\lambda} /
  \bigl({d_{\mathcal{P}_\lambda}d_{\mathrm{Sym}(m,d_R^2)}}\bigr)$.  Therefore, for any
  $\tilde\sigma_R$, $\sigma_R$,
  \begin{align}
    \operatorname{tr}\bigl[{ M_{R^m}^{(\tilde\sigma)}  \sigma_R^{\otimes m}}\bigr]
    \leq \sum_{\lambda\in\Young(d_R,m)}
    \frac{d_{\mathcal{P}_\lambda}  d_{\mathrm{Sym}(m,d_R^2)}}{{e}^{m{S}_{}^{}({\bar\lambda})}}
    \bigl[{F(\sigma_R,\tilde\sigma_R)}\bigr]^{2m}
    \leq \operatorname{poly}({m}) \bigl[{F(\sigma_R,\tilde\sigma_R)}\bigr]^{2m}\ ,
  \end{align}
  using the upper bound $d_{\mathcal{P}_\lambda} \leq {e}^{m{S}_{}^{}({\bar\lambda})}$.
  This enables us to compute
  \begin{align}
    \int_{F^2(\tilde\sigma, \sigma) \leq {e}^{-x}} d\sigma
    \operatorname{tr}\bigl({ M^{(\tilde\sigma)}_{R^m} \, \sigma_R^{\otimes  m} }\bigr)
    \leq
    \operatorname{poly}(m)\,{e}^{-mx}\ ,
  \end{align}
  proving the last part of the proposition.
\end{proof}

\section{Proofs for the maximum-channel-entropy derivation of the thermal
  channel}
\label{z:N3vMol3P}
\label{z:QmA3ZHFT}

\subsection{Lemma: thermal channels with respect to any $\phi$ lie in the interior of the objective domain}

We first prove a lemma that ensures our approach to find the thermal channel
with respect to any $\phi_R$ does not miss any solutions.  Our approach involves
writing a Lagrangian of the problem including the relevant constraints, and
applying the Karush-Kuhn-Tucker conditions to find optimal
solutions~\cite{R95}.  This approach, however, might
fail to find optimal solutions that lie on the boundary of the domain of the
optimization's objective function.  The following lemma provides a technical
statement enabling us to rule out such an undesirable situation in the proofs of
\cref{z:hd7.cLe.,z:gV43EWT.}.

\begin{lemma}
  \label{z:dTDQ-ZX2}
  Consider the following optimization problem: %
  \begin{align}
    \label{z:1w8WlXSN}
    \begin{aligned}[t]
    \textup{maximize:} \quad
    & f_{\mathrm{obj}}(\mathcal{N}_{A\to B})
    \\
    \textup{over:}\quad
    & \mathcal{N}_{A\to B}\ \textup{c.p., t.p.}
    \\
    \textup{such that:}\quad
    &f_{\mathrm{cons},j}(\mathcal{N}_{A\to B}) \leq 0\ \quad\forall\,j=1,\ldots,J',
  \end{aligned}
  \end{align}
  with
  \begin{align}
    f_{\mathrm{obj}}(\mathcal{N}_{A\to B})
    = {S}_{}^{}({\mathcal{N}_{A\to B}(\phi_{AR})}) + f_{\mathrm{Q}}(\mathcal{N}_{A\to B})
  \end{align}
  where $f_{\mathrm{Q}}(\mathcal{N}_{A\to B})$ is a quadratic function of
  $\mathcal{N}_{A\to B}$, where each $f_{\mathrm{cons},j}$ is linear
  in $\mathcal{N}_{A\to B}$, and where $\lvert {\phi}\rangle _{AR}$
  is a fixed pure state of the form $\lvert {\phi}\rangle _{AR} \equiv \phi_A^{1/2}\lvert {\Phi_{A:R}}\rangle $.
  Assume that there exists some quantum channel
  $\mathcal{N}^{(\mathrm{int})}_{A\to B}$ with
  $N^{(\mathrm{int})}_{BR} \equiv \mathcal{N}^{(\mathrm{int})}_{A\to
    B}(\Phi_{A:R}) >0$ that is feasible, i.e., that satisfies all the
  problem's constraints.  Then any optimal channel $\mathcal{N}_{A\to B}$
  in~\eqref{z:1w8WlXSN}
  is such that
  $\mathcal{N}_{A\to B}({\phi_{AR}})$ has full rank within the support of
  $\mathds{1}_B\otimes\Pi^{\phi_R}_R$.
\end{lemma}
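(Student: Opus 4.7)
My plan is to argue by contradiction. Suppose some optimizer $\mathcal{N}^{\star}$ of \eqref{z:1w8WlXSN} is such that $\rho \equiv \mathcal{N}^{\star}(\phi_{AR})$ is rank-deficient inside the support of $\Pi \equiv \mathds{1}_B\otimes\Pi^{\phi_R}_R$.  I will construct a perturbation of $\mathcal{N}^{\star}$ that strictly improves the objective while remaining feasible, yielding a contradiction.  The perturbation to consider is the line segment
\begin{align*}
  \mathcal{N}_{\epsilon}
  \,=\, (1-\epsilon)\,\mathcal{N}^{\star} + \epsilon\,\mathcal{N}^{(\mathrm{int})},
  \qquad \epsilon\in(0,1],
\end{align*}
connecting $\mathcal{N}^{\star}$ to the strictly feasible interior channel $\mathcal{N}^{(\mathrm{int})}$ postulated in the hypothesis.

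Feasibility of $\mathcal{N}_{\epsilon}$ is straightforward: $\mathcal{N}_{\epsilon}$ is manifestly completely positive and trace-preserving, being a convex combination of such maps, and by linearity of each $f_{\mathrm{cons},j}$ one has $f_{\mathrm{cons},j}(\mathcal{N}_{\epsilon}) = (1-\epsilon)\,f_{\mathrm{cons},j}(\mathcal{N}^{\star}) + \epsilon\,f_{\mathrm{cons},j}(\mathcal{N}^{(\mathrm{int})}) \leq 0$.  Since $f_{\mathrm{Q}}$ is a quadratic (hence smooth) function of $\mathcal{N}$, its change along this segment is $f_{\mathrm{Q}}(\mathcal{N}_{\epsilon}) - f_{\mathrm{Q}}(\mathcal{N}^{\star}) = O(\epsilon)$ as $\epsilon\to 0^+$.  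All that remains is to show that the entropy term $S(\mathcal{N}_{\epsilon}(\phi_{AR}))$ increases strictly faster than linearly in $\epsilon$; this is where I expect the main work to lie.

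Write $\sigma \equiv \mathcal{N}^{(\mathrm{int})}(\phi_{AR}) = \phi_R^{1/2} N^{(\mathrm{int})}_{BR} \phi_R^{1/2}$.  Since $N^{(\mathrm{int})}_{BR}>0$, the operator $\sigma$ is positive definite on the range of $\Pi$, while by assumption $\rho$ is supported on a strict subspace of that range.  Let $\Pi^{\rho}$ denote the projector onto $\mathrm{supp}(\rho)$ and set $\Pi^{\perp} \equiv \Pi - \Pi^{\rho}\neq 0$; decompose $\sigma$ in the two-block structure $\sigma = \bigl(\begin{smallmatrix} A & B\\ B^{\dagger} & D\end{smallmatrix}\bigr)$ with respect to $\Pi^{\rho}\oplus\Pi^{\perp}$.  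The positive definiteness of $\sigma$ on $\mathrm{supp}(\Pi)$ forces $D>0$ on $\Pi^{\perp}$, so $\mathrm{tr}(D)>0$.  My plan is to perform a direct eigenvalue analysis of $\rho_{\epsilon} \equiv \mathcal{N}_{\epsilon}(\phi_{AR}) = (1-\epsilon)\rho + \epsilon\sigma$ for small $\epsilon$.  The block-diagonal part of $\rho_{\epsilon}$ has $\operatorname{rank}(\rho)$ eigenvalues of order~$1$ and $\operatorname{rank}(D)$ eigenvalues of the form $\epsilon\mu_{j} + O(\epsilon^{2})$ with $\mu_{j}>0$, while the off-diagonal block $\epsilon B$ perturbs each eigenvalue by $O(\epsilon)$, which is a negligible relative correction for the small-eigenvalue group (since the $\mu_j$ are bounded away from zero).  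Plugging into $-\sum \lambda\log\lambda$, the small eigenvalues contribute
\begin{align*}
  -\sum_{j}(\epsilon\mu_{j} + O(\epsilon^{2}))\log(\epsilon\mu_{j} + O(\epsilon^{2}))
  \;=\; \epsilon\,\mathrm{tr}(D)\,\log(1/\epsilon)\;+\;O(\epsilon),
\end{align*}
while the large eigenvalues contribute $S(\rho) + O(\epsilon)$, giving
\begin{align*}
  S(\rho_{\epsilon}) \;=\; S(\rho) \,+\, \epsilon\,\mathrm{tr}(D)\,\log(1/\epsilon) \,+\, O(\epsilon).
\end{align*}

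Combining the estimates, the objective satisfies $f_{\mathrm{obj}}(\mathcal{N}_{\epsilon}) - f_{\mathrm{obj}}(\mathcal{N}^{\star}) = \epsilon\,\mathrm{tr}(D)\log(1/\epsilon) + O(\epsilon)$, which is strictly positive for all sufficiently small $\epsilon>0$ because $\epsilon\log(1/\epsilon) \gg \epsilon$.  This contradicts the optimality of $\mathcal{N}^{\star}$ and proves the lemma.  The main obstacle is the eigenvalue asymptotic step; it should be handled cleanly by the block decomposition above and a Weyl-type bound on the off-diagonal perturbation, using that the small/large spectral groups remain well separated for $\epsilon$ small since $\tilde\rho$ is bounded away from zero on its support.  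An alternative, equivalent route is to differentiate $S(\rho_{\epsilon})$ directly and observe that $\frac{d}{d\epsilon}S(\rho_{\epsilon}) = -\operatorname{tr}[(\sigma-\rho)\log\rho_{\epsilon}]$ diverges to $+\infty$ as $\epsilon\to 0^{+}$, because $\sigma$ has strictly positive weight on the kernel of $\rho$ inside $\mathrm{supp}(\Pi)$.
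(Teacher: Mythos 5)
Your overall strategy — interpolate $\mathcal{N}_\epsilon=(1-\epsilon)\mathcal{N}^\star+\epsilon\mathcal{N}^{(\mathrm{int})}$ towards the strictly feasible interior channel, exploit convexity/linearity for the constraints, note that $f_{\mathrm{Q}}$ contributes only $O(\epsilon)$, and show that the entropy term grows superlinearly as $\epsilon\to 0^+$ — is exactly the paper's. In fact, the ``alternative, equivalent route'' you mention at the very end (differentiate $S(\rho_\epsilon)$ and observe $\frac{d}{d\epsilon}S(\rho_\epsilon)=-\operatorname{tr}[(\sigma-\rho)\log\rho_\epsilon]\to+\infty$ because $\sigma$ has weight on $\ker\rho\cap\operatorname{supp}\Pi$) is precisely how the paper argues: it lower-bounds this derivative by $-\log(2\theta)\operatorname{tr}[\sigma P]+S(P\sigma P)-\log(d_Bd_R)$ (where $P$ projects onto $\ker\rho\cap\operatorname{supp}\Pi$) using operator monotonicity of $\log$ plus a pinching inequality, and then notes the $f_{\mathrm{Q}}$ derivative is bounded.

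There is, however, a real gap in the step you chose as your primary route. You write that the off-diagonal block $\epsilon B$ ``perturbs each eigenvalue by $O(\epsilon)$, which is a negligible relative correction for the small-eigenvalue group.'' That is not consistent: if a small eigenvalue is $\epsilon\mu_j=\Theta(\epsilon)$ and the Weyl-type perturbation bound you invoke is also $O(\epsilon)$, the relative correction is $O(1)$, not negligible — the bound as stated doesn't tell you the small eigenvalues remain $\Theta(\epsilon)$ with coefficients bounded away from $0$. To make the eigenvalue-asymptotics route rigorous you need a \emph{second-order} estimate that uses the $\Theta(1)$ spectral gap between the two groups. Either (i) observe via a Schur-complement argument that the small eigenvalues of $\bigl(\begin{smallmatrix}M_0+\epsilon A&\epsilon B\\\epsilon B^\dagger&\epsilon D\end{smallmatrix}\bigr)$ solve $\det(\epsilon D-\lambda-\epsilon^2B^\dagger(M_0-\lambda)^{-1}B)=0$, giving $\lambda_j=\epsilon\mu_j+O(\epsilon^2)$; or (ii) simply use the operator inequality $\rho_\epsilon\geq\epsilon\,\sigma\geq\epsilon\,\lambda_{\min}(\sigma|_{\Pi})\Pi$, which immediately lower-bounds every eigenvalue on $\operatorname{supp}\Pi$ by a constant times $\epsilon$, combined with the rank count to show exactly $\operatorname{rank}(\Pi)-\operatorname{rank}(\rho)$ of them are $O(\epsilon)$; this gives $\Theta(\epsilon)$ with constants, which suffices for the $\epsilon\log(1/\epsilon)$ divergence even without pinning the coefficient to $\operatorname{tr}(D)$. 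Either fix closes the gap, but as written the step would not survive scrutiny — the derivative route you append is the cleaner and safer version, and is the one the paper takes.
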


The optimization 
problem~\eqref{z:1w8WlXSN}
is meant to cover all the settings considered
in~\cref{z:u4ZpmFvZ}.  Linear equality constraints
can be written as a pair of inequality constraints, one in each direction.  The
optimization objectives ${S}_{}^{}({\mathcal{N}(\phi_{AR})}) - {S}_{}^{}({\phi_R})$,
$-{D}_{\phi}^{}({\mathcal{N}}\mathclose{}\,\Vert\,\mathopen{}{\mathcal{M}}) =
{S}_{}^{}({\mathcal{N}(\phi_{AR})})
+ \operatorname{tr}\bigl[{\mathcal{N}(\phi_{AR})\,\log\bigl({\phi_R^{1/2} M_{BR} \phi_R^{1/2}}\bigr)}\bigr]$,
and $-{D}_{}^{}({\mathcal{N}(\phi_{AR})}\mathclose{}\,\Vert\,\mathopen{}{\mathcal{M}(\phi_{AR})})
+ \sum \tilde\eta_m [{s_m - \operatorname{tr}({E_{BR}^m N_{BR}})}]^2$ all fit in the structure
of~\eqref{z:1w8WlXSN}.

\begin{proof}[**z:dTDQ-ZX2]
  Let $\mathcal{N}_{A\to B}^{(0)}$ be any channel that does not satisfy the
  desired conclusion, that is, suppose that there exists a nonzero projector
  $P_{BR}$ that lies within the support of $\mathds{1}_B\otimes\Pi^{\phi_R}_R$
  such that $\mathcal{N}_{A\to B}^{(0)}({\phi_{AR}}) \, P_{BR} = 0$.  We'll show
  that $\mathcal{N}_{A\to B}^{(0)}$ cannot be optimal
  in~\eqref{z:1w8WlXSN}.
  
  For any $\theta\in[0,1]$, let
  \begin{align}
    \mathcal{N}^{(\theta)}_{A\to B}
    &\equiv (1-\theta)\mathcal{N}_{A\to B}^{(0)} + \theta \mathcal{N}_{A\to B}^{(\mathrm{int})}\ ;
    &
      \rho_{BR}^{(\theta)}
    &\equiv \mathcal{N}^{(\theta)}_{A\to B}(\phi_{AR})\ .
  \end{align}
  The state $\rho_{BR}^{(\theta)}$ always lies within the support of
  $\mathds{1}_B\otimes\Pi^{\phi_R}_R$ by construction.  Furthermore, for any
  $\theta\in(0,1]$, the state $\rho_{BR}^{(\theta)}$ always has full rank within
  the support of $\mathds{1}_B\otimes\Pi^{\phi_R}_R$.  This can be seen because
  $\mathcal{N}^{(\mathrm{int})}_{A\to B}$, having positive definite Choi matrix,
  can be written as a convex combination of a completely depolarizing channel
  (with Choi matrix proportional to the identity) and another completely
  positive map; the completely depolarizing channel component guarantees that
  $\mathcal{N}_{A\to B}^{(\mathrm{int})}(\phi_{AR})$ has full rank within
  $\mathds{1}_B\otimes\Pi^{\phi_R}_R$.  Therefore,
  $\rho_{BR}^{(\theta)}$ has full rank within $\mathds{1}_B\otimes\Pi^{\phi_R}_R$
  for $\theta\in(0,1]$.  On the other hand, recall that
  $\rho_{BR}^{(\theta{=}0)} P_{BR} = 0$ with $P_{BR}$ a nontrivial projector acting
  within $\mathds{1}_B\otimes\Pi^{\phi_R}_R$'s support.

  The channel $\mathcal{N}^{(\theta)}_{A\to B}$ obeys all problem constraints for
  all $\theta\in[0,1]$, by convexity of the constraints.  We'll show that there
  exists $\theta\in(0,1]$ for which $\mathcal{N}^{(\theta)}_{A\to B}$ achieves a
  better objective value than $\mathcal{N}^{(0)}_{A\to B}$, and hence the latter
  cannot be optimal.  The objective value achieved by
  $\mathcal{N}^{(\theta)}_{A\to B}$ is
  \begin{align}
    f_{\mathrm{obj}}(\theta)
    &\equiv f_{\mathrm{obj}}\bigl({\mathcal{N}_{A\to B}^{(\theta)}}\bigr)
    = s(\theta) + f_{\mathrm{Q}}(\theta)\ ;
    &
    s(\theta)
    &\equiv {S}_{}^{}\bigl ({\rho_{BR}^{(\theta)}}\bigr ) \ ;
    &
    f_{\mathrm{Q}}\bigl({\theta}\bigr)
    &\equiv
    f_{\mathrm{Q}}\bigl({\mathcal{N}_{A\to B}^{(\theta)}}\bigr)\ .
  \end{align}
  For $\theta\in(0,1)$, we can compute
  \begin{align}
    \frac{d}{d\theta} s(\theta)
    =
    -\operatorname{tr}\Bigl[{ \Bigl({\log\bigl({\rho_{BR}^{(\theta)}}\bigr) + \mathds{1}}\Bigr)\,\frac{d}{d\theta} \rho_{BR}^{(\theta)} }\Bigr]
    \ ,
  \end{align}
  where
  \begin{align}
    \frac{d}{d\theta} \rho_{BR}^{(\theta)}
    = \mathcal{N}^{(\mathrm{int})}_{A\to B}(\phi_{AR}) - \mathcal{N}^{(0)}_{A\to B}(\phi_{AR})
    = \rho_{BR}^{(\theta{=}1)} - \rho_{BR}^{(\theta{=}0)}\ ,
  \end{align}
  and therefore
  \begin{align}
    \frac{d}{d\theta} s(\theta)
    =
    -\operatorname{tr}\Bigl[{
    \rho_{BR}^{(\theta{=}1)}\,\log\bigl({\rho_{BR}^{(\theta)}}\bigr)
    }\Bigr]
    + \operatorname{tr}\Bigl[{
    \rho_{BR}^{(\theta{=}0)}\,\log\bigl({\rho_{BR}^{(\theta)}}\bigr)
    }\Bigr]\ .
  \end{align}
  Using $\rho_{BR}^{(\theta)}\geq (1-\theta)\rho_{BR}^{(0)}$, the operator
  monotonicity of the logarithm, and the pinching inequality
  $\rho_{BR}^{(\theta)} \leq 2P_{BR} \rho_{BR}^{(\theta)} P_{BR} + 2
  P_{BR}^\perp \rho_{BR}^{(\theta)} P_{BR}^\perp$ with here
  $P_{BR}^\perp \equiv \Pi^{\phi_R}_{BR} - P_{BR}$, we find
  \begin{align}
    \frac{d}{d\theta} s(\theta)
    &\geq
      -\operatorname{tr}\mathopen{}\left\{{
      \rho_{BR}^{(\theta{=}1)} \,
      \begin{bmatrix}
        P_{BR}\,\log\bigl({ 2P_{BR} \rho_{BR}^{(\theta)} P_{BR} }\bigr) & 0
        \\
        0 & P_{BR}^\perp\,\log\bigl({ 2P_{BR}^\perp \rho_{BR}^{(\theta)} P_{BR}^\perp }\bigr)
      \end{bmatrix}
      }\right\}\mathclose{}
      - ({1-\theta})\,{S}_{}^{}({\rho_{BR}^{(\theta)}})\ ,
  \end{align}
  where the matrix notation separates the blocks associated with the supports of
  $P_{BR}$ and $P_{BR}^\perp$, respectively.  Further using
  $P_{BR}\rho_{BR}^{(\theta)} P_{BR} = \theta P_{BR} \rho_{BR}^{(\theta{=}1)}
  P_{BR}$ and
  $P_{BR}^\perp \rho^{(\theta)}_{BR} P_{BR}^\perp \leq
  \mathds{1}_{B}\otimes\Pi^{\phi_R}_R$, along with $0\leq \theta\leq 1$,
  ${S}_{}^{}({\rho_{BR}^{(\theta)}})\leq\log(d_Bd_R)$, we find
  \begin{align}
    \frac{d}{d\theta} s(\theta)
    &\geq
      -\operatorname{tr}\Bigl\{{
      \rho_{BR}^{(\theta{=}1)} P_{BR}\,\Bigl[{\log(2\theta) P_{BR}
      + \log\bigl({P_{BR}\rho_{BR}^{(\theta{=}1)} P_{BR}}\bigr)}\Bigr] }\Bigr\}
      - \log(d_Bd_R)
      \nonumber\\
    &= -\log({2\theta})\,\operatorname{tr}\bigl[{\rho_{BR}^{(\theta{=}1)} P_{BR}}\bigr]
      + {S}_{}^{}({P_{BR} \rho_{BR}^{(\theta{=}1)} P_{BR}})
      - \log(d_Bd_R)
      \ .
  \end{align}
  The $-\log(2\theta)$ term has some positive nonzero coefficient,
  since $\operatorname{tr}\bigl({\rho_{BR}^{(\theta{=}1)} P_{BR}}\bigr) > 0$, and the entropy term
  is some constant independent of $\theta$.
  On the other hand, the function
  $f_{\mathrm{Q}}(\mathcal{N}_{A\to B})$ is quadratic in $\mathcal{N}_{A\to B}$;
  thus, the function $f_{\mathrm{Q}}(\theta)$ is quadratic in $\theta$ and
  $(d/d\theta) f_{\mathrm{Q}}(\theta) = f_{\mathrm{Q},1}\theta + f_{\mathrm{Q},0}$
  for some $f_{\mathrm{Q},1},f_{\mathrm{Q},0}\in\mathbb{R}$.
  Therefore,
  \begin{align}
    \frac{d}{d\theta} f_{\mathrm{obj}}(\theta)
    &= \frac{d}{d\theta} \bigl({ s(\theta) + f_{\mathrm{Q}}(\theta) }\bigr)
    \to \infty\quad\text{as}\ \theta\to 0\ .
  \end{align}
  Given as $f_{\mathrm{obj}}(\theta)$ is a continuous function on $[0,1]$ and is
  differentiable
  on $(0,1)$, the fact that its derivative is strictly positive for small enough
  $\theta$ ensures that $f_{\mathrm{obj}}(\theta)$ is strictly increasing
  as $\theta$ increases away from 0, for small enough $\theta$.
  Therefore $\theta=0$ cannot be
  the maximum of $f_{\mathrm{obj}}(\theta)$, and $\mathcal{N}_{A\to B}^{(0)}$
  cannot be optimal in
  \eqref{z:1w8WlXSN}.
\end{proof}

\subsection{Structure of the generalized thermal channel: Proof of
  \cref{z:gV43EWT.}}

\begin{proof}[*z:gV43EWT.]
  As a matter of convenience, we formally replace the objective function
  in~\eqref{z:z4Hq5iYa}
  by the function
  \begin{align}
    f_{\mathrm{obj}}(N_{BR})
    &= {D}_{}^{}\bigl ({\phi_R^{1/2} N_{BR}\phi_R^{1/2}}\mathclose{}\,\big \Vert\,\mathopen{}{\phi_R^{1/2} M_{BR} \phi_R^{1/2} }\bigr )
      + \sum \tilde\eta_m \bigl[{s_m - \operatorname{tr}\bigl({E^m_{BR} N_{BR}}\bigr)}\bigr]^2
      + \bigl({1 - \operatorname{tr}[{\mathcal{N}(\phi_{AR})}]}\bigr)
      \nonumber\\
    &= \operatorname{tr}\bigl[{\phi_R^{1/2} N_{BR}\phi_R^{1/2}\,\log\bigl({\phi_R^{1/2} N_{BR}\phi_R^{1/2}}\bigr)}\bigr]
      - \operatorname{tr}\bigl[{\phi_R^{1/2} N_{BR}\phi_R^{1/2}\,\log\bigl({\phi_R^{1/2} M_{BR} \phi_R^{1/2}}\bigr)}\bigr]
      \nonumber\\
    &\quad+ \sum \tilde\eta_m \bigl[{s_m - \operatorname{tr}\bigl({E^m_{BR} N_{BR}}\bigr)}\bigr]^2
            + \bigl({1 - \operatorname{tr}[{\mathcal{N}(\phi_{AR})}]}\bigr)\ ,
      \label{z:Wjf46Lne}
  \end{align}
  where ${D}_{}^{}({X}\mathclose{}\,\Vert\,\mathopen{}{Y}) = \operatorname{tr}\bigl({X\log X}\bigr) - \operatorname{tr}\bigl({X\log Y}\bigr)$ is formally extended
  to arguments $X,Y$ that are arbitrary positive semidefinite operators.
  The additional term $\bigl({1 - \operatorname{tr}[{\mathcal{N}(\phi_{AR})}]}\bigr)$ is irrelevant
  for any choice of variable $\mathcal{N}$ that obeys the problem constraints,
  but will simplify the computation of the gradients of the objective function
  later on.  Clearly, the modified problem yields the same optimal variables as
  the original one
  in~\eqref{z:z4Hq5iYa}.
  The assumption that there exists $N_{BR} > 0$ that satisfies all the problem
  constraints enables us to invoke
  \Cref{z:dTDQ-ZX2}.  We are thus
  guaranteed that any optimal solution $N_{BR}$ to the problem~\eqref{z:z4Hq5iYa}
  must be such that $\phi_R^{1/2}\, N_{BR}\, \phi_R^{1/2}$, and therefore
  $\Pi^{\phi_R}_R\, N_{BR}\, \Pi^{\phi_R}_R$, has full rank within the support
  of $\mathds{1}_B\otimes\Pi^{\phi_R}_R$.  
  The objective function $f_{\mathrm{obj}}(N_{BR})$ is well defined and
  continuous for all $N_{BR}\geq 0$.  However, since its value only depends on
  $\phi_R^{1/2} N_{BR} \phi_R^{1/2}$, we extend this function formally as a
  function whose domain is all Hermitian matrices $N_{BR} = N_{BR}^\dagger$ that
  satisfy $\Pi^{\phi_R}_R\, N_{BR}\,\Pi^{\phi_R}_R \geq 0$.  (%
  In our optimization, we'll still require $N_{BR}\geq 0$; simply, rather than
  treating this condition through the domain of the objective function, we'll
  formally impose it as an explicit constraint.)  Let us write
  \begin{align}
    N_{BR} = \begin{bmatrix}
               N_{BR}^{00} & N_{BR}^{01} \\[5pt]
               N_{BR}^{01\,\dagger} & N_{BR}^{11}
           \end{bmatrix}\ ,
    \label{z:81SIufBH}
  \end{align}
  with $N_{BR}^{00} = N_{BR}^{00\,\dagger}$,
  $N_{BR}^{11} = N_{BR}^{11\,\dagger}$, and where the matrix blocks correspond
  to the subspaces spanned by $\Pi^{\phi_R}_R$, $\Pi^{\phi_R\perp}_R$.  The
  requirement that $\Pi^{\phi_R}_R\, N_{BR}\,\Pi^{\phi_R}_R \geq 0$ then
  translates into the condition $N_{BR}^{00} \geq 0$; the set of operators
  $N_{BR}$ we consider formally as the domain of our objective function is
  \begin{align}
    \mathfrak{S} \equiv \mathopen{}\left\{{
    N_{BR} = \begin{bmatrix}
               N_{BR}^{00} & N_{BR}^{01} \\[5pt]
               N_{BR}^{01\,\dagger} & N_{BR}^{11}
             \end{bmatrix}
    :\  N_{BR} = N_{BR}^\dagger \text{ and } N_{BR}^{00} \geq 0
    }\right\}\mathclose{}\ .
  \end{align}
  The interior of this set is
  \begin{align}
    \topoint({\mathfrak{S}}) = \mathopen{}\left\{{
    N_{BR} = \begin{bmatrix}
               N_{BR}^{00} & N_{BR}^{01} \\[5pt]
               N_{BR}^{01\,\dagger} & N_{BR}^{11}
             \end{bmatrix}
    :\  N_{BR} = N_{BR}^\dagger \text{ and } N_{BR}^{00} > 0
    }\right\}\mathclose{}\ .
  \end{align}
  As we have seen, \cref{z:dTDQ-ZX2}
  guarantees that any optimal solution
  to~\eqref{z:.wBvf2qe} must lie in
  $\topoint({\mathfrak{S}})$.
  
  Let us construct a Lagrangian for our optimization problem.  We minimize the
  function $f_{\mathrm{obj}}(N_{BR})$ in~\eqref{z:Wjf46Lne} over
  $N_{BR}\in\topoint(\mathfrak{S})$, with the following constraints:
  \begin{enumerate}[(\roman*)]
  \item $N_{BR} \geq 0$ (dual variable $S_{BR}\geq 0$),
  \item $\operatorname{tr}_B({N_{BR}}) = \mathds{1}_R$ (dual variable $F_R = F_R^\dagger$),
  \item $\operatorname{tr}({C^j_{BR} N_{BR}}) = q_j$ (dual variable $\mu_j\in\mathbb{R}$) for
    $j=1, \ldots, n_C$, and %
  \item\label{z:BRhqcIil}
    $\operatorname{tr}({D^\ell_{BR} N_{BR}}) \leq r_\ell$ (dual variable $\nu_\ell\geq 0$) for
    $\ell =1, \ldots, n_D$.
  \end{enumerate}
  The Lagrangian reads:
  \begin{align}
    \mathcal{L}_\phi[N_{BR}, S_{BR}, \mu_j, \nu_\ell, F_R]
    &= f_{\mathrm{obj}}(N_{BR})
    - \sum_{j=1}^{n_C} \mu_j \bigl[{ q_j - \operatorname{tr}\bigl({ C^j_{BR} N_{BR} }\bigr) }\bigr]
      - \sum_{\ell=1}^{n_D} \nu_\ell \bigl[{ r_\ell - \operatorname{tr}\bigl({ D^\ell_{BR} N_{BR} }\bigr) }\bigr]
      \nonumber\\
    &\qquad
    + \operatorname{tr}\bigl({ F_R \bigl[{ \mathds{1}_R - \operatorname{tr}_B({N_{BR}}) }\bigr] }\bigr)
    - \operatorname{tr}\bigl({ S_{BR} N_{BR} }\bigr)\ .
      \label{z:Hjx04jRS}
  \end{align}

  If the problem were strictly feasible, we could use Slater's condition to
  assert that strong duality holds~\cite{R95}.  It is
  unclear, however, whether the inequality
  constraints~\ref{z:BRhqcIil} can be strictly satisfied.
  Instead, we employ a weaker version of Slater's condition, which states that
  strong duality also holds if the problem is strictly feasible with respect to
  all nonaffine constraints~\cite{R95}.
  The Karush-Kuhn-Tucker (KKT) theorem~\cite{R95}
  then states that optimal (primal, dual) variable pairs are exactly the points
  that satisfy all following conditions, known as the KKT conditions:
  \begin{enumerate}[(\alph*)]
  \item the gradient of $\mathcal{L}$ with
    respect to $N_{BR}$ vanishes;
  \item all primary and dual constraints are satisfied; and
  \item the complementary slackness conditions hold, namely, $S_{BR}N_{BR}=0$
    and $\nu_\ell [{r_\ell - \operatorname{tr}(D^\ell_{BR} N_{BR})}] = 0$.
  \end{enumerate}
  We now compute the gradient of $\mathcal{L}$ by a calculus of variations.
  Henceforth, $N_{BR}^{00}$ is understood as isometrically embedded in the
  support of $\mathds{1}_B\otimes \Pi^{\phi_R}_R$ whenever necessary from context.
  Observe, for instance, that
  $\phi_R^{1/2}\,N_{BR}^{00}\,\phi_R^{1/2} \equiv
  \phi_R^{1/2}\,N_{BR}\,\phi_R^{1/2}$.  Recalling the computation of the
  entropy's derivative in the proof of
  \cref{z:hd7.cLe.}, we find:
  \begin{align}
    \delta  f_{\mathrm{obj}}(N_{BR}) 
    &=
      \operatorname{tr}\Bigl\{{ \phi_R^{1/2}
      \Bigl[{ \log\bigl({ \phi_R^{1/2}N_{BR}^{00}\phi_R^{1/2}}\bigr) + \mathds{1}_B\otimes\Pi_R^{\phi_R} }\Bigr]
      \phi_R^{1/2} \, \delta N_{BR} }\Bigr\}
      - 
      \operatorname{tr}\Bigl[{ \phi_R^{1/2} \log\bigl({ \phi_R^{1/2} M_{BR} \phi_R^{1/2} }\bigr) \phi_R^{1/2} \,
      \delta N_{BR} }\Bigr]
      \nonumber\\
    &\qquad
      + \sum_{m=1}^{n_E} 2\tilde\eta_m\bigl[{\operatorname{tr}\bigl({E_{BR}^m N_{BR}}\bigr) - s_m}\bigr]
      \operatorname{tr}\bigl({E_{BR}^m \delta N_{BR}}\bigr)
      - \operatorname{tr}\bigl[{(\mathds{1}_B\otimes\phi_R)\,\delta N_{BR}}\bigr]\ .
  \end{align}
  Let
  \begin{align}
    w_m \equiv 2\tilde\eta_m \bigl[{\operatorname{tr}\bigl({E_{BR}^m N_{BR}}\bigr) - s_m}\bigr]\ .
    \label{z:fiA-z3IP}
  \end{align}
  Then,
  \begin{align}
    \delta \mathcal{L}_\phi
    &= \delta f_{\mathrm{obj}}(N_{BR}) 
      + \sum_{j=1}^{n_C} \mu_j \operatorname{tr}\bigl[{ C^j_{BR} \,\delta N_{BR} }\bigr]
      + \sum_{\ell=1}^{n_D} \nu_\ell \operatorname{tr}\bigl[{ D^\ell_{BR} \,\delta N_{BR} }\bigr]
      - \operatorname{tr}\bigl[{ F_R \delta N_{BR} }\bigr]
      - \operatorname{tr}\bigl[{ S_{BR} \delta N_{BR} }\bigr]
      \nonumber\\
    &
      \begin{aligned}[b]
        &= \operatorname{tr}\biggl\{\biggl[
          \phi_R^{1/2} \log\bigl({\phi_R^{1/2} N_{BR}^{00}\phi_R^{1/2}}\bigr) \phi_R^{1/2}
        - \phi_R^{1/2} \log\bigl({\phi_R^{1/2} M_{BR} \phi_R^{1/2}}\bigr) \phi_R^{1/2}
        \\
        &\qquad\quad + \sum_{j=1}^{n_C} \mu_j C_{BR}^j
        + \sum_{\ell=1}^{n_D} \nu_\ell D_{BR}^\ell
        + \sum_{m=1}^{n_E} w_m E_{BR}^m
        - \mathds{1}_B\otimes F_{R} - S_{BR}
        \biggr]\,\delta N_{BR} \biggr\}\ .
      \end{aligned}
      \label{z:JjNT-6TX}
  \end{align}
  Define
  \begin{align}
    G_{BR}
    &= \sum \mu_j C^j_{BR} + \sum \nu_\ell D^\ell_{BR} + \sum w_m E^m_{BR}
      - \mathds{1}_B\otimes F_R
      - \phi_R^{1/2} \log\bigl({\phi_R^{1/2} M_{BR} \phi_R^{1/2}}\bigr) \phi_R^{1/2}
      - S_{BR}\ .
      \label{z:vViwg20w}
  \end{align}
  The gradient $\delta \mathcal{L}_\phi$ vanishes exactly when the term in
  square brackets
  in~\eqref{z:JjNT-6TX}
  is identically zero, namely:
  \begin{align}
    \phi_R^{1/2} \log\bigl({\phi_R^{1/2} N_{BR}^{00} \phi_R^{1/2}}\bigr) \phi_R^{1/2}
    = -G_{BR}\ .
    \label{z:gOkCgPf9}
  \end{align}
  Applying $\Pi_R^{\phi_R\perp}(\cdot)$, we find
  $\Pi_R^{\phi_R\perp} G_{BR} = 0$, which implies that
  $G_{BR} = \Pi_R^{\phi_R} G_{BR} \Pi_R^{\phi_R}$.  Applying
  $\exp\bigl\{{\phi_R^{-1/2}(\cdot)\phi_R^{-1/2}}\bigr\}$
  onto~\eqref{z:gOkCgPf9}, we find
  \begin{align}
    \mathcal{N}_{A\to B}(\phi_{AR})
    = \phi_R^{1/2} N_{BR} \phi_R^{1/2}
    = \Pi_R^{\phi_R} \exp\bigl\{{ - \phi_R^{-1/2} G_{BR} \phi_R^{-1/2} }\bigr\}  \Pi_R^{\phi_R}\ .
    \label{z:KbGQF8pL}
  \end{align}
  This completely determines $N_{BR}^{00}$, the upper left block
  in~\eqref{z:81SIufBH}, since
  $N_{BR}^{00} = \phi_R^{-1/2} \mathcal{N}(\phi_{AR}) \phi_R^{-1/2}$.  The other
  blocks $N_{BR}^{01}$, $N_{BR}^{01\,\dagger}$, and $N_{BR}^{11}$ are collected
  into some general Hermitian matrix $Y_{BR}$.  This proves that any optimal
  $N_{BR}$ is of the form stated
  in~\eqref{z:drLpOUuq}.
  Conversely, if $N_{BR}$ satisfies all problem constraints and is of the
  form~\eqref{z:drLpOUuq}
  with all the stated conditions, then all KKT conditions are satisfied
  (including~\eqref{z:gOkCgPf9}
  along with the complementary slackness conditions), implying that $N_{BR}$ is
  optimal.

  Any optimal $\mathcal{N} \equiv \ThChGen[A\to B][\phi]$, which necessarily has
  the above form, further satisfies the following properties.  We know that
  $\phi_R^{1/2} N_{BR} \phi_R^{1/2}$ is a normalized quantum state and therefore
  obeys $\phi_R^{1/2} N_{BR} \phi_R^{1/2}\leq \mathds{1}_{BR}$.  Plugging
  in~\eqref{z:KbGQF8pL}, we find that
  $\exp\bigl\{{ -\phi_R^{-1/2}\,G_{BR} \phi_R^{-1/2} }\bigr\} \leq \mathds{1}_{BR}$ and
  therefore $\phi_R^{-1/2} G_{BR} \phi_R^{-1/2}$ must be positive semidefinite.
  Applying $\phi_R^{-1/2}\,({\cdot})\,\phi_R^{-1/2}$ and recalling that
  $G_{BR} = \Pi_R^{\phi_R} G_{BR} \Pi_R^{\phi_R}$ enables us to conclude that
  $G_{BR}$ is positive semidefinite.  The property satisfied by the $Y_{BR}$
  operator can be found by computing
  \begin{align}
    \mathds{1}_R
    &= \operatorname{tr}_B({\mathcal{N}(\Phi_{A:R})})
      = \phi_R^{-1/2} \operatorname{tr}_B\bigl[{ {e}^{-\phi_R^{-1/2} G_{BR} \phi_R^{-1/2}} }\bigr] \phi_R^{-1/2}
      + \operatorname{tr}_B(Y_{BR})
      \nonumber\\
    &= \phi_R^{-1/2} \operatorname{tr}_B\bigl[{ \mathcal{N}(\phi_{AR}) }\bigr] \phi_R^{-1/2}
      + \operatorname{tr}_B(Y_{BR})
      \nonumber\\
    &= \Pi^{\phi_R}_R + \operatorname{tr}_B(Y_{BR})\ ,
  \end{align}
  and therefore $\operatorname{tr}_B(Y_{BR}) = \Pi_R^{\phi_R\perp}$.  The value attained for
  ${D}_{\phi}^{}\bigl ({\mathcal{N}}\mathclose{}\,\big \Vert\,\mathopen{}{\mathcal{M}}\bigr )$ for
  $\mathcal{N} \equiv \ThChGen[A\to B][\phi]$, recalling
  \cref{z:KbGQF8pL,z:vViwg20w} and
  $\Pi_R^{\phi_R \perp} G_{BR} = 0$, is
  \begin{align}
    {D}_{\phi}^{}\bigl ({\mathcal{N}}\mathclose{}\,\big \Vert\,\mathopen{}{\mathcal{M}}\bigr )
    &= -\operatorname{tr}\bigl[{ \mathcal{N}(\phi_{AR})\,\phi_R^{-1/2} G_{BR} \phi_R^{-1/2} }\bigr]
      - \operatorname{tr}\bigl[{ \mathcal{N}(\phi_{AR})\,\log\bigl({\mathcal{M}(\phi_{AR})}\bigr) }\bigr]
      \nonumber\\[1ex]
    &= -\operatorname{tr}\bigl[{ \mathcal{N}(\Phi_{A:R})\, G_{BR} }\bigr]
      - \operatorname{tr}\bigl[{ \mathcal{N}(\phi_{AR})\,\log\bigl({\mathcal{M}(\phi_{AR})}\bigr) }\bigr]
      \nonumber\\[1ex]
    &= -\sum \mu_j \operatorname{tr}\bigl({C_{BR}^j N_{BR}}\bigr) - \sum \nu_\ell \operatorname{tr}\bigl({D_{BR}^\ell N_{BR}}\bigr)
      - \sum w_m \operatorname{tr}\bigl({E_{BR}^m N_{BR}}\bigr)
      + \operatorname{tr}\bigl({N_{BR} F_R}\bigr)
        \nonumber\\
        &\qquad
      + \operatorname{tr}\bigl[{N_{BR} \, \phi_R^{1/2}\log\bigl({\phi_R^{1/2} M_{BR} \phi_R^{1/2}}\bigr) \phi_R^{1/2}}\bigr]
      + \operatorname{tr}\bigl({S_{BR} N_{BR}}\bigr)
      - \operatorname{tr}\bigl[{ \mathcal{N}(\phi_{AR})\,\log\bigl({\mathcal{M}(\phi_{AR})}\bigr) }\bigr]
      \nonumber\\[1ex]
    &= -\sum \mu_j q_j - \sum \nu_\ell r_\ell - \sum w_m \Bigl({s_m + \frac{w_m}{2\tilde\eta_m}}\Bigr)
      + \operatorname{tr}\bigl({F_R}\bigr)\ .
  \end{align}
  In the last equality, we used the equality constraints, both slackness
  conditions, \cref{z:fiA-z3IP}, and the fact that
  $\operatorname{tr}_B({N_{BR}}) = \mathds{1}_R$.
\end{proof}

\subsection{Dual problem of the channel relative entropy minimization: Proof of
  \cref{z:t5t-uuUF}}
\label{z:UmB-TaUp}

We begin by deriving the Lagrange dual problem of~\eqref{z:z4Hq5iYa}.
This dual is presented in the Lemma below.  We then use this dual problem to
prove \cref{z:t5t-uuUF}.

For the following lemma, we need a few additional definitions that
characterize how the observables $\{{ E^m }\}$ span the space orthogonal to the
support of $\phi_R$.  First, we define the superoperator projection map
\begin{align}
  \widetilde{\mathcal{P}}_{\phi}(\cdot)
  = ({\cdot}) - \Pi_R^{\phi_R}\,({\cdot})\,\Pi_R^{\phi_R}\ .
  \label{z:rkMU7UOA}
\end{align}
This map zeroes out the sub-block $\Pi_R^{\phi_R}(\cdot)\Pi_R^{\phi_R}$ of its
matrix input.  This map is not completely positive nor does it preserve the
input's trace, but it is Hermiticity-preserving.  In vectorized form, this map
is represented as
$\widetilde{\mathcal{P}}_{\phi} = \mathds{1}-
\bigl({\Pi_R^{\phi_R}\otimes\Pi_R^{\phi_R\,*}}\bigr)$.
Now, we define the linear map
$\mathsf{E}_\phi : \mathbb{R}^{n_E} \to \Herm(\mathscr{H}_{BR})$ through its action on
the canonical basis as
\begin{align}
  \lvert {m}\rangle  \mapsto \mathsf{E}_\phi \lvert {m}\rangle  = \widetilde{\mathcal{P}}_{\phi} \lvert { E^m_{BR} }\rrangle \ .
  \label{z:1JjW3rMh}
\end{align}
Equivalently,
$\mathsf{E}_\phi = \sum_m \widetilde{\mathcal{P}}_{\phi} \lvert { E^m_{BR} }\rrangle 
\langle {m}\rvert $.  Correspondingly,
$\mathsf{E}_\phi^\dagger \equiv \sum_m \lvert {m}\rangle \llangle {E^m_{BR}}\rvert 
\widetilde{\mathcal{P}}_{\phi}$.
The image of $\mathsf{E}_\phi^\dagger$, denoted by
$\OpImag\bigl[{ \mathsf{E}_\phi^\dagger }\bigr]$, describes the operators that can be
spanned by $E^m_{BR}$ if the latter are stripped of their action within
$\Pi_R^{\phi_R}$.

\begin{lemma}[Dual formulation of the minimum channel relative entropy problem]
  \label{z:MCoiQlab}
  Consider the setting of Problem~\eqref{z:z4Hq5iYa},
  and assume that there exists some quantum channel with positive definite Choi
  matrix that satisfies all problem constraints (as in
  \cref{z:gV43EWT.}).  
  Now consider the following problem:
  \begin{align}
    \textup{maximize:}
    &\quad \mathcal{G}_\phi(F_R, \mu_j, \nu_\ell, w_m, S_{BR})
      \label{z:QZA5Z07I}
      \\[1ex]
    \textup{over:}
    &\quad
      \mu_j\in\mathbb{R}\ (j=1,\ldots,n_C);
      \ \nu_\ell\geq 0\ (\ell=1,\ldots,n_D);
      \ w_m\in\mathbb{R}\ (m=1,\ldots, n_E); 
      \nonumber\\
    &\quad
      F_R=F_R^\dagger;
      \ S_{BR} \geq 0
      \nonumber\\[1ex]
    \textup{subject to:}
    &\quad
      \Pi^{\phi_R \perp}_R \, G_{BR} = 0\ ;
      \nonumber\\
    &\quad 
      \mathsf{e}
      \in \OpImag\bigl[{ \mathsf{E}_\phi^\dagger }\bigr]\ ;
      \quad \mathsf{e}_m = \frac{w_m}{2\tilde\eta_m} + s_m
      - \operatorname{tr}\Bigl({ \phi_R^{-1/2} E_{BR}^m \phi_R^{-1/2} {e}^{-\phi_R^{-1/2} G_{BR} \phi_R^{-1/2}} }\Bigr)\ ,
      \nonumber
  \end{align}
  where $\mathsf{e} \in \mathbb{R}^{n_E}$ is a vector given by its components
  $\mathsf{e}_m$, and using the shorthand expressions
  \begin{align}
    \hspace*{2em}&\hspace*{-2em}
    \mathcal{G}_\phi(F_R, \mu_j, \nu_\ell, w_m, S_{BR})
                   \nonumber\\
    &=
      \operatorname{tr}({F_R}) 
      - \sum \mu_j q_j
      - \sum \nu_\ell r_\ell
      - \sum w_m s_m
      + 1 - \operatorname{tr}\Bigl({\Pi_R^{\phi_R} {e}^{-\phi_R^{-1/2} G_{BR} \phi_R^{-1/2}}}\Bigr)
      - \sum \frac{w_m^2}{4\tilde\eta_m}\ ;
      \label{z:-9TxOWQN}
    \\
    \hspace*{2em}&\hspace*{-2em}
    G_{BR}(\mu_j, \nu_\ell, w_m, F_R, S_{BR}) \equiv G_{BR}
                   \nonumber\\
    &= \sum \mu_j C^j_{BR} + \sum \nu_\ell D^\ell_{BR} + \sum w_m E_{BR}^m
      - \mathds{1}_B\otimes F_R
      - \phi_R^{1/2} \log\bigl({\phi_R^{1/2} M_{BR} \phi_R^{1/2}}\bigr) \phi_R^{1/2}
      - S_{BR}\ .
      \label{z:di1MjM3W}
  \end{align}
  The
  problem~\eqref{z:QZA5Z07I}
  yields the same optimal value as the problem~\eqref{z:z4Hq5iYa},
  and the variables $F_R$, $\mu_j$, $\nu_\ell$, $S_{BR}$ coincide with those for
  the optimal thermal channel
  in~\cref{z:gV43EWT.}.
\end{lemma}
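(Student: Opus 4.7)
The plan is to derive \cref{z:QZA5Z07I} as the Lagrange dual of the primal problem \cref{z:z4Hq5iYa}, then invoke strong duality (which follows from the strict feasibility assumption via a refined Slater condition, as already used in the proof of \cref{z:gV43EWT.}) to conclude that the dual optimum equals the primal optimum and that the KKT variables match.

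First, I would handle the quadratic loss term by introducing auxiliary real variables $t_m$ together with the equality constraint $t_m = \operatorname{tr}(E^m_{BR} N_{BR}) - s_m$, so that the objective becomes ${D}_{\phi}^{}({\mathcal{N}}\mathclose{}\,\Vert\,\mathopen{}{\mathcal{M}}) + \sum \tilde\eta_m t_m^2$ over a linear equality. Introducing a Lagrange multiplier $w_m$ for each such constraint and minimizing out $t_m$ analytically yields the $-w_m^2/(4\tilde\eta_m) - w_m s_m$ contribution that appears in $\mathcal{G}_\phi$. This reduces the problem to a Lagrangian that is linear in $N_{BR}$ apart from the fully-quantum relative entropy term, exactly parallel to the Lagrangian \cref{z:Hjx04jRS} used in the proof of \cref{z:gV43EWT.} but now with the quadratic loss replaced by its convex conjugate.

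Next, I would compute the dual function by partial minimization of this Lagrangian over $N_{BR}\in\mathfrak{S}$. The stationarity calculation of \cref{z:gV43EWT.} (cf.~\cref{z:JjNT-6TX}) shows that the gradient with respect to the $\Pi^{\phi_R}_R$-block $N_{BR}^{00}$ vanishes iff $\phi_R^{1/2}\log\bigl(\phi_R^{1/2}N_{BR}^{00}\phi_R^{1/2}\bigr)\phi_R^{1/2} = -G_{BR}$, giving the exponential form for the minimizer of the $00$-block. The Lagrangian is linear in the remaining blocks $N_{BR}^{01}, N_{BR}^{11}$ of \cref{z:81SIufBH}, so the infimum is $-\infty$ unless the linear coefficient annihilates the orthogonal complement of the $00$-block, which is precisely the condition $\widetilde{\mathcal{P}}_\phi(G_{BR}) = 0$, equivalently $\Pi^{\phi_R\perp}_R G_{BR} = 0$. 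Substituting the optimal $N_{BR}^{00}$ back into the Lagrangian produces the objective $\mathcal{G}_\phi$ of \cref{z:-9TxOWQN}, using that $-\operatorname{tr}\bigl[\phi_R^{1/2}N^{00}\phi_R^{1/2}\cdot\phi_R^{-1/2}G_{BR}\phi_R^{-1/2}\bigr]$ cancels with the ``$-\log$'' contributions from the relative entropy and an additive constant of $1$ is carried over from the $\bigl(1-\operatorname{tr}[\mathcal{N}(\phi_{AR})]\bigr)$ trick of \cref{z:Wjf46Lne}.

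The subtlest point, and the one I expect to be the main obstacle, is justifying the constraint $\mathsf{e} \in \OpImag[\mathsf{E}_\phi^\dagger]$. This condition is essentially the KKT condition relating the dual variable $w_m$ (introduced via the auxiliary-variable reformulation) to the primal value $\operatorname{tr}(E^m_{BR} N_{BR}) - s_m$, after acknowledging that the optimal primal $N_{BR}$ has a $00$-block $N^{00}$ fixed by stationarity but a free off-diagonal completion $Y_{BR}$ (as in \cref{z:gV43EWT.}). Concretely, writing $N_{BR} = N^{00} + Y_{BR}$ with $\Pi^{\phi_R}_R Y_{BR}\Pi^{\phi_R}_R = 0$, the defining equality $w_m = 2\tilde\eta_m[\operatorname{tr}(E^m_{BR} N_{BR}) - s_m]$ rearranges to $\mathsf{e}_m = \operatorname{tr}[\widetilde{\mathcal{P}}_\phi(E^m_{BR})\, Y_{BR}] = (\mathsf{E}_\phi^\dagger \lvert Y_{BR}\rrangle)_m$, so the existence of a feasible completion $Y_{BR}$ is equivalent to $\mathsf{e}$ lying in $\OpImag[\mathsf{E}_\phi^\dagger]$. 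Thus the constraint is precisely the consistency condition that at least one $Y_{BR}$ exists which closes the KKT system, and any primal-optimal $N_{BR}$ from \cref{z:gV43EWT.} exhibits one. Strong duality then yields equality of optimal values, and the alignment of the variables $F_R, \mu_j, \nu_\ell, S_{BR}$ with the KKT parameters of \cref{z:gV43EWT.} follows directly by inspection of the shared stationarity equations.
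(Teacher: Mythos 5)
Your approach is correct but takes a genuinely different route from the paper. The paper keeps the quadratic loss inside the Lagrangian $\mathcal{L}_\phi$ and never introduces $w_m$ as a dual variable: it only ever computes $\inf_{N}\mathcal{L}_\phi$ as a function of $(\mu_j,\nu_\ell,F_R,S_{BR})$, introduces $w_m = 2\tilde\eta_m[\operatorname{tr}(E^m N)-s_m]$ as a derived shorthand appearing in the stationarity condition, and then the $\mathsf{e}\in\OpImag[\mathsf{E}_\phi^\dagger]$ constraint falls out as the existence condition obtained by eliminating $N_{BR}$ and $Y_{BR}$ from the coupled stationarity relations, since the minimizing $N$ must solve a fixed-point equation in the presence of the quadratic. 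Your auxiliary-variable reformulation linearizes the Lagrangian in $N$: you introduce $t_m = \operatorname{tr}(E^m N)-s_m$ with a genuine Lagrange multiplier $w_m$, and the $-w_m^2/(4\tilde\eta_m)$ term appears immediately as the Fenchel conjugate of $\tilde\eta_m t^2$. This makes the partial minimization over $N$ cleaner and gives the $\Pi^{\phi_R\perp}G_{BR}=0$ condition directly from the linear-in-$Y_{BR}$ structure. What it does \emph{not} give you directly is the $\mathsf{e}$-constraint: your dual, with $w_m$ free and without the $\mathsf{e}$-constraint, has a strictly larger feasible set than \eqref{z:QZA5Z07I}, so you need the extra bridging step you sketch at the end. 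That step is sound --- at the KKT optimum of the auxiliary formulation, $w_m = 2\tilde\eta_m[\operatorname{tr}(E^m N^\star)-s_m]$ combined with the exponential form of $\Pi^{\phi_R}_R N^\star \Pi^{\phi_R}_R$ shows $\mathsf{e}_m = \operatorname{tr}(E^m Y^\star) = (\mathsf{E}_\phi^\dagger\lvert Y^\star\rrangle)_m$ --- but you should make the logic explicit: your dual's optimum equals the primal by strong duality, and that optimum is feasible in \eqref{z:QZA5Z07I} (smaller feasible set), so \eqref{z:QZA5Z07I} also attains the primal value; together with weak duality for \eqref{z:QZA5Z07I}, equality follows. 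With that spelled out, the proof is complete, and arguably the Fenchel-conjugate route exposes the structure of the quadratic term more transparently than the paper's in-place computation.
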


The optimization in~\eqref{z:QZA5Z07I}
can be extended to include a maximization over $\phi_R$, therefore solving our full original
stated problem of minimizing the channel relative entropy.  The presence of
$\phi_R^{-1/2}$, however, makes the optimization
in~\eqref{z:QZA5Z07I}
numerically less stable than the problem in
\cref{z:t5t-uuUF}.  The latter is therefore
more attractive for numerical computation, in principle.
While this optimization can be carried out numerically, we have empirically
found that the techniques of \R\cite{R107,R57}
were more reliable in our examples.

We can exploit the fact that a number of entries in the variable $S_{BR}$ are
fixed by the constraint $\Pi_R^{\phi_R} G_{BR} = 0$ to reduce the number of variables
in~\eqref{z:QZA5Z07I}.
Decompose $\mathscr{H}_{BR}$ into two orthogonal subspaces projected upon by
$(\mathds{1}_B\otimes\Pi_R^{\phi_R})$, $(\mathds{1}_B\otimes\Pi_R^{\phi_R \perp})$, and write
\begin{align}
  S_{BR} = \begin{bmatrix}
      S_{BR}^{00} & S_{BR}^{01} \\[1ex]
      S_{BR}^{01\,\dagger} & S_{BR}^{11}
  \end{bmatrix}\ ,
\end{align}
where $S_{BR}^{00} = \Pi_R^{\phi_R} S_{BR} \Pi_R^{\phi_R}$ up to an isometric embedding,
$S_{BR}^{01} = \Pi_R^{\phi_R} S_{BR} \Pi_R^{\phi_R \perp}$, etc.
The constraint $\Pi_R^{\phi_R \perp} G_{BR} = 0$
in~\eqref{z:QZA5Z07I} implies that
the blocks of $S_{BR}$ need obey
\begin{subequations}
\begin{align}
  S_{BR}^{11}
  &= \Pi_R^{\phi_R \perp} \, \Bigl({ \sum \mu_j C^j + \sum \nu_\ell D^\ell + \sum w_m E^m
     - \mathds{1}_B\otimes F_R }\Bigr) \, \Pi_R^{\phi_R \perp}\ ;
  \label{z:ekGz5m28}
  \\
  S_{BR}^{01}
  &= \Pi_R^{\phi_R} \, \Bigl({ \sum \mu_j C^j + \sum \nu_\ell D^\ell + \sum w_m E^m
    - \mathds{1}_B\otimes F_R }\Bigr) \, \Pi_R^{\phi_R \perp}\ ;
  \label{z:Mlerg4vB}
  \\
  S_{BR}^{00}
  &\geq S_{BR}^{01} \bigl({S_{BR}^{11}}\bigr)^{-1} S_{BR}^{01\,\dagger}\ ,
  \label{z:7VdBH8RU}
\end{align}
\end{subequations}
where the last equality involves no isometric embedding and follows by Schur complementarity
from the requirement that $S_{BR} \geq 0$.  Therefore, we may replace the variable
$S_{BR}$ by a potentially smaller variable $S_{BR}^{00}$ acting only in the subspace
projected onto by $\mathds{1}_B\otimes\Pi_R^{\phi_R}$; $S_{BR}^{00}$ is constrained
via~\eqref{z:7VdBH8RU}, where
$S_{BR}^{01}$ and $S_{BR}^{11}$ are determined 
from~\eqref{z:ekGz5m28}
and~\eqref{z:Mlerg4vB};
then, the constraint
$\Pi_R^{\phi_R \perp} G_{BR} = 0$ becomes unnecessary.

A further simplification can be carried out if $n_E = 0$.  For any Hermitian operators
$G_{BR}$, $G_{BR}'$ obeying $G_{BR} \leq G_{BR}'$, we have
$\operatorname{tr}\bigl[{\Pi_R^{\phi_R} \exp\bigl({-\phi_R^{-1/2} G_{BR} \phi_R^{-1/2}}\bigr)}\bigr]
\geq \operatorname{tr}\bigl[{\Pi_R^{\phi_R} \exp\bigl({-\phi_R^{-1/2} G_{BR}' \phi_R^{-1/2}}\bigr)}\bigr]$.  This
inequality follows from the Golden-Thompson inequality
$\operatorname{tr}\bigl({{e}^{X+Y}}\bigr)\leq\operatorname{tr}\bigl({{e}^{X}{e}^{Y}}\bigr)$
applied within the subspace $\mathds{1}_B\otimes\Pi_R^{\phi_R}$ with
$X = -\phi_R^{-1/2} G_{BR} \phi_R^{-1/2}$ and
$Y = -\phi_R^{-1/2} \bigl({G_{BR}' - G_{BR}}\bigr) \phi_R^{-1/2}  \leq 0$, noting that
${e}^{Y} \leq \mathds{1}$.  As a consequence, we may eliminate the variable $S_{BR}^{00}$
entirely in problem~\eqref{z:QZA5Z07I}
if $n_E=0$, since choosing $S_{BR}^{00} = S_{BR}^{01} S_{BR}^{11} S_{BR}^{01\,\dagger}$
always yields a better value for $G_{BR}$ than one that simply 
obeys~\eqref{z:7VdBH8RU}.
This argument does not apply if $n_E > 0$ because the choice of
$S_{BR}^{00}$ might be further constrained by the
constraint in~\eqref{z:QZA5Z07I}
involving the vector $\mathsf{e}$.

\begin{proof}[*z:MCoiQlab]
  In the proof of
  \cref{z:gV43EWT.} (see
  \cpageref{proof:z:gV43EWT.}),
  we derived the corresponding Lagrangian
  in~\eqref{z:Hjx04jRS}.
  The primal variable is $N_{BR} \in \topoint(\mathfrak{S})$, and the dual
  variables are $S_{BR}\geq 0$, $\mu_j\in\mathbb{R}$, $\nu_\ell\geq 0$,
  $F_R=F_R^\dagger$.
  The dual objective function is given by~\cite{R95}
  \begin{align}
    g_\phi(S_{BR}, \mu_j, \nu_\ell, F_R)
    &= \inf_{N_{BR}\in\topoint(\mathfrak{S})}
      \mathcal{L}_\phi[N_{BR}, S_{BR}, \mu_j, \nu_\ell, F_R]\ .
      \label{z:H3G-cYzq}
  \end{align}
  Observing that~\eqref{z:H3G-cYzq} can be cast in the form
  of~\eqref{z:1w8WlXSN} by
  flipping the sign of the objective, we invoke
  \cref{z:dTDQ-ZX2} to assert that
  the infimum of $\mathcal{L}_\phi$ is attained at a point in
  $\topoint(\mathfrak{S})$ where the gradient of $\mathcal{L}_\phi$ vanishes
  (since $\mathcal{L}_\phi$ is convex in $N_{BR}$ and differentiable).
  We've already computed this gradient
  in~\eqref{z:JjNT-6TX}.
  We have seen that the gradient of $\mathcal{L}_\phi$ with respect to $N_{BR}$
  vanishes exactly when there exists values $w_m \in \mathbb{R}$ such that
  \begin{subequations}
    \begin{gather}
      w_m = 2\tilde\eta_m\bigl[{\operatorname{tr}\bigl({E^m_{BR} N_{BR}}\bigr) - s_m}\bigr]\ ;
      \label{z:0uYjOQu3}
      \\
      \phi_R^{1/2} \log\bigl({\phi_R^{1/2} N_{BR} \phi_R^{1/2}}\bigr) \phi_R^{1/2}
      = - G_{BR} \ ,
      \label{z:nuA7i-Z9}
    \end{gather}
  \end{subequations}
  where $G_{BR}$ is defined
  in~\eqref{z:vViwg20w} and is here
  viewed as a shorthand expression in terms of the variables
  $\mu_j, \nu_\ell, w_m, F_R, S_{BR}$.
  Furthermore, the condition~\eqref{z:nuA7i-Z9} holds if and
  only if there exists a Hermitian $Y_{BR}$ such that all following conditions
  hold:
  \begin{align}
    N_{BR} &= \Pi_R^{\phi_R} {e}^{-\phi_R^{-1/2} G_{BR} \phi_R^{-1/2}} \Pi_R^{\phi_R} + Y_{BR}
             \ ;
    &
      \Pi_R^{\phi_R} Y_{BR} \Pi_R^{\phi_R} &= 0\ ;
    &
      \Pi_R^{\phi_R \perp} G_{BR} &= 0\ .
    \label{z:.CtTi1Qu}
  \end{align}
  The above statement can be seen from the proof of
  \cref{z:gV43EWT.} (cf.\@
  \cpageref{proof:z:gV43EWT.}).

  At this point, the infimum in~\eqref{z:H3G-cYzq} is attained
  whenever we have variables $N_{BR}$, $w_m$, $G_{BR}$, $Y_{BR}$ satisfying
  \cref{z:vViwg20w,z:0uYjOQu3,z:.CtTi1Qu}.  We now compute the value of the objective all while simplifying these
  conditions.  We can write, using these conditions,
  \begin{align}
    g_\phi
    &=
      -\operatorname{tr}({N_{BR} G_{BR}})
      - \operatorname{tr}\bigl[{ N_{BR} \, \phi_R^{1/2} \log\bigl({\phi_R^{1/2} M_{BR} \phi_R^{1/2} }\bigr) \phi_R^{1/2} }\bigr]
      + \bigl[{ 1 - \operatorname{tr}(N_{BR} \phi_R) }\bigr]
      \nonumber\\[1ex] &\qquad
      + \sum \tilde\eta_m \bigl[{ \operatorname{tr}({E^m_{BR} N_{BR}}) - s_m }\bigr]^2
      + \operatorname{tr}\Bigl[{N_{BR} \Bigl({\sum \mu_j C^j_{BR} + \sum \nu_\ell D^\ell_{BR}}\Bigr)}\Bigr]
      - \sum \mu_j q_j - \sum \nu_\ell r_\ell
      \nonumber\\[1ex] &\qquad
      + \operatorname{tr}({F_R})
      - \operatorname{tr}\bigl({F_R N_{BR}}\bigr)
      - \operatorname{tr}\bigl({S_{BR} N_{BR}}\bigr)\ .
    \label{z:979g9Cur}
  \end{align}
  The following relations are obtained thanks
  to~\eqref{z:0uYjOQu3}:
  \begin{align}
    \sum w_m \bigl[{ \operatorname{tr}({E^m_{BR} N_{BR}}) - s_m }\bigr]
    &= \sum \frac{w_m^2}{2\tilde\eta_m}\ ;
    &
      \sum \tilde\eta_m \bigl[{ \operatorname{tr}({E^m_{BR} N_{BR}}) - s_m }\bigr]^2
      = \sum \frac{w_m^2}{4\tilde\eta_m}\ ;
  \end{align}
  they lead to
  \begin{align}
    \sum \tilde\eta_m \bigl[{ \operatorname{tr}({E^m_{BR} N_{BR}}) - s_m }\bigr]^2
    = \sum w_m \bigl[{ \operatorname{tr}({E^m_{BR} N_{BR}}) - s_m }\bigr]
    - \sum \frac{w_m^2}{4\tilde\eta_m}\ .
      \label{z:K3jIIbAR}
  \end{align}
  We now plug~\eqref{z:K3jIIbAR}
  in~\eqref{z:979g9Cur}.  In the resulting expression for
  $g_\phi$, a number of terms combine to an expression for $\operatorname{tr}({N_{BR} G_{BR}})$
  which cancels out the initial term $-\operatorname{tr}({N_{BR} G_{BR}})$
  [recall~\eqref{z:vViwg20w}].  We
  find:
  \begin{align}
    g_\phi
    &= \bigl[{ 1 - \operatorname{tr}({N_{BR} \phi_R}) }\bigr]
      -\sum w_m s_m - \sum \frac{w_m^2}{4\tilde\eta_m}
      - \sum \mu_j q_j - \sum \nu_\ell r_\ell + \operatorname{tr}({F_R})
      \nonumber\\
    &= \operatorname{tr}({F_R}) 
      - \sum \mu_j q_j
      - \sum \nu_\ell r_\ell
      - \sum w_m s_m
      + 1 - \operatorname{tr}\Bigl({\Pi_R^{\phi_R} {e}^{-\phi_R^{-1/2} G_{BR} \phi_R^{-1/2}}}\Bigr)
      - \sum \frac{w_m^2}{4\tilde\eta_m}\ ,
  \end{align}
  as in the claim.  It remains to further simplify the
  conditions~\eqref{z:0uYjOQu3}
  and~\eqref{z:.CtTi1Qu} to eliminate the use of
  $Y_{BR}$ and $N_{BR}$.  Let us compute
  \begin{align}
    \operatorname{tr}\bigl({ E^m_{BR} N_{BR}}\bigr)
    &= \operatorname{tr}\bigl[{ \phi_R^{-1/2} E^m_{BR} \phi_R^{-1/2} {e}^{-\phi_R^{-1/2} G_{BR} \phi_R^{-1/2}} }\bigr]
      + \operatorname{tr}({E^m_{BR} Y_{BR}})\ .
  \end{align}
  On the other hand, \cref{z:0uYjOQu3} implies that
  \begin{align}
    \operatorname{tr}\bigl({ E^m_{BR} N_{BR}}\bigr)
    &= \frac{w_m}{2\tilde\eta_m} + s_m \ .
  \end{align}
  Combining the two above equations eliminates the use of $N_{BR}$.  Namely, the
  infimum in $g_\phi$ is reached whenever there exists $G_{BR}$, $\{{ w_m }\}$, and
  $Y_{BR}$ such
  that~\eqref{z:vViwg20w} is
  satisfied, such that $\Pi^{\phi_R \perp}_R G_{BR}=0$ and
  $\Pi^{\phi_R}_R Y_{BR} \Pi^{\phi_R}_R = 0$, as well as such that
  \begin{align}
    \frac{w_m}{2\tilde\eta_m} + s_m
    - \operatorname{tr}\bigl[{ \phi_R^{-1/2} E^m_{BR} \phi_R^{-1/2} {e}^{-\phi_R^{-1/2} G_{BR} \phi_R^{-1/2}} }\bigr]
      = \operatorname{tr}({E^m_{BR} Y_{BR}})\ .
    \label{z:obmif7Em}
  \end{align}
  [In such a case, $N_{BR}$ can be deduced from the first equation
  in~\eqref{z:.CtTi1Qu}.]
  Now, we eliminate the explicit reference to the variable $Y_{BR}$.
  Specifically, for given $G_{BR}$ and $\{{ w_m }\}$, we seek to determine whether
  there exists $Y_{BR}$ such that~\eqref{z:obmif7Em} holds and
  such that $\Pi_R^{\phi_R} Y_{BR} \Pi_R^{\phi_R} = 0$.
  The condition $\Pi_R^{\phi_R} Y_{BR} \Pi_R^{\phi_R} = 0$ is equivalent to
  $Y_{BR} = \widetilde{\mathcal{P}}_\phi({Y_{BR}})$,
  recalling~\eqref{z:rkMU7UOA}.  Also,
  recalling~\eqref{z:1JjW3rMh},
  \begin{align}
    \operatorname{tr}({E^m_{BR} Y_{BR}})
    = \llangle {E^m_{BR}}\rvert   \widetilde{\mathcal{P}}_\phi \lvert {Y_{BR}}\rrangle 
    = \langle {m}\rvert  \mathsf{E}_\phi^\dagger \lvert {Y_{BR}}\rrangle \ .
  \end{align}
  Let
  \begin{align}
    \lvert {\mathsf{e}}\rangle 
    &\equiv \sum \mathsf{e}_m \lvert {m}\rangle  \ ;
    &
    \mathsf{e}_m
    &\equiv \frac{w_m}{2\tilde\eta_m} + s_m
    - \operatorname{tr}\bigl[{ \phi_R^{-1/2} E^m_{BR} \phi_R^{-1/2} {e}^{-\phi_R^{-1/2} G_{BR} \phi_R^{-1/2}} }\bigr]\ .
  \end{align}
  Clearly, there exists a Hermitian $Y_{BR}$ with
  $\Pi_R^{\phi_R} Y_{BR} \Pi_R^{\phi_R} = 0$ that
  satisfies~\eqref{z:obmif7Em} if and only if there exists a
  Hermitian $Y_{BR}$ such that
  $\lvert {\mathsf{e}}\rangle  = \mathsf{E}_\phi^\dagger \lvert {Y_{BR}}\rrangle $.  Equivalently,
  $\lvert {\mathsf{e}}\rangle $ must lie in the image of
  $\mathsf{E}_\phi^\dagger\rvert_{\mathrm{Herm}}$, defined as the restriction of
  $\mathsf{E}_\phi^\dagger$ to the space of Hermitian operators.
\end{proof}

We are now in the position to prove
\cref{z:t5t-uuUF}, by showing that
the optimization
problem~\eqref{z:QZA5Z07I}
can be recast as the optimization~\eqref{z:MK9lgNNV}.

\begin{proof}[*z:t5t-uuUF]
  The constraint involving the shorthand vector $\mathsf{e}$
  in~\eqref{z:QZA5Z07I} can
  also be enforced by introducing a variable $Y_{BR} = Y_{BR}^\dagger$ and
  imposing the constraints
  \begin{align}
    \operatorname{tr}\bigl({ E_{BR}^m Y_{BR}}\bigr)
    &= \frac{w_m}{2\tilde\eta_m} + s_m
    - \operatorname{tr}\Bigl({ \phi_R^{-1/2} E^m_{BR} \phi_R^{-1/2} {e}^{-\phi_R^{-1/2} G_{BR} \phi_R^{-1/2} } }\Bigr)
    \ ;
    &
    Y_{BR} &= \widetilde{\mathcal{P}}_\phi( Y_{BR} )\ .
             \label{z:p1vpW7oR}
  \end{align}
  We now replace the variable $Y_{BR}$ by the variable
  $N_{BR} = N_{BR}^\dagger$, whose bijective relationship with $Y_{BR}$ is given
  as
  \begin{align}
    N_{BR} &= \phi_R^{-1/2} {e}^{-\phi_R^{-1/2} G_{BR} \phi_R^{-1/2}} \phi_R^{-1/2} + Y_{BR}\ .
             \label{z:WJwRf4vK}
  \end{align}
  From the KKT conditions (see proofs
  of~\cref{z:gV43EWT.,z:MCoiQlab}), we know that for
  optimal choices of variables $\mu_j, \nu_\ell, w_m, F_R, S_{BR}, Y_{BR}$, the
  variable $N_{BR}$ contains the Choi matrix of the optimal quantum channel in
  the original problem~\eqref{z:z4Hq5iYa}.
  Therefore, the optimization can be restricted to operators $N_{BR}$ that
  satisfy $N_{BR} \geq 0$.  The constraints~\eqref{z:p1vpW7oR}
  with~\eqref{z:WJwRf4vK} can then equivalently be expressed as
  constraints involving $N_{BR}$ directly rather than $Y_{BR}$:
  \begin{align}
    \operatorname{tr}\bigl({ E_{BR}^m N_{BR}}\bigr) &= \frac{w_m}{2\tilde\eta_m} + s_m\ ;
    &
      \Pi_R^{\phi_R} N_{BR} \Pi_R^{\phi_R}
    &= \phi_R^{-1/2}\,{e}^{-\phi_R^{-1/2} G_{BR} \phi_R^{-1/2}} \, \phi_R^{-1/2}\ ,
  \end{align}
  thereby entirely eliminating $Y_{BR}$.  Applying
  $\phi_R^{1/2}\,\log\bigl[{\phi_R^{1/2}(\cdot) \phi_R^{1/2} }\bigr]\,\phi_R^{1/2}$
  onto the latter constraint, expanding the definition of $G_{BR}$, and
  interpreting $S_{BR}\geq 0$ as a slack variable, yields the problem~\eqref{z:MK9lgNNV}.
\end{proof}

\section{Proof of the constrained channel postselection theorem
  via Schur-Weyl duality}

\subsection{Elements of Schur-Weyl duality}
\label{z:pJbOynb9}

We rely heavily on the definitions, notations, and lemmas related to Schur-Weyl
duality used in \R\cite{R77,R78,R47} (and references therein).

Let consider $n$ copies of a quantum system $S$, with total Hilbert space
$\mathscr{H}_S^{\otimes n}$.  The general linear group $\mathrm{GL}({d_S})$ (or its subgroup the
unitary group $\mathrm{U}({d_S})$) has a natural action on $\mathscr{H}_S^{\otimes n}$ by
applying the operator each copy individually, i.e.\@ by acting on
$\mathscr{H}_S^{\otimes n}$ as
$U_S^{\otimes n} \equiv U_S\otimes U_S\otimes \cdots \otimes U_S$ for
$U_S\in\mathrm{GL}({d_S})$ or $U_S\in\mathrm{U}({d_S})$.  On the other hand, the permutation
group $\mathrm{S}_{n}$ acts naturally by permuting the subsystems: For any
$\pi\in\mathrm{S}_{n}$, we define the group action $U_{S^n}(\pi)$ as
\begin{align}
  U_{S^n}(\pi) \, \lvert {\phi_1}\rangle \otimes\lvert {\phi_2}\rangle \otimes\cdots\otimes\lvert {\phi_n}\rangle 
  = \lvert {\phi_{\pi^{-1}(1)}}\rangle \otimes\lvert {\phi_{\pi^{-1}(2)}}\rangle \otimes\cdots
  \otimes\lvert {\phi_{\pi^{-1}(n)}}\rangle \ ,
  \label{z:28jQ6L1t}
\end{align}
for any $\{{\lvert {\phi_i}\rangle }\}_{i=1}^n$.

Irreducible representations of both the unitary group $\mathrm{U}({d_S})$ as well as the
symmetric group $\mathrm{S}_{n}$ are labeled by \emph{Young diagrams}.  A \emph{Young
  diagram $\lambda\in\Young(d,n)$ of size $n$ and with $d$ rows} is a collection
of $d$ integers $\lambda\equiv(\lambda_1, \ldots, \lambda_d)$ with
$\lambda_1\geq\lambda_2\geq\cdots\geq\lambda_d\geq 0$ and
$\lambda_1+\lambda_2+\cdots+\lambda_d=n$.  A Young diagram is often represented
diagrammatically as $d$ rows of boxes, with the $i$-th row containing
$\lambda_i$ boxes.

\emph{Schur-Weyl duality} states that these two actions are the commutants of
one another, and that the total Hilbert space decomposes into irreducible
representations of these representations as
\begin{align}
  \mathscr{H}_S^{\otimes n}
  \simeq
  \bigoplus_{\lambda\in\Young(d_S,n)}
  \mathcal{Q}_\lambda\otimes\mathcal{P}_\lambda\ ,
  \label{z:iWqcTgXX}
\end{align}
where $\mathcal{Q}_\lambda$ is the irreducible representation of the general
linear group $\mathrm{GL}({d_S})$ (or the unitary group $\mathrm{U}({d_S})$) labeled by $\lambda$
and where $\mathcal{P}_\lambda$ is the irreducible representation of $\mathrm{S}_{n}$
labeled by $\lambda$.
In other words, the full Hilbert space decomposes into orthogonal projectors
$\Pi_{S^n}^\lambda$ for $\lambda\in\Young(d_S,n)$, where each
$\Pi_{S^n}^\lambda$ projects onto the subspace that supports the tensor product
$\mathcal{Q}_\lambda\otimes\mathcal{P}_\lambda$ of irreducible representations
of the unitary and symmetric groups:
\begin{align}
  \mathds{1}_{S^n} &= \sum_{\lambda\in\Young(d_S, n)} \Pi^{\lambda}_{S^n}\ ;
  &
    \Pi^\lambda_{S^n} \Pi^{\lambda'}_{S^n} &= 0\quad(\lambda\neq\lambda')\ .
\end{align}
For convenience, we also define the notation $[\,(\cdot)\,]_\lambda$ as the
isometry that embeds the space $\mathcal{Q}_\lambda\otimes\mathcal{P}_\lambda$
into the appropriate subspace of $\mathscr{H}_{S}^{\otimes n}$, meaning that
for any $X\in\mathcal{Q}_\lambda$ and $Y\in\mathcal{P}_\lambda$,
\begin{align}
  [\,X\otimes Y\,]_\lambda \, \Pi_{S^n}^\lambda = [\,X\otimes Y\,]_\lambda\ .
\end{align}
The subspaces identified by a particular $\lambda\in\Young(d_S,n)$, i.e., the
support of $\Pi_{S^n}^\lambda$, are referred to as \emph{Schur-Weyl blocks}.

The dimensions of these irreducible representations are denoted by
$ d_{\mathcal{Q}_\lambda} \equiv \dim(\mathcal{Q}_\lambda)$ and
$ d_{\mathcal{P}_\lambda} \equiv \dim(\mathcal{P}_\lambda)$; they %
satisfy~\cite{R77,R78}
\begin{align}
  d_{\mathcal{Q}_\lambda}
  &\leq \operatorname{poly}(n)\ ;
  &
    \frac1{\operatorname{poly}({n})} {e}^{n{S}_{}^{}({\bar\lambda})}
    \leq
    d_{\mathcal{P}_\lambda}
    \leq
    {e}^{n{S}_{}^{}({\bar\lambda})}\ ,
\end{align}
where $\bar\lambda=\lambda/n=(\lambda_1/n, \ldots, \lambda_d/n)$ and
${S}_{}^{}({\cdot})$ is understood to act here as the Shannon entropy
${S}_{}^{}({\bar\lambda}) = -\sum \bar\lambda_i\log(\bar\lambda_i)$.
The $\Pi_{S^n}^\lambda$'s can be written as follows (cf.\@ e.g.\@
\cite[Eq.~(S.8)]{R83} or
\cite[Eq.~(2.31)]{R118}):
\begin{align}
  \Pi^\lambda_{S^n}
  =
  \frac{d_{\mathcal{P}_\lambda}}{n!} \sum_{\pi\in \mathrm{S}_{n}}
  [\chi^\lambda(\pi)]^* \, U_{S^n}(\pi)
  =
  \frac{d_{\mathcal{P}_\lambda}}{n!} \sum_{\pi\in \mathrm{S}_{n}}
  \chi^\lambda(\pi) \, U_{S^n}(\pi)
  = \bigl({ \Pi^\lambda_{S^n} }\bigr)^*
  \ ,
  \label{z:HrtNdJyg}
\end{align}
where $\chi^\lambda(\pi) = \operatorname{tr}({ U_\lambda(\pi) })$ is known as the
\emph{character} of the irreducible representation $U_\lambda(\pi)$ of $\mathrm{S}_{n}$
on the irrep space $\mathcal{P}_\lambda$.  In general,
$\chi^\lambda(\pi^{-1}) = [{\chi^\lambda(\pi)}]^*$.  The second equality
in~\eqref{z:HrtNdJyg} follows from the fact that
the characters of the symmetric group are, in fact, real.  The third equality
follows from the fact that the matrix entries of $U_{S^n}(\pi)$ are also real
[$U_{S^n}(\pi)$ simply permutes the digits of computational basis states, as
per~\eqref{z:28jQ6L1t}, and its matrix elements are 0's and 1's].
Furthermore, the formula~\eqref{z:HrtNdJyg} can
also be applied for $\lambda\in\Young(n,n)$, $\lambda\not\in\Young(d,n)$; in
this case, we find $\Pi^{\lambda}_{S^n} = 0$, which is consistent with the Young
diagram $\lambda$ not appearing in the Schur-Weyl
decomposition~\eqref{z:iWqcTgXX}.

The Schur-Weyl block with $\lambda=(n,0,0,\ldots)$ is called the \emph{symmetric
  subspace} $\mathrm{Sym}(n, d_S)$ of $\mathscr{H}_S^{\otimes n}$.  In this block,
$\mathcal{P}_\lambda$ is one-dimensional: All permutations act trivially on any
state in the symmetric subspace.  The symmetric subspace has dimension
\begin{align}
  d_{\mathrm{Sym}(n, d_S)} \equiv \binom{n + d_S - 1}{n} \leq (n+1)^{d_S - 1}\ .
\end{align}
We can also write the projector on the symmetric subspace as a sum of
permutation operators,
\begin{align}
  \Pi^{\mathrm{Sym}}_{S^n} = \frac1{n!} \sum_{\pi\in\mathrm{S}_{n}} U_{S^n}(\pi)\ .
  \label{z:bfP3uOkZ}
\end{align}
Any operator $A_{S^n}$ can be explicitly symmetrized with a symmetrization
operation $\mathcal{S}_{S^n}(\cdot)$, resulting in a permutation-invariant
operator $\mathcal{S}_{S^n}(A_{S^n})$; here
\begin{align}
  \mathcal{S}_{S^n}(\cdot) =
  \frac1{n!} \sum_{\pi\in\mathrm{S}_{n}} U_{S^n}(\pi) \, ({\cdot}) \, U_{S^n}^\dagger(\pi)\ .
\end{align}

An important consequence of Schur-Weyl duality is that any operator $X_{S^n}$
that is permutation-invariant must be block-diagonal in the Schur-Weyl blocks.
Moreover, it admits a decomposition of the form
\begin{align}
  X_{S^n} =
  \sum_{\lambda\in\Young(d_S, n)}
  [\,X^{(\lambda)}\otimes \mathds{1}_{\mathcal{P}_\lambda} ]_\lambda\ ,
  \label{z:lyKgiW07}
\end{align}
where $X^{(\lambda)}$ lives in $\mathcal{Q}_\lambda$ and can be determined by
investigating $X_{S^n} \Pi_{S^n}^{\lambda}$.  The space $\mathcal{Q}_\lambda$
actually hosts a representation $q_\lambda(X)$ of the general linear group,
meaning that any i.i.d.\@ operator $X_S^{\otimes n}$ decomposes as
\begin{align}
  X_S^{\otimes n}
  = 
  \sum_{\lambda\in\Young(d_S, n)}
  [\,q_\lambda(X)\otimes \mathds{1}_{\mathcal{P}_\lambda} ]_\lambda\ .
  \label{z:V7ZZfDaV}
\end{align}
If an operator $X_{S^n}$ is permutation-invariant \emph{and} invariant under
$U^{\otimes n}$ for any $U\in\mathrm{U}({d_S})$, then it must be uniform over each
Schur-Weyl block:
\begin{align}
  X_{S^n} =
  \sum_{\lambda\in\Young(d_S, n)}
  x_\lambda\, [\,\mathds{1}_{\mathcal{Q}_\lambda} \otimes \mathds{1}_{\mathcal{P}_\lambda} ]_\lambda\ ,
  \label{z:KMvnFbXv}
\end{align}
where $x_\lambda\in\mathbb{C}$.  If $X_{S^n}$ is Hermitian, then
$x_\lambda\in\mathbb{R}$.  The coefficient $x_\lambda$ can be determined by
computing $\operatorname{tr}[{ X_n \Pi^\lambda }]$ and normalizing by the dimensions of the
appropriate irreducible representations.

\subsection{Schur-Weyl decompositions of copies of a bipartite system}

Now consider $n$ copies of a bipartite system $(AB)$.  The global Hilbert space
$(\mathscr{H}_{A}\otimes\mathscr{H}_{B})^{\otimes n}$ admits a Schur-Weyl decomposition
according to~\eqref{z:iWqcTgXX} (taking $S\equiv AB$), with
Schur-Weyl blocks $\Pi^\lambda_{(AB)^n}$ for $\lambda\in\Young(d_Ad_B, n)$.  On
the other hand, we can ignore all the copies of $A$ and inspect the Schur-Weyl
decomposition of the $n$ copies of $B$, yielding Schur-Weyl blocks
$\Pi^{\lambda'}_{B^n}$ of $B^n$ with $\lambda'\in\Young(d_B, n)$.  An
interesting property is that these blocks are compatible, meaning that their
corresponding projectors commute:
\begin{align}
  \bigl[{ \mathds{1}_{A^n}\otimes\Pi_{B^n}^\lambda , \Pi_{(AB)^n}^{\lambda'} }\bigr] = 0
  \quad\forall\ \lambda,\lambda'\ .
\end{align}
This property follows from the fact that $\mathds{1}_{A^n}\otimes\Pi_{B^n}$ is
invariant under permutations of the copies of $(AB)$, which implies that it is
block-diagonal in the $\Pi^{\lambda'}_{(AB)^n}$ according
to~\eqref{z:lyKgiW07}.

Another important property of the Schur-Weyl decompositions of bipartite systems
concerns the symmetric subspace of $(AB)^n$.  Namely, when projected against the
symmetric subspace of $(AB)^n$, the Schur-Weyl blocks of $A^n$ coincide with
those of $B^n$.  This fact is a manifestation of a the decomposition of the
symmetric space of $(AB)^n$ into Schur-Weyl blocks for $A^n$ and $B^n$, cf.\@
e.g.~\cite[Eq.~(2.25)]{R117}.

\begin{proposition}
  \label{z:Qql-Fkhq}
  Let $A,B$ be two quantum systems.  For any
  $\lambda\in\Young\bigl({\max(d_A, d_B),n}\bigr)$,
  \begin{align}
    \Pi^\lambda_{A^n} \Pi^{\mathrm{Sym}}_{(AB)^n}
    = \Pi^\lambda_{B^n} \Pi^{\mathrm{Sym}}_{(AB)^n}\ ,
  \end{align}
  where we set $\Pi^\lambda_{S^n} = 0$ whenever the number of rows in $\lambda$
  is greater than $d_S$ (for $S=A,B$).
\end{proposition}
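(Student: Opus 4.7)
The plan is to work directly with the explicit formula~\eqref{z:HrtNdJyg} expressing each Schur-Weyl projector as a weighted sum of permutation operators, and to exploit the fact that on the symmetric subspace of $(AB)^n$, a permutation acting on the $A$ copies can be traded for the inverse permutation acting on the $B$ copies.

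First I would recall that permutations of the composite system act as $U_{(AB)^n}(\pi) = U_{A^n}(\pi)\otimes U_{B^n}(\pi)$, so any $\lvert{\psi}\rangle\in\mathrm{Sym}(n,d_Ad_B)$ satisfies $U_{A^n}(\pi)\otimes U_{B^n}(\pi)\lvert{\psi}\rangle = \lvert{\psi}\rangle$, which rearranges to the operator identity
\begin{align}
  \bigl(U_{A^n}(\pi)\otimes\mathds{1}_{B^n}\bigr)\,\Pi^{\mathrm{Sym}}_{(AB)^n}
  = \bigl(\mathds{1}_{A^n}\otimes U_{B^n}(\pi^{-1})\bigr)\,\Pi^{\mathrm{Sym}}_{(AB)^n}\ .
\end{align}
This is the single structural fact I will need; everything else is bookkeeping with the character formula.

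Next I would write $\Pi^\lambda_{A^n} = (d_{\mathcal{P}_\lambda}/n!)\sum_\pi \chi^\lambda(\pi)\,U_{A^n}(\pi)$, substitute the above identity under the sum, and then reindex the summation via $\pi\mapsto\pi^{-1}$. This produces $\sum_{\pi}\chi^\lambda(\pi^{-1})\,U_{B^n}(\pi)\,\Pi^{\mathrm{Sym}}_{(AB)^n}$. Using the reality of characters of the symmetric group, $\chi^\lambda(\pi^{-1}) = \chi^\lambda(\pi)^* = \chi^\lambda(\pi)$, the sum collapses exactly to $\Pi^\lambda_{B^n}\,\Pi^{\mathrm{Sym}}_{(AB)^n}$. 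The edge case where $\lambda$ has more rows than $d_A$ (resp.\ $d_B$) is handled automatically by the convention $\Pi^\lambda_{A^n}=0$ (resp.\ $\Pi^\lambda_{B^n}=0$), since in that situation the character formula already returns zero on the relevant side; the assumption $\lambda\in\Young(\max(d_A,d_B),n)$ guarantees that at least one of the two sides is nonzero and the identification is meaningful.

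I do not anticipate any serious obstacle here: the argument is a one-line character manipulation once the symmetric-subspace identity is in hand. The only point requiring care is to make sure the reindexing and the character reality are invoked in the correct order so that both the coefficient $\chi^\lambda(\pi)$ and the operator $U_{B^n}(\pi)$ line up to reconstruct the Schur-Weyl projector on $B^n$ from~\eqref{z:HrtNdJyg}.
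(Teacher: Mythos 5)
Your proof is correct and takes essentially the same route as the paper: both rest on the character formula~\eqref{z:HrtNdJyg}, the invariance of $\Pi^{\mathrm{Sym}}_{(AB)^n}$ under $U_{A^n}(\pi)\otimes U_{B^n}(\pi)$, and the reality of $\mathrm{S}_n$ characters. Your presentation is slightly tighter because you first isolate the operator identity $(U_{A^n}(\pi)\otimes\mathds{1}_{B^n})\Pi^{\mathrm{Sym}}_{(AB)^n} = (\mathds{1}_{A^n}\otimes U_{B^n}(\pi^{-1}))\Pi^{\mathrm{Sym}}_{(AB)^n}$, which lets you stay with a single sum over $\pi$ and bypasses the paper's double sum, the change of variables $\pi'\mapsto\pi\pi'$, and the final passage through $(\Pi^\lambda_{B^n})^*$, so you need only the reality of the characters and not also the realness of the permutation-matrix entries. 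One small inaccuracy in an aside: the assumption $\lambda\in\Young(\max(d_A,d_B),n)$ guarantees only that one of the two \emph{projectors} $\Pi^\lambda_{A^n}$, $\Pi^\lambda_{B^n}$ is nonzero; if $\lambda$ has more rows than $\min(d_A,d_B)$ then both products with $\Pi^{\mathrm{Sym}}_{(AB)^n}$ vanish, so the identity can still be the trivial $0=0$ --- this has no bearing on the argument, which holds regardless.
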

\begin{proof}[**z:Qql-Fkhq]
  We use the projection
  formula~\eqref{z:HrtNdJyg}, valid for any
  $\lambda\in\Young(n,n)$, to write
  \begin{align}
    \Pi^\lambda_{A^n} \Pi^{\mathrm{Sym}}_{(AB)^n}
    &=
    \frac{d_{\mathcal{P}_\lambda}}{n!}\sum_{\pi\in \mathrm{S}_{n}} \chi^\lambda(\pi)\,U_{A^n}(\pi)
    \;
    \frac1{n!} \sum_{\pi'\in\mathrm{S}_{n}} U_{A^n}(\pi')\otimes U_{B^n}(\pi')
    \nonumber\\
    &= \frac{d_{\mathcal{P}_\lambda}}{(n!)^2} \sum_{\pi,\pi'\in\mathrm{S}_{n}} \chi^\lambda(\pi) \,
    U_{A^n}(\pi \pi')\otimes U_{B^n}(\pi')
    \nonumber\\
    &= \frac{d_{\mathcal{P}_\lambda}}{(n!)^2} \sum_{\pi,\pi''\in\mathrm{S}_{n}} \chi^\lambda(\pi) \,
    U_{A^n}(\pi'')\otimes U_{B^n}(\pi^{-1} \pi'')\ .
    \label{z:a1Nd2lrY}
  \end{align}
  Operating the change of variables $\pi' \to \pi'' = \pi \pi'$, and noting that
  $[U_{B^n}(\pi)]^* = U_{B^n}(\pi)$ given as it is a matrix of real entries that
  simply permutes subsystems,
  \begin{align}
    \eqref{z:a1Nd2lrY}
    &= \frac{d_{\mathcal{P}_\lambda}}{n!}
    \sum_{\pi\in\mathrm{S}_{n}} [\chi^\lambda(\pi^{-1})]^* \, [U_{B^n}(\pi^{-1})]^*
    \;
    \frac1{n!} \sum_{\pi''\in\mathrm{S}_{n}} U_{(AB)^n}(\pi'')
    = \bigl({ \Pi^\lambda_{B^n} }\bigr)^* \, \Pi^{\mathrm{Sym}}_{(AB)^n}\ ,
  \end{align}
  where we relabeled the first sum's index $\pi \to \pi^{-1}$ and using the fact
  that $\Pi^\lambda_{B^n}=\bigl({\Pi^\lambda_{B^n}}\bigr)^*$.
\end{proof}

\subsection{The de~Finetti state and the postselection technique}
\label{z:Mf3kbNlp}

Here we establish some notation and elementary properties related to variants of
the de Finetti state.  We refer to e.g.\@
\R\cite{R72,R82}, and references
therein, for additional proofs and details.
Let $\bar R\simeq A$ and define
\begin{align}
  \bar\zeta_{A^n\bar{R}^n}
  &= \int d\psi_{A\bar{R}} \, \lvert {\psi}\rangle \mkern -1.8mu\relax \langle{\psi}\rvert _{A\bar R}^{\otimes n} \ ,
\end{align}
where $d\psi$ is the unitarily invariant measure on pure states that is induced
by the Haar measure on the unitary group, normalized to
$\int d\psi_{A\bar R} = 1$.  We also know, by Schur's lemma, that
the mixed state $\bar\zeta_{A^n\bar{R}^n}$ is a normalized version of the symmetric subspace projector,
\begin{align}
    \bar\zeta_{A^n\bar{R}^n}
  = \frac1{d_{\mathrm{Sym}(n,d_A d_{\bar R})}}\Pi_{(AR)^n}^{\mathrm{Sym}}\ .
\end{align}
The reduced state on either $A^n$ or $\bar R^n$ are equal and can be written as
\begin{align}
  \bar\zeta_{A^n} 
  &= \operatorname{tr}_{\bar R^n}\bigl[{\bar\zeta_{A^n\bar{R}^n}}\bigr]
  = \int d\sigma_A \, \sigma_A^{\otimes n}\ ;
  &
  \bar\zeta_{\bar R^n}
  &= \int d\sigma_{\bar R} \, \sigma_{\bar R}^{\otimes n}\ ,
\end{align}
where $d\sigma_A$ is the induced measure of $d\psi_{A\bar{R}}$ on $A$ via the
partial trace, and is equal to $d\sigma_{\bar R}$ which acts on $\bar R$ instead
of $A$.  (Interestingly, the reduced measure $d\sigma_A$ coincides with the
measure induced by the Hilbert-Schmidt metric on Hermitian operators, up to
normalization~\cite{R151,R152}.)

Invoking Carath\'eodory's theorem, there exists an ensemble of $\operatorname{poly}(n)$ states
$\{{ \lvert {\phi^{(j)}}\rangle _{A\bar{R}} }\}$ with a normalized probability distribution
$\{{ \kappa_j }\}$ satisfying
$\kappa_1\geq\kappa_2\geq\cdots\geq\kappa_{\operatorname{poly}(n)}$, such that for any unitary
$U_{A\bar{R}}$,
\begin{align}
  \bar\zeta_{A^n\bar{R}^n}
  = \sum_j \kappa_j\,
  U_{A\bar{R}}^{\otimes n}\,
  \lvert { \phi^{(j)} }\rangle \mkern -1.8mu\relax \langle{ \phi^{(j)} }\rvert _{A\bar{R}}\,
  U_{A\bar{R}}^{\otimes n\,\dagger}\ .
  \label{z:OvegRcve}
\end{align}
(This argument is often formulated without the unitary $U$, but it is trivial
to include this unitary in the above statement since
$U_{A\bar{R}}^{\otimes n\ \dagger} \,\bar\zeta_{A^n\bar{R}^n}\,
U_{A\bar{R}}^{\otimes n} = \bar\zeta_{A^n\bar{R}^n}$.)
This representation of $\bar\zeta_{A^n\bar R^n}$ as a sum leads us to one out of
several arguments to prove the \emph{postselection
  technique}~\cite{R72,R74,R73}: For any quantum state $\sigma_A$,
let $\lvert {\sigma}\rangle _{A\bar R} = \sigma_A^{1/2}\lvert {\Phi_{A:\bar R}}\rangle $, and pick
$U_{A\bar{R}}$ such that
$U_{A\bar R}\lvert {\phi^{(1)}}\rangle _{A\bar R} = \lvert {\sigma}\rangle _{A\bar R}$.  Then
\begin{align}
  \sigma_A^{\otimes n}
  = \operatorname{tr}_{\bar R^n}\bigl[{ U_{A\bar R}^{\otimes n} \,\lvert {\phi^{(1)}}\rangle \mkern -1.8mu\relax \langle{\phi^{(1)}}\rvert _{A\bar R}^{\otimes n}\,
      U^{\otimes n\,\dagger} }\bigr]
  \leq \kappa_1^{-1} \operatorname{tr}_{\bar R^n}\bigl[{ \bar\zeta_{A^n\bar R^n} }\bigr]
  \leq \operatorname{poly}(n) \, \bar\zeta_{A^n}\ ,
\end{align}
noting that $\kappa_1\geq 1/\operatorname{poly}(n)$ as the greatest coefficient of a
$\operatorname{poly}(n)$-sized normalized probability distribution.  In other words: any
i.i.d.\@ state can be operator-upper-bounded by the universal state
$\bar\zeta_{A^n}$, up to a polynomial factor.

Now we discuss two distinct purifications of the state $\bar\zeta_{A^n}$.
Let $R'$ be a quantum register of dimension $\operatorname{poly}(n)$.  We can purify
$\bar\zeta_{A^n\bar{R}^n}$ using this register, thanks to the
representation~\eqref{z:OvegRcve}:
\begin{align}
  \lvert {\bar\zeta}\rangle _{A^n \bar{R}^n R'}
  = \sum_j \sqrt{\kappa_j}\,\lvert {\phi^{(j)}}\rangle _{A \bar{R}}^{\otimes n} \otimes \lvert {j}\rangle _{R'}\ .
\end{align}
Alternatively, we can purify the de~Finetti state $\bar\zeta_{A^n}$ directly on
a copy $R^n$ of $A^n$, as $\bar\zeta_{A^n}^{1/2}\,\lvert {\Phi_{A^n:R^n}}\rangle $.  We
denote the resulting state by $\zeta_{A^nR^n}$:
\begin{align}
  \lvert {\zeta}\rangle _{A^n R^n}
  \equiv \bigl({\bar\zeta_{A^n}^{1/2}}\bigr) \lvert {\Phi_{A:R}}\rangle ^{\otimes n}
  \equiv \bigl({\zeta_{R^n}^{1/2}}\bigr) \lvert {\Phi_{A:R}}\rangle ^{\otimes n}\ .
\end{align}
The reduced states of both $\lvert {\bar\zeta}\rangle _{A^nR^nR'}$ and $\lvert {\zeta}\rangle _{A^nR^n}$
obey
$\bar\zeta_{A^n} = \zeta_{A^n}=\bar\zeta_{\bar R^n} = \zeta_{R^n} = \int
d\sigma_{R} \sigma_R^{\otimes n}$, where isometric mappings between $A$,
$\bar R$ and $R$ are implied.
As purifications of the same state $\zeta_{A^n}$ on $A^n$, we note that the two
states $\lvert {\zeta}\rangle _{A^nR^n}$ and $\lvert {\bar\zeta}\rangle _{A^n\bar{R}^nR'}$ are related
by a partial isometry on $R\to\bar{R} R'$.

\subsection{Integration formulas for Haar-random channels: Proofs of
  \cref{z:TvA2E9KA,z:7r5n0QDD}}

The construction of our approximate microcanonical channel operator relies on
extending the previous section's de~Finetti techniques to quantum channels.

We make use of an integration formula for computing averages over the unitary
group acting in tensor product form.  Specifically, we rely on an integration
formula stated as Theorem~S.2 in \R\cite{R83} (it appears as
Theorem~5 in that reference's arXiv preprint version).  We restate it here,
referring to \R\cite{R83} for a proof:
\begin{theorem}[Integration formula for Haar
  twirling~\unexpanded{\cite[Theorem~S.2]{R83}}]
  \noproofref
  \label{z:ewy6vxah}
  Let $S$ be a quantum system and let $n>0$.  Then for any
  operator $X_{S^n}$,
  \begin{align}
    \int dW_{S} \, W_{S^n}^{\otimes n}\, X_{S^n} \, W_{S^n}^{\otimes n\,\dagger}
    = \frac1{n!} \sum_{\pi\in \mathrm{S}_{n}}
    \operatorname{tr}_{S^n}\bigl({ X_{S^n} U_{S^n}(\pi) }\bigr) \, U_{S^n}(\pi^{-1}) \!\!\!
    \sum_{\lambda\in\Young(d_S,n)}  \!\!
    \frac{d_{\mathcal{P}_\lambda}}{d_{\mathcal{Q}_{\lambda}}}\,\Pi_{S^n}^{\lambda}\ ,
    \label{z:ICkrpeCO}
  \end{align}
  where $dW_S$ denotes the Haar measure on $\mathrm{U}({d_S})$.
\end{theorem}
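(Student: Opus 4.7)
The plan is to derive the formula using Schur--Weyl duality combined with Schur's lemma, in three essentially mechanical steps. First I would observe that the Haar twirl
\begin{align*}
\mathcal{T}(X_{S^n}) := \int dW_S\, W_S^{\otimes n}\, X_{S^n}\, W_S^{\otimes n\,\dagger}
\end{align*}
is a superoperator valued in the commutant of $\{W_S^{\otimes n}: W_S \in \mathrm{U}(d_S)\}$, which by Schur--Weyl duality is spanned by $\{U_{S^n}(\pi) : \pi \in \mathrm{S}_n\}$. So the result must be expressible as a linear combination of permutation operators; the content of the theorem is the explicit coefficients. The strategy is to compute $\mathcal{T}(X)$ directly in the Schur--Weyl decomposition and then verify that the RHS, re-expressed via the projection formula $\Pi^\lambda_{S^n} = (d_{\mathcal{P}_\lambda}/n!) \sum_\pi \chi^\lambda(\pi)\, U_{S^n}(\pi)$ from~\eqref{z:HrtNdJyg}, produces the same operator.

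The first main step is the direct computation of $\mathcal{T}(X)$. Decompose $X_{S^n} = \sum_{\lambda,\mu} X^{(\lambda,\mu)}$ into blocks mapping the $\mu$-isotypic component to the $\lambda$-isotypic component. Since $W_S^{\otimes n}$ acts as $q_\lambda(W) \otimes \mathds{1}_{\mathcal{P}_\lambda}$ on the $\lambda$-block (inequivalent on distinct $\lambda$), Schur's lemma forces the off-diagonal components ($\lambda \neq \mu$) to vanish under twirling. Within a single block, another application of Schur's lemma to the irreducible representation $q_\lambda$ on $\mathcal{Q}_\lambda$ yields
\begin{align*}
\mathcal{T}(X)\Big|_\lambda = \Bigl[\frac{\mathds{1}_{\mathcal{Q}_\lambda}}{d_{\mathcal{Q}_\lambda}} \otimes Y^{(\lambda)}\Bigr]_\lambda, \qquad Y^{(\lambda)} := \operatorname{tr}_{\mathcal{Q}_\lambda}\bigl[X^{(\lambda,\lambda)}\bigr] \in \mathrm{End}(\mathcal{P}_\lambda).
\end{align*}

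The second step is to show that the RHS of the claimed formula collapses to the same expression. Using $U_{S^n}(\pi^{-1}) \Pi^\lambda_{S^n} = [\mathds{1}_{\mathcal{Q}_\lambda} \otimes p_\lambda(\pi^{-1})]_\lambda$ and $\operatorname{tr}_{S^n}[X_{S^n}\,U_{S^n}(\pi)] = \sum_\mu \operatorname{tr}_{\mathcal{P}_\mu}[Y^{(\mu)}\, p_\mu(\pi)]$, the RHS becomes
\begin{align*}
\sum_\lambda \frac{d_{\mathcal{P}_\lambda}}{d_{\mathcal{Q}_\lambda}} \sum_\mu \Biggl[\frac{1}{n!}\sum_\pi \operatorname{tr}_{\mathcal{P}_\mu}\bigl[Y^{(\mu)} p_\mu(\pi)\bigr] \, [\mathds{1}_{\mathcal{Q}_\lambda} \otimes p_\lambda(\pi^{-1})]_\lambda\Biggr].
\end{align*}
The bracketed inner sum is precisely a matrix-element orthogonality integral: by the Schur orthogonality relations for irreducible representations of $\mathrm{S}_n$,
\begin{align*}
\frac{1}{n!}\sum_{\pi\in\mathrm{S}_n} [p_\mu(\pi)]_{ab}\,[p_\lambda(\pi^{-1})]_{cd} = \frac{\delta_{\lambda\mu}}{d_{\mathcal{P}_\lambda}}\,\delta_{ad}\delta_{bc},
\end{align*}
which collapses the inner sum to $(\delta_{\lambda\mu}/d_{\mathcal{P}_\lambda})\, Y^{(\lambda)}$ and cancels the prefactor $d_{\mathcal{P}_\lambda}$, producing $\sum_\lambda [\mathds{1}_{\mathcal{Q}_\lambda}/d_{\mathcal{Q}_\lambda} \otimes Y^{(\lambda)}]_\lambda$, exactly matching the first step.

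The only real obstacle is bookkeeping: keeping track of the isometric embeddings $[\cdot]_\lambda$, the fact that $\Pi^\lambda_{S^n} = 0$ for Young diagrams $\lambda$ with more than $d_S$ rows (so the sum over $\lambda \in \Young(d_S, n)$ on the RHS is really a sum over $\lambda \in \Young(n,n)$ with most terms vanishing, consistent with the LHS decomposition), and the passage from the abstract $\pi^{-1}$ indexing to a concrete matrix-element identity. Once these are carefully laid out, the proof is a direct two-line verification. An alternative, essentially equivalent route would be to observe that $\mathcal{T}$ is the orthogonal Hilbert--Schmidt projection onto $\mathrm{span}\{U_{S^n}(\pi)\}$ and explicitly compute its kernel against the (non-orthogonal) permutation basis using the Gram matrix $\operatorname{tr}[U_{S^n}(\pi) U_{S^n}(\sigma^{-1})]$; this approach also reduces to Schur orthogonality but is slightly more indirect.
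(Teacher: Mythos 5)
Your argument is correct. Note, however, that the paper itself does not prove this statement: it is imported verbatim from Roth et al.\@ [R83, Theorem~S.2], restated with a pointer to that reference for the proof, so there is no in-paper argument to compare against.

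On its own merits, your derivation is the standard and essentially canonical one. The two pillars are sound: (i) Schur's lemma applied to the Schur--Weyl block decomposition kills the off-diagonal blocks $X^{(\lambda,\mu)}$, $\lambda\neq\mu$ (orthogonality of matrix elements of inequivalent irreps of $\mathrm{U}(d_S)$), and depolarizes the $\mathcal{Q}_\lambda$ factor on the diagonal blocks, giving $\sum_\lambda\bigl[\tfrac{\mathds{1}_{\mathcal{Q}_\lambda}}{d_{\mathcal{Q}_\lambda}}\otimes\operatorname{tr}_{\mathcal{Q}_\lambda}X^{(\lambda,\lambda)}\bigr]_\lambda$ for the left-hand side; (ii) the great orthogonality theorem for $\mathrm{S}_n$ in the form $\frac{1}{n!}\sum_\pi[p_\mu(\pi)]_{ab}[p_\lambda(\pi^{-1})]_{cd}=\frac{\delta_{\lambda\mu}}{d_{\mathcal{P}_\lambda}}\delta_{ad}\delta_{bc}$ collapses the right-hand side to the same expression, with the $d_{\mathcal{P}_\lambda}$ prefactor cancelling exactly as you say. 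The bookkeeping points you flag (isometric embeddings $[\cdot]_\lambda$, vanishing of $\Pi^\lambda_{S^n}$ for diagrams with too many rows, block-diagonality of $U_{S^n}(\pi)$ so that only the diagonal blocks $Y^{(\mu)}$ enter $\operatorname{tr}[X\,U(\pi)]$) are indeed the only things requiring care, and none of them hides a gap. Your opening observation that the twirl lands in the span of the permutation operators is a useful sanity check but is not logically needed for the computation.
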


\begin{lemma}
  \label{z:smOb5fVV}
  Let $S$, $R$ be any quantum systems with $d_S\geq d_R$, and let $n>0$.  Let
  $\lvert {\Psi^0}\rangle _{SR}$ be any ket such that $\operatorname{tr}_S[{ \Psi^0_{SR} }] = \mathds{1}_R$.
  Then
  \begin{align}
    \int dW_{S} \, W_{S}^{\otimes n}\,
    [{\Psi^0_{SR}}]^{\otimes n} \, W_{S}^{\otimes n\,\dagger}
    =
    \Pi^{\mathrm{Sym}}_{(SR)^n}
    \sum_{\lambda\in\Young(d_R,n)}
    \frac{d_{\mathcal{P}_\lambda}}{d_{\mathcal{Q}_\lambda}}
    \Pi_{R^n}^\lambda\ .
  \end{align}
\end{lemma}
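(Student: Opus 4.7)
The plan is to apply the Haar integration formula \cref{z:ewy6vxah} to the operator $X = [\Psi^0_{SR}]^{\otimes n}$, extended trivially with $R^n$ as a spectator (the twirl on $S^n$ respects the tensor structure, and the formula carries over with a partial trace over $S^n$ replacing the full trace). This reduces the problem to computing, for each $\pi\in\mathrm{S}_n$, the $R^n$-operator $\operatorname{tr}_{S^n}\bigl[[\Psi^0]^{\otimes n}(U_{S^n}(\pi)\otimes\mathds{1}_{R^n})\bigr]$ and then recognizing a Schur--Weyl structure in the resulting expression.

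For the key partial trace, I would exploit the hypothesis $\operatorname{tr}_S\Psi^0_{SR}=\mathds{1}_R$, which forces $\lvert\Psi^0\rangle_{SR} = (V\otimes\mathds{1}_R)\lvert\Phi_{A:R}\rangle$ for some $A\simeq R$ and an isometry $V:A\to S$ with $V^\dagger V=\mathds{1}_A$. Writing $\tilde V \equiv V^{\otimes n}\otimes\mathds{1}_{R^n}$, we have $[\Psi^0]^{\otimes n}=\tilde V\,\Phi^{\otimes n}_{A:R}\,\tilde V^\dagger$. Partial-trace cyclicity for operators supported only on $S^n$ lets us cycle the projector $\mathds{1}_{S^n}-V^{\otimes n}V^{\otimes n\dagger}$ to the front of the expression, where it is annihilated by $\tilde V$; the remaining contribution equals $\operatorname{tr}_{A^n}\bigl[\Phi^{\otimes n}_{A:R}(V^{\otimes n\dagger}U_{S^n}(\pi)V^{\otimes n}\otimes\mathds{1}_{R^n})\bigr]$. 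The isometry property yields $V^{\otimes n\dagger}U_{S^n}(\pi)V^{\otimes n}=U_{A^n}(\pi)$, and the Choi identity $\operatorname{tr}_A[\Phi_{A:R}X_A]=X_R^{t_{A\to R}}$, combined with the fact that the transpose of a permutation matrix inverts the permutation, delivers the clean answer $U_{R^n}(\pi^{-1})$.

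Substituting back yields $\frac{1}{n!}\sum_\pi U_{R^n}(\pi^{-1})\otimes U_{S^n}(\pi^{-1})\cdot\sum_\lambda\frac{d_{\mathcal{P}_\lambda}}{d_{\mathcal{Q}_\lambda}}\Pi^\lambda_{S^n}$. Relabeling $\pi\mapsto\pi^{-1}$ and noting $U_{R^n}(\pi)\otimes U_{S^n}(\pi)=U_{(SR)^n}(\pi)$, identity~\eqref{z:bfP3uOkZ} collapses the $\pi$-sum into $\Pi^{\mathrm{Sym}}_{(SR)^n}$. Since each $\Pi^\lambda_{S^n}$ is permutation-invariant under simultaneous permutations on $(SR)^n$, it commutes with $\Pi^{\mathrm{Sym}}_{(SR)^n}$, so the two factors can be reordered. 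A final application of \cref{z:Qql-Fkhq} replaces $\Pi^\lambda_{S^n}\Pi^{\mathrm{Sym}}_{(SR)^n}$ with $\Pi^\lambda_{R^n}\Pi^{\mathrm{Sym}}_{(SR)^n}$, and the sum automatically restricts to $\lambda\in\Young(d_R,n)$ because $\Pi^\lambda_{R^n}$ vanishes otherwise.

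The main obstacle is the careful bookkeeping in the partial-trace reduction: handling the image and kernel of $V^{\otimes n}$ so that one legitimately trades $\operatorname{tr}_{S^n}$ for $\operatorname{tr}_{A^n}$, and keeping the roles of $S$, the auxiliary $A\simeq R$, and $R$ distinct throughout. Once that step is secured, everything else is routine rearrangement of permutation sums and structural identities from \cref{z:pJbOynb9}.
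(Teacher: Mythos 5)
Your proposal is correct and follows essentially the same path as the paper's proof: write $\lvert\Psi^0\rangle_{SR}$ as an isometry applied to a maximally entangled ket, plug $[\Psi^0]^{\otimes n}$ into the Haar twirl formula of \cref{z:ewy6vxah}, reduce the partial trace to $U_{R^n}(\pi^{-1})$ via the intertwining/isometry property, recognize $\Pi^{\mathrm{Sym}}_{(SR)^n}$ from the sum over permutations using~\eqref{z:bfP3uOkZ}, and finally invoke \cref{z:Qql-Fkhq} to move the Schur--Weyl projector from $S^n$ to $R^n$. The only cosmetic difference is that the paper moves $U_{S^n}(\pi)$ through $K^{\otimes n}$ and then uses the invariance of $\lvert\Phi\rangle^{\otimes n}$ under diagonal permutations, whereas you cycle $V^{\otimes n}$ around the partial trace and then use the transpose identity for permutation matrices — the same computation, differently bookkept.
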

\begin{proof}[**z:smOb5fVV]
  Any $\lvert {\Psi^0}\rangle _{SR}$ with $\operatorname{tr}_S[{ \Psi^0_{SR} }] = \mathds{1}_R$ can be written
  in the form
  \begin{align}
    \lvert {\Psi^0}\rangle _{SR} = K_{R'\to S} \, \lvert {\Phi_{R':R}}\rangle \ ,
  \end{align}
  for some isometry $K_{R'\to S}$ by making use of the Schmidt decomposition.
  Plugging in $[{\Psi_{SR}^0}]^{\otimes n}$ as the $X_{S^n}$ operator in
  \cref{z:ICkrpeCO}, we obtain
  \begin{align}
    \hspace*{4em}&\hspace*{-4em}
    \int dW_{S} \, W_{S}^{\otimes n}\, [{\Psi_{SR}^0}]^{\otimes n} \, W_{S}^{\otimes n\,\dagger}
                   \nonumber\\
    &= 
    \frac1{n!} \sum_{\pi\in\mathrm{S}_{n}}
      \operatorname{tr}_{S^n}\Bigl[{  \bigl({K\, \Phi_{R':R}\, K^\dagger}\bigr)^{\otimes n}\, U_{S^n}(\pi) }\Bigr]
      \, U_{S^n}(\pi^{-1})
    \sum_{\lambda\in\Young(d_S,n)} \frac{d_{\mathcal{P}_\lambda}}{d_{\mathcal{Q}_\lambda}}
    \Pi_{S^n}^\lambda\ .
    \label{z:ZrIRm9ie}
  \end{align}
  Observe first of all that
  $U_{S^n}(\pi) K^{\otimes n} = K^{\otimes n} U_{R'^n}(\pi)$.
  Then, note that
  $\lvert {\Phi_{R':R}}\rangle ^{\otimes n} = U_{(R'R)^n}(\pi) \lvert {\Phi_{R':R}}\rangle ^{\otimes n} =
  U_{R'^n}(\pi) \otimes U_{R^n}(\pi) \lvert {\Phi}\rangle _{R':R}^{\otimes n}$, since
  $U_{X^n}(\pi)$ simply permutes the given tensor factors, which implies that
  $U_{R'^n}(\pi) \lvert {\Phi_{R':R}}\rangle ^{\otimes n} = U_{R^n}(\pi^{-1})
  \lvert {\Phi_{R':R}}\rangle ^{\otimes n}$.  Therefore,
  \begin{align}
    \operatorname{tr}_{S^n}\Bigl[{
      K^{\otimes n} \Phi_{R':R}^{\otimes n} K^{\otimes n\,\dagger}\,U_{S^n}(\pi)
    }\Bigr]
    = 
    \operatorname{tr}_{R'^n}\Bigl[{ U_{R'^n}(\pi) \, \Phi_{R':R}^{\otimes n} }\Bigr]
    = 
    \operatorname{tr}_{R'^n}\Bigl[{ U_{R^n}(\pi^{-1}) \, \Phi_{R':R}^{\otimes n} }\Bigr]
    = U_{R^n}(\pi^{-1})\ .
  \end{align}
  Continuing from above,
  \begin{align}
    \eqref{z:ZrIRm9ie}
    &= \frac1{n!} \sum_{\pi\in\mathrm{S}_{n}}
    U_{S^n}(\pi^{-1}) \otimes U_{R^n}(\pi^{-1})
    \sum_{\lambda\in\Young(d_S,n)}
    \frac{d_{\mathcal{P}_\lambda}}{d_{\mathcal{Q}_\lambda}}
    \Pi_{S^n}^\lambda 
      \nonumber\\
    &= \frac1{n!} \sum_{\pi\in\mathrm{S}_{n}}
    U_{(SR)^n}(\pi)
    \sum_{\lambda\in\Young(d_S,n)}
    \frac{d_{\mathcal{P}_\lambda}}{d_{\mathcal{Q}_\lambda}}
    \Pi_{S^n}^\lambda
      \nonumber\\
    &= \Pi^{\mathrm{Sym}}_{(SR)^n}
    \sum_{\lambda\in\Young(d_S,n)}
    \frac{d_{\mathcal{P}_\lambda}}{d_{\mathcal{Q}_\lambda}}
    \Pi_{S^n}^\lambda\ ,
      \label{z:1cJt5QMy}
  \end{align}
  where the last equality follows from expressing the projector onto the
  symmetric subspace as a sum of permutation operators
  [\cref{z:bfP3uOkZ}]. 
  Finally, we invoke \cref{z:Qql-Fkhq} to
  move the $\Pi_{S^n}^\lambda$'s over to the $R^n$ system:
  \begin{align}
    \eqref{z:1cJt5QMy}
    &=
      \Pi^{\mathrm{Sym}}_{(SR)^n} \; 
      \sum_{\lambda\in\Young(d_R,n)}
      \frac{d_{\mathcal{P}_\lambda}}{d_{\mathcal{Q}_\lambda}}
      \Pi_{R^n}^\lambda\ ,
  \end{align}
  further noting that the product of $\Pi^{\mathrm{Sym}}_{(SR)^n}$ with any
  terms in~\eqref{z:1cJt5QMy} with $\lambda\not\in\Young(d_R,n)$
  must vanish thanks to \cref{z:Qql-Fkhq}.
\end{proof}

\begin{proof}[*z:TvA2E9KA]
  Applying \cref{z:smOb5fVV} with $S\simeq R$ and
  $\lvert {\Psi^0}\rangle _{SR} = \lvert {\Phi_{S:R}}\rangle $, we find
  \begin{align}
    \mathds{1}_{R^n}
    &= \operatorname{tr}_{S^n}\mathopen{}\left[{
      \int dW_{S}  \, W_S^{\otimes n}\,[\Phi_{S:R}]^{\otimes n}\,W_S^{\otimes n\,\dagger}
    }\right]\mathclose{}
    = \operatorname{tr}_{S^n}\bigl[{ \Pi^{\mathrm{Sym}}_{(SR)^n} }\bigr]
      \sum_{\lambda\in\Young(d_S,n)}
      \frac{d_{\mathcal{P}_\lambda}}{d_{\mathcal{Q}_\lambda}} \Pi^\lambda_{R^n}
      \ .
  \end{align}
  The claim follows by recalling that
  $\zeta_{R^n} = d_{\mathrm{Sym}(n,d_R^2)}^{-1}\,
  \operatorname{tr}_{S^n}\bigl[{\Pi^{\mathrm{Sym}}_{(SR)^n}}\bigr]$.
\end{proof}

\begin{proof}[*z:7r5n0QDD]
  Follows immediately from \cref{z:smOb5fVV,z:TvA2E9KA}.
\end{proof}

\subsection{Proof of the constrained channel postselection theorem (\cref{z:UEUfoDLC})}
\label{z:WfwOwBXs}

\begin{proof}[*z:UEUfoDLC]
  Let $E_B\simeq B$, $E_R\simeq R$ be additional quantum systems.  The system
  $E=E_B E_R$ then has a size that is suitable to serve as a Stinespring
  dilation environment of any channel $A\to B$.  Fix any pure state
  $\lvert {\phi}\rangle _{BE_B}$, and let
  \begin{align}
    \lvert {\Psi^0}\rangle _{EBR} = \lvert {\Phi_{E_R:R}}\rangle  \otimes \lvert {\phi}\rangle _{B E_B}\ .
  \end{align}
  Consider the object
  \begin{align}
    \Xi_{E^nB^nR^n}
    = \int dW_{EB}\, W_{EB}^{\otimes n} \, \lvert {\Psi^0}\rangle \mkern -1.8mu\relax \langle{\Psi^0}\rvert _{EBR}^{\otimes n}
    \, W_{EB}^{\otimes n\,\dagger}\ ,
  \end{align}
  where $dW_{EB}$ is the Haar measure on all unitaries acting on $ER$,
  normalized to $\int dW_{EB} = 1$.  We have $\Xi_{R^n} = \mathds{1}_{R^n}$ by
  construction.  The object $\Xi_{R^n}$ can be interpreted as sampling a quantum
  channel completely at random by sampling its Stinespring dilation with respect
  to the Haar measure on $EB$, and computing the average of its $n$-fold tensor
  product.

  Thanks to \cref{z:7r5n0QDD,z:TvA2E9KA}, we know that
  \begin{align}
    \Xi_{E^nB^nR^n} = \alpha^{-1} \zeta_{R^n}^{-1}\, \Pi^{\mathrm{Sym}}_{(EBR)^n}\ ,
  \end{align}
  with
  \begin{align}
  \alpha\zeta_{R^n}
    &= \sum_{\lambda\in\Young(d_R,n)}
      \frac{d_{\mathcal{Q}_\lambda}}{d_{\mathcal{P}_\lambda}} \,\Pi_{R^n}^\lambda\ ;
    &
      \alpha &\equiv d_{\mathrm{Sym}(n, d_R^2)}\ ,
  \end{align}
  and noting that $\Xi_{E^nB^nR^n}$, $\zeta_{R^n}$, and
  $\Pi^{\mathrm{Sym}}_{(EBR)^n}$ all commute pairwise.  Therefore,
  \begin{align}
    \Pi^{\mathrm{Sym}}_{(EBR)^n} = \alpha\,\zeta_{R^n}  \, \Xi_{E^nB^nR^n}\ .
  \end{align}
  Furthermore, note that $\zeta_{R^n}$, $\Xi_{E^nB^nR^n}$, and
  $\Pi^{\mathrm{Sym}}_{(EBR)^n}$ all commute with any operator $X_{R^n}$ that is
  permutation-invariant.  Indeed, $\mathds{1}_{(EB)^n}\otimes X_{R^n}$ is
  permutation-invariant so admits a decomposition along the Schur-Weyl blocks of
  $(EBR)^n$ and therefore commutes with $\Pi_{(EBR)^n}^\lambda$; then, $X_{R^n}$
  decomposes in the Schur-Weyl blocks on $R^n$ by permutation invariance and so
  it commutes with $\zeta_{R^n}$; finally, $X_{R^n}$ commutes with
  $\Xi_{E^nB^nR^n}$ since it commutes with both $\Pi_{(EBR)^n}^\lambda$ and
  $\zeta_{R^n}^{-1}$.  This argument applies to both operators $X_{R^n}$ and
  $Y_{R^n}$ of the claim as well as to their adjoints $X_{R^n}^\dagger$,
  $Y_{R^n}^\dagger$.

  Let $\lvert {E}\rangle _{E^nB^nR^n} = E_{B^nR^n}^{1/2}\lvert {\Phi_{E_BE_R:BR}}\rangle $ be a
  purification of $E_{B^nR^n}$.  Permutation invariance of $E_{B^nR^n}$ implies
  $\Pi^{\mathrm{Sym}}_{(EBR)^n} \lvert {E}\rangle _{E^nB^nR^n} = \lvert {E}\rangle _{E^nB^nR^n}$.
  Then
  \begin{align}
    X_{R^n}^\dagger Y_{R^n} E_{E^nB^nR^n} Y_{R^n}^\dagger X_{R^n}
 &= \Pi^{\mathrm{Sym}}_{(EBR)^n} \, X_{R^n}^\dagger Y_{R^n} \, E_{E^nB^nR^n}
      \, Y_{R^n}^\dagger X_{R^n} \, \Pi^{\mathrm{Sym}}_{(EBR)^n}
      \nonumber\\
    &= \alpha^2\,
      \Xi_{E^nB^nR^n}\,\zeta_{R^n}^{1/2} X_{R^n}^\dagger Y_{R^n}  \zeta_{R^n}^{1/2}
      \, E_{E^nB^nR^n} \,
      \zeta_{R^n}^{1/2} Y_{R^n}^\dagger X_{R^n} \zeta_{R^n}^{1/2} \, \Xi_{E^nB^nR^n}\ .
      \label{z:i4qKm4GW}
  \end{align}

  We now identify another expression for $\Xi_{E^nB^nR^n}$.  Using the operator
  vectorized (double-ket) notation, we have
  \begin{align}
    \bigl \lvert {\Xi}\bigr \rrangle _{E^nB^nR^n}
    &= \mathcal{W}_{E^nB^nR^n} \; \bigl \lvert { \Psi^0 }\bigr \rrangle _{EBR}^{\otimes n}\ ;
    \\
    \mathcal{W}_{E^nB^nR^n}
    &= 
      \int dW_{EB}\, \Bigl({ W_{EB}^{\otimes n} \otimes W_{E'B'}^{\otimes n\,*} }\Bigr) \,
      \Bigl({ \Pi^{\mathrm{Sym}}_{(EBR)^n}\otimes\Pi^{\mathrm{Sym}}_{(E'B'R')^n} }\Bigr) \ .
  \end{align}
  All the individual objects
  $(W^{\otimes n}\otimes W^{\otimes n\, *})
  (\Pi^{\mathrm{Sym}}\otimes\Pi^{\mathrm{Sym}} )$ (for each $W$) live in a
  Hilbert-Schmidt operator space of matrices of dimension
  $\bigl({d_{\mathrm{Sym}(n,d_Ed_Bd_R)}}\bigr)^4 \leq \operatorname{poly}(n)$.  By Carath\'eodory's
  theorem, there exists a subset of $\operatorname{poly}(n)$ of such elements, identified by a
  set $\{{ W_\ell }\}_{\ell=1}^{\operatorname{poly}(n)}$, along with a probability distribution
  $\{{ \kappa'_\ell }\}$ with $\kappa'_1 \geq \kappa'_2\geq \cdots$, such that
  \begin{align}
    \mathcal{W}_{E^nB^nR^n}
    = \sum_\ell \kappa'_\ell \,
    \Bigl({ W_{\ell}^{\otimes n} \otimes W_{\ell}^{\otimes n\,*} }\Bigr) \,
      \Bigl({ \Pi^{\mathrm{Sym}}_{(EBR)^n}\otimes\Pi^{\mathrm{Sym}}_{(E'B'R')^n} }\Bigr)\ .
  \end{align}
  Furthermore, $\mathcal{W}_{E^nB^nR^n}$ is invariant under the action of any
  tensor product unitary on $EB$, by definition and by unitary invariance of the
  measure $dW_{EB}$.  In summary, there exists $\{{ W_\ell }\}_{\ell=1}^{\operatorname{poly}(n)}$
  as above such that for any unitary $W'_{EB}$, we have
  \begin{align}
    \mathcal{W}_{E^nB^nR^n}
    = \sum_\ell \kappa'_\ell \,
    \Bigl({ (W'W_{\ell})^{\otimes n} \otimes (W'W_{\ell})^{\otimes n\,*} }\Bigr) \,
      \Bigl({ \Pi^{\mathrm{Sym}}_{(EBR)^n}\otimes\Pi^{\mathrm{Sym}}_{(E'B'R')^n} }\Bigr)\ .
  \end{align}
  So, for any unitary $W'_{EB}$, we have
  \begin{align}
    \Xi_{E^nB^nR^n}
    &= \mathcal{W}_{E^nB^nR^n}\bigl[{ \bigl({\Psi^0_{EBR}}\bigr)^{\otimes n} }\bigr]
      \nonumber\\
    &= \sum_{\ell} \kappa'_\ell \; (W'_{EB} W_{\ell;ER})^{\otimes n}
      \bigl({\Psi^0_{EBR}}\bigr)^{\otimes n}  (W'_{EB} W_{\ell;ER})^{\otimes n\,\dagger}
      \nonumber\\
    &= \sum_{\ell} \kappa'_\ell \; \Bigl({ \Psi^{(W' W_\ell)}_{EBR} }\Bigr)^{\otimes n} \ ,
  \end{align}
  where we defined
  \begin{align}
    \bigl \lvert {\Psi^{(W)}}\bigr \rangle _{EBR} \equiv W_{EB} \, \bigl \lvert {\Psi^0}\bigr \rangle _{EBR}\ .
  \end{align}

  We return to~\eqref{z:i4qKm4GW} with the intent of plugging
  in the above expression for $\Xi_{E^nB^nR^n}$.  Define the shorthand notation
  \begin{align}
    \widetilde{E}_{E^nB^nR^n} \equiv
    \zeta_{R^n}^{1/2} \, X^\dagger\,Y\, \zeta_{R^n}^{1/2}
    \; E \;
    \zeta_{R^n}^{1/2} \, Y^\dagger\,X\, \zeta_{R^n}^{1/2}\ ,
  \end{align}
  omitting some indices for readability. We then find that, for all unitaries
  $W'$,
  \begin{align}
    \text{\eqref{z:i4qKm4GW}}
    &=
      \alpha^2 \,
      \Xi_{E^nB^nR^n}\, \widetilde{E}_{E^nB^nR^n}\,\Xi_{E^nB^nR^n}
      \nonumber\\
    &=
      \alpha^2 \sum_{\ell,\ell'} 
      \kappa'_\ell \kappa'_{\ell'}\,
      \Bigl({ \Psi_{EBR}^{(W'W_\ell)}}\Bigr)^{\otimes n}
      \widetilde{E}_{E^nB^nR^n}\,
      \Bigl({ \Psi_{EBR}^{(W'W_{\ell'})}}\Bigr)^{\otimes n}\ .
  \end{align}
  The equality being true for all unitaries $W'$ (recall the $\{{ W_\ell }\}$ do
  not depend on $W'$), we may as well average over $W'$:
  \begin{align}
    \text{\eqref{z:i4qKm4GW}}
    &=
      \alpha^2
      \int dW'_{EB}\,
      \sum_{\ell,\ell'} 
      \kappa'_\ell \kappa'_{\ell'}\,
      \Bigl({ \Psi_{EBR}^{(W'W_\ell)}}\Bigr)^{\otimes n}
      \widetilde{E}_{E^nB^nR^n}\,
      \Bigl({ \Psi_{EBR}^{(W'W_{\ell'})}}\Bigr)^{\otimes n}\ .
      \label{z:q8-HKT8X}
  \end{align}
  By an operator pinching-type inequality (cf.\@ \cref{z:h6tAs5Us}), we
  have
  \begin{align}
    \text{\eqref{z:q8-HKT8X}}
    &\leq
      \operatorname{poly}(n) \int dW' \sum_\ell (\kappa'_\ell)^2 \,
      \Bigl({ \Psi_{EBR}^{(W'W_\ell)}}\Bigr)^{\otimes n}
      \,\widetilde{E}_{E^nB^nR^n}\,
      \Bigl({ \Psi_{EBR}^{(W'W_{\ell})}}\Bigr)^{\otimes n}
      \nonumber\\
    &\leq
      \operatorname{poly}(n) \sum_\ell \kappa'_\ell \int dW' \,
      \Bigl({ \Psi_{EBR}^{(W'W_\ell)}}\Bigr)^{\otimes n}
      \,\widetilde{E}_{E^nB^nR^n}\,
      \Bigl({ \Psi_{EBR}^{(W'W_{\ell})}}\Bigr)^{\otimes n}
      \nonumber\\
    &\leq
      \operatorname{poly}(n) \int dW'' \,
      \Bigl({ \Psi_{EBR}^{(W'')}}\Bigr)^{\otimes n}
      \,\widetilde{E}_{E^nB^nR^n}\,
      \Bigl({ \Psi_{EBR}^{(W'')}}\Bigr)^{\otimes n}
  \end{align}
  where we used $\kappa'_\ell\leq 1$, carried out the change of variables
  $W'\to W''=W'W_\ell$, and used $\sum \kappa'_\ell = 1$.  Writing out the full
  inequality, and rearranging some terms:
  \begin{align}
    \hspace*{4em}&\hspace*{-4em}
    X_{R^n}^\dagger Y_{R^n}\,E_{E^nB^nR^n}\,Y_{R^n}^\dagger X_{R^n}
    \nonumber\\
    &\leq \operatorname{poly}(n)
      \int dW\, \bigl \lvert {\Psi^{(W)}}\big \rangle \mkern -1.8mu\relax \big \langle{\Psi^{(W)}}\bigr \rvert _{EBR}^{\otimes n}
      \,
      \Bigl \lvert {
      \bigl \langle {\Psi^{(W)}}\bigr \rvert _{EBR}^{\otimes n}
      \;
      \zeta_{R^n}^{1/2} \, X^\dagger\,Y\, \zeta_{R^n}^{1/2}
      \; \bigl \lvert {E}\bigr \rangle _{E^nB^nR^n}
      }\Bigr \rvert ^2 \ .
      \label{z:gjvmIR0q}
  \end{align}

  \Cref{z:gjvmIR0q} can be viewed as the root
  form of our channel postselection theorem.  The expression in the claim is
  more natural to parse but might technically be slightly weaker
  than~\eqref{z:gjvmIR0q}.

  For a given $W$, let $M_{BR} = \operatorname{tr}_E\bigl\{{ \bigl \lvert {\Psi^{(W)}}\big \rangle \mkern -1.8mu\relax \big \langle{\Psi^{(W)}}\bigr \rvert _{EBR} }\bigr\}$,
  noting that $M_R = \mathds{1}_R$ by construction.  Now,
  $X\,\zeta_{R^n}^{1/2} \bigl({\bigl \lvert {\Psi^{(W)}}\bigr \rangle _{EBR}}\bigr)^{\otimes n}$ is a
  purification of the operator
  $X\, \zeta_{R^n}^{1/2}\,M_{BR}^{\otimes n}\,\zeta_{R^n}^{1/2}\,X^\dagger$.  Also,
  $Y\,\zeta_{R^n}^{1/2}\,\bigl \lvert {E}\bigr \rangle _{E^nB^nR^n}$ is a purification of
  $Y\,\zeta_{R^n}^{1/2}\,E_{B^nR^n}\,\zeta_{R^n}^{1/2}\,Y^\dagger$.  From
  Uhlmann's theorem, %
  \begin{align}
    \Bigl \lvert { \bigl \langle {\Psi^{(W)}}\bigr \rvert _{EBR}^{\otimes n} \; \zeta_{R^n}^{1/2} \,
    X^\dagger\,Y\, \zeta_{R^n}^{1/2} \; \bigl \lvert {E}\bigr \rangle _{E^nB^nR^n} }\Bigr \rvert ^2
    \leq F^2\Bigl({
    X\, \zeta_{R^n}^{1/2}\,M_{BR}^{\otimes n}\,\zeta_{R^n}^{1/2}\,X^\dagger\ ,\ 
    Y\,\zeta_{R^n}^{1/2}\,E_{B^nR^n}\,\zeta_{R^n}^{1/2}\,Y^\dagger
    }\Bigr)\ .
  \end{align}

  Now we define the measure $dM_{BR}$ on Choi matrices of quantum channels
  simply as the measure obtained by partial trace of $\Psi^{(W)}_{EBR}$ starting
  from the Haar measure $dW_{EB}$.  We find:
  \begin{align}
    \eqref{z:gjvmIR0q}
    &\leq
      \operatorname{poly}(n)\, \int dM_{BR}\, M_{BR}^{\otimes n}\,
      F^2\Bigl({
      \mathcal{M}^{\otimes n}\bigl({ X\, \zeta_{A^nR^n}\,X^\dagger}\bigr)
      \ ,\ 
      \mathcal{E}\bigl({ Y\,\zeta_{A^nR^n}\,Y^\dagger }\bigr)
      }\Bigr)\ ,
  \end{align}
  where we rewrote the arguments of the fidelity using channels instead of the
  corresponding Choi matrices and with the notation
  $\lvert {\zeta}\rangle _{A^nR^n} \equiv \zeta_{R^n}^{1/2}\,\lvert {\Phi_{A^n:R^n}}\rangle $.
  Phew, we're done!
\end{proof}

\subsection{Proof of the constrained channel postselection theorem for i.i.d.\@
  input states (\cref{z:pnP5Wiy4})}
\label{z:6QVwEcrI}

\begin{proof}[*z:pnP5Wiy4]
  Consider the subnormalized state
  \begin{align}
    \widehat\sigma_{R^n} \equiv
    \int_{F^2(\sigma,\tau)\geq e^{-w}} d\tau \,
      \tau_R^{\otimes n}\ .
  \end{align}
  Now let
  \begin{align}
    L_{R^n}
    &= \widehat\sigma_{R^n}^{1/2}\, \zeta_{R^n}^{-1/2}\ .
  \end{align}
  Observe that $L_{R^n}^{\dagger} = L_{R^n} \geq 0$ because
  $\zeta_{R^n}$ commutes with $\tau_R^{\otimes n}$ for all $\tau_R$.  Also
  note that $L_{R^n} \leq \mathds{1}$ since
  \begin{align}
    L_{R^n}^{\dagger} L_{R^n}
    = \zeta_{R^n}^{-1/2}
    \,\biggl[{
    \int_{F^2(\sigma, \tau)\geq e^{-w}}
    d\tau\,\tau_R^{\otimes n} }\biggr]
    \, \zeta_{R^n}^{-1/2}
    \leq \mathds{1}\ ,
  \end{align}
  since the integral in the brackets is operator-upper-bounded by
  $\int d\tau \,\tau_R^{\otimes n} = \zeta_{R^n}$.  We also have, thanks to
  \cref{z:Ehi5guwR},
  \begin{align}
    \operatorname{tr}\bigl[{ L_{R^n}^\dagger L_{R^n} \sigma_{AR}^{\otimes n} }\bigr]
    = \int_{F^2(\sigma,\tau)\geq{e}^{-w}} d\tau 
    \operatorname{tr}\bigl[{ R_{R^n}^{(\tau)\dagger} R_{R^n}^{(\tau)} \, \sigma_{R}^{\otimes n} }\bigr]
    \geq 1- \operatorname{poly}(n)\exp({-nw})\ .
  \end{align}
  Thanks to the gentle measurement lemma (use \cref{z:bQEMTISK}),
  \begin{align}
    P\bigl({\sigma_{AR}^{\otimes n}  \,,\;
    L_{R^n} \, \sigma_{AR}^{\otimes n} L_{R^n} }\bigr)
    \leq \operatorname{poly}({n}) \exp\Bigl({ -\frac{n w}{2} }\Bigr)\ .
  \end{align}
  In turn, this implies that
  $P\bigl({\sigma_{AR}^{\otimes n} \,,\; L_{R^n} \, \sigma_{AR}^{\otimes n}
  L_{R^n} }\bigr) \leq \operatorname{poly}({n}) \exp\Bigl({ -\frac{n w}{2} }\Bigr)$ and that there exists
  $\Delta'_{A^nR^n}\geq 0$ with
  $\operatorname{tr}\bigl({\Delta'_{A^nR^n}}\bigr) \leq \operatorname{poly}({n}) {e}^{-nw/2}$ such that
  \begin{align}
    \sigma_{AR}^{\otimes n} \leq
    L_{R^n} \, \sigma_{AR}^{\otimes n} L_{R^n} + \Delta'_{A^nR^n}\ .
  \end{align}
  Iteratively applying this relation, we find
  \begin{align}
    \sigma_{AR}^{\otimes n}
    \leq L^2_{R^n} \, \sigma_{AR}^{\otimes n} L^2_{R^n}
    + L_{R^n} \, \Delta'_{A^n R^n} L_{R^n}
    + \Delta'_{A^nR^n}\ .
  \end{align}
  Defining
  $\Delta_{B^nR^n} = \mathcal{E}_{A^n\to B^n}\bigl[{L_{R^n} \, \Delta'_{A^n R^n}
  L_{R^n} + \Delta'_{A^nR^n}}\bigr] \geq 0$, we find that
  $\operatorname{tr}\bigl({\Delta_{B^nR^n}}\bigr) \leq \operatorname{tr}\bigl({L_{R^n}^2\Delta'_{A^nR^n}}\bigr) +
  \operatorname{tr}\bigl({\Delta'_{A^nR^n}}\bigr) \leq \operatorname{poly}({n}){e}^{-nw/2}$ and
  \begin{align}
    \mathcal{E}_{A^n\to B^n}\bigl({\sigma_{AR}^{\otimes n}}\bigr)
    \leq L_{R^n}^2\, \mathcal{E}_{A^n\to B^n}\bigl({\sigma_{AR}^{\otimes n}}\bigr) \, L_{R^n}^2
    + \Delta_{B^nR^n}\ .
    \label{z:huKfsPAj}
  \end{align}
  Let
  \begin{align}
    X_{R^n} &= L_{R^n}\ ;
    &
    Y_{R^n} &= L_{R^n}\ .
  \end{align}
  Using our constrained channel postselection theorem
  (\cref{z:UEUfoDLC}), we find
  \begin{subequations}
    \begin{align}
      L_{R^n}^2 \,
      \mathcal{E}_n\bigl({\sigma_{AR}^{\otimes n}}\bigr) \,
      L_{R^n}^2
      &\leq \operatorname{poly}({n})
        \int dM_{BR}\,
        \mathcal{M}^{\otimes n}\bigl({\sigma_{AR}^{\otimes n}}\bigr) \,
        \mathfrak{F}^2[\mathcal{M}]\ ;
        \label{z:HzmSnetP}
      \\
      \mathfrak{F}^2[\mathcal{M}]
      &\equiv
        F^2\Bigl({
        \mathcal{M}^{\otimes n}\bigl({ L_{R^n}\,\zeta_{A^nR^n} L_{R^n} }\bigr) \,,\;
        \mathcal{E}\bigl({ L_{R^n}\,\zeta_{A^nR^n} L_{R^n} }\bigr)
        }\Bigr)\ .
    \end{align}
  \end{subequations}
  Observe that
  \begin{align}
    L_{R^n}\,\lvert {\zeta_{A^nR^n}}\rangle 
    = 
    L_{R^n}\,\zeta_{R^n}^{1/2}\,\lvert {\Phi_{A^nR^n}}\rangle 
    = 
    \widehat\sigma_{R^n}^{1/2} \lvert {\Phi_{A^n:R^n}}\rangle 
    \equiv
    \lvert {\widehat\sigma_{A^nR^n}}\rangle \ .
  \end{align}
  Now define
  \begin{align}
    \bar\omega_{A^n\bar{R}^n}
    = \int_{F^2(\sigma_{\bar R},\tau_{\bar R})\geq e^{-w}}
    d\tau_{\bar R} \, \lvert {\tau}\rangle \mkern -1.8mu\relax \langle{\tau}\rvert _{A\bar R}^{\otimes n}
    \leq \Pi^{\mathrm{Sym}}_{A^n\bar R^n}\ ,
  \end{align}
  where we write
  $\lvert {\tau}\rangle _{A\bar R} \equiv \tau_{\bar R}^{1/2}\,\lvert {\Phi_{A:\bar R}}\rangle $.
  By construction,
  \begin{align}
    \bar\omega_{A^n}
    = \operatorname{tr}_{\bar R^n}\bigl[{ \bar\omega_{A^n\bar R^n } }\bigr]
    = \widehat\sigma_{A^n}\ .
  \end{align}
  Since $\bar\omega_{A^n\bar R^n}$ has support on the symmetric subspace of
  $(A\bar R)^n$, and by Carath\'eodory's theorem, there exists a collection
  $\bigl\{{ \tau_{\bar R}^{(\ell)} }\bigr\}_{\ell=1}^{\operatorname{poly}(n)}$ of at most $\operatorname{poly}(n)$
  states $\lvert {\tau^{(\ell)}}\rangle _{A\bar R} \equiv
  \bigl({\tau_{\bar R}^{(\ell)}}\bigr)^{1/2}\,\lvert {\Phi_{A:\bar R}}\rangle $ with
  $F\bigl({\sigma_{\bar R},\tau_{\bar R}^{(\ell)}}\bigr) \geq e^{-w}$, along with a probability
  distribution $\{{ \bar\kappa_\ell }\}$ with $\bar\kappa_1\geq\bar\kappa_2\geq\cdots$,
  such that
  \begin{align}
    \bar\omega_{A^n\bar R^n} = \sum_{\ell=1}^{\operatorname{poly}(n)} \bar\kappa_\ell \,
    \bigl({\tau_{A\bar R}^{(\ell)}}\bigr)^{\otimes n}\ .
  \end{align}
  This state can be purified using an additional system $R'$ with
  $d_{R'} \leq \operatorname{poly}(n)$:
  \begin{align}
    \bigl \lvert {\bar\omega}\bigr \rangle _{A^n\bar R^n R'}
    = \sum_{\ell=1}^{\operatorname{poly}(n)}
    \sqrt{\bar\kappa_\ell} \,
    \bigl \lvert {\tau^{(\ell)}}\bigr \rangle _{A\bar R}^{\otimes n} \otimes \bigl \lvert {\ell}\bigr \rangle _{R'}\ .
  \end{align}
  Because $\lvert {\widehat\sigma}\rangle _{A^nB^n}$ and
  $\bigl \lvert {\bar\omega}\bigr \rangle _{A^n\bar R^n R'}$ are both two
  purifications of the same state $\widehat\sigma_{A^n}$, they are related by some
  isometry acting on $R^n \to \bar R^n R'$.  The fidelity is invariant under the
  application of an isometry, so
  \begin{align}
    \mathfrak{F}(\mathcal{M})
    &= F\Bigl({
      \mathcal{M}^{\otimes n}\bigl[{ \widehat\sigma_{A^nR^n} }\bigr] \,,\;
      \mathcal{E}\bigl[{ \widehat\sigma_{A^nR^n} }\bigr]
      }\Bigr)
      = F\Bigl({
      \mathcal{M}^{\otimes n}\bigl[{ \bar\omega_{A^n\bar R^n R'} }\bigr] \,,\;
      \mathcal{E}\bigl[{ \bar\omega_{A^n\bar R^n R'} }\bigr]
      }\Bigr)\ .
    \label{z:aGVz.x5p}
  \end{align}
  By the data processing inequality of the fidelity, the fidelity can only
  increase if we decohere $R'$ in its computational basis.  Further
  invoking \cref{z:2jsqnpDK}, we find
  \begin{align}
    \mathfrak{F}(\mathcal{M})
    &\leq \sum \bar\kappa_\ell  \, F\Bigl({
      \mathcal{M}^{\otimes n}\bigl[{ \tau^{(\ell) \otimes n}_{A^n\bar R^n} }\bigr] \,,\;
      \mathcal{E}\bigl[{ \tau^{(\ell) \otimes n}_{A^n\bar R^n} }\bigr]
      }\Bigr)
    \nonumber\\
    &\leq \max_\ell  F\Bigl({
      \mathcal{M}^{\otimes n}\bigl[{ \tau^{(\ell) \otimes n}_{A^n\bar R^n} }\bigr] \,,\;
      \mathcal{E}\bigl[{ \tau^{(\ell) \otimes n}_{A^n\bar R^n} }\bigr]
      }\Bigr)
      \nonumber\\
    &\leq \max_{\substack{\tau_R:\\ F^2(\tau_R,\sigma_R)\geq e^{-w}}}
      F\Bigl({
      \mathcal{M}^{\otimes n}\bigl[{ \tau_{A^n\bar R^n}^{\otimes n} }\bigr] \,,\;
      \mathcal{E}\bigl[{ \tau_{A^n\bar R^n}^{\otimes n} }\bigr]
      }\Bigr)\ ,
    \label{z:ds24LeNs}
  \end{align}
  writing $\lvert {\tau}\rangle _{AR} \equiv \tau_R^{1/2}\,\lvert {\Phi_{A:R}}\rangle $.
  Combining~\eqref{z:huKfsPAj}
  and~\eqref{z:HzmSnetP} with~\eqref{z:ds24LeNs}
  proves the claim.
\end{proof}

\section{Proofs: Construction of the approximate microcanonical channel operator}

\subsection{General test operator to discriminate i.i.d.\@ channels: Proof of
  \cref{z:HY7bJv7-,z:lmCUsTOF}}
\label{z:iBCoMrzo}
\label{z:zCc78QJk}

\begin{proof}[*z:HY7bJv7-]
  Let $0<y'<1/d_R$, $h>0$, and let $\mathcal{M}_{A\to B}$ be any quantum
  channel.  Let $\sigma_R$ be any state with $\sigma_R\geq y'\mathds{1}$.  Since
  $[{\mathcal{M}_{A\to B}(\sigma_{AR})}]^{\otimes n}$ is manifestly
  permutation-invariant, we can ignore the symmetrization operation
  $\mathcal{S}_{(BR)^n}$ in~\eqref{z:qVquPoYP}.
  We can write
  \begin{align}
    \operatorname{tr}\Bigl[{ P^{\chi_{j;M;>h}}_{B^nR^n} \mathcal{M}^{\otimes n}\bigl({\sigma_{AR}^{\otimes n}}\bigr) }\Bigr]
    &= 
      \int d\tilde\sigma\, \operatorname{tr}\bigl[{ R^{(\tilde\sigma) \dagger} R^{(\tilde\sigma)} \tilde\sigma^{\otimes m} }\bigr]
      \, \Pr[{ \chi_{j;M;>h} \mvert \sigma, \tilde\sigma }]\ ,
  \end{align}
  where
  \begin{align}
    \Pr[{ \chi \mvert \sigma, \tilde\sigma }]
    \equiv \frac1{J^{\bar n}} \sum_{\boldsymbol j} \int d\boldsymbol{z} \,
    \chi(\tilde\sigma, \boldsymbol j, \boldsymbol z)\, 
    \prod_{i=1}^{\bar n}
    \operatorname{tr}\Bigl({ \bigl\{{ \tilde\sigma_R^{-1/2} C^{j_i}_{BR} \tilde\sigma_R^{-1/2} = z_i }\bigr\}
      \, \mathcal{M}(\sigma_{AR}) }\Bigr)\ .
  \end{align}
  Consider any $x>0$ with $x < y'^2$.  We then have
  \begin{multline}
    \operatorname{tr}\Bigl[{ P^{\chi_{j;M;>h}}_{B^nR^n} \mathcal{M}^{\otimes n}\bigl({\sigma_{AR}^{\otimes n}}\bigr) }\Bigr]
    \\
    \leq 
    \underbrace{
            \int_{F^2(\tilde\sigma,\sigma)< e^{-x}}\,d\tilde\sigma
            \operatorname{tr}\bigl[{ R^{(\sigma) \dagger} R^{(\sigma)} \,\tilde\sigma_R^{\otimes m} }\bigr]
        }_{\text{(I)}}
    + \underbrace{
            \int_{F^2(\tilde\sigma,\sigma) \geq e^{-x}}\,d\tilde\sigma
            \Pr[{ \chi_{j;M;>h} \mvert \sigma, \tilde\sigma }]
        }_{\text{(II)}}\ .
  \end{multline}
  The first term is  taken care of by \cref{z:Ehi5guwR}:
  \begin{align}
    \text{(I)} \leq \operatorname{poly}({m}) \exp({-mx})\ .
  \end{align}

  We now focus on the term~(II).  In the term~(II), we have
  $F^2(\tilde\sigma_R,\sigma_R) \geq e^{-x} \geq 1 - x$, which implies
  \begin{align}
    D(\tilde\sigma_R, \sigma_R)
    \leq P(\tilde\sigma_R, \sigma_R) = \sqrt{ 1 - F^2(\tilde\sigma_R, \sigma_R) }
    \leq \sqrt{x}\ .
  \end{align}
  Furthermore,
  \begin{align}
    D(\tilde\sigma_{AR}, \sigma_{AR})
    \leq P(\tilde\sigma_{AR}, \sigma_{AR})
    = \sqrt{ 1 - F^2(\sigma_{AR}, \tilde\sigma_{AR})} \ ,
    \label{z:nxzSPzPp}
  \end{align}
  for which we can invoke \cref{z:X2zwPk80} to find
  \begin{align}
    \text{\eqref{z:nxzSPzPp}}
    \leq \sqrt{ 1 - \Bigl({1 - \sqrt{2P(\sigma_{R}, \tilde\sigma_{R})}}\Bigr)^2} 
    \leq \bigl[{ 8 P(\sigma_{R}, \tilde\sigma_{R}) }\bigr]^{1/4}
    \leq 2 x^{1/8}\ .
  \end{align}
  Furthermore, 
  \begin{align}
    \tilde\sigma_R \geq \sigma_R - D(\tilde\sigma_{R}, \sigma_{R})\,\mathds{1}_R
    \geq (y' - \sqrt{x})\,\mathds{1}_R = y''\mathds{1}\ ,
  \end{align}
  defining $y'' = y' - \sqrt{x}$ with $y''>0$ thanks to our assumption on the
  possible range of values of $x$.  
  
  For each $j=1,\ldots, J$, the variables $\tilde{z}_i^j$ (for
  $i=1, \ldots, \bar{n}$) are i.i.d., with mean
  \begin{align}
    \tilde{q}_{\sigma,M,j} &\equiv \langle { \tilde{z}_i^j }\rangle _{j_i, z_i}
    = \frac1J\sum_{j_i=1}^J \int dz_i \Pr[z_i|j_i]\,\tilde{z}_i^j
      \nonumber\\
    &= \frac1J\sum_{j'=1}^J \Bigl({
      \delta_{j,j_i} J
      \operatorname{tr}\bigl[{\tilde\sigma_R^{-1/2} C^{j}_{BR} \tilde\sigma_R^{-1/2}
      \sigma_R^{1/2} M_{BR} \sigma_R^{1/2} }\bigr]
      }\Bigr)
      \nonumber\\
    &= \operatorname{tr}\bigl[{ \tilde\sigma_R^{-1/2} C^{j}_{BR} \tilde\sigma_R^{-1/2}
      \sigma_R^{1/2} M_{BR} \sigma_R^{1/2} }\bigr]
      \ .
      \label{z:NU3.LOrY}
  \end{align}

  Now, measuring the observable
  $\tilde\sigma_R^{-1/2} C^j_{BR} \tilde\sigma_R^{-1/2}$ on
  $\mathcal{M}(\sigma_{AR})$ has an expected value of
  \begin{align}
    \tilde q_{j,\sigma,\tilde\sigma} \equiv 
    \operatorname{tr}\bigl({\tilde\sigma_R^{-1/2} C^j_{BR} \tilde\sigma_R^{-1/2}
    \sigma_R^{1/2} M_{BR} \sigma_R^{1/2}}\bigr) \ .
  \end{align}
  We have the bound
  \begin{align}
    \bigl \lvert { \tilde q_{j,\sigma,\tilde\sigma} - \operatorname{tr}(C^j M) }\bigr \rvert 
    &= \bigl \lvert { \operatorname{tr}\bigl[{
      \tilde\sigma_R^{-1/2} C^j_{BR} \tilde\sigma_R^{-1/2}\bigl({
      \sigma_R^{1/2} M_{BR} \sigma_R^{1/2} -
      \tilde\sigma_R^{1/2} M_{BR} \tilde\sigma_R^{1/2}
      }\bigr)}\bigr] }\bigr \rvert 
      \nonumber\\
    &=  \bigl \lvert { \operatorname{tr}\bigl[{
      \tilde\sigma_R^{-1/2} C^j_{BR} \tilde\sigma_R^{-1/2} \, \mathcal{M}_{A\to B}\bigl({
      \sigma_{AR} - \tilde\sigma_{AR}
      }\bigr)}\bigr] }\bigr \rvert 
      \nonumber\\
    &\leq \bigl \lVert { \tilde\sigma_{R}^{-1/2} C^j_{BR} \tilde\sigma_R^{-1/2} }\bigr \rVert 
      \,\lVert {\sigma_{AR} - \tilde\sigma_{AR}}\rVert _1
      \nonumber\\
    &\leq  \frac{4x^{1/8}}{ y'' } \bigl \lVert { C^j_{BR} }\bigr \rVert  \ .
  \end{align}
  Furthermore,
  \begin{align}
    \bigl \lvert { \nu_j(\boldsymbol j, \boldsymbol z) - \operatorname{tr}(C^j M) }\bigr \rvert  
    &\leq 
    \bigl \lvert { \nu_j(\boldsymbol j, \boldsymbol z) - \tilde q_{j,\sigma,\tilde\sigma} }\bigr \rvert 
    + \bigl \lvert { \tilde q_{j,\sigma,\tilde\sigma} - \operatorname{tr}(C^j M) }\bigr \rvert 
    \nonumber\\
    &\leq
      \bigl \lvert { \nu_j(\boldsymbol j, \boldsymbol z) - \tilde q_{j,\sigma,\tilde\sigma} }\bigr \rvert 
      + \frac{ 4 x^{1/8} }{ y'' }\bigl \lVert { C^j_{BR} }\bigr \rVert \ .
  \end{align}
  Observe that
  \begin{align}
    \Pr[ \chi_{j;M;>h} \mvert \sigma, \tilde\sigma]
    &= \Pr\bigl[{
      \lvert { \nu_j(\boldsymbol j, \boldsymbol z) - \operatorname{tr}(C^j M) }\rvert  > h
      \mathrel{\big\mvert} \sigma, \tilde\sigma
      }\bigr]
      \ .
  \end{align}
  Now, the event $\lvert { \nu_j(\boldsymbol j, \boldsymbol z) - \operatorname{tr}(C^j M) }\rvert  > h$
  implies the event
  $\bigl \lvert { \nu_j(\boldsymbol j, \boldsymbol z) - \tilde
    q_{j,\sigma,\tilde\sigma} }\bigr \rvert  > h - \frac{ 4 x^{1/8} }{ y'' }\bigl \lVert {
    C^j_{BR} }\bigr \rVert $, meaning that
  \begin{align}
    \Pr[ \chi_{j;M;>h} \mvert \sigma, \tilde\sigma]
    \leq \Pr\Bigl[{
    \bigl \lvert { \nu_j(\boldsymbol j, \boldsymbol z) - \tilde
    q_{j,\sigma,\tilde\sigma} }\bigr \rvert  > h - \frac{ 4 x^{1/8} }{ y'' }\bigl \lVert {
    C^j_{BR} }\bigr \rVert 
    \mathrel{\Big\mvert} \sigma, \tilde\sigma
    }\Bigr]\ .
    \label{z:ZT6l.fx5}
  \end{align}
  By Hoeffding's bound, we find that
  \begin{align}
    \text{\eqref{z:ZT6l.fx5}}
    &\leq 2\exp\biggl\{{
    - \bar{n}\,\frac{ 2 \bigl({h - 4x^{1/8}y''^{-1}\lVert {C^j_{BR}}\rVert }\bigr)^2 }{
        \bigl({ 2 \lVert {\tilde\sigma^{-1/2} C^j \tilde\sigma^{-1/2}}\rVert  }\bigr)^2 }
    }\biggr\}
    \leq 2\exp\biggl\{{
    -  \bar{n}\, \frac{\bigl({h - 4x^{1/8}y''^{-1}\lVert {C^j_{BR}}\rVert }\bigr)^2 }{
        2 y''^{-2}\,\lVert {C^j_{BR}}\rVert ^{2}
      }
    }\biggr\}
      \nonumber\\
    &\leq 
      2\exp\biggl\{{
      -\frac{\bar{n}}{2}\, \bigl({h y'' \lVert {C^j}\rVert ^{-1} - 4x^{1/8} }\bigr)^2
      }\biggr\}
      \ .
  \end{align}
  This gives us a bound on the term (II) we had earlier.  Along with the first
  term, the bound on the probability of $P^\chi$ passing reads
  \begin{align}
    \hspace*{3em}&\hspace*{-3em}
    \operatorname{tr}\Bigl[{
    P^{\chi_{j;M;>h}}_{B^nR^n} \, \mathcal{M}^{\otimes n}\bigl({\sigma_{AR}^{\otimes n}}\bigr)
    }\Bigr]
                   \nonumber\\
    &\leq \operatorname{poly}(m)\exp(-mx) + 
    2\exp\biggl\{{ -\frac{\bar n}{2} \bigl({h y'' \lVert {C^j}\rVert ^{-1} - 4x^{1/8} }\bigr)^2 }\biggr\}
      \nonumber\\
    &\leq \operatorname{poly}(n) \exp\biggl\{{
      - \min\Bigl({ mx \,,\; \frac{\bar n}{2}\bigl({\lVert {C^j}\rVert ^{-1} h ({y' - \sqrt{x}}) - 4x^{1/8} }\bigr)^2 }\Bigr)
      }\biggr\} \ .
  \end{align}

  To get a more specific bound, we choose a value for $x$:
  \begin{align}
    x^{1/8} &= \frac{h y'}{ 5\lVert {C^j_{BR}}\rVert } \ ;
    &
    x^{1/2} &= \frac{h^4 y'^4}{5^4 \lVert {C^j_{BR}}\rVert ^{4} } \ ;
    &
    x &= \frac{h^8 y'^8}{5^8 \lVert {C^j_{BR}}\rVert ^{8}} \ .
  \end{align}
  This value indeed satisfies $0< x < y'^2$ since $h/\lVert {C^j}\rVert  \leq 1$ and
  $y' < 1$.  Then
  \begin{align}
    mx = m\,\frac{h^8 y'^8}{5^8 \lVert {C^j_{BR}}\rVert ^{8} }
    \ .
  \end{align}
  We also have
  \begin{align}
    \frac{h}{\lVert {C^j}\rVert } (y' - \sqrt{x}) - 4x^{1/8}
    &= \frac{hy'}{\lVert {C^j}\rVert }
      - \frac{h}{\lVert {C^j}\rVert }\frac{h^4 y'^4}{5^4 \lVert {C^j}\rVert ^4}
      - \frac45\frac{hy'}{\lVert {C^j}\rVert }
      \nonumber\\
      &= \frac{hy'}{5\lVert {C^j}\rVert } - \frac{h^5 y'^5}{5^4\lVert {C^j}\rVert ^5}
      \geq \Bigl({\frac15 - \frac1{5^4}}\Bigr)\,\frac{hy'}{\lVert {C^j}\rVert }
        \geq \frac{\sqrt{2}}{5^4}\,\frac{hy'}{\lVert {C^j}\rVert }\ ,
        \label{z:cArpKm30}
  \end{align}
  recalling $h/\lVert {C^j}\rVert \leq1$ and $y' < 1/d_R \leq 1$, and noting
  that~\eqref{z:cArpKm30} is strictly positive.  Thus
  \begin{align}
    \frac{\bar n}{2}\biggl({\frac{h}{\lVert {C^j}\rVert } (y' - \sqrt{x}) - 4x^{1/8} }\biggr)^2 
    \geq \bar n\, \frac{h^2 y'^2}{5^8 \lVert {C^j}\rVert ^2}
    \geq \bar n\, \frac{h^8 y'^8}{5^8 \lVert {C^j}\rVert ^8}\ .
  \end{align}
  Therefore,
  \begin{align}
    \min\Bigl({ mx \,,\; \frac{\bar n}{2}\bigl({\lVert {C^j}\rVert ^{-1} h ({y' - \sqrt{x}}) - 4x^{1/8} }\bigr)^2 }\Bigr)
    \geq 
    \min({m, \bar n}) \,\frac{h^8 y'^8}{ 5^8 \lVert {C^j}\rVert ^{8} }\ ,
  \end{align}
  which completes the proof.
\end{proof}

\begin{proof}[*z:lmCUsTOF]
  Let's prove~\ref{z:wdi9pumP}.  With
  $M_{BR} \equiv \mathcal{M}(\Phi_{A:R})$ and for any $j=1, \ldots J$, we have
  \begin{align}
    \lvert { \nu_j(\boldsymbol j, \boldsymbol z) - q_j }\rvert 
    \leq 
    \lvert { \nu_j(\boldsymbol j, \boldsymbol z) - \operatorname{tr}({C^j_{BR} M_{BR}}) }\rvert  + a\ ,
  \end{align}
  which means that for any $j=1, \ldots, J$,
  \begin{align}
    \lvert { \nu_j(\boldsymbol j, \boldsymbol z) - q_j }\rvert  > h'
    \quad\Rightarrow\quad
    \lvert { \nu_j(\boldsymbol j, \boldsymbol z) - \operatorname{tr}({C^j_{BR} M_{BR}}) }\rvert 
    > h' - a\ .
  \end{align}
  In turn,
  \begin{align}
    \chi_{\boldsymbol q;\not \leq h'}(\tilde\sigma, \boldsymbol j, \boldsymbol z)
    &=
    \chi\Bigl\{{
    \exists\ j \in\{{1, \ldots J}\} : \ 
    \lvert { \nu_j(\boldsymbol j, \boldsymbol z) - q_j }\rvert  > h'
    }\Bigr\}
    \nonumber\\
    &\leq
    \sum_{j=1}^J \chi\Bigl\{{
      \lvert { \nu_j(\boldsymbol j, \boldsymbol z) - \operatorname{tr}({C^j_{BR} M_{BR}}) }\rvert 
      > h' - a
    }\Bigr\}
      \nonumber\\
    &=
    \sum_{j=1}^J \chi_{j;M; >(h'-a)}(\tilde\sigma, \boldsymbol j, \boldsymbol z)\ ,
  \end{align}
  with $\chi_{j;M;>h}$ defined
  in~\eqref{z:ASsLiTCZ}, and where
  the middle inequality holds because whenever the condition on the left hand
  side is true, there is at least one term on the right hand side that is equal
  to one.  Thanks to \cref{z:HY7bJv7-}, we find
  \begin{align}
    \operatorname{tr}\bigl[{
    P^{\chi_{\boldsymbol q;\not \leq h'}}_{B^nR^n}
    \mathcal{M}^{\otimes n}(\sigma_{AR}^{\otimes n})
    }\bigr]
    &\leq
    \operatorname{tr}\bigl[{
    P^{\chi_{j;M;>(h'-a)}}_{B^nR^n}
    \mathcal{M}^{\otimes n}(\sigma_{AR}^{\otimes n})
    }\bigr]
      \nonumber\\
    &\leq
    \operatorname{poly}({n}) \sum_{j=1}^J \exp\biggl\{{
    -n\,\min\Bigl({\frac{m}{n}, \frac{\bar n}{n}}\Bigr)\,\frac{(h' - a)^8\,y'^8}{5^8\,\lVert {C^j_{BR}}\rVert ^8}
    }\biggr\}
      \nonumber\\
    &\leq
    \operatorname{poly}({n}) \exp\biggl\{{
    -n\,\min\Bigl({\frac{m}{n}, \frac{\bar n}{n}}\Bigr)\,
      \frac{(h' - a)^8\,y'^8}{5^8\,\max_j \lVert {C^j_{BR}}\rVert ^8}
    }\biggr\}\ ,
  \end{align}
  proving~\ref{z:wdi9pumP}.

  Now we prove~\ref{z:9tjp9JsO}.  Thanks to
  our
  assumption~\eqref{z:.24EDF0r},
  \begin{align}
    b < \bigl \lvert { \operatorname{tr}\bigl({C^{j_0}_{BR} M_{BR}}\bigr) - q_{j_0} }\bigr \rvert 
    \leq
    \bigl \lvert { q_{j_0}  - \nu_{j_0}(\boldsymbol j, \boldsymbol z) }\bigr \rvert 
    + \bigl \lvert { \nu_{j_0}(\boldsymbol j, \boldsymbol z) - \operatorname{tr}\bigl({C^{j_0}_{BR} M_{BR}}\bigr) }\bigr \rvert \ .
  \end{align}
  Then
  \begin{align}
    \chi_{\boldsymbol q; \leq h'}(\tilde\sigma, \boldsymbol j, \boldsymbol z)
    &= \chi\biggl\{{
      \bigl \lvert { \nu_j(\boldsymbol j, \boldsymbol z) - q_j }\bigr \rvert  \leq h'\ \ \forall\ j=1, \ldots J
    }\biggr\}
      \nonumber\\
    &\leq \chi\biggl\{{
      \bigl \lvert { \nu_{j_0}(\boldsymbol j, \boldsymbol z) - \operatorname{tr}\bigl({C^{j_0}_{BR} M_{BR}}\bigr) }\bigr \rvert 
      > b - h'
    }\biggr\}
      \nonumber\\
    &= \chi_{j;M;>(b-h')}\ ,
  \end{align}
  where the middle inequality holds because the event on the left hand side
  implies the one on the right.  Invoking
  \cref{z:HY7bJv7-}, we find
  \begin{align}
    \operatorname{tr}\bigl[{
    P^{\chi_{\boldsymbol q;\leq h'}}_{B^nR^n}
    \mathcal{M}^{\otimes n}(\sigma_{AR}^{\otimes n})
    }\bigr]
    &\leq
    \operatorname{tr}\bigl[{
    P^{\chi_{j;M;>(b-h')}}_{B^nR^n}
    \mathcal{M}^{\otimes n}(\sigma_{AR}^{\otimes n})
    }\bigr]
      \nonumber\\
    &\leq
    \operatorname{poly}({n}) \exp\biggl\{{
    -n\,\min\Bigl({\frac{m}{n},
      \frac{8\bar n}{n}}\Bigr)\,\frac{(b-h')^4\,y'^4}{625\,\lVert {C^{j_0}_{BR}}\rVert ^4}
    }\biggr\}\ ,
  \end{align}
  proving~\ref{z:9tjp9JsO}.
\end{proof}

\subsection{Construction of the approximate microcanonical channel operator: Proof of
  \cref{z:JjqyeW8p}}
\label{z:p2qVGb1M}

(The following proof was established before discovering
\cref{z:pnP5Wiy4}; with apologies to the
reader, we have not yet simplified it to make direct reference to
\cref{z:pnP5Wiy4}.)

\begin{proof}[*z:JjqyeW8p]
  First let's prove~\ref{z:W3JG7egy}.
  Without loss of generality, we can assume $\mathcal{E}_{A^n\to B^n}$ to be
  permutation-invariant, since both $P^\perp_{B^nR^n}$ and the concentration
  test operators are permutation-invariant.  Consider any $\nu>1$ for now; we'll
  only use the additional assumption on $\nu$ to simplify the final bound.
  Consider any $\sigma_R\geq \nu y\mathds{1}$, and write as a shorthand
  \begin{align}
    Q_{j,\sigma} \equiv \bigl\{{ \overline{H^{j,\sigma}}_{B^nR^n} \notin [q_j \pm \eta] }\bigr\}\ .
    \label{z:zBlezNsg}
  \end{align}
  Let $w>0$ to be fixed later and consider the subnormalized state
  \begin{align}
    \widehat\sigma_{R^n} \equiv
    \int_{F(\sigma,\tau)\geq e^{-w}} d\tau \,
      \tau_R^{\otimes n}\ .
  \end{align}
  Now let
  \begin{align}
    L_{R^n}
    &= \widehat\sigma_{R^n}^{1/2}\, \zeta_{R^n}^{-1/2}\ .
  \end{align}
  Observe that $L_{R^n}^{\dagger} = L_{R^n} \geq 0$ because
  $\zeta_{R^n}$ commutes with $\tau_R^{\otimes n}$ for all $\tau_R$.  Also
  note that $L_{R^n} \leq \mathds{1}$ since
  \begin{align}
    L_{R^n}^{\dagger} L_{R^n}
    = \zeta_{R^n}^{-1/2}
    \,\biggl[{
    \int_{F(\sigma, \tau)\geq e^{-w}}
    d\tau\,\tau_R^{\otimes n} }\biggr]
    \, \zeta_{R^n}^{-1/2}
    \leq \mathds{1}\ ,
  \end{align}
  since the integral in the brackets is operator-upper-bounded by
  $\int d\tau \,\tau_R^{\otimes n} = \zeta_{R^n}$.  We also have, thanks to
  \cref{z:Ehi5guwR},
  \begin{align}
    \operatorname{tr}\bigl[{ L_{R^n}^\dagger L_{R^n} \sigma_{AR}^{\otimes n} }\bigr]
    = \int_{F(\sigma,\tau)\geq{e}^{-w}} d\tau 
    \operatorname{tr}\bigl[{ R_{R^n}^{(\tau)\dagger} R_{R^n}^{(\tau)} \, \sigma_{R}^{\otimes n} }\bigr]
    \geq 1- \operatorname{poly}(n)\exp({-nw})\ .
  \end{align}
  Thanks to the gentle measurement lemma (use \cref{z:bQEMTISK}),
  \begin{align}
    P\bigl({\sigma_{AR}^{\otimes n}  \,,\;
    L_{R^n} \, \sigma_{AR}^{\otimes n} L_{R^n} }\bigr)
    \leq \operatorname{poly}({n}) \exp\Bigl({ -\frac{n w}{2} }\Bigr)\ .
  \end{align}
  Let
  \begin{align}
    X_{R^n} &= L_{R^n}\ ;
    &
    Y_{R^n} &= L_{R^n}\ .
  \end{align}
  Using our constrained channel postselection theorem
  (\cref{z:UEUfoDLC}), we find
  \begin{subequations}
    \begin{align}
      \operatorname{tr}\Bigl[{
      Q_{j,\sigma}    \,
      \mathcal{E}_n\bigl({\sigma_{AR}^{\otimes n}}\bigr)
      }\Bigr]
      &\leq \operatorname{poly}({n}) \biggl\{{
        \int dM_{BR}\,
        \mathfrak{A}\bigl({M_{BR}}\bigr) \, \mathfrak{B}\bigl({M_{BR}}\bigr)
        \ + \ {e}^{-n w/2}
        }\biggr\}\ ;
        \label{z:REWabX3M}
        \\
      \begin{split}
      \mathfrak{A}\bigl({M_{BR}}\bigr)
      &\equiv \operatorname{tr}\Bigl[{
        Q_{j,\sigma}  \,
        \mathcal{M}^{\otimes n}\bigl({\sigma_{AR}^{\otimes n}}\bigr)
        }\Bigr]\ ;
        \\
      \mathfrak{B}\bigl({M_{BR}}\bigr)
      &\equiv F^2\Bigl({
        \mathcal{M}^{\otimes n}\bigl({ L_{R^n}\,\zeta_{A^nR^n} L_{R^n} }\bigr) \,,\;
        \mathcal{E}\bigl({ L_{R^n}\,\zeta_{A^nR^n} L_{R^n} }\bigr)
        }\Bigr)\ .
      \end{split}
    \end{align}
  \end{subequations}
  We split the integral into two parts: one integral ranging over channels
  $\mathcal{M}$ whose expectation values with $C^j$ are close to the prescribed
  $q_j$'s, and one integral over the complementary region.  Let
  $0 < \theta < \eta - \bar\eta$ to be fixed later.  We can write
  \begin{align}
    \hspace*{3em}&\hspace*{-3em}
    \int dM_{BR}\,
    \mathfrak{A}\bigl({M_{BR}}\bigr) \, \mathfrak{B}\bigl({M_{BR}}\bigr)
                   \nonumber\\
    \begin{split}
    &= 
      \int_{\forall j':\;\lvert {\operatorname{tr}[C^{j'}_{BR} M_{BR}] - q_{j'}}\rvert  < \eta - \theta} dM_{BR}\,
      \mathfrak{A}\bigl({M_{BR}}\bigr) \, \mathfrak{B}\bigl({M_{BR}}\bigr)
      \\
    &\quad+
      \int_{\exists j':\;\lvert {\operatorname{tr}[C^{j'}_{BR} M_{BR}] - q_{j'}}\rvert  \geq \eta - \theta} dM_{BR}\,
      \mathfrak{A}\bigl({M_{BR}}\bigr) \, \mathfrak{B}\bigl({M_{BR}}\bigr)\ .
    \end{split}
    \label{z:8ddKr8zS}
  \end{align}
  Consider the first integral in~\eqref{z:8ddKr8zS} and
  suppose that $\bigl \lvert {\operatorname{tr}[C^{j'}_{BR} M_{BR}] - q_{j'}}\bigr \rvert  < \eta - \theta$ for all
  $j' = 1, \ldots , J$.  By Hoeffding's inequality,
  \begin{align}
    \mathfrak{A}\bigl({M_{BR}}\bigr)
    &=
    \operatorname{tr}\bigl[{
        Q_{j,\sigma}  \,
        \mathcal{M}^{\otimes n}\bigl({\sigma_{AR}^{\otimes n}}\bigr)
    }\bigr]
      \nonumber\\
    &\leq
    \operatorname{tr}\bigl[{
        \bigl\{{ \overline{H^{j,\sigma}} \notin [{ \operatorname{tr}(C^j_{BR} M_{BR}) \pm \theta }] }\bigr\}
      \,
        \mathcal{M}^{\otimes n}\bigl({\sigma_{AR}^{\otimes n}}\bigr)
    }\bigr]
      \nonumber\\
    &\leq 2\exp\mathopen{}\left\{{
      -\frac{2\theta^2 n}{4 \bigl \lVert { \sigma_R^{-1/2} C^j_{BR} \sigma_R^{-1/2} }\bigr \rVert ^2}
    }\right\}\mathclose{}
      \leq 2\exp\mathopen{}\left\{{
      -\frac{\theta^2 y^2 n}{2 \lVert { C^j_{BR} }\rVert ^2}
    }\right\}\mathclose{}\ .
  \end{align}
  The first integral in~\eqref{z:8ddKr8zS} hence vanishes
  exponentially in $n$.  We now consider the second integral; suppose that there
  exists $j_0\in\{{1, \ldots J}\}$ such that
  $\bigl \lvert {\operatorname{tr}[C^{j_0}_{BR} M_{BR}] - q_{j_0}}\bigr \rvert  \geq \eta - \theta$.  Observe
  that
  \begin{align}
    L_{R^n}\,\lvert {\zeta_{A^nR^n}}\rangle 
    = 
    L_{R^n}\,\zeta_{R^n}^{1/2}\,\lvert {\Phi_{A^nR^n}}\rangle 
    = 
    \widehat\sigma_{R^n}^{1/2} \lvert {\Phi_{A^n:R^n}}\rangle 
    \equiv
    \lvert {\widehat\sigma_{A^nR^n}}\rangle \ .
  \end{align}
  Now define
  \begin{align}
    \bar\omega_{A^n\bar{R}^n}
    = \int_{F(\sigma_{\bar R},\tau_{\bar R})\geq e^{-w}}
    d\tau_{\bar R} \, \lvert {\tau}\rangle \mkern -1.8mu\relax \langle{\tau}\rvert _{A\bar R}^{\otimes n}
    \leq \Pi^{\mathrm{Sym}}_{A^n\bar R^n}\ ,
  \end{align}
  where we write
  $\lvert {\tau}\rangle _{A\bar R} \equiv \tau_{\bar R}^{1/2}\,\lvert {\Phi_{A:\bar R}}\rangle $.
  By construction,
  \begin{align}
    \bar\omega_{A^n}
    = \operatorname{tr}_{\bar R^n}\bigl[{ \bar\omega_{A^n\bar R^n } }\bigr]
    = \widehat\sigma_{A^n}\ .
  \end{align}
  Since $\bar\omega_{A^n\bar R^n}$ has support on the symmetric subspace of
  $(A\bar R)^n$, and by Carath\'eodory's theorem, there exists a collection
  $\bigl\{{ \tau_{\bar R}^{(\ell)} }\bigr\}_{\ell=1}^{\operatorname{poly}(n)}$ of at most $\operatorname{poly}(n)$
  states $\lvert {\tau^{(\ell)}}\rangle _{A\bar R} \equiv
  \bigl({\tau_{\bar R}^{(\ell)}}\bigr)^{1/2}\,\lvert {\Phi_{A:\bar R}}\rangle $ with
  $F\bigl({\sigma_{\bar R},\tau_{\bar R}^{(\ell)}}\bigr) \geq e^{-w}$, along with a probability
  distribution $\{{ \bar\kappa_\ell }\}$ with $\bar\kappa_1\geq\bar\kappa_2\geq\cdots$,
  such that
  \begin{align}
    \bar\omega_{A^n\bar R^n} = \sum_{\ell=1}^{\operatorname{poly}(n)} \bar\kappa_\ell \,
    \bigl({\tau_{A\bar R}^{(\ell)}}\bigr)^{\otimes n}\ .
  \end{align}
  This state can be purified using an additional system $R'$ with
  $d_{R'} \leq \operatorname{poly}(n)$:
  \begin{align}
    \bigl \lvert {\bar\omega}\bigr \rangle _{A^n\bar R^n R'}
    = \sum_{\ell=1}^{\operatorname{poly}(n)}
    \sqrt{\bar\kappa_\ell} \,
    \bigl \lvert {\tau^{(\ell)}}\bigr \rangle _{A\bar R}^{\otimes n} \otimes \bigl \lvert {\ell}\bigr \rangle _{R'}\ .
  \end{align}
  Because $\lvert {\widehat\sigma}\rangle _{A^nB^n}$ and
  $\bigl \lvert {\bar\omega}\bigr \rangle _{A^n\bar R^n R'}$ are both two
  purifications of the same state $\widehat\sigma_{A^n}$, they are related by some
  isometry acting on $R^n \to \bar R^n R'$.  The fidelity is invariant under the
  application of an isometry, so
  \begin{align}
    \mathfrak{B}(M_{BR})
    &= F^2\Bigl({
      \mathcal{M}^{\otimes n}\bigl[{ \widehat\sigma_{A^nR^n} }\bigr] \,,\;
      \mathcal{E}\bigl[{ \widehat\sigma_{A^nR^n} }\bigr]
      }\Bigr)
      \nonumber\\
    &= F^2\Bigl({
      \mathcal{M}^{\otimes n}\bigl[{ \bar\omega_{A^n\bar R^n R'} }\bigr] \,,\;
      \mathcal{E}\bigl[{ \bar\omega_{A^n\bar R^n R'} }\bigr]
      }\Bigr)\ .
    \label{z:E4j5LmXO}
  \end{align}
  Now consider the two-outcome POVM $\{{ P_{B^n\bar R^n}\otimes\mathds{1}_{R'} , %
    P_{B^n\bar R^n}^\perp\otimes\mathds{1}_{R'} }\}$.  By the data processing
  inequality of the fidelity (in the form of \cref{z:MduRjmZz}), we find
  \begin{align}
    \text{\eqref{z:E4j5LmXO}}
    &\leq \biggl({
      \sqrt{\operatorname{tr}\bigl[{ P_{B^n\bar R^n}  \,
          \mathcal{M}^{\otimes n}\bigl({ \bar\omega_{B^n\bar R^n R'} }\bigr) }\bigr]}
      + \sqrt{\operatorname{tr}\bigl[{ P_{B^n\bar R^n}^\perp  \,
          \mathcal{E}\bigl({ \bar\omega_{B^n\bar R^n R'} }\bigr) }\bigr]}
      }\biggr)^2\ .
  \end{align}
  Recall that $F\bigl({\tau_R^{(\ell)}, \sigma_R}\bigr) \geq e^{-w}$, so
  $D\bigl({\tau_R^{(\ell)}, \sigma_R}\bigr) \leq P\bigl({\tau_R^{(\ell)}, \sigma_R}\bigr)
  \leq \sqrt{1 - e^{-2w}} \leq \sqrt{2w}$, since $e^{-2w} \geq 1 - 2w$.  Then
  $
  \lambda_{\mathrm{min}}\bigl({\tau_R^{(\ell)}}\bigr)
  \geq \lambda_{\mathrm{min}}(\sigma_R) - \sqrt{2w}
  \geq \nu y - \sqrt{2w}
  $.
  At this point, we choose
  \begin{align}
    w = \frac12 (\nu - 1)^2 y^2\ ,
    \label{z:E42ukPTk}
  \end{align}
  which ensures that
  \begin{align}
    \lambda_{\mathrm{min}}\bigl({\tau_R^{(\ell)}}\bigr)
    \geq y\ .
  \end{align}
  We then find, thanks to \cref{z:lmCUsTOF},
  \begin{align}
    \operatorname{tr}\bigl[{ P_{B^n\bar R^n} \, \mathcal{M}^{\otimes n}\bigl({ \bar\omega_{A^n\bar R^n R'} }\bigr) }\bigr]
    &= \sum_{\ell =1}^{\operatorname{poly}(n)}
      \bar\kappa_\ell \operatorname{tr}\Bigl({ P_{B^n\bar R^n}\,
      \bigl[{\mathcal{M}\bigl({ \tau_{AR}^{(\ell)} }\bigr) }\bigr]^{\otimes n} }\Bigr)
      \nonumber\\
    &\leq
      \operatorname{poly}({n}) \exp\Biggl\{{
          -cn \frac{ (\eta-\theta-\bar\eta)^8 y^8 }{ 5^8 \max_j\lVert {C^{j}_{BR}}\rVert ^8 }
      }\Biggr\}\ .
  \end{align}
  recalling we chose $c=1/2$.
  On the other hand, using our initial assumption we have
  \begin{align}
    \operatorname{tr}\bigl[{ P_{B^n\bar R^n}^\perp \, \mathcal{E}\bigl({ \bar\omega_{A^n\bar R^n R'} }\bigr) }\bigr]
    &= \sum_{\ell=1}^{\operatorname{poly}(n)}
      \bar\kappa_\ell \operatorname{tr}\bigl[{ P_{B^n\bar R^n}^\perp \,
      \mathcal{E}\bigl({ \tau_{AR}^{(\ell) \otimes n} }\bigr) }\bigr]
      \leq \epsilon\ .
  \end{align}
  In summary, and using the fact that
  $(\sqrt{x_1} + \sqrt{x_2})^2 \leq [2\max(\sqrt{x_1},\sqrt{x_2})]^2 \leq
  4\max(x_1, x_2)$ for any $x_1,x_2\geq 0$, we find:
  \begin{align}
    \mathfrak{B}\bigl({M_{BR}}\bigr)
    \leq \operatorname{poly}({n})  \max\Biggl({
        \epsilon
        \,,\;
        {e}^{-cn \frac{({\eta-\theta-\bar\eta})^8 y^8}{5^8\,\max_j\lVert {C^{j}_{BR}}\rVert ^8}}
    }\Biggr)\ .
  \end{align}
  The same then bound applies to the second integral
  in~\eqref{z:8ddKr8zS}.  Combining the above inequalities,
  we find a bound on the original quantity~\eqref{z:REWabX3M} we
  were interested in:
  \begin{align}
    \hspace*{1em}&\hspace*{-1em}
    \operatorname{tr}\bigl[{
    Q_{j,\sigma}\,\mathcal{E}\bigl({\sigma_{AR}^{\otimes n}}\bigr)
    }\bigr]
    \nonumber\\
    &\leq \operatorname{poly}({n}) \Biggl\{{
      {e}^{-n\frac{w}{2}}
      + {e}^{-n\frac{\theta^2 y^2}{2\lVert {C^j_{BR}}\rVert ^2}}
      + \max\Biggl({\epsilon, 
          {e}^{-cn\frac{({\eta-\theta-\bar\eta})^8 y^8}{5^8\,\max_j\lVert {C^{j}_{BR}}\rVert ^8}}
      }\Biggr)
      }\Biggr\}
      \nonumber\\
    &\leq \operatorname{poly}({n}) \exp\Biggl\{{
      -n\min\Biggl({
        \frac{(\nu-1)^2y^2}{4}
        \,,\,
        \frac{\theta^2 y^2}{2 \max_j \lVert {C^j_{BR}}\rVert ^2}
        \,,\,
        -\frac{\log(\epsilon)}{n}
        \,,\,
        \frac{c({\eta-\theta-\bar\eta})^8 y^8}{5^8 \max_j \lVert {C^j_{BR}}\rVert ^8}
    }\Biggr)
    }\Biggr\}\ ,
  \end{align}
  recalling the value of $w$
  from~\eqref{z:E42ukPTk}, and for any
  $0<\theta<\eta-\bar\eta$.  Now choose $\theta = (\eta-\bar\eta)/2$, such that
  $\theta = \eta-\bar\eta-\theta=(\eta-\eta')/4$.  At this point, we also assume
  that $(\nu-1)/2 \geq (\eta-\eta')/(8\max_j \lVert {C^j_{BR}}\rVert )$, as in the
  theorem statement.  Then the first argument of the `$\min(\cdot)$' is always
  greater than or equal to its second argument.  Using
  $\theta/\max_j \lVert {C^j_{BR}}\rVert  \leq 1$ from our assumptions on $\eta,\eta'$,
  along with $c<1$ and $y<1$, we find that the second argument of the
  `$\min(\cdot)$' is always greater than the fourth.  The bound therefore
  simplifies to
  \begin{align}
    \operatorname{tr}\bigl[{
    Q_{j,\sigma}\,\mathcal{E}\bigl({\sigma_{AR}^{\otimes n}}\bigr)
    }\bigr]
    \leq
    \operatorname{poly}({n}) \exp\Biggl\{{
    -n y^4 \min\biggl({
        -\frac{\log(\epsilon)}{n y^4}
        \,,\;
        \frac{c({\eta-\eta'})^8}{5^8 \max_j \lVert {C^j_{BR}}\rVert ^8}
    }\biggr)
    }\Biggr\}\ ,
  \end{align}
  recalling that we chose $c=1/2$ in the theorem statement.

  Now let's prove~\ref{z:wEDQuyLi}.  The
  structure of this proof is very similar to the previous proof.  Without loss
  of generality, we can assume $\mathcal{E}_{A^n\to B^n}$ to be
  permutation-invariant, since both $P^\perp_{B^nR^n}$ and the concentration
  test operators are permutation-invariant.  We consider any $\nu'>1$ in this
  proof, and will use our additional assumption on $\nu'$ only to simplify the
  final bound.  Consider any $\sigma_R\geq \nu' y'\mathds{1}$.  For $w'>0$ to be
  fixed later, consider the subnormalized state
  \begin{align}
    \widehat\sigma'_{R^n} \equiv
    \int_{F(\sigma,\tau)\geq e^{-w'}} d\tau \,
      \tau_R^{\otimes n}\ ,
  \end{align}
  and define
  \begin{align}
    L_{R^n}'
    &= \widehat\sigma_{R^n}'^{1/2}\, \zeta_{R^n}^{-1/2}\ .
  \end{align}
  As we saw earlier in the proof
  of~\ref{z:W3JG7egy}, $L_{R^n}'$ is
  Hermitian, satisfies $0 \leq L_{R^n}' \leq \mathds{1}$, and is such that
  \begin{align}
    P\bigl({\sigma_{AR}^{\otimes n}  \,,\;
    L_{R^n}' \, \sigma_{AR}^{\otimes n} \, L_{R^n}' }\bigr)
    \leq \operatorname{poly}({n}) \exp\bigl({ -n w'/2 }\bigr)\ .
  \end{align}
  Let
  \begin{align}
    X'_{R^n} &= L_{R^n}'\ ;
    &
    Y'_{R^n} &= L_{R^n}'\ .
  \end{align}
  We write
  \begin{align}
      \operatorname{tr}\bigl[{ P^\perp_{B^nR^n} \mathcal{E}\bigl({\sigma_{AR}^{\otimes n}}\bigr) }\bigr]
    &\leq \operatorname{tr}\bigl[{ P^\perp_{B^nR^n}
      X'_{R^n}
      Y'_{R^n}
      \, \mathcal{E}\bigl({\sigma_{AR}^{\otimes n}}\bigr) \,
      Y_{R^n}'^{\dagger}
      X_{R^n}'^\dagger
      }\bigr]
      + \operatorname{poly}({n}) \, {e}^{-n w'/2}\ .
        \nonumber\\
    &\leq \operatorname{tr}\bigl[{ P^\perp_{B^nR^n}
      \bigl({\sigma_{R}^{\otimes n}}\bigr)^{1/2}
      X'_{R^n}
      Y'_{R^n}
      \, \mathcal{E}\bigl({\Phi_{A^n:R^n}}\bigr) \,
      Y_{R^n}'^{\dagger}
      X_{R^n}'^\dagger
      \bigl({\sigma_{R}^{\otimes n}}\bigr)^{1/2}
      }\bigr]
      \nonumber\\
      &\quad
      + \operatorname{poly}({n}) \, {e}^{-n w'/2}\ .
  \end{align}
  Our constrained channel postselection theorem
  (\cref{z:UEUfoDLC}) then implies that:
  \begin{subequations}
    \begin{align}
      \operatorname{tr}\bigl[{ P^\perp_{B^nR^n} \mathcal{E}\bigl({\sigma_{AR}^{\otimes n}}\bigr) }\bigr]
      &\leq \operatorname{poly}({n})  \, \Biggl\{ \int dM_{BR}\;
        \mathfrak{A}'(M_{BR})
        \;
        \mathfrak{B}'(M_{BR})
        \ +\  {e}^{-n w'/2}
        \Biggr\}\ ;
        \label{z:leertGa-}\\
      \begin{split}
        \mathfrak{A}'(M_{BR})
        &\equiv
          \operatorname{tr}\Bigl[{
          P^\perp_{B^nR^n} 
          \mathcal{M}^{\otimes n}\Bigl({  \sigma_{AR}^{\otimes n} }\Bigr)
          }\Bigr]\ ;
        \\
        \mathfrak{B}'(M_{BR})
        &\equiv
          F^2\Bigl({
          \mathcal{M}^{\otimes n}\Bigl[{L_{R^n}' \zeta_{A^nR^n} L_{R^n}'}\Bigr]\ ,\ 
          \mathcal{E}_{A^n\to B^n}\Bigl[{L_{R^n}' \zeta_{A^nR^n} L_{R^n}'}\Bigr]
          }\Bigr)\ .
      \end{split}
    \end{align}
  \end{subequations}
  We split the integral into two parts: one integral ranging over channels
  $\mathcal{M}$ whose expectation values with $C^j$ are close to the prescribed
  $q_j$'s, and one integral over the complementary region.  Let
  $0 < \theta' < \bar\eta-\eta'$ to be fixed later.  First, suppose that
  $\bigl \lvert {\operatorname{tr}[C^j_{BR} M_{BR}] - q_j}\bigr \rvert  < \bar\eta - \theta'$ for all
  $j = 1, \ldots , J$.  By \cref{z:HY7bJv7-}, we know in
  this case that
  \begin{align}
    \operatorname{tr}\bigl[{ P_{B^nR^n}^\perp \bigl({\mathcal{M}(\sigma_{AR})}\bigr)^{\otimes n} }\bigr]
    \leq \operatorname{poly}({n}) \exp\biggl\{{
        -cn\,\frac{\theta'^8 (\nu' y')^8}{5^8 \max_j \lVert {C^j_{BR}}\rVert ^8}
    }\biggr\}\ ,
  \end{align}
  so the integrand in~\eqref{z:leertGa-} vanishes exponentially
  in $n$ for channels $M_{BR}$ obeying
  $\bigl \lvert {\operatorname{tr}[C^j_{BR} M_{BR}] - q_j}\bigr \rvert  < \bar\eta - \theta'$ for all $j$.
  Now, suppose instead that there exists $j_0\in\{{1, \ldots J}\}$ such that
  $\bigl \lvert {\operatorname{tr}[C^{j_0}_{BR} M_{BR}] - q_{j_0}}\bigr \rvert  \geq \bar\eta - \theta'$.  Our
  strategy to upper bound the integrand in~\eqref{z:leertGa-}
  for such channels is to upper bound the $\mathfrak{B}'( M_{BR} )$ term.
  As earlier,
  \begin{align}
    L_{R^n}'\,\lvert {\zeta_{A^nR^n}}\rangle 
    = 
    \widehat\sigma_{R^n}'^{1/2} \lvert {\Phi_{A^n:R^n}}\rangle 
    \equiv
    \lvert {\tilde\sigma_{A^nR^n}'}\rangle \ .
  \end{align}
  Now define
  \begin{align}
    \bar\omega_{A^n\bar{R}^n}'
    = \int_{F(\sigma_{\bar R},\tau_{\bar R})\geq e^{-w'}}
    d\tau_{\bar R} \, \lvert {\tau}\rangle \mkern -1.8mu\relax \langle{\tau}\rvert _{A\bar R}^{\otimes n}
    \leq \Pi^{\mathrm{Sym}}_{A^n\bar R^n}\ ,
  \end{align}
  where we write
  $\lvert {\tau}\rangle _{A\bar R} \equiv \tau_{\bar R}^{1/2}\,\lvert {\Phi_{A:\bar R}}\rangle $.
  By construction,
  \begin{align}
    \bar\omega_{A^n}'
    = \operatorname{tr}_{\bar R^n}\bigl[{ \bar\omega_{A^n\bar R^n }' }\bigr]
    = \tilde\sigma_{A^n}'\ .
  \end{align}
  Since $\bar\omega_{A^n\bar R^n}'$ has support on the symmetric subspace of
  $(A\bar R)^n$, and by Carath\'eodory's theorem, there exists a collection
  $\bigl\{{ \tau_{\bar R}'^{(\ell)} }\bigr\}_{\ell=1}^{\operatorname{poly}(n)}$ of at most $\operatorname{poly}(n)$
  states
  $\lvert {\tau'^{(\ell)}}\rangle _{A\bar R} \equiv
  \bigl[{\tau_{\bar R}'^{(\ell)}}\bigr]^{1/2}\,\lvert {\Phi_{A:R}}\rangle $ with
  $F(\sigma_{\bar R},\tau'^{(\ell)}_{\bar R}) \geq e^{-w'}$, along with a
  probability distribution $\{{ \bar\kappa'_\ell }\}$ with
  $\bar\kappa'_1\geq\bar\kappa'_2\geq\cdots$, such that
  \begin{align}
    \bar\omega_{A^n\bar R^n}' = \sum_{\ell=1}^{\operatorname{poly}(n)} \bar\kappa'_\ell \,
    \bigl({\tau_{A\bar R}'^{(\ell)}}\bigr)^{\otimes n}\ .
  \end{align}
  This state can be purified using an additional system $R'$ with
  $d_{R'} \leq \operatorname{poly}(n)$:
  \begin{align}
    \bigl \lvert {\bar\omega'}\bigr \rangle _{A^n\bar R^n R'}
    = \sum_{\ell=1}^{\operatorname{poly}(n)}
    \sqrt{\bar\kappa'_\ell} \,
    \bigl \lvert {\tau'^{(\ell)}}\bigr \rangle _{A\bar R}^{\otimes n} \otimes \bigl \lvert {\ell}\bigr \rangle _{R'}\ .
  \end{align}
  Because $\lvert {\widehat\sigma'}\rangle _{A^nB^n}$ and
  $\bigl \lvert {\bar\omega'}\bigr \rangle _{A^n\bar R^n R'}$ are both two
  purifications of the same state $\widehat\sigma_{A^n}'$, they are related by some
  isometry acting on $R^n \to \bar R^n R'$.  The fidelity is invariant under the
  application of an isometry, so
  \begin{align}
    \mathfrak{B}'(M_{BR})
    &= F^2\Bigl({
      \mathcal{M}^{\otimes n}\bigl[{ \widehat\sigma_{A^nR^n}' }\bigr] \,,\;
      \mathcal{E}\bigl[{ \widehat\sigma_{A^nR^n}' }\bigr]
      }\Bigr)
      \nonumber\\
    &= F^2\Bigl({
      \mathcal{M}^{\otimes n}\bigl[{ \bar\omega_{A^n\bar R^n R'}' }\bigr] \,,\;
      \mathcal{E}\bigl[{ \bar\omega_{A^n\bar R^n R'}' }\bigr]
      }\Bigr)\ .
    \label{z:mT1tkH8O}
  \end{align}
  At this point, we define the following two-outcome POVM:
  \begin{align}
    \begin{split}
    \mathfrak{Q}_{B^n\bar R^n R'}
    &= \sum_{\ell=1}^{\operatorname{poly}(n)}
    \bigl\{{ \overline{H^{j_0,\tau'^{(\ell)}}}_{B^n\bar R^n} \in [ q_{j_0} \pm \eta' ] }\bigr\}
    \otimes \lvert {\ell}\rangle \mkern -1.8mu\relax \langle{\ell}\rvert _{R'}\ ;
      \\
    \mathfrak{Q}_{B^n\bar R^n R'}^\perp
    &= \sum_{\ell=1}^{\operatorname{poly}(n)}
    \bigl\{{ \overline{H^{j_0,\tau'^{(\ell)}}}_{B^n\bar R^n} \notin [ q_{j_0} \pm \eta' ] }\bigr\}
    \otimes \lvert {\ell}\rangle \mkern -1.8mu\relax \langle{\ell}\rvert _{R'}\ ,
    \end{split}
  \end{align}
  noting that
  $\mathfrak{Q}_{B^n\bar R^n R'} + \mathfrak{Q}_{B^n\bar R^n R'}^\perp =
  \mathds{1}_{B^n\bar R^n R'}$.  This measurement can be realized by first measuring
  the $R'$ register to obtain an outcome $\ell$, then testing whether or not the
  resulting state is within a subspace of eigenvalues of
  $\overline{H^{j_0,\tau'^{(\ell)}}}_{B^n\bar R^n}$ with $\eta$ of $q_{j_0}$.  By
  the data processing inequality of the fidelity (in the form of
  \cref{z:MduRjmZz}), we find
  \begin{align}
    \text{\eqref{z:mT1tkH8O}}
    &\leq \biggl({
      \sqrt{\operatorname{tr}\bigl[{ \mathfrak{Q}  \,
          \mathcal{M}^{\otimes n}\bigl({ \bar\omega_{B^n\bar R^n R'}' }\bigr) }\bigr]}
      + \sqrt{\operatorname{tr}\bigl[{ \mathfrak{Q}^\perp  \,
          \mathcal{E}^{\otimes n}\bigl({ \bar\omega_{B^n\bar R^n R'}' }\bigr) }\bigr]}
      }\biggr)^2\ .
  \end{align}
  We find
  \begin{subequations}
    \begin{align}
      \operatorname{tr}\bigl[{ \mathfrak{Q}  \,
      \mathcal{M}^{\otimes n}\bigl({ \bar\omega_{B^n\bar R^n R'}' }\bigr) }\bigr]
      &= \sum_\ell \bar\kappa'_\ell
        \operatorname{tr}\Bigl[{ 
        \bigl\{{ \overline{H^{j_0,\tau'^{(\ell)}}}_{B^n\bar R^n} \in [ q_{j_0} \pm \eta' ] }\bigr\} \;
        \mathcal{M}^{\otimes n}\bigl({\tau'^{(\ell) \otimes n}_{A\bar R} }\bigr)
        }\Bigr]\ ;
        \label{z:410UdnZj}\\
      \operatorname{tr}\bigl[{ \mathfrak{Q}^\perp  \,
      \mathcal{E}\bigl({ \bar\omega_{B^n\bar R^n R'}' }\bigr) }\bigr]
      &= \sum_\ell \bar\kappa'_\ell \operatorname{tr}\Bigl[{ 
        \bigl\{{ \overline{H^{j_0,\tau'^{(\ell)}}}_{B^n\bar R^n} \notin [ q_{j_0} \pm \eta' ] }\bigr\} \;
        \mathcal{E}\bigl({\tau'^{(\ell) \otimes n}_{A\bar R} }\bigr)
        }\Bigr]\ .
        \label{z:oadtlJuI}
    \end{align}
  \end{subequations}
  As before, we know that $F\bigl({\tau_R'^{(\ell)}, \sigma_R}\bigr) \geq e^{-w'}$, which implies
  $D\bigl({\tau_R'^{(\ell)}, \sigma_R}\bigr)
  \leq \sqrt{2w}$
  and
  $
  \lambda_{\mathrm{min}}\bigl({\tau_R'^{(\ell)}}\bigr)
  \geq \nu' y' - \sqrt{2w'}
  $.
  A suitable choice of $w'$ ensures that
  $\lambda_{\mathrm{min}}\bigl({\tau_R'^{(\ell)}}\bigr) \geq y'$, namely
  \begin{align}
    w' = \frac12 (\nu' - 1)^2 y'^2\ .
    \label{z:yWe3-F-H}
  \end{align}
  Then, by assumption, we have
  \begin{align}
    \operatorname{tr}\bigl[{ \mathfrak{Q}^\perp  \,
    \mathcal{E}\bigl({ \bar\omega_{B^n\bar R^n R'}' }\bigr) }\bigr]
    =
    \text{\eqref{z:oadtlJuI}}
    \leq \delta' \ .
  \end{align}
  On the other hand, the measurement of
  $\overline{H^{j_0,\tau'^{(\ell)}}}_{B^n\bar R^n}$ on the i.i.d.\@ state
  $\bigl[{\mathcal{M}\bigl({\tau_R'^{(\ell)}}\bigr)}\bigr]^{\otimes n}$ concentrates around
  the measurement average $\operatorname{tr}({C^{j_0}_{BR} M_{BR}})$.  By Hoeffding's
  inequality, and since
  $\bigl \lvert { \operatorname{tr}\bigl[{ C^{j_0}_{BR} \, M_{BR} }\bigr] - q_{j_0} }\bigr \rvert  \geq \bar\eta -
  \theta'$,
  \begin{align}
    \text{\eqref{z:410UdnZj}}
    &\leq
    \sum_\ell \bar\kappa'_\ell \operatorname{tr}\Bigl[{
        \bigl\{{ \overline{H^{j_0,\tau'^{(\ell)}}}_{B^n\bar R^n}
           \not\in [ \operatorname{tr}({C^{j_0}_{BR} M_{BR}}) \pm (\bar\eta -\theta'-\eta') ] }\bigr\} \;
        \mathcal{M}^{\otimes n}\bigl({\tau'^{(\ell) \otimes n}_{A\bar R} }\bigr)
        }\Bigr]
    \nonumber\\
    &\leq
      \sum_\ell 2 \bar\kappa'_\ell \exp\biggl\{{
      -\frac{ 2 (\bar\eta - \theta' - \eta)^2 n}{4
          \bigl \lVert {[\tau_R'^{(\ell)}]^{-1/2} \, C^{j_0}_{BR}\, [\tau_R'^{(\ell)}]^{-1/2} }\bigr \rVert ^2
      }
      }\biggr\}
      \nonumber\\
    &\leq
      2\exp\biggl\{{
      -\frac{ (\bar\eta - \theta' - \eta)^2 y'^2 n}{2 \bigl \lVert {C^{j_0}_{BR}}\bigr \rVert ^2 }
      }\biggr\}\ .
  \end{align}
  Then we find
  \begin{align}
    \mathfrak{B}'\bigl({M_{BR}}\bigr)
    &\leq \operatorname{poly}({n})\,\Biggl({
      \sqrt{\delta'}
      + {e}^{ -\frac{(\bar\eta-\theta'-\eta)^2 y'^2 n}{4\lVert {C^{j_0}_{BR}}\rVert ^2}  }
      }\Biggr)^2
      \leq \operatorname{poly}({n})\, \max\Biggl({
      \delta'  \,,\;
      {e}^{-\frac{(\bar\eta-\theta'-\eta)^2 y'^2 n}{2\lVert {C^{j_0}_{BR}}\rVert ^2}  }
      }\Biggr)\ .
  \end{align}

  Combining the above inequalities, we finally find that for any
  $\sigma_R\geq \nu' y'\mathds{1}_R$, we have
  \begin{align}
    \operatorname{tr}\bigl({ P_{B^nR^n}^\perp  \mathcal{E}(\sigma_{AR}^{\otimes n}) }\bigr)
    \leq
    \operatorname{poly}({n})\,\Biggl\{{
    {e}^{-\frac{n w'}{2}}
    + {e}^{-cn \frac{\theta'^8 (\nu' y')^8}{5^8\max_j\lVert {C^j_{BR}}\rVert ^8}}
    + \max\Biggl({
      \delta'  \,,\;
      {e}^{-\frac{(\bar\eta-\theta'-\eta)^2 y'^2 n}{2\lVert {C^{j_0}_{BR}}\rVert ^2}  }
      }\Biggr)
    }\Biggr\}
    \nonumber\\
    \leq \operatorname{poly}({n})
    \exp\Biggl\{{
        -n\min\Biggl({
            -\frac{\log(\delta')}{n}   ,
            \frac{(\nu'-1)^2y^2}{4} ,
            \frac{c\,\theta'^8\nu'^8y^8}{5^8 \max_j\lVert {C^j_{BR}}\rVert ^8}   ,
            \frac{(\bar\eta - \theta' - \eta)^2 y'^2}{2\max_j\lVert {C^{j}_{BR}}\rVert ^2}
        }\Biggr)
    }\Biggr\}\ ,
  \end{align}
  recalling the value of $w'$ in~\eqref{z:yWe3-F-H} and for any
  $0<\theta'<\bar\eta -\eta'$.  Now, we choose $\theta' = (\bar\eta - \eta')/2$
  such that $\theta'=\bar\eta-\theta'-\eta = (\eta-\eta')/4$.  Additionally, we
  now assume that $(\nu'-1)/2 \geq (\eta - \eta')/(8 \max_j\lVert {C^j_{BR}}\rVert )$, as
  per the theorem statement; in consequence, the second argument of the minimum
  is always lower bounded by the fourth.  Using $y'^2\geq y'^8$ and
  $\theta'/\lVert {C^j_{BR}}\rVert  \leq 1$, we can further simplify the bound to
  \begin{align}
    \operatorname{tr}\bigl({ P_{B^nR^n}^\perp  \mathcal{E}(\sigma_{AR}^{\otimes n}) }\bigr)
    \leq
    \operatorname{poly}({n}) \exp\Biggl\{{
    -n y'^8 \min\Biggl({
        \frac{-\log\bigl({\delta'}\bigr)}{ n y'^8 } ,
        \frac{c\,(\eta-\eta')^8}{5^8\,\max_j \lVert {C^j_{BR}}\rVert ^8 }
    }\Biggr)
    }\Biggr\}\ ,
  \end{align}
  recalling we chose $c=1/2$.
\end{proof}

\end{document}